\definecolor{butter1}{HTML}{FCE94F}
\definecolor{butter2}{HTML}{EDD400}
\definecolor{butter3}{HTML}{C4A000}
\definecolor{orange1}{HTML}{FCAF3E}
\definecolor{orange2}{HTML}{F57900}
\definecolor{orange3}{HTML}{CE5C00}
\definecolor{chocolate1}{HTML}{E9B96E}
\definecolor{chocolate2}{HTML}{C17D11}
\definecolor{chocolate3}{HTML}{8F5902}
\definecolor{chameleon1}{HTML}{8AE234}
\definecolor{chameleon2}{HTML}{73D216}
\definecolor{chameleon3}{HTML}{4E9A06}
\definecolor{skyblue1}{HTML}{729FCF}
\definecolor{skyblue2}{HTML}{3465A4}
\definecolor{skyblue3}{HTML}{204A87}
\definecolor{plum1}{HTML}{AD7FA8}
\definecolor{plum2}{HTML}{75507B}
\definecolor{plum3}{HTML}{5C3566}
\definecolor{red1}{HTML}{EF2929}
\definecolor{red2}{HTML}{CC0000}
\definecolor{red3}{HTML}{A40000}
\definecolor{aluminium1}{HTML}{EEEEEC}
\definecolor{aluminium2}{HTML}{D3D7CF}
\definecolor{aluminium3}{HTML}{BABDB6}
\definecolor{slate1}{HTML}{888A85}
\definecolor{slate2}{HTML}{555753}
\definecolor{slate3}{HTML}{2E3436}
\definecolor{ltblue}   {rgb} {  0, 0.4, 0.4}
\definecolor{dkblue}   {rgb} {  0, 0.1, 0.6}
\definecolor{dkgreen}  {rgb} {  0, 0.2,   0}
\definecolor{dkviolet} {rgb} {0.3,   0, 0.5}
\definecolor{dkred}    {rgb} {0.5,   0,   0}
\let\TirName\textsc
\renewcommand{\RefTirName}  [1] {\hypertarget{#1}{\TirName{#1}}}
\renewcommand{\DefTirName}  [1] {\hyperlink{#1}{\TirName{#1}}}
\newcommand  {\RULE}        [1] {{\DefTirName{#1}}\xspace}
  \def\arrcr{\@arraycr}
\tikzset{
  math array/.style={
    execute at begin node=$\def\arraypreamble{#1},
    execute at end node=$,
  },
  node text/.style={node contents=\makearray{#1}},
}
\newcommand{\makearray}[1]{%
  \expandafter\array\expandafter{\arraypreamble}
  #1
  \endarray
}
\newcommand\appendixRef[1]{(\cref{#1})}
\newcommand\appendixCite[1]{\cite[\cref*{#1}]{appendix}}
  \newcommand\appendixref\appendixRef
  \newcommand\appendixref\appendixCite
\newcommand{\sref}[1]{\cref{#1}}
\newcommand{\fref}[1]{Figure~\ref{#1}}
\newcommand{\lemmaref}[1]{Lemma~\ref{#1}}
\newcommand{\cororef}[1]{Corollary~\ref{#1}}
\newcommand{\thmref}[1]{Theorem~\ref{#1}}
\newcommand{\defref}[1]{Definition~\ref{#1}}
\newcommand{\cpp}{C\texttt{++}\xspace}
\newcommand{\ccpp}{C/C\texttt{++}\xspace}
\newcommand{\intel}{\textsf{x86}\xspace}
\newcommand{\intelext}{\textsf{Ex86}\xspace}
\NewDocumentCommand{\arm}{s s}{%
\IfBooleanTF{#1}{%
\IfBooleanTF{#2}{%
\textnormal{Armv8}}{%
\textnormal{Armv7}}}{%
\textnormal{Arm}}%
\xspace
}
\newcommand{\power}{\textnormal{POWER}\xspace}
\newcommand{\SC}{\textsf{SC}}
\newcommand{\rcelevenname}{\textsf{RC11}\xspace}
\newcommand{\rceleven}{\rcelevenname}
\newcommand{\rcelevenext}{\ensuremath{\rceleven^{\intelext}}\xspace}
\newcommand{\rcelevenlang}{\rceleven-lang\xspace}
\newcommand{\intelextlang}{\intelext-lang\xspace}
\newcommand{\rcelevenextlang}{\rcelevenext-lang\xspace}
\newcommand{\asm}{\textsf{asm}\xspace}
\newcommand{\rc}{\scalebox{.70}{\rcelevenname}}
\newcommand{\boite}[1]{\begin{array}{@{}l@{}}#1\end{array}}
\newcommand{\crqqb}[1]{\\\kern1.5em\boite{#1}}
\newcommand{\judgment}[2][]{\hspace{.2cm}\textit{#1}\hspace{\stretch{1}}\fbox{$#2$}}
\newcommand{\judgmentpar}[1]{\hspace{.2cm}\textit{#1}}
\NewDocumentCommand{\todo}{o}{
  \IfNoValueTF{#1}{\textcolor{red1}{TODO}}
                  {\textcolor{red1}{TODO:~#1}}
}
\NewDocumentCommand{\absurd}{o}{\textit{%
  \IfNoValueTF{#1}{\textcolor{red1}{(Absurd)}}
                  {\textcolor{red1}{(Absurd #1)}}
}}
\newcommand{\greenhl}[1]{%
  \begingroup
  \setlength{\fboxsep}{0pt}%
  \colorbox{green!40!black!08}{\strut#1\/}%
  \endgroup
}
\newcommand{\mathhl}[1]{\greenhl{\ensuremath{#1}}}
\NewDocumentCommand{\msmall}{O{.85} m}{\scalebox{#1}{\ensuremath{#2}}}
\newcommand{\eqdef}{\;\triangleq\;}
\newcommand{\pair}[2]{\left(#1,\,#2\right)}
\newcommand{\fst}[1]{{#1}.1}
\newcommand{\snd}[1]{{#1}.2}
\newcommand{\maximum}[1]{\mathit{max}\,(#1)}
\newcommand{\minimum}[1]{\mathit{min}\,(#1)}
\NewDocumentCommand{\tuple}{ m >{\SplitList{,}} m}
  {(#1\ProcessList{#2}{\davs__tuple_elem:n})}
\newcommand{\dom}[1]{\mathit{dom}(#1)}
\newcommand{\codom}[1]{\mathit{codom}(#1)}
\newcommand{\metaIfThenElse}[3]
{\mathit{if}\;#1\;\mathit{then}\;#2\;\mathit{else}\;#3}
\newcommand{\wtt}[2]{#1 : #2}
\newcommand{\inti}{i}
\newcommand{\intj}{j}
\newcommand{\intn}{n}
\newcommand{\intm}{m}
\newcommand{\kw}[1]{{\normalfont\texttt{#1}}}
\newcommand{\scol}{\kw{;}}
\newcommand{\Skip}{\kw{skip}}
\newcommand{\Seq}[2]{#1\scol\,#2}
\newcommand{\While}[2]{\kw{while}\;#1\;\kw{\{}\,#2\,\kw{\}}}
\NewDocumentCommand{\Asm}{s o}{
  \IfBooleanTF{#1}{\kw{\textbf{asm}}}
                  {\kw{\textbf{asm}}\,\kw{\{}#2\kw{\}}}
}
\newcommand{\ReadPL}[3]{#2\kw{:=}\,\kw{[}#3\kw{]}^{#1}}
\newcommand{\AsmReadPL}[2]{\Asm[\ReadPL{}{#1}{#2}]}
\newcommand{\WritePL}[3]{\kw{[}#2\kw{]}^{#1}\kw{:=}\,#3}
\newcommand{\AsmWritePL}[2]{\Asm[\WritePL{}{#1}{#2}]}
\newcommand{\NTWritePL}[2]{[#1]\kw{:=}_{\kw{nt}}\,#2}
\newcommand{\AsmNTWritePL}[2]{\Asm[\NTWritePL{#1}{#2}]}
\newcommand{\FencePL}[1]{\kw{fence}_{#1}}
\newcommand{\MFencePL}{\kw{mfence}\xspace}
\newcommand{\AsmMFencePL}{\Asm[\MFencePL]}
\newcommand{\SFencePL}{\kw{sfence}\xspace}
\newcommand{\AsmSFencePL}{\Asm[\SFencePL]}
\newcommand{\RMWPL}[5]{%
\def\tmpinput{#1}\ifx\temp\empty
  \def\tmpkeyword{\kw{rmw}}
\else
  \def\tmpkeyword{\kw{rmw}_{\tmpinput}}
\fi
#2\,\kw{:=}\,\tmpkeyword\,(\kw{[}#3\kw{]},\,#4,\,#5)
}
\newcommand{\AsmRMWPL}[4]{\Asm[\RMWPL{}{#1}{#2}{#3}{#4}]}
\newcommand{\na}{\kw{na}\xspace}
\newcommand{\rlx}{\kw{rlx}\xspace}
\newcommand{\rel}{\kw{rel}\xspace}
\newcommand{\acq}{\kw{acq}\xspace}
\newcommand{\acqrel}{\kw{acqrel}\xspace}
\renewcommand{\sc}{\kw{sc}\xspace}
\newcommand{\nt}{\kw{nt}\xspace}
\newcommand{\tso}{\kw{tso}\xspace}
\newcommand{\stf}{\kw{sf}\xspace}
\newcommand{\mf}{\kw{mf}\xspace}
\newcommand{\ld}{\kw{ld}}
\newcommand{\sy}{\kw{sy}}
\newcommand{\PlusS}{\kw{+}}
\newcommand{\TimesS}{\kw{*}}
\newcommand{\SubS}{\kw{-}}
\newcommand{\Plus}[2]{#1 \;\PlusS\; #2}
\newcommand{\Times}[2]{#1 \;\TimesS\; #2}
\newcommand{\Sub}[2]{#1 \;\SubS\; #2}
\newcommand{\IfThen}[2]{\kw{if}~#1~\kw{\{}~#2~\kw{\}}}
\newcommand{\IfThenVertical}[2]{%
\begin{array}{@{}l@{}}
\kw{if}~\kw{(}#1\kw{)}~\kw{\{}\\
\quad#2\\
\kw{\}}
\end{array}
}
\newcommand{\toPool}[1]{\mathit{toPool}(#1)}
\newcommand{\UB}{\mathit{UB}}
\newcommand{\headstep}{\;\longrightarrow\;}
\newcommand{\multiheadstep}{\;\longrightarrow^{*}\;}
\NewDocumentCommand{\poolstep}{s m m m m}{
  \config{#2}{#3}
    \IfBooleanTF{#1}{\multiheadstep}{\headstep}
      \config{#4}{#5}
}
\newcommand{\progbeh}[2]{#1\;\longrightarrow\;#2}
\newcommand{\nameIfStep}{IfStep}
\newcommand{\nameReadStep}{ReadStep}
\newcommand{\nameWriteStep}{WriteStep}
\newcommand{\nameTerminateStep}{TerminateStep}
\newcommand{\ruleReadStep}{\RULE\nameReadStep}
\newcommand{\ruleTerminateStep}{\RULE\nameTerminateStep}
\newcommand{\config}[2]{#1\;/\;#2}
\NewDocumentCommand{\interp}{o m}{%
  \llbracket\hspace{2pt}{#2}\hspace{1pt}
  \IfNoValueTF{#1}{\rrbracket}{\rrbracket_{#1}}
}
\newcommand{\compile}[1]{%
  \llparenthesis\hspace{2pt}#1\hspace{1pt}\rrparenthesis
}
\newcommand{\compilealt}[1]{%
\llparenthesis\hspace{2pt}#1\hspace{1pt}\rrparenthesis\textnormal{-alt}%
}
\newcommand{\mcompile}[1]{%
  \llparenthesis\hspace{2pt}#1\hspace{1pt}\rrparenthesis
}
\newcommand{\prog}{p}
\newcommand{\progA}{p}
\newcommand{\progB}{p'}
\newcommand{\thread}{t}
\newcommand{\pool}{P}
\newcommand{\poolA}{P}
\newcommand{\poolB}{P'}
\newcommand{\expr}{e}
\newcommand{\exprA}{e}
\newcommand{\exprB}{e'}
\newcommand{\exprOne}{e_1}
\newcommand{\exprTwo}{e_2}
\newcommand{\exprThree}{e_3}
\newcommand{\val}{v}
\newcommand{\loc}{\ensuremath{\ell}}
\newcommand{\locx}{x}
\newcommand{\locy}{y}
\newcommand{\locz}{z}
\newcommand{\reg}{r}
\newcommand{\regA}{a}
\newcommand{\regB}{b}
\newcommand{\regC}{c}
\newcommand{\regD}{d}
\newcommand{\cmd}{s}
\newcommand{\cmdA}{s}
\newcommand{\cmdB}{s'}
\newcommand{\md}{\mathit{md}}
\newcommand{\mdA}{\md_1}
\newcommand{\mdB}{\md_2}
\newcommand{\typeReg}{\mathit{Reg}}   %
\newcommand{\typeLoc}{\mathit{Loc}}   %
\newcommand{\typeProg}{\mathit{Prog}}    %
\newcommand{\typeExpr}{\mathit{Expr}}    %
\newcommand{\typeCmd}{\mathit{Cmd}}      %
\newcommand{\typeMode}{\mathit{Mode}}    %
\newcommand{\regst}{\kw{reg\_st}}
\newcommand{\evcounter}{\kw{ev\_counter}}
\newcommand{\nextcmd}{\kw{next\_cmd}}
\newcommand{\typeThread}{\mathit{Thread}}
\newcommand{\typeThreadDef}{
\left\{\begin{array}{@{}l@{}}
  \wtt\regst{\typeArrow\typeReg\NN};\\
  \wtt\evcounter\NN;\\
  \wtt\nextcmd\typeCmd
\end{array}\right\}
}
\newcommand{\typePool}{\mathit{Pool}}
\newcommand{\typePoolDef}{\typeFMap\NN\typeThread}
\newcommand{\typeArrow}[2]{{#1} \rightarrow {#2}}
\newcommand{\typeProd}[2]{{#1} \times {#2}}
\newcommand\fpfn{%
\smash{\raisebox{-.3ex}{%
\ensuremath{%
\xrightharpoonup{%
\smash{\raisebox{-.3ex}{%
\ensuremath{\scriptstyle\kern-0.25ex\,\mathit{fin}\,\kern-0.1ex}}}}
}}}}
\newcommand{\typeFMap}[2]{{#1}\;\fpfn\,{#2}}
\newcommand{\NN}{\mathbb{N}} %
\newcommand{\store}{\sigma} %
\newcommand{\finalSt}{\mathit{finalSt}}
\newcommand{\finalStTypical}{\finalSt\tuple\graphA{\mo}}
\newcommand{\typeLabel}[1]{
  \mathit{Lab}_{\,#1}
}
\newcommand{\EventE}{\kw{E}\xspace}
\NewDocumentCommand{\EventExtE}{o}{\IfNoValueTF{#1}{\EventE}{\EventE_{#1}}}
\NewDocumentCommand{\EventExtAtMost}{o m}{\EventExtE[#1]^{\sqsubseteq #2}}
\NewDocumentCommand{\EventExtAtLeast}{o m}{\EventExtE[#1]^{\sqsupseteq #2}}
\NewDocumentCommand{\EventExtMd}{o m}{\EventExtE[#1]^{#2}}
\newcommand{\EventAtLeast}[1]{\EventExtAtLeast[]{#1}}
\newcommand{\EventMd}[1]{\EventExtMd[]{#1}}
\NewDocumentCommand{\MemOpE}{m o}{\IfNoValueTF{#2}{#1}{#1_{#2}}}
\NewDocumentCommand{\MemOp}{m o o m m}{%
  \MemOpE{#1}[#2]\IfNoValueTF{#3}{}{^{#3}}(#4,\,#5)
}
\NewDocumentCommand{\MemOpAtMost}{m o m}{
  \MemOpE{#1}[#2]^{\sqsubseteq #3}
}
\NewDocumentCommand{\MemOpAtLeast}{m o m}{
  \MemOpE{#1}[#2]^{\sqsupseteq #3}
}
\newcommand{\ReadE}{\kw{R}}
\NewDocumentCommand{\ReadExtE}{o}{\MemOpE\ReadE[#1]}
\NewDocumentCommand{\ReadExt}{o o m m}{\MemOp\ReadE[#1][#2]{#3}{#4}}
\NewDocumentCommand{\ReadExtAtMost}{o m}{\MemOpAtMost\ReadE[#1]{#2}}
\NewDocumentCommand{\ReadExtAtLeast}{o m}{\MemOpAtLeast\ReadE[#1]{#2}}
\newcommand{\Read}[3]{\ReadExt[][#1]{#2}{#3}}
\newcommand{\ReadMd}[1]{\ReadE^{#1}}
\newcommand{\ReadNoMd}[2]{\ReadExt[][]{#1}{#2}}
\newcommand{\ReadAtLeast}[1]{\ReadExtAtLeast[]{#1}}
\newcommand{\WriteE}{\kw{W}}
\NewDocumentCommand{\WriteExtE}{o}{\MemOpE\WriteE[#1]}
\NewDocumentCommand{\WriteExt}{o o m m}{\MemOp\WriteE[#1][#2]{#3}{#4}}
\NewDocumentCommand{\WriteExtAtMost}{o m}{\MemOpAtMost\WriteE[#1]{#2}}
\NewDocumentCommand{\WriteExtAtLeast}{o m}{\MemOpAtLeast\WriteE[#1]{#2}}
\newcommand{\Write}[3]{\WriteExt[][#1]{#2}{#3}}
\newcommand{\WriteMd}[1]{\WriteE^{#1}}
\newcommand{\WriteNoMd}[2]{\WriteExt[][]{#1}{#2}}
\newcommand{\WriteAtLeast}[1]{\WriteExtAtLeast[]{#1}}
\newcommand{\NTWriteE}{\kw{NT}}
\NewDocumentCommand{\NTWriteExtE}{o}{\MemOpE\NTWriteE[#1]}
\newcommand{\NTWrite}[2]{\NTWriteE(#1,\, #2)}
\newcommand{\FenceE}{\kw{F}}
\NewDocumentCommand{\FenceExtE}{o}{\MemOpE\FenceE[#1]}
\newcommand{\Fence}[1]{\kw{F}^{#1}}
\newcommand{\FenceAtLeast}[1]{\Fence{\sqsupseteq #1}}
\newcommand{\FenceMd}[1]{\FenceE^{#1}}
\newcommand{\FenceNoMd}{\FenceE}
\newcommand{\MFenceE}{\kw{MF}}
\NewDocumentCommand{\MFenceExtE}{o}{\MemOpE\MFenceE[#1]}
\newcommand{\MFence}{\MFenceE}
\newcommand{\SFenceE}{\kw{SF}}
\NewDocumentCommand{\SFenceExtE}{o}{\MemOpE\SFenceE[#1]}
\newcommand{\SFence}{\SFenceE}
\newcommand{\RMWE}{\kw{RMW}}
\newcommand{\RMWSE}{\RMWE{\textnormal{-\texttt{s}}}}
\newcommand{\RMWFE}{\RMWE{\textnormal{-\texttt{f}}}}
\NewDocumentCommand{\RMWExtE}{o}{\MemOpE\RMWE[#1]}
\NewDocumentCommand{\RMWSExtE}{o}{\MemOpE\RMWSE[#1]}
\NewDocumentCommand{\RMWFExtE}{o}{\MemOpE\RMWFE[#1]}
\NewDocumentCommand{\RMWExt}{o o m m m}{%
  \RMWExtE[#1]\IfNoValueTF{#2}{}{^{#2}}(#3,\,#4,\,#5)
}
\NewDocumentCommand{\RMWExtMd}{o m}{\RMWExtE[#1]^{#2}}
\NewDocumentCommand{\RMW}{o m m m}{\RMWExt[][#1]{#2}{#3}{#4}}
\NewDocumentCommand{\RMWFail}{o m m}{\RMWExt[#1]{#2}{#3}{\bot}}
\newcommand{\RMWMd}[1]{\RMWExtMd[]{#1}}
\newcommand{\RMWNoMd}[3]{\RMWExt[][]{#1}{#2}{#3}}
\newcommand{\raceS}[1]{\kw{race}(#1)}
\newcommand{\race}[1]{\ensuremath{#1}-race\xspace}
\newcommand{\racy}[1]{\ensuremath{#1}-racy\xspace}
\newcommand{\racefree}[1]{\ensuremath{#1}-race free\xspace}
\newcommand{\dotLoc}[1]{#1.\texttt{loc}}
\newcommand{\dotMode}[1]{#1.\texttt{md}}
\newcommand{\refC}[1]{{#1}^{?}}
\newcommand{\transC}[1]{{#1}^{+}}
\newcommand{\reftransC}[1]{{#1}^{*}}
\newcommand{\inv}[1]{#1^{-1}}   %
\newcommand{\settorel}[1]{[#1]}
\newcommand{\seq}{;\,}
\newcommand{\union}[2]{#1\,\cup\,#2}
\newcommand{\disj}{\,\cup\,}
\newcommand{\coerce}[2]{#1 |_{#2}}
\newcommand{\imm}[1]{\coerce{#1}{\textnormal{\textsf{imm}}}}
\newcommand{\perloc}[1]{\coerce{#1}{\textnormal{\textsf{loc}}}}
\newcommand{\neqloc}[1]{\coerce{#1}{\neq\textnormal{\textsf{loc}}}}
\newcommand{\atloc}[2]{{#1}_{#2}}
\newcommand{\internal}[1]{{#1}_{i}}
\newcommand{\external}[1]{{#1}_{e}}
\newcommand{\onsc}[1]{\coerce{#1}{\sc}}
 \newcommand{\makerel}[2]{{\color{#1}\sf{#2}}\xspace}
\newcommand{\REventMd}[1]{\settorel{\EventMd{#1}}}
\newcommand{\colorPO}{black}
\newcommand{\po}{\makerel\colorPO{po}}
\newcommand{\colorPPO}{black}
\newcommand{\ppo}{\makerel\colorPPO{ppo}}
\newcommand{\ppointel}{\intelext.\ppo}
\newcommand{\porceleven}{\makerel\colorPPO{po}_{\textnormal{\rc}}}
\newcommand{\ppoasm}{\makerel\colorPPO{ppo}_{\textnormal{\textsf{asm}}}}
\newcommand{\colorOB}{blue!80}
\newcommand{\ob}{\makerel\colorOB{ob}}
\newcommand{\colorRF}{green!60!black}
\newcommand{\rf}{\makerel\colorRF{rf}}
\newcommand{\rfe}{\external\rf}
\newcommand{\rfi}{\internal\rf}
\newcommand{\colorRMW}{olive!70!black}
\newcommand{\rmw}{\makerel\colorRMW{rmw}}
\newcommand{\colorMO}{orange}
\newcommand{\mo}{\makerel\colorMO{mo}}
\newcommand{\moe}{\external\mo}
\newcommand{\moi}{\internal\mo}
\newcommand{\colorRB}{red!70!black}
\newcommand{\rb}{\makerel\colorRB{rb}}
\newcommand{\rbe}{\external\rb}
\newcommand{\rbi}{\internal\rb}
\newcommand{\colorPB}{red!40!black}
\newcommand{\pb}{\makerel\colorPB{pb}}
\newcommand{\colorHB}{blue}
\newcommand{\hb}{\makerel\colorHB{hb}}
\newcommand{\colorSW}{green!30!black}
\newcommand{\sw}{\makerel\colorSW{sw}}
\newcommand{\colorECO}{red!90!black}
\newcommand{\eco}{\makerel\colorECO{eco}}
\newcommand{\colorPSC}{violet}
\newcommand{\psc}{\makerel\colorPSC{psc}}
\newcommand{\pscBase}{\psc_{\textnormal{\textsf{base}}}}
\newcommand{\pscFence}{\psc_{\textnormal{\textsf{fence}}}}
\newcommand{\colorSCB}{violet}
\newcommand{\scb}{\makerel\colorSCB{scb}}
\NewDocumentCommand{\cyc}{s m}{
\mathit{cyclic}%
\IfBooleanTF{#1}{\left(\begin{array}{@{}l@{}}#2\end{array}\right)}{(#2)}
}
\newcommand{\acyc}[1]{\mathit{acyclic}{(#1)}}
\NewDocumentCommand{\notirr}{s m}{
\neg\mathit{irreflexive}%
\IfBooleanTF{#1}{\left(\begin{array}{@{}l@{}}#2\end{array}\right)}{(#2)}
}
\newcommand{\irr}[1]{\mathit{irreflexive}{(#1)}}
\newcommand{\nameSCCoherence}{SC-Coherence}
\newcommand{\axiomSCCoherence}{\RULE\nameSCCoherence}
\newcommand{\axiomSCCoherenceDef}{$\acyc{\po\disj\rf\disj\mo\disj\rb}$}
\newcommand{\nameCoherence}{Coherence}
\newcommand{\axiomCoherence}{\RULE\nameCoherence}
\newcommand{\axiomCoherenceDef}{$\irr{\hb\seq\refC\eco}$}
\newcommand{\nameSC}{SC}
\newcommand{\axiomSC}{\RULE\nameSC}
\newcommand{\axiomSCDef}{$\acyc\psc$}
\newcommand{\nameAtomicity}{Atomicity}
\newcommand{\axiomAtomicity}{\RULE\nameAtomicity}
\newcommand{\axiomAtomicityDef}{$\irr{\rb\seq\mo}$}
\newcommand{\nameNTA}{No-Thin-Air}
\newcommand{\axiomNTA}{\RULE\nameNTA}
\newcommand{\axiomNTADef}{$\acyc{\union\po\rf}$}
\newcommand{\nameCoherenceI}{Coherence-I}
\newcommand{\axiomCoherenceI}{\RULE\nameCoherenceI}
\newcommand{\nameCoherenceII}{Coherence-II}
\newcommand{\axiomCoherenceII}{\RULE\nameCoherenceII}
\newcommand{\axiomCoherenceIIDef}{$\acyc{\ppoasm\disj\eco}$}
\newcommand{\nameCoherenceIII}{Coherence-III}
\newcommand{\axiomCoherenceIII}{\RULE\nameCoherenceIII}
\newcommand{\axiomCoherenceIIIDef}{$\irr{\settorel{\WriteMd\nt}\seq\po\seq(\rb\disj\mo)}$}
\newcommand{\nameInternal}{Internal}
\newcommand{\axiomInternal}{\RULE\nameInternal}
\newcommand{\axiomInternalDef}{$\irr{\po\seq(\rfi\disj\moi\disj\rbi)}$}
\newcommand{\nameExternal}{External}
\newcommand{\axiomExternal}{\RULE\nameExternal}
\newcommand{\relExternalDef}{%
  \union\ppo{\union\rfe{\union\moe\rbe}}
}
\newcommand{\initEv}[1]{\mathit{I}(#1)}
\newcommand{\initGraph}{\mathit{Init}}
\newcommand{\gsim}{\sim}
\newcommand{\lab}{\kw{lab}\xspace}
\newcommand{\graph}{G}
\newcommand{\graphA}{G}
\newcommand{\graphB}{G'}
\newcommand{\evA}{a}
\newcommand{\evB}{b}
\newcommand{\evC}{c} %
\newcommand{\evD}{d}
\newcommand{\evE}{e}
\newcommand{\cycle}{C}
\tikzstyle{mixed}=[
\tikzstyle{internal-node}=[
\tikzstyle{arrow}=[->,shorten >= 1pt,>={Stealth[round]},semithick]
\tikzstyle{gr-edge}=[arrow,shorten >=2pt,shorten <=2pt,thick]
\tikzstyle{po-edge}=[\colorPO,gr-edge]
\tikzstyle{rf-edge}=[\colorRF,gr-edge,densely dashed]
\tikzstyle{sw-edge}=[\colorSW,gr-edge,densely dashed]
\tikzstyle{hb-edge}=[\colorHB,gr-edge,densely dashed]
\tikzstyle{rb-edge}=[\colorRB,gr-edge,densely dashed]
\tikzstyle{rmw-edge}=[\colorRMW,gr-edge]
\tikzstyle{mo-edge}=[\colorMO,gr-edge,densely dotted]
\newcommand{\mnode}[4][]{
  \node(#2)[mixed,#1]{{#3}\nodepart{second}{#4}};
}
\newcommand{\nameRR}{R-R}
\newcommand{\nodeRR}{\RULE\nameRR}
\NewDocumentCommand{\mnodeRR}{O{} O{} o O{\md} O{\loc} O{\intn}}{
  \mnode[#1]{#2}
    {$\Read{#4}{#5}{#6}$}
    {$\ReadNoMd{#5}{#6}$}
}
\newcommand{\nameWNTW}{W-NT}
\newcommand{\nodeWNTW}{\RULE\nameWNTW}
\NewDocumentCommand{\mnodeWNTW}{O{} O{} O{\nt} O{\loc} O{\intn}}{
  \mnode[#1]{#2}
    {$\Write{#3}{#4}{#5}$}
    {$\NTWrite{#4}{#5}$}
}
\newcommand{\nameWNTWalt}{W-NT-alt}
\newcommand{\nodeWNTWalt}{\RULE\nameWNTWalt}
\newcommand{\nameWSFW}{W-SFW}
\newcommand{\nodeWSFW}{\RULE\nameWSFW}
\NewDocumentCommand{\mnodeWSFW}{O{} O{} O{\loc} O{\intn}}{
  \mnode[#1]{#2}{$\Write\rel{#3}{#4}$}{\tikz{
    \node(A)[internal-node]{\;\;$\SFence$};
    \node(B)[internal-node,below of=A,yshift=4pt]{$\WriteNoMd{#3}{#4}$};
    \draw[thick,shorten >=2pt,shorten <=2pt] (A) -- (B.north);
  }}
}
\newcommand{\nameWW}{W-W}
\newcommand{\nodeWW}{\RULE\nameWW}
\NewDocumentCommand{\mnodeWW}{O{} O{} o O{\neq\sc,\nt} O{\loc} O{\intn}}{
  \mnode[#1]{#2}
    {$\Write{#4}{#5}{#6}$}
    {$\WriteNoMd{#5}{#6}$}
}
\NewDocumentCommand{\mnodeRRMW}
{O{} O{} O{\neq\na} O{\loc} O{\intn}}
{
  \mnode[#1]{#2}
    {$\Read{#3}{#4}{#5}$}
    {$\RMWNoMd{#4}{#5}\bot$}
}
\NewDocumentCommand{\mnodeRWRMW}
{O{} O{}
 O{\fst{\mcompile\md}} O{\snd{\mcompile\md}}
 O{\loc} O{\intn} O{\intm}}
{
  \mnode[#1]{#2}{\tikz{
    \node(A)[internal-node]{$\Read{#3}{#5}{#6}$};
    \node(B)[internal-node,below of=A,yshift=4pt]{$\Write{#4}{#5}{#7}$};
    \draw[thick,shorten <=2pt,shorten >=0pt] (A) -- (B.north);
  }}
  {$\RMWNoMd{#5}{#6}{#7}$}
}
\NewDocumentCommand{\mnodeRMWRMW}
{O{} O{} o O{\md} O{\loc} O{\intn} O{\intm^{?}}}
{
  \mnode[#1]{#2}
    {$\RMW[#4]{#5}{#6}{#7}$}
    {$\RMWNoMd{#5}{#6}{#7}$}
}
\newcommand{\nameRMWRMWS}{RMW-RMW-S}
\newcommand{\nodeRMWRMWS}{\RULE\nameRMWRMWS}
\NewDocumentCommand{\mnodeRMWRMWS}
{O{} O{} o O{\md} O{\loc} O{\intn} O{\intm}}
{
  \mnode[#1]{#2}
    {$\RMW[#4]{#5}{#6}{#7}$}
    {$\RMWNoMd{#5}{#6}{#7}$}
}
\newcommand{\nameRMWRMWF}{RMW-RMW-F}
\newcommand{\nodeRMWRMWF}{\RULE\nameRMWRMWF}
\NewDocumentCommand{\mnodeRMWRMWF}
{O{} O{} o O{\md} O{\loc} O{\intn}}
{
  \mnode[#1]{#2}
    {$\RMW[#4]{#5}{#6}{\bot}$}
    {$\RMWNoMd{#5}{#6}{\bot}$}
}
\NewDocumentCommand{\mnodeWSFWF}{O{} O{} O{\loc} O{\intn}}{
  \mnode[#1]{#2}{$\Write\sc{#3}{#4}$}{\tikz{
    \node(A)[internal-node]{\;\;$\SFence$};
    \node(B)[internal-node,below of=A,yshift=4pt]{$\WriteNoMd{#3}{#4}$};
    \node(C)[internal-node,below of=B,yshift=4pt]{\;\;$\FenceNoMd$};
    \draw[thick,shorten >=2pt,shorten <=2pt] (A) -- (B.north);
    \draw[thick,shorten >=2pt,shorten <=2pt] (B) -- (C.north);
  }}
}
\newcommand{\nameWSFWMF}{W-SFWMF}
\newcommand{\nodeWSFWMF}{\RULE\nameWSFWMF}
\NewDocumentCommand{\mnodeWSFWMF}{O{} O{} O{\loc} O{\intn}}{
  \mnode[#1]{#2}{$\Write\sc{#3}{#4}$}{\tikz{
    \node(A)[internal-node]{\;\;$\SFence$};
    \node(B)[internal-node,below of=A,yshift=4pt]{$\WriteNoMd{#3}{#4}$};
    \node(C)[internal-node,below of=B,yshift=4pt]{\;\;$\MFence$};
    \draw[thick,shorten >=2pt,shorten <=2pt] (A) -- (B.north);
    \draw[thick,shorten >=2pt,shorten <=2pt] (B) -- (C.north);
  }}
}
\NewDocumentCommand{\mnodeWWF}{O{} O{} O{\loc} O{\intn}}{
  \mnode[#1]{#2}{$\Write\sc{#3}{#4}$}{\tikz{
    \node(A)[internal-node]{$\WriteNoMd{#3}{#4}$};
    \node(B)[internal-node,below of=A,yshift=4pt]{\;\;$\FenceNoMd$};
    \draw[thick,shorten >=2pt,shorten <=2pt] (A) -- (B.north);
  }}
}
\newcommand{\nameWWMF}{W-WMF}
\newcommand{\nodeWWMF}{\RULE\nameWWMF}
\NewDocumentCommand{\mnodeWWMF}{O{} O{} O{\loc} O{\intn}}{
  \mnode[#1]{#2}{$\Write\sc{#3}{#4}$}{\tikz{
    \node(A)[internal-node]{$\WriteNoMd{#3}{#4}$};
    \node(B)[internal-node,below of=A,yshift=4pt]{\;\;$\MFence$};
    \draw[thick,shorten >=2pt,shorten <=2pt] (A) -- (B.north);
  }}
}
\NewDocumentCommand{\mnodeFF}{O{} O{}}{
  \mnode[#1]{#2}{$\Fence\sc$}{$\FenceNoMd$}
}
\newcommand{\nameFMF}{F-MF}
\newcommand{\nodeFMF}{\RULE\nameFMF}
\NewDocumentCommand{\mnodeFMF}{O{} O{}}{
  \mnode[#1]{#2}{$\Fence\sc$}{$\MFence$}
}
\newcommand{\nameFB}{F-$\bot$}
\newcommand{\nodeFB}{\RULE\nameFB}
\NewDocumentCommand{\mnodeFB}{O{} O{} O{\sqsubseteq\acqrel}}{
  \mnode[#1]{#2}{$\Fence{#3}$}{$\bot$}
}
\NewDocumentCommand{\mnodeMFMF}{O{} O{}}{
  \mnode[#1]{#2}{$\MFence$}{$\MFence$};
}
\newcommand{\nameFSF}{F-SF}
\newcommand{\nodeFSF}{\RULE\nameFSF}
\NewDocumentCommand{\mnodeFSF}{O{} O{} O{\stf}}{
  \mnode[#1]{#2}{$\FenceMd{#3}$}{$\SFence$};
}
\newcommand{\nameFSFalt}{F-SF-alt}
\newcommand{\nodeFSFalt}{\RULE\nameFSFalt}
\newcommand{\lproj}[1]{\downharpoonleft\hspace{-3pt}{#1}}
\newcommand{\rproj}[1]{{#1}\hspace{-3pt}\downharpoonright}
\newcommand{\graphM}{G_m}
\NewDocumentCommand{\mnodeRDMBR}{O{} O{} O{\loc} O{\intn}}{
  \mnode[#1]{#2}{$\Read\sc{#3}{#4}$}{\tikz{
    \node(A)[internal-node]{\;\;$\DMB\sy$};
    \node(B)[internal-node,below of=A,yshift=4pt]{$\Read\acq{#3}{#4}$};
    \draw[thick,shorten >=2pt,shorten <=2pt] (A) -- (B.north);
  }}
}
\NewDocumentCommand{\mnodeRRacq}{O{} O{} O{\loc} O{\intn}}{
  \mnode[#1]{#2}
    {$\Read\acq{#3}{#4}$}
    {$\Read\acq{#3}{#4}$}
}
\NewDocumentCommand{\mnodeRRrlx}{O{} O{} O{\loc} O{\intn}}{
  \mnode[#1]{#2}
    {$\Read\rlx{#3}{#4}$}
    {$\Read\rlx{#3}{#4}$}
}
\NewDocumentCommand{\mnodeRRna}{O{} O{} O{\loc} O{\intn}}{
  \mnode[#1]{#2}
    {$\Read\na{#3}{#4}$}
    {$\Read\rlx{#3}{#4}$}
}
\NewDocumentCommand{\mnodeWDMBW}{O{} O{} O{\loc} O{\intn}}{
  \mnode[#1]{#2}{$\Write\sc{#3}{#4}$}{\tikz{
    \node(A)[internal-node]{\;\;$\DMB\sy$};
    \node(B)[internal-node,below of=A,yshift=4pt]{$\Write{\rel}{#3}{#4}$};
    \draw[thick,shorten >=2pt,shorten <=2pt] (A) -- (B.north);
  }}
}
\NewDocumentCommand{\mnodeWWrel}{O{} O{} O{\loc} O{\intn}}{
  \mnode[#1]{#2}
    {$\Write\rel{#3}{#4}$}
    {$\Write\rel{#3}{#4}$}
}
\NewDocumentCommand{\mnodeWWrlx}{O{} O{} O{\loc} O{\intn}}{
  \mnode[#1]{#2}
    {$\Write{\sqsupseteq\rlx}{#3}{#4}$}
    {$\Write\rlx{#3}{#4}$}
}
\NewDocumentCommand{\mnodeFDMBacq}{O{} O{}}{
  \mnode[#1]{#2}{$\Fence\acq$}{$\DMB\ld$}
}
\NewDocumentCommand{\mnodeFDMB}{O{} O{} O{\neq\acq}}{
  \mnode[#1]{#2}{$\Fence{#3}$}{$\DMB\sy$}
}
\NewDocumentCommand{\poolLkp}{O{=} O{\poolA} O{\inti}}{
  #2[#3].
  \poolFields[#1]
}
\NewDocumentCommand{\poolFields}{O{=}}{
  \def\tmpEq{#1\,}
  \left\{\begin{array}{@{}l@{\;}l@{}}
  \regstLkp
}
\NewDocumentCommand{\regstLkp}{t> s O{\phi} t<}{
  \IfBooleanTF{#2}{\regst&\tmpEq#3\IfBooleanTF{#4}{}{\arrcr}}{}
  \evcounterLkp
}
\NewDocumentCommand{\evcounterLkp}{t> s O{\intj} t<}{
  \IfBooleanTF{#2}{\evcounter&\tmpEq#3\IfBooleanTF{#4}{}{\arrcr}}{}
  \nextcmdLkp
}
\NewDocumentCommand{\nextcmdLkp}{t> s O{\cmd} t<}{
  \IfBooleanTF{#2}{\nextcmd&\tmpEq#3\IfBooleanTF{#4}{}{\arrcr}}{}
  \end{array}\right.
}
\NewDocumentCommand{\graphUpt}{O{\graphA}}{
  \def\tmpGraph{#1}
  \tmpGraph.\left\{\begin{array}{@{}l@{\;}l@{}}
  \eventLkp
}
\NewDocumentCommand{\eventLkp}{s m t<}{
  \IfBooleanTF{#1}{
    \EventE&:=\,\tmpGraph.\EventE\uplus#2\IfBooleanTF{#3}{}\arrcr
  }{}
  \labLkp
}
\NewDocumentCommand{\labLkp}{s m t<}{
  \IfBooleanTF{#1}{\lab&:=\,\tmpGraph.\lab#2\IfBooleanTF{#3}{}{\arrcr}}{}
  \rmwLkp
}
\NewDocumentCommand{\rmwLkp}{s m t<}{
  \IfBooleanTF{#1}
    {\rmw&:=\,\tmpGraph.\rmw\cup#2\IfBooleanTF{#3}{}{\arrcr}}
    {}
  \end{array}\right.
}
\newcommand{\outcomeC}[1]{\color{chameleon3}{\,\textit{//}\;#1}} %
\newcommand{\outcomeI}[1]{\color{red2}{\,\textit{//}\;\underline{#1}}} %
\newcommand{\inarr}[1]{\begin{array}{@{}l@{}}#1\end{array}}
\newcommand{\inarrII}[2]{
  \begin{array}{@{}l@{~~}||@{~~}l@{}}
    \inarr{#1}&\inarr{#2}
  \end{array}
}
\newcommand{\inarrIII}[3]{
  \begin{array}{@{}l@{~~}||@{~~}l@{~~}||@{~~}l@{}}
    \inarr{#1}&\inarr{#2}&\inarr{#3}
  \end{array}
}
\newcommand{\cmark}{\faStar\xspace}
\newcommand{\cmarkcell}{\cellcolor{black!04}\cmark}
\newcommand{\xmark}{\faStarO\xspace}
\newcommand{\xmarkcell}{\cellcolor{black!04}\xmark}
\newcommand{\hmark}{\faStarHalfO\xspace}
\newcommand{\hmarkcell}{\cellcolor{black!04}\hmark}
\let\oldparagraph\paragraph
\renewcommand{\paragraph}[1]{\oldparagraph{\textnormal{\textbf{#1}}}}
\Crefname{figure}{Figure}{Figures}
\crefname{figure}{Figure}{Figures}
\crefname{section}{\S{}}{sections}
\crefname{definition}{Definition}{Definitions}
\crefname{theorem}{Theorem}{Theorems}
\crefname{appendix}{\S{}}{appendices}
\newcommand{\transf}[2]{#1 \rightsquigarrow #2}
\begin{document}

\title{Extending the C/\cpp Memory Model with Inline Assembly}

\author{Paulo Em\'{i}lio de Vilhena}
\email{p.de-vilhena@imperial.ac.uk}
\orcid{0000-0001-7379-310X}
\affiliation{%
  \institution{Imperial College London}
  \city{London}
  \country{United Kingdom}
}

\author{Ori Lahav}
\email{orilahav@tau.ac.il}
\orcid{0000-0003-4305-6998}
\affiliation{%
  \institution{Tel Aviv University}
  \city{Tel Aviv}
  \country{Israel}
}

\author{Viktor Vafeiadis}
\email{viktor@mpi-sws.org}
\orcid{0000-0001-8436-0334}
\affiliation{%
  \institution{MPI-SWS}
  \city{Kaiserslautern}
  \country{Germany}
}

\author{Azalea Raad}
\email{azalea.raad@imperial.ac.uk}
\orcid{0000-0002-2319-3242}
\affiliation{%
  \institution{Imperial College London}
  \city{London}
  \country{United Kingdom}
}

\begin{abstract}
Programs written in C/C\texttt{++} often include \textit{inline assembly}: a snippet of architecture-specific assembly code used to access low-level functionalities that are impossible or expensive to simulate in the source language. Although inline assembly is widely used, its semantics has not yet been formally studied.

In this paper, we overcome this deficiency by investigating the effect of inline assembly on the \textit{consistency} semantics of~C/C\texttt{++} programs. We propose the first memory model of the~C\texttt{++} Programming Language with support for inline assembly for Intel's \textsf{x86} including \textit{non-temporal stores} and \textit{store fences}. We argue that previous provably correct compiler optimizations and correct compiler mappings should remain correct under such an extended model and we prove that this requirement is met by our proposed model.
\end{abstract}

\begin{CCSXML}
<ccs2012>
   <concept>
       <concept_id>10003752.10010124</concept_id>
       <concept_desc>Theory of computation~Semantics and reasoning</concept_desc>
       <concept_significance>500</concept_significance>
       </concept>
   <concept>
       <concept_id>10003752.10003753.10003761</concept_id>
       <concept_desc>Theory of computation~Concurrency</concept_desc>
       <concept_significance>500</concept_significance>
       </concept>
   <concept>
       <concept_id>10011007.10011006.10011039</concept_id>
       <concept_desc>Software and its engineering~Formal language definitions</concept_desc>
       <concept_significance>300</concept_significance>
       </concept>
   <concept>
       <concept_id>10010520.10010521</concept_id>
       <concept_desc>Computer systems organization~Architectures</concept_desc>
       <concept_significance>100</concept_significance>
       </concept>
 </ccs2012>
\end{CCSXML}

\ccsdesc[500]{Theory of computation~Semantics and reasoning}
\ccsdesc[500]{Theory of computation~Concurrency}
\ccsdesc[300]{Software and its engineering~Formal language definitions}
\ccsdesc[100]{Computer systems organization~Architectures}

\keywords{%
  Concurrency,
  Weak Memory Models,
  Semantics of Programming Languages
}

\maketitle

\section{Introduction}
\label{section:intro}

Large software applications are rarely written in only one language.
While the bulk of an application is typically written
in a general-purpose programming language, such as \cpp,
some parts are invariably written in higher-level domain-specific languages
(for example, lexers and parsers, which generate C++ code)
and others directly in assembly code of the underlying architecture(s).

The latter kind is directly supported by mainstream \ccpp compilers
through \emph{inline-assembly} blocks,
which can be used
(1) to expose some hardware instructions that
are inaccessible or difficult to simulate in the source language,
(2) to write prologue and epilogue code of \emph{naked} functions~%
\cite{cpp-assembler-msvc},
and
(3) to keep the ordering of instructions at compile time~%
\cite{preshing-memory-ordering-at-compile-time}.
As such, inline assembly constitutes an important tool of~\ccpp,
whose significance is further attested by major projects, such as
the Linux kernel-based virtual machine (KVM)~%
\cite{linux-kvm}
and the GNU Compiler Collection (GCC)~%
\cite{gcc},
each counting with thousands of occurrences of
inline assembly.

Unlike some of the key features of \ccpp,
such as synchronization primitives,
which have been the subject of many research papers~\cite{c11mm,rc11},
and despite the extensive use of inline assembly,
inline assembly lacks a \emph{formal semantics}:
a precise unambiguous specification.

In this paper, we overcome this deficiency and
propose the first formal account of inline assembly.
We distinguish three classes of inline-assembly instructions:
\begin{enumerate}
\item Instructions,
such as complex arithmetic and bit-manipulating operations
and \emph{single instruction/multiple data}~\cite{flynn-72}
(SIMD) instructions,
whose effect can be expressed in the source language
(typically, as a sequence of arithmetic operations).
\item Instructions accessing memory and/or enforcing ordering
between instructions (such as \emph{store fences}~%
\cite[Vol. 2B, \S4]{intel-manual}),
whose effect cannot be expressed in the source language.
Such instructions are commonly used in libraries for parallel and persistent
programming, efficient moving of data, and communicating with external
devices.
\item Instructions that have a global effect and
may completely change the semantics of the subsequent program,
such as
raising an interrupt,
writing to the stack pointer register or to the page table entries~%
\cite{simner-al-22,alglave-24},
and flushing the \emph{translation lookaside buffer}~%
\cite[Vol. 2A, \S3]{intel-manual}.
\end{enumerate}

We narrow our scope to the second class of instructions for the Intel's \intel
architecture, whose consistency and persistency semantics have been
formalized by \citet{raad-22} in a model known as~\intelext.
We argue that supporting the first class of instructions is straightforward,
raising no challenges beyond that of providing accurate semantics for the
individual hardware instructions.
In contrast, the second class affects the
\textit{memory consistency model} of the programming language,
governing how concurrent programs are allowed to interact through
shared memory.
As we shall see,
the effect of this class of instructions on the language's model
leads to interesting semantic challenges.
As for the third class of instructions,
we declare them to be beyond the scope of this paper.

A particularly interesting use case of inline assembly are \intel
\emph{non-temporal stores}~\cite[Vol. 1, \S10.4.6.2]{intel-manual},
an \intel-specific feature that 
allows writing to memory while bypassing the cache.
Non-temporal stores are used in cases of bulk memory writes~\cite{raad-22},
whose relative order is immaterial, such as initializing a memory page with zeros.

Unlike regular \intel stores, non-temporal stores can be reordered with other
stores, and so the following \ccpp program with inline assembly, when compiled with
\texttt{gcc}~\cite{gcc} or
\texttt{clang}~\cite{clang},
can exhibit the following quite surprising
outcome (here and henceforth, we use pseudocode syntax with $x,y,\ldots$ being shared locations
and $a,b,\ldots$ being thread-local registers; we assume that all locations are initialized to $0$):
\begin{equation*}
\tag{MP-NT}
\label{prog:mp-nt}
\inarrII{
  \AsmNTWritePL\locx{1} \\
  \WritePL\rel\locy{1}
}{
  \ReadPL\acq\regA\locy\outcomeC{1} \\
  \ReadPL\rlx\regB\locx\outcomeC{0}
}
\quad\xrightarrow{\textit{compile}}\quad
\inarrII{
  \texttt{movnt}\ [\locx], 1 \\
  \texttt{mov}\ [\locy], 1
}{
  \texttt{mov}\ \regA, [\locy]\outcomeC{1} \\
  \texttt{mov}\ \regB, [\locx]\outcomeC{0}
}
\end{equation*}

Normally, \ccpp release-acquire accesses induce synchronization and thus
anything executed before a release write is deemed to have happened before
everything after an acquire read fulfilled by this write.
Yet, this is no longer the case with inline assembly.
Applying the standard
compilation scheme of mapping \ccpp release/acquire/relaxed accesses to regular
\intel accesses
results in a \intel program that can read $a=1 \land b=0$.
The only way to prevent the weak outcome is to add
an appropriate instruction working as a fence between
the two store instructions:
a store fence (\SFencePL) suffices,
but one may also use a \emph{memory fence} (\texttt{mfence}), 
a read-modify-write operation,
or a plain \intel store to~$\locx$.
However, without a formal specification,
such observations are unclear to developers,
who naturally expect release/acquire synchronization
to apply to all kinds of accesses.
\footnote{%
Indeed, Program~\ref{prog:mp-nt} illustrates
one of the concerns in a recent
Rust bug report: \url{https://github.com/rust-lang/rust/issues/114582}.
}
 
The question is how to provide an appropriate semantics for \ccpp programs with
inline assembly, such as the previous example of~\ref{prog:mp-nt}.
In \cref{section:overview}, we show that devising an appropriate semantics is by no
means trivial.
At the very least, one would require a solution that is:
\begin{itemize}
\item \emph{flexible},
that is, allowing arbitrary mixing of \ccpp and
inline-assembly accesses
with no partition on threads or memory locations
that can or cannot use \intel instructions,
since such restriction is not respected
by most use cases of inline assembly;
\item \emph{supporting a representative set of \intel and \ccpp features}
that have to do with accessing memory in a possibly concurrent setting;
\item \emph{preserving} the correctness of the existing
\emph{\ccpp compilation schemes}
to \intel and of \emph{local source-to-source code transformations}, 
since these are readily performed by \ccpp compilers;
\item \emph{precisely matching the \intel (resp.\ \ccpp) model} for programs
consisting purely of \intel (resp.\ \ccpp) constructs.
This last criterion acts as
a sanity check ensuring that the semantics of existing \ccpp programs
(without inline assembly)
will not be affected by our proposed extension of the \ccpp
concurrency model.
\end{itemize}

In addition, we would like our semantics to provide useful guarantees
for common correct uses of inline assembly,
such as the following variant of \ref{prog:mp-nt},
which rules out the weak outcome by
inserting a store fence
between the non-temporal store to~$\locx$ and the release write to~$\locy$:
\begin{equation*}
\tag{MP-NT-SF}
\label{prog:mp-nt-sf}
\inarrII{
  \AsmNTWritePL\locx{1} \\
  \AsmSFencePL \\
  \WritePL\rel\locy{1}
}{
  \ReadPL\acq\regA\locy\outcomeI{1} \\
  \ReadPL\rlx\regB\locx\outcomeI{0}
}
\quad\xrightarrow{\textit{compile}}\quad
\inarrII{
  \texttt{movnt}\ [\locx], 1 \\
  \texttt{sfence} \\
  \texttt{mov}\ [\locy], 1
}{
  \texttt{mov}\ \regA, [\locy]\outcomeI{1} \\
  \texttt{mov}\ \regB, [\locx]\outcomeI{0}
}
\end{equation*}
(In our examples,
certain read instructions are followed by comments.
When every comment is displayed in green, as~$\outcomeC{v}$,
the annotated outcome
can be observed on some architecture and should
therefore be allowed by the model.
When every comment is underlined and displayed in red, as~$\outcomeI{v}$,
the annotated outcome
cannot be observed and should
therefore be forbidden.)

As we explain in \cref{section:overview}, many direct approaches
to the problem of defining an appropriate semantics for
\ccpp with inline \intelext assembly
fail one or more of the stated requirements.

In response,
in \cref{section:model},
we develop a carefully designed extension of
the \ccpp consistency model with support for the user-mode
\intelext inline-assembly instructions that access memory:
namely,
plain loads and stores,
non-temporal stores,
read-modify-write operations, and
fences.
We prove that our model is an extension of the \intelext and \ccpp models,
in the sense that plain \intel and plain \ccpp programs have
unchanged semantics.

In \cref{section:metatheory}, we prove that the established sound compilation
schemes from \cpp to \intelext remain sound in spite of the presence of
inline-assembly blocks, and that, similarly,
so do the sound local source-to-source code transformations,
such as reordering of independent memory loads.
In addition,
we introduce a new, provably sound, compilation scheme to \intelext,
which compiles relaxed writes to non-temporal stores
for the price of including some additional store fences~%
(\defref{def:alt:scheme:rcelevenext}).

\section{Overview}
\label{section:overview}

In this section, we provide a gentle introduction to \cref{section:model},
where we formalize our contributions.
To this end,
in~\cref{subsection:desiderata},
we establish a series of desired properties that a model
for \ccpp with inline assembly should enjoy.
Then,
in~\cref{subsection:evaluation},
we show why direct approaches for devising such a model do not work.
Finally,
in \cref{subsection:overview-towards} and
\cref{subsection:intuitive:account},
we present an intuitive overview of our proposed model,
showing how it satisfies the established desiderata.

\subsection{Desiderata for a Hybrid Consistency Model for \ccpp and \intel Assembly}
\label{subsection:desiderata}

We argue that
tentative ``hybrid models'' for \ccpp with support for
inline \intelext assembly
should enjoy the following properties:

\hypertarget{P0}{}
\paragraph{P0: Flexibility.}
As a first minimal requirement, we ask the hybrid model to support all the
features of the respective \ccpp and \intel models, and to allow free mixing
of the two.
That is, we want to be able to write programs where
threads can mix both \ccpp and inline-assembly instructions
and where
memory locations can be accessed using
both types of instructions,
as we have seen in the~\ref{prog:mp-nt} and~%
\ref{prog:mp-nt-sf} programs.

\hypertarget{P1}{}
\paragraph{P1: Correctness of compiler mappings.}
In the weak-memory literature,
a \textit{compiler mapping}, or a \textit{compilation scheme},
maps the memory operations of the source language to
sequences of instructions of the target language that implement the
corresponding high-level memory operation.
Two standard compilation schemes from \ccpp to \intel exist~\cite{c11mm,rc11}:
the \textit{fence-after-\sc-write} scheme,
which places memory fences after \sc writes; and
the \textit{fence-before-\sc-read} scheme,
which places memory fences before \sc reads.
Both schemes have been proven \textit{correct} with respect to \rceleven~\cite{rc11}:
the compilation of a \ccpp program~$\prog$ following one of these schemes
can only exhibit behaviors that are assigned to~$\prog$ by \rceleven.
These schemes can be easily extended with support for
inline \intelext assembly by simply mapping
an inline-assembly
instruction~$\Asm[\cmd]$ to~$\cmd$.
This mapping is in agreement with how current
\ccpp compilers handle such instructions~\cite[Chapter 6.6]{compcert}.
It is therefore desirable that
these schemes remain correct with respect to
a hybrid model for \ccpp with inline \intelext assembly.

\hypertarget{P2}{}
\paragraph{P2: Correctness of standard compiler optimizations.}
To improve program performance, \ccpp compilers perform a sequence of local
source-to-source transformations, whose correctness (in the absence of inline
assembly) has been established by prior work \cite{vafeiadis-et-al-15,rc11}.
\ccpp compilers readily perform these transformation even when the program
contains inline assembly.
It is thus important that these transformations remain correct in any \ccpp model
extended with inline assembly.

\hypertarget{P3}{}
\paragraph{P3: Extension of source.}
For programs that do not use inline assembly, we want our model
to coincide with the model of the source language.
Concretely, we consider \rceleven as the source model,
and we say that a model~$M$ is an
\textit{extension of~\rceleven}
if the semantics given by~$M$ to plain \ccpp programs
agrees with the semantics given by~\rceleven.
If this property did not hold of a candidate
hybrid model~$M$,
then
plain \ccpp and \ccpp with support for inline assembly
should be seen as different programming languages,
because programs could
have different semantics depending on whether~%
\rceleven or the hybrid model~$M$ is used.
We see this distinction as artificial
and compromising to the language.

\hypertarget{P4}{}
\paragraph{P4: Extension of target.}
Analogously, we argue that a candidate hybrid model~$M$
should be an
\textit{extension of~\intelext}:
the semantics given by~$M$
to a \ccpp program~$\prog$
written entirely using inline~\intelext assembly
should agree with the semantics given by~\intelext
(to the obvious~\intelext program corresponding to~$\prog$).
The model~$M$ cannot give a stronger
semantics to~$\prog$ than~$\intelext$
because the compilation scheme
of inline assembly is the straightforward identity map.
Therefore,
if there was a mismatch,
then the model~$M$ would be necessarily
assigning a more relaxed semantics to~$\prog$
than~$\intelext$.
This weakness in reasoning is undesirable.

\hypertarget{P5}{}
\paragraph{P5: Architecture-specific guarantees for mixed programs.}
The \rceleven model is sufficiently relaxed so as to
support efficient compilation to multiple hardware architectures.
This generality has the downside that \rceleven
may allow behaviors that cannot be observed by most implementations.
The following program, for example, depicts such a behavior
(known as \textit{independent reads from independent writes} - IRIW):
\begin{equation*}
\tag{IRIW}
\label{prog:iriw}
\inarrII{\inarrII{
  \WritePL\rel\locx{1}
}{
  \ReadPL\acq\regA\locx\outcomeC{1} \\
  \ReadPL\rlx\regB\locy\outcomeC{0}
}}{\inarrII{
  \ReadPL\acq\regC\locy\outcomeC{1} \\
  \ReadPL\rlx\regD\locx\outcomeC{0}
}{
  \WritePL\rel\locy{1}
}}
\end{equation*}

This behavior is allowed by \rceleven and observed when the program is
run on the \power~\cite{alglave-14} architecture.  It illustrates that the two independent
writes in the first and fourth threads can be observed in different orders
by the second and third threads,
even though the accesses in these two middle threads
have to be executed in order
(the \acq access mode prevents reordering with subsequent accesses).

When, however, the \ref{prog:iriw} program is compiled to \intel and to recent
versions of \arm**~\cite{pulte-al-17},
the annotated weak outcome cannot be
observed because these target architecture models
provide the \textit{multi-copy atomicity} guarantee,
which postulates that any two writes must be observed by all threads,
except the ones performing the two writes, in the \emph{same} order.
This multi-copy atomicity guarantee is a key property of the
\intel and \arm** architectures.
It can be exploited to simplify reasoning about the
correctness of a given program and, in some cases,
to write more efficient ones.

The problem is that the \rceleven model does not provide an efficient way
of enforcing multi-copy atomicity even when the target architecture provides
this guarantee.
\rceleven, in fact, provides only two ways to forbid the weak behavior of
\ref{prog:iriw},
both of which incur an non-negligible implementation cost on \intel.
One can either (1) strengthen all access modes to \sc, or
(2) insert an \sc fence between the two pairs of read operations.
In the context of \intel, both solutions are unsatisfactory,
as they involve additional unnecessary fences.
With the support for inline \intelext assembly,
one could imagine a third solution that consists
in strengthening the first read operation of each thread
as follows:
\begin{equation*}
\tag{IRIW-TSO}
\label{prog:iriw:asm}
\inarrII{\inarrII{
  \WritePL\rel\locx{1}
}{
  \AsmReadPL\regA\locx\\
  \ReadPL\rlx\regB\locy
}}{\inarrII{
  \AsmReadPL\regC\locy\\
  \ReadPL\rlx\regD\locx
}{
  \WritePL\rel\locy{1}
}}
\end{equation*}

One would expect this solution to work because (similar to \acq accesses)
\intelext disallows the reordering of a read operation with any other
subsequent operation.
This solution avoids the emission of fences and highlights the reliance on
an architecture-specific guarantee.

\subsection{Evaluation of Candidate Models}
\label{subsection:evaluation}

\begin{figure}[t]
\begin{minipage}{.54\textwidth}
\renewcommand{\arraystretch}{1.3}\centering\small
\begin{tabular}{|c||c|c|c|c|c|c|}
\cline{2-7}
  \multicolumn{1}{c|}{}
  &{P0}&{P1}&{P2}&{P3}&{P4}&{P5}
\\
\hhline{-|=|=|=|=|=|=|}
  \hyperlink{A1}{Hardware}
    &\cmarkcell &\hmarkcell &\xmarkcell &\xmarkcell &\cmarkcell &\cmarkcell
\\\hline\hyperlink{A2}{Branching}
    &\cmarkcell &\hmarkcell &\xmarkcell &\cmarkcell &\cmarkcell &\cmarkcell
\\\hline\hyperlink{A3}{TSO-as-RA}
    &\xmarkcell &\cmarkcell &\cmarkcell &\cmarkcell &\xmarkcell &\xmarkcell
\\\hline\hyperlink{A4}{Projection}
    &\cmarkcell &\cmarkcell &\cmarkcell &\cmarkcell &\cmarkcell &\xmarkcell
\\\hline\hyperlink{A5}{{\protect\NoHyper\citet{goens-al-23}\protect\endNoHyper}}
    &\xmarkcell &\cmarkcell &\cmarkcell &\cmarkcell &\cmarkcell &\xmarkcell
\\\hline\hyperlink{A6}{Approach of \cref{subsection:overview-towards}}
    &\cmarkcell &\cmarkcell &\xmarkcell &\cmarkcell &\cmarkcell &\cmarkcell
\\\hline\hyperlink{A7}{Our approach}
    &\cmarkcell &\cmarkcell &\cmarkcell &\cmarkcell &\cmarkcell &\cmarkcell
\\\hline
\end{tabular}
\end{minipage}
\begin{minipage}{.45\textwidth}
\hyperlink{P0}{P0 - Flexibility}\\
\hyperlink{P1}{P1 - Correctness of compiler mappings}\\
\hyperlink{P2}{P2 - Correctness of compiler optimizations}\\
\hyperlink{P3}{P3 - Extension of \rcelevenname}\\
\hyperlink{P4}{P4 - Extension of \intelext}\\
\hyperlink{P5}{P5 - Strong guarantees for mixed programs}\\
\end{minipage}

\caption{%
Comparison of approaches according to several desired properties.%
}
\Description{}
\label{fig:evaluation}
\end{figure}

We now consider multiple tentative
hybrid models and evaluate them according to
our established criteria.
\Cref{fig:evaluation} contains a summary of our discussion.
The candidate models are organized by lines,
and the desired properties by columns.
A full star means that a model enjoys the corresponding property;
an empty star means that it does not;
a half star means that the property is partially met.

\hypertarget{A1}{}
\paragraph{Hardware approach.}
The hardware approach is perhaps the first and simplest
solution that comes to mind:
it consists of using the hardware model \intelext itself
as the hybrid model.
This seems like a plausible solution,
because a program that uses inline \intelext assembly
can only be executed on this specific architecture.
However,
one immediate deficiency of this approach is
that the \intelext model is not directly applicable to a \ccpp program;
one would first have to consider its compilation
to \intelext and only then apply the hardware model.
As a consequence,
one would have to commit to one of the compilation schemes to \intelext.
Therefore, under this approach,
the correctness of standard compilation mappings would not hold in general.
Another downside is that
this model is not an extension of \rceleven:
the semantics of a program under \intelext can clearly
disagree from that given by \rceleven.
Finally, this approach would not validate standard
compilation optimizations
as many of them, such as reordering of independent reads,
is unsound under \intelext.

\hypertarget{A2}{}
\paragraph{Branching approach.}
A slight refinement of the hardware approach
is to branch on whether the program uses inline assembly:
if it does, then the semantics is given by \intelext ;
otherwise, the semantics is given by \rceleven.
This approach improves on the previous one by
constituting an extension of \rceleven, however
most compiler optimizations would still
be unsound in programs with inline assembly.

\hypertarget{A3}{}
\paragraph{The TSO-as-RA approach.}
The next approach is to keep the \rceleven model,
and to simply map each inline assembly instruction to
an existing \ccpp construct with the same or slightly
weaker semantics.
In particular,
plain \intelext stores can be mapped to \rceleven\ \rel stores,
plain \intelext loads can be mapped to \rceleven\ \acq loads,
\intelext memory fences to \rceleven\ \sc fences, and
\intelext store fences to \rceleven\ \acqrel fences.

This approach has three major downsides.
First, it does not give any semantic benefit to using inline assembly
(\hyperlink{P5}{P5}).
Second,
it does not match the \intelext semantics for programs
consisting purely of inline assembly (\hyperlink{P3}{P3}).
For example, consider a version of~\ref{prog:iriw}
written entirely using inline assembly;
that is,
using inline-assembly reads and writes
instead of \cpp reads and writes.
According to the TSO-as-RA approach,
this inline-assembly version of~\ref{prog:iriw}
can exhibit the annotated behavior of~\ref{prog:iriw},
even though, in practice, it can never be observed.
Third, the TSO-as-RA approach cannot model all relevant \intelext features.
In particular, it cannot model \intelext non-temporal stores
because there is no corresponding \rceleven store construct
that permits the weak behavior of \ref{prog:mp-nt} from \cref{section:intro}.

\hypertarget{A4}{}
\paragraph{Projection approach.}
Given that neither \intelext nor \rceleven alone are appropriate
for ascribing semantics to \ccpp programs with inline assembly,
a natural choice is to use both models together.

At a very high level, the two models seem compatible:
they are defined in a \emph{declarative style}
as a set of constraints that
program executions should satisfy.
For instance, \rceleven states that
a read operation cannot \emph{happen before}
the write instruction from which it reads.
An instruction is said to happen before another one
(1) if it appears earlier in the same thread,
or
(2) if it appears before some release-acquire synchronization,
such as seen in the example of~\ref{prog:mp-nt}.
\intelext, on the other hand, imposes multi-copy atomicity:
the order in which independent writes are observed
is the same across all threads
(except the ones performing those writes
as they may observe their own writes early).

A natural definition for a combined model would be to take the conjunction
of the constraints of the two models, each applied only to the instructions of
the corresponding model.
In other words,
to apply the \intelext constraints to the inline-assembly
instructions and the \rceleven constraints to the \rceleven accesses.
Such a definition is clearly an
extension of \rceleven and \intelext.
Moreover, it supports the existing compilation schemes and
compiler optimizations.
It fails, however,
to provide useful semantics for programs with inline assembly:
for instance,
it does not rule out the weak behaviors of
the \ref{prog:mp-nt-sf} and~\ref{prog:iriw:asm} programs,
because it does not rule out cycles
with accesses from both models.

\hypertarget{A5}{}
\paragraph{Compound memory model approach.}

\citet{goens-al-23} propose another way of combining two memory models based on
operational semantics,
where each thread follows a single operational memory model.
Their approach is, however,
not applicable to the setting of inline assembly
because it is too inflexible:
it does not allow the use of both \intel and
\ccpp instructions in the same thread.

\hypertarget{A6}{}
\subsection{Towards a Good Hybrid Model}
\label{subsection:overview-towards}

From the approaches seen so far, only the \hyperlink{A4}{projection} approach
comes close to achieving our desiderata for a hybrid memory consistency model.
To arrive at a good hybrid model, we will therefore start with the projection
approach and refine it to strengthen the guarantees given to programs containing
both \ccpp accesses and inline \intel assembly.

\paragraph{Supporting correct message-passing patterns.}

The first necessary strengthening comes from carefully inspecting the
\ref{prog:mp-nt} and \ref{prog:mp-nt-sf} examples.
\rceleven forbids the weak behavior of the corresponding programs with only
\ccpp accesses with its \emph{coherence} condition, which says that the
\emph{extended coherence order} ($\eco$) cannot contradict the model's
\emph{happens-before} relation ($\hb$).

The extended coherence order~$\eco$, orders accesses at a given memory
location in the order they appear to have executed.
For instance,
it places all writes to the same location, say~$\locx$,
in a total order.
A read~$r$ to~$\locx$ is placed by~$\eco$
after the write~$w$ from which~$r$ reads
and before every other write that follows~$w$ according to~$\eco$ itself.
In the executions leading to the annotated outcomes of~\ref{prog:mp-nt}
and~\ref{prog:mp-nt-sf},~$\eco$ orders the write to~$\locx$
before the read to~$\locx$
(as the latter reads the initialization value,~$0$)
and orders the write to~$\locy$
before the read to~$\locy$
(as the latter reads from the former).

The happens-before relation~$\hb$,
defined as~$\transC{(\po\disj\sw)}$,
is given as the transitive closure of the union of two components:
program-order edges
($\po$, relating instructions of the same thread
in the order they appear in the program)
and synchronization edges ($\sw$) between threads,
when one thread reads from another in a synchronizing
fashion (for example, using~\rel/\acq accesses). 
In our example, the write to~$\locy$
synchronizes with the read to~$\locy$, and thus the previous write
to~$\locx$
happens before the read to~$\locx$ according to~\rceleven,
and so the read to~$\locx$
cannot read~$0$.

Clearly, to regain soundness in the model with inline assembly, we need to
adapt the definition of~$\hb$ to exclude program-order edges from
non-temporal stores to subsequent stores because these can be reordered
by~\intel. Blindly restricting the definition of~$\hb$
to relate only~\ccpp events
(as in the projection approach) is too weak because
the behavior of~\ref{prog:mp-nt-sf} would then be allowed.
A suitable definition is thus to remove
from~$\hb$ only
the~$\po$ edges between a non-temporal store and
any later instruction that is not a fence.
That is, we redefine~$\hb$
as~$\transC{(\porceleven\disj\sw)}$,
where
the relation~$\porceleven$
excludes such~$\po$ edges (see \cref{section:model}).

\paragraph{Supporting stronger architecture-specific behaviors.}

Next, we also need to strengthen the model to support the \ref{prog:iriw:asm}
example. If all accesses in the example were \intel accesses,
\intelext would forbid this outcome by its general acyclicity
condition which forbids cycles consisting of external~$\eco$ edges
(that is, ones between accesses from different threads) and its
\emph{preserved program order} ($\ppo$), which includes the
program-order edges between instructions whose ordering is
guaranteed on \intel
(for example, from \intel reads to all subsequent memory instructions).

A minimal way to extend the applicability of this condition
would be to require the cycle to contain at least one
inline-\intel-assembly instruction.
Requiring at least one assembly instruction in the cycle
prevents this new condition from breaking
\hyperlink{P3}{Property P3}:
the additional condition simply does not apply to
programs without inline assembly.
Moreover, it ascribes the intended semantics to the \ref{prog:iriw:asm}
program, forbidding its annotated weak outcome.

Sadly, however, this minimal way of adapting the \intelext
model is flawed as it does not validate compiler optimizations.
To see this, 
consider the following variant of~\ref{prog:iriw:asm}:
\begin{equation*}
\tag{IRIW-TSO-2}
\label{counter:ex:seq}
\inarrII{\inarrII{
  \WritePL\rlx\locx{1}
}{
  \AsmReadPL\regA\locx\outcomeI{1}\\
  \ReadPL\rlx\regB\locy\outcomeI{0}
}}{\inarrII{
  \ReadPL\rlx\regC\locy\outcomeI{1}\\
  \ReadPL\rlx\regD\locx\outcomeI{0}
}{
  \WritePL\rlx\locy{1}
}}
\end{equation*}

The annotated behavior is disallowed under this model
because the cycle contains one inline-assembly instruction.
However, a \ccpp compiler can reorder the accesses of the third thread and
arrive at the following program:
\begin{equation*}
\inarrII{\inarrII{
  \WritePL\rlx\locx{1}
}{
  \AsmReadPL\regA\locx\outcomeC{1}\\
  \ReadPL\rlx\regB\locy\outcomeC{0}
}}{\inarrII{
  \ReadPL\rlx\regD\locx\outcomeC{0}\\
  \ReadPL\rlx\regC\locy\outcomeC{1}
}{
  \WritePL\rlx\locy{1}
}}
\end{equation*}

The depicted outcome is now allowed: 
first~$\ReadPL\rlx\regD\locx$ reads~$0$,
then the first and second threads execute,
then the fourth thread writes~$1$ to~$\locy$,
which is finally read by the third thread.

\hypertarget{A7}{}
\subsection{Our Approach}
\label{subsection:intuitive:account}

Counterexample \ref{counter:ex:seq} shows that it is too strong to stipulate
the absence of
\intelext-consistency-violating
cycles that contain at least one \intelext event.
The weak behavior of~\ref{counter:ex:seq}
should be allowed by our model so as to validate
the reordering of \rceleven relaxed accesses
on the third thread of the program.

In order to allow the annotated behavior of \ref{counter:ex:seq}, our idea 
is to insist that \emph{all~$\ppo$ edges} in a~%
$(\ppo\disj\eco)$-cycle
(that is, in a \intelext-consistency-violating cycle)
contain at least one
\intel instruction or a \sc fence.
This is because neither \intel
instructions nor \sc fences can be optimized by the
compiler in a thread-local fashion.
Therefore,
the third thread of \ref{counter:ex:seq}
cannot contribute to the cycle that violates
\intelext-consistency,
because it contains only plain \ccpp instructions.

Extending \rceleven with this refined condition leads to a hybrid model that
enjoys all our established desiderata:
(1) it supports the established compilation schemes to \intelext;
(2) it supports all existing local compiler optimizations,
because these only affect \ccpp operations,
and thus do not affect our model's preserved program
order relation, which must include an assembly instruction or a \sc fence;
(3) it extends both \rceleven and \intelext; and
(4) it provides the intended semantics to
Program~\ref{prog:mp-nt} and
to all variants of
Program~\ref{prog:iriw} that we have encountered.

\section{The Extended Model}
\label{section:model}

In this section, we present
our extension of \cpp's memory model with support for
inline \intelext assembly.
We use \rceleven~\cite{rc11} as the memory model for \cpp.
With the interest of recalling the basic notions
of~\rceleven
and setting up notation and useful definitions for the
next subsections,
we start with a brief presentation of \rceleven.
We mainly follow the original presentation by~\citet{rc11}.
We also rely on~\citet{imm-19}
for the precise construction of
\textit{execution graphs}.

\subsection{The \rceleven Memory Model}
\label{subsection:rceleven}

\rceleven defines the semantics of
multithreaded \ccpp programs.
More specifically,
\rceleven formalizes how the memory, which initially
maps every location to a default value
(usually the integer~$0$),
is updated after the execution of a program.
To account for non-determinism
(for example, due to the concurrent execution of threads),
the model associates a program~$\prog$ not with a single
final memory, but with the set of states
in which the memory can be found after the execution of~$\prog$.

The \rceleven model follows the \textit{declarative approach}.
In the declarative approach,
the set of final memory states associated with a program~$\prog$
is defined in three steps.
The first step consists in an operational semantics;
that is, a formalization of program execution.
However, this formalization does not strive to capture exactly
how the program~$\prog$ runs.
Instead, it follows a simple \textit{thread-interleaving} semantics
where threads non-deterministically take turns and contribute
to the construction of an abstract structure called
an execution graph.
An execution graph stores,
in the form of nodes,
the memory operations
(such as writes, reads, and synchronization barriers)
issued by threads.
These nodes are also called \textit{events}.
The result of the first step is thus the construction
of a set of execution graphs associated with~$\prog$.
The second step is the selection,
among this resulting set of execution graphs,
of the \textit{consistent} execution graphs.
A consistent execution graph
is one whose nodes can be connected
by extra relations
in a way that satisfies conditions
postulated by the model in question.
These conditions capture how the
model deviates from one
that would tolerate only sequentially consistent behaviors.
The third and final step amounts to mapping
every consistent execution graph to the memory
state it represents.

To illustrate the \rceleven model, we introduce \rcelevenlang,
a simple concurrent imperative programming language
with support for \cpp's memory-access modes.
Opting for a simple set of programming constructs
allows us to concentrate on the key aspect of the memory model:
the definition of the semantics of memory
operations such as read, writes, and synchronization barriers.

\begin{figure}[t]

\raggedright

\begin{minipage}{.75\textwidth}
\judgmentpar{Syntax of expressions, commands, and access modes}
\[\begin{array}{@{}r@{\;}r@{\;}l@{}}
  \typeExpr\ni\expr & ::=
               & \intn\;(\in\NN)
          \mid   \reg\;(\in\typeReg)
          \mid   \loc\;(\in\typeLoc\triangleq\NN)
\\
         &\mid&  \Plus\expr\expr
          \mid   \Sub\expr\expr
          \mid   \Times\expr\expr

\\[.5mm]

  \typeCmd\ni\cmd & ::=
               & \ReadPL\md\reg\expr
         \mid    \WritePL\md\expr\expr
         \mid    \RMWPL\md\reg\expr\expr\expr
\\
        &\mid&   \FencePL\md
         \mid    \IfThen\expr\cmd
         \mid    \While\expr\cmd
         \mid    \Seq\cmd\cmd
         \mid    \Skip

\\[.5mm]

  \typeMode\ni\md & ::=
            & \na
         \mid \rlx
         \mid \rel
         \mid \acq
         \mid \acqrel
         \mid \sc
\end{array}\]
\end{minipage}
\begin{minipage}{.15\textwidth}
\centering
\hspace{-1.1cm}
\includegraphics{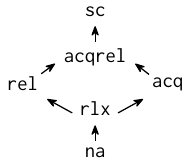}
\end{minipage}
\vspace{-10pt}
\caption{Syntax of \rcelevenlang.}
\vspace{-10pt}
\Description{}
\label{fig:cpp:syntax}
\end{figure}

\fref{fig:cpp:syntax} shows the syntax of \rcelevenlang.
The language is parametric on a set of registers, $\typeReg$,
and introduces a set of (preallocated)
memory locations, $\typeLoc$,
defined as the set of natural numbers.
Expressions~$\expr$ are used
to compute numbers~$\intn$ or locations~$\loc$
by reading numbers stored in registers~$\reg$
and performing arithmetic operations.
The syntactic category of commands,~$\typeCmd$,
includes \kw{if} branching, \kw{while} loops,
sequential composition, a \kw{skip} instruction,
and memory operations,
such as
reads, writes, read-modify-writes (RMWs), and fences.
The notation~$\kw{[}\expr\kw{]}$
is used to indicate that~$\expr$ denotes
a memory location rather than a number.
Every memory operation carries an access mode~$\md$.
Access modes are ordered according to the
diagram depicted in \fref{fig:cpp:syntax}.
To give an (over-simplistic) intuitive explanation of access modes,
we can say that~$\sc$ operations follow
a sequentially consistent semantics,
and operations with a weaker access mode~$\md$
follow a semantics that deviates from sequential consistency
to a degree that is proportional to how distant~%
$\md$ is from~$\sc$.
Only certain access modes are permitted per operation:
\begin{itemize}
\item Modes~$\na$, $\rlx$, $\rel$, and~$\sc$ apply to writes.
\item Modes~$\na$, $\rlx$, $\acq$, and~$\sc$ apply to reads.
\item Modes~$\acq$, $\rel$, $\acqrel$, and~$\sc$ apply to fences.
\item Modes~$\rlx$, $\acq$, $\rel$, $\acqrel$, and~$\sc$ apply to read-modify-writes.
\end{itemize}

Finally, a program~$\prog\in\typeProg$
is defined as a collection of commands, %
represented as a finite map from numbers
(or \textit{thread identifiers}) to commands:~%
$\typeProg\eqdef\typeFMap\NN\typeCmd$.

We formalize an \textit{event} either as
an \textit{initialization event}~$\initEv\loc$,
representing the initialization of $\loc$
with the default value~$0$,
or as a pair of natural numbers~$\pair\inti\intj$,
where~$\inti$ is a thread identifier
and~$\intj$ is the order of this event
with respect to the
events emitted by thread~$\inti$.
(These numbers are used, for example,
in the definition of the~%
\hyperlink{PO}{\emph{program-order}}
relation.)
An execution graph is represented as a pair
of a set of events~$\EventE$
and a map~$\lab$ from events to \textit{labels}.
A label specifies both the type of a memory event
(whether it is a read, a write, a read-modify-write, or a fence)
and its arguments.
A read label is represented as~$\Read\md\loc\intn$;
a write label is represented as~$\Write\md\loc\intn$;
a fence is represented as $\Fence\md$;
and a read-modify-write label is represented as~%
$\RMW[\md]\loc\intn{\intm^{?}}$,
where~$\intm^{?}$ denotes either a number or
the marker~$\bot$ representing the case of a failed
read-modify-write operation.
We are often lax about the distinction between events and labels;
we use them interchangeably.
Moreover, we write~$\ReadE$,~$\WriteE$,~$\FenceE$, and~$\RMWE$
to denote respectively the sets of events whose label is a read,
a write, a fence, and a read-modify-write.
We further partition~$\RMWE$
into its subset of successful read-modify-writes~$\RMWSE$
and its subset of failed read-modify-writes~$\RMWFE$.

The construction of the set of execution graphs associated
with a program relies on the notions of \textit{threads}
and \textit{thread pools}.
A thread pool is modeled as a finite map from
thread identifiers to threads.
A thread, in its turn, is modeled as
a tuple containing the following fields:
\regst, which maps a register to the number it stores;
\evcounter, which stores the number of events
issued by the thread;
and
\nextcmd, which stores the next command to be
executed by the thread.
In sum, here is the definition of the
set of threads,~$\typeThread$,
and of the set of thread pools,~$\typePool$:
\[\begin{array}{l@{\hspace{8mm}}r}
\pool\in\typePool\eqdef\typePoolDef&
\thread\in\typeThread\eqdef\typeThreadDef
\end{array}\]

On top of these definitions,
the set of candidate execution graphs
associated with a program is captured by
the \textit{pool reduction} relation,
a relation between pairs of pools and execution graphs.
It is noted~$\poolstep\poolA\graphA\poolB\graphB$.
Intuitively, the statement~%
$\poolstep*{\toPool\prog}\initGraph\emptyset\graphA$
expresses that~$\graphA$ is an execution graph
associated with~$\prog$.
The graph~$\initGraph$ in this statement
denotes the \textit{initial execution graph},
a graph
where~$\initGraph.\EventE$ is the set of initialization events~$\initEv\loc$
for every location~$\loc$,
and
where~$\initGraph.\lab$ maps~$\initEv\loc$ to~$\Write\na\loc{0}$.
The pool~$\emptyset$ denotes a thread pool whose domain is empty.
The pool~$\toPool\prog$ denotes a thread pool
in its initial state:
\[
\toPool\prog
  \eqdef
    \lambda\inti\in\dom\prog.\;
      \left\{\begin{array}{@{\,}l@{\,}}
        \regst=\lambda\_.0;\,
        \evcounter=0;\,
        \nextcmd=\Seq{prog(\inti)}{\Skip}
      \end{array}\right\}
\]

\begin{figure}[t]

\raggedright

\judgment[Pool reduction]{\poolstep\pool\graph\pool\graph}

\begin{mathpar}
    \inferrule[\nameReadStep]{
      \poolLkp>*>*>*[\Seq{\ReadPL\md\reg\expr}\cmd]<
        \and
      \poolB =
        \poolA\left[\begin{array}{@{}l@{}}
          \inti :=
          \poolLkp[:=]
            >*[\phi[\reg:=\intn]]     %
            >*[\intj+1]               %
            >*<                       %
        \end{array}\right]
      \\\\
      {\begin{minipage}{.3\textwidth}\begin{mathpar}
           \loc=\interp\expr_\phi\and\evA=\pair\inti\intj
       \end{mathpar}\end{minipage}}
        \and
        \graphB =
          \graphUpt*{\{\evA\}}                   %
                   *{[\evA:=\Read\md\loc\intn]}< %
                   {}                            %
    }{
      \poolstep\poolA\graphA\poolB\graphB
    }
\\
    \inferrule[\nameTerminateStep]{
      \poolA[\inti].\nextcmd = \Skip
        \and
      \poolB = 
        \lambda\intj\in\dom\poolA\setminus\{\inti\}.\;\poolA[\intj]
    }{
      \poolstep\poolA\graphA\poolB\graphA
    }
\end{mathpar}

\caption{Definition of pool reduction.}
\Description{}
\label{fig:pool:reduction}
\end{figure}

\fref{fig:pool:reduction} includes some illustrative cases
of the pool reduction relation.
The complete definition can be found in the
Appendix~\appendixref{section:execution:graphs:app}.
Some cases rely on the interpretation of an expression~$\expr$
under a map~$\phi$ from registers to numbers.
This interpretation, noted $\interp\expr_\phi$,
is simply defined as the interpretation of the syntactic
arithmetic operators as their mathematical counterpart.
Rule~\ruleReadStep shows how a new read event~$\evA$
is added to the execution graph when a read operation
is executed.
There is no restriction to the
value~$\intn$ returned by the read operation.
It is only at the level of execution graphs that
consistency conditions are imposed
and certain values are ruled out.
Rule~\ruleTerminateStep shows
how completed threads are removed from the pool.
Eventually, all threads complete their execution
and the pool degenerates to~$\emptyset$.

To define \rceleven's notion of a consistent execution graph,
we need to introduce the program-order relation~$\po$
and
we need
to consider the extension of an execution graph with a
\textit{reads-from} relation~$\rf$
and a 
\textit{modification-order} relation~$\mo$.
We are often lax about the distinction between
an execution graph~$\graphA$ and its
extension~$\tuple\graphA{\rf,\mo}$.

\paragraph{Notation.}
The metavariables~$\evA$,~$\evB$,~$\evC$,~$\evD$, and~$\evE$
range over events.
An event, as we recall,
is formalized as either
an initialization event,~$\initEv\loc$,
or as a pair of natural numbers,~$\pair\inti\intj$,
where~$\inti$ is a thread identifier
and~$\intj$ is the order of the event.
The terms~$\fst\evA$ and~$\snd\evA$
denote the first and the second projections
of~$\evA$ in the case where~$\evA$ is a pair.
The relation~$\inv{R}$ is the \textit{inverse relation} of~$R$:
$\pair\evB\evA\in\inv{R} \iff \pair\evA\evB\in R$.
The relation~$R_1\seq R_2$ is the \textit{sequential composition}
of~$R_1$ and~$R_2$:
$\pair\evA\evC\in R_1\seq R_2 \iff
\exists\evB.\;
\pair\evA\evB\in R_1 \land
\pair\evB\evC\in R_2 $.
The relation~$\settorel{S}$
is the smallest reflexive relation on a set~$S$;
it is defined as~$\{\pair{s}{s} \mid s \in S\}$.
The relations~$\refC{R}$,~$\transC{R}$, and~$\reftransC{R}$
respectively denote
the reflexive closure,
the transitive closure,
and the reflexive-and-transitive closure
of~$R$.
The relations~$\internal{R}$ and~$\external{R}$
are the \textit{internal} and \textit{external}
components of~$R$:
$\pair\evA\evB\in\internal{R}
\iff
\pair\evA\evB\in{R}\land\fst\evA=\fst\evB$,
and,
$\external{R} = R \setminus \internal{R}$.
Given a graph~$\graphA$,
the relation~$\atloc{R}\loc$
is the \textit{at-$\loc$} restriction of~$R$:
it restricts~$R$ to events~$\evA$
such that~$\graphA.\lab(\evA)$ accesses~$\loc$.
The term~$\dotLoc\evA$ denotes the location
accessed by~$\evA$.
The relation~$\perloc{R}$ is the
\textit{per-location} restriction of~$R$:
$\pair\evA\evB\in\perloc{R}
\iff
\pair\evA\evB\in{R}\land\dotLoc\evA=\dotLoc\evB$.
The relation~$\neqloc{R}$ is the
\textit{distinct-locations} restriction of~$R$:
$\neqloc{R} = R \setminus \perloc{R}$.%
All these restrictions can be similarly applied to sets of events.
The graph~$\graphA$
is usually clear from the context
and left implicit.

\hypertarget{PO}{}
\paragraph{Program order.}
The program order reflects the order
in which events were emitted by a given thread:~%
$
\pair\evA\evB\in\po
  \iff
(\evA = \initEv\_
  \,\land\,
\evB \neq \initEv\_)
\,\lor\,
(\fst\evA = \fst\evB
  \,\land\,
\snd\evA < \snd\evB)
$.

\paragraph{Reads-from.}
The reads-from relation relates write events to read events,
$\rf\subseteq \typeProd{(\WriteE\cup\RMWSE)}{(\ReadE\cup\RMWE)}$.
It captures how information flows
from a write to a read on the same location.
There are two conditions.
First,
for every read~$\evB$, there must be a unique write~$\evA$
such that~$\pair\evA\evB\in\rf$.
Second, for every pair~$(\evA,\evB)\in\rf$,
the events~$\evA$ and~$\evB$ must act on the same location
and the value read by~$\evB$ must be
equal to the value written by~$\evA$.

\paragraph{Modification order.}
The modification order is a relation on write
and successful read-modify-write events,
$\mo\subseteq\typeProd{(\WriteE\cup\RMWSE)}{(\WriteE\cup\RMWSE)}$.
Intuitively, it describes how single memory cells
have been observed to evolve during program execution.
The~$\mo$ relation is equal to the disjoint union
of the relations~$\atloc\mo\loc$,
defined as the restriction of~$\mo$ to events in~$\loc$:
$\mo = \bigcupplus_{\loc\in\typeLoc} \atloc\mo\loc$.
Moreover, for every~$\loc$, the relation~$\atloc\mo\loc$
is a \textit{strict total order} (transitive, irreflexive, and total).

We are finally in position to introduce the
\rceleven-consistency conditions:
\begin{definition}[\rceleven-Consistency]
\label{def:rceleven:consistent}
An execution graph~$\tuple\graphA{\rf,\mo}$
is \emph{\rceleven-consistent} if the conditions
\begin{itemize}
\item \axiomCoherenceDef\hfill\textnormal{(\RefTirName\nameCoherence)}
\item \axiomSCDef\hfill\textnormal{(\RefTirName\nameSC)}
\item \axiomAtomicityDef\hfill\textnormal{(\RefTirName\nameAtomicity)}
\item \axiomNTADef\hfill\textnormal{(\RefTirName\nameNTA)}
\end{itemize}
hold, where the relations
\emph{happens-before}~($\hb$),
\emph{synchronizes-with}~($\sw$),
\emph{extended coherence order}~($\eco$),
\emph{reads-before}~($\rb$),
\emph{partial-SC}~($\psc$), and
\emph{SC-before}~($\scb$)
are defined as follows:

\[\begin{array}{@{}l@{\hspace{1cm}}r@{}}
\begin{array}{@{}r@{\;}c@{\;\;}l@{}}
  \rb &\eqdef&
    (\inv\rf\seq\mo)\setminus\settorel\EventE
\\[.4mm]
  \hb &\eqdef&
    \transC{(\union\po\sw)}
\\[.4mm]
  \eco &\eqdef&
    \transC{(\union\rf{\union\mo\rb})}
\end{array}
&
\begin{array}{@{}l@{}}
  \sw\eqdef
    \left\{\begin{array}{@{}l@{}}
    \settorel{\EventAtLeast\rel}\seq
     \refC{(
       \settorel\FenceE\seq
       \po
     )}\seq
    \settorel{\WriteAtLeast\rlx}\seq\\
    \quad\transC\rf\seq\\
    \settorel{\ReadAtLeast\rlx}\seq
    \refC{(\po\seq{\settorel\FenceE})}\seq
    \settorel{\EventAtLeast\acq}
    \end{array}\right.
\end{array}
\end{array}\]

\begin{minipage}{.2\textwidth}
\[\begin{array}{r@{\;}c@{\;\;}l}
  \psc &\eqdef& \union\pscBase\pscFence
\end{array}\]
\end{minipage}
\begin{minipage}{.8\textwidth}
\[\begin{array}{r@{\;}c@{\;\;}l}
  \scb &\eqdef&
    \union\po{
    \union{{\neqloc\po}\seq{\hb\seq{\neqloc\po}}}{
    \union{\perloc\hb}{
    \union\mo\rb
    }}}
\\[.4mm]
  \pscBase &\eqdef&
    {(
      \union{\settorel{\EventMd\sc}}
            {{\settorel{\FenceMd\sc}}\seq{\refC\hb}}
    )}\seq{
      \scb\seq{(
        \union{\settorel{\EventMd\sc}}
              {{\refC\hb}\seq{\settorel{\FenceMd\sc}}}
      )}
    }
\\[.4mm]
  \pscFence &\eqdef&
    {\settorel{\FenceMd\sc}}\seq{
      {
        (\union\hb{{\hb\seq\eco}\seq\hb})
      }\seq{
        \settorel{\FenceMd\sc}
      }}
\end{array}\]
\end{minipage}

\end{definition}

These consistency conditions are equivalent to
the ones formulated by \citet{margalit-lahav-21},
who diverge from~\citet{rc11} only in a minor way:
the synchronizes-with relation relies on a
simplified notion of \emph{release sequences},
defined as the reflexive-and-transitive closure of~$\rf$.
This simplification is in agreement with
the current documentation of the
\cpp programming language~\cite{cppref-memorder}.
We further adapt the statement of \axiomAtomicity according
to our design choice of modeling RMWs as single events
rather than as pairs of reads and writes related by an extra
relation~$\rmw$.

To complete the description of~\rceleven,
showing how it defines the semantics of a program,
we need to introduce the notions of \textit{data race}
and of \textit{undefined behavior} ($\UB$):
\begin{definition}[Data Race]
\label{def:data:race}
A pair of events~$\pair\evA\evB$ forms
a \emph{data race}
if the following conditions hold:
(1)~$\evA \neq \evB$,
(2)~$\dotLoc\evA = \dotLoc\evB$,
(3)~$\{\evA, \evB\}\cap(\WriteE\cup\RMWSE) \neq \emptyset$,
and
(4)~$\pair\evA\evB\notin\hb\disj\inv\hb$.
\end{definition}

\begin{definition}[\rceleven-Behaviors]
\label{def:rceleven:behaviors}
\[\begin{array}{rcl}
\left(\begin{array}{@{\,}c@{\,}}
\poolstep*{\toPool\prog}\initGraph\emptyset\graphA
  \;\land\;
\tuple\graphA{\rf,\mo}\;\textit{is}\;
    \textit{\rceleven-consistent}
\end{array}\right)
  &\vdash&
    \progbeh\prog{\tuple\graphA{\rf,\mo}}
\\[1mm]
\left(\begin{array}{@{\,}c@{\,}}
\poolstep*{\toPool\prog}\initGraph\_\graphA
  \;\land\;
\tuple\graphA{\rf,\mo}\;\textit{is}\;
    \textit{\rceleven-consistent}
\\
\land\;
\pair\evA\evB\;\textit{forms a data race}
 \;\land\;
\na\in\{\dotMode\evA,\dotMode\evB\}
\end{array}\right)
  &\vdash&
    \progbeh\prog\UB
\end{array}\]
\end{definition}

Each consistent execution graph~$\tuple\graphA{\rf,\mo}$
represents one of the possible final-memory states of a program.
We use the function~$\finalSt$ to extract this memory state:~%
$\finalStTypical$ denotes the memory
where a location~$\loc$ stores the value~$\intn$
of the last write event~$\Write{}\loc\intn$
in~$\graphA$ with respect to~$\mo$.
The memory~$\finalStTypical$ is represented as a partial map
where a location~$\loc$ belongs to~$\dom{\finalStTypical}$
iff there exists~$\evA\neq\initEv\_$ such that~%
$\graphA.\lab(\evA)\in\atloc\WriteE\loc\cup\atloc\RMWSE\loc$.
\begin{definition}[\rcelevenlang Semantics]
\label{def:rcelevenlang:semantics}
The semantics of a \rcelevenlang program~$\prog$ is defined as
its set of final states:
\[
\store\in\interp[\rceleven]\prog
  \iff
\progbeh\prog\UB
\;\lor\;
\exists\,\graphA,\,\rf,\,\mo.\;
\progbeh\prog{\tuple\graphA{\rf,\mo}}
  \,\land\,
\store = \finalStTypical
\]
\end{definition}

\subsection{The \rcelevenext Memory Model - An Extension of \rceleven with Inline \intelext Assembly}
\label{subsection:rcelevenext}

We now introduce \rcelevenext,
an extension of \rceleven with inline \intelext assembly.
We illustrate the model in
an extension of \rcelevenlang with inline assembly,
called \rcelevenextlang.

\begin{figure}[t]
\[\begin{array}{@{}r@{\;}r@{\;}l@{}}
  \typeCmd\ni\cmd ::=
             \ldots
    & \mid & \AsmReadPL\reg\expr
      \mid   \AsmWritePL\expr\expr
      \mid   \AsmRMWPL\reg\expr\expr\expr
      \mid   \AsmMFencePL
  \\
    & \mid & \AsmNTWritePL\expr\expr
      \mid   \AsmSFencePL
\end{array}\]
\caption{Syntax of \rcelevenextlang.}
\Description{}
\label{fig:cpp:x86:syntax}
\end{figure}

\fref{fig:cpp:x86:syntax}
shows the syntactical increments of \rcelevenextlang
over \rcelevenlang.
The main difference with respect to \fref{fig:cpp:syntax}
is the addition of inline-assembly commands,
distinguished by the prefix~\kw{\textbf{asm}}.
They allow one to access the following
\intelext-specific instructions:
\textit{plain \intelext reads, writes, and read-modify-writes};
\textit{non-temporal stores};
\textit{store fences}; and
\textit{memory fences}.

To give an intuitive operational account of
these instructions,
we can rely on the formal operational model of
\intelext~\cite{raad-22}.
In this operational model,
every thread contains a local buffer where write
instructions first take effect before reaching
the global main memory,
which is shared among all threads.
A non-temporal store~$\NTWritePL\exprA\exprB$
bypasses the local buffer,
if the buffer contains no writes to the same location.
Therefore,
a non-temporal store can be reordered
with respect to writes or non-temporal stores
to different locations.
A store fence~$\SFencePL$ can be used
to avoid the reordering of non-temporal stores.
A memory fence~$\MFencePL$ can be used for the same
purpose.
Additionally,
it can be used to stop the reordering
of a write followed by a read.

\begin{figure}[t]
\small
\[\begin{array}{c}
\includegraphics{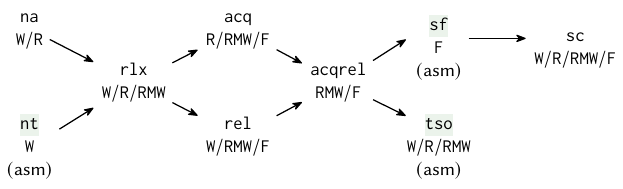}
\end{array}\]
\caption{Diagram of access modes of \rcelevenext.}
\Description{}
\label{fig:cpp:x86:diagram}
\end{figure}

To distinguish events emitted by inline-assembly commands
from events emitted by pure \rcelevenlang commands,
we introduce three new access modes:
\[\typeMode\ni\md ::= \ldots\mid\nt\mid\stf\mid\tso\]

Events emitted by plain \intelext reads, writes,
and read-modify-writes carry the mode~\tso:~%
$\WriteMd\tso$,~$\ReadMd\tso$, and~$\RMWMd\tso$.
Events emitted by non-temporal stores
carry the mode~\nt:~$\WriteMd\nt$.
Events emitted by store fences carry
the mode~\stf:~$\FenceMd\stf$.
Events emitted by \intelext memory fences are
indistinguishable from
those emitted by \sc fences,
they all carry the mode~\sc.
Of course, it would be possible to distinguish
events emitted by memory fences by using
an extra mode, say~$\mf$.
However, our model assigns the same strength
to \sc fences and to memory fences,
so we prefer to simply use the mode~\sc.
(In other words, in our proposed model, programmers have no good reason to use~$\AsmMFencePL$,
as they can equivalently use~$\FencePL\sc$;
we include~$\AsmMFencePL$
only for comprehensiveness.)

The following definition introduces \rcelevenext-consistency.
Many of the conditions are identical to those from \rceleven
(\defref{def:rceleven:consistent}).
Therefore, to avoid repetition,
we include only the differences with respect to \rceleven.
For clarity,
we highlight these differences
using a \greenhl{colored background}.
Finally, we observe that
(in both the new definitions and in those inherited from \rceleven)
the ranges of access modes should be interpreted using
the graph from \fref{fig:cpp:x86:diagram};
that is,
using the order induced by
the reflexive-and-transitive closure of the
directed-edge relation from \fref{fig:cpp:x86:diagram}.

\begin{definition}[\rcelevenext-Consistency]
\label{def:rcelevenext:consistent}
An execution graph $\tuple\graphA{\rf,\mo}$ is
\emph{\rcelevenext-consistent}
if, in addition to the conditions from~\defref{def:rceleven:consistent}
(where~\axiomCoherence is renamed to~\RefTirName\nameCoherenceI),
the conditions
\begin{itemize}
\item \greenhl{\axiomCoherenceIIDef}\hfill
      \greenhl{\textnormal{(\RefTirName\nameCoherenceII)}}
\item \greenhl{\axiomCoherenceIIIDef}\hfill
      \greenhl{\textnormal{(\RefTirName\nameCoherenceIII)}}
\end{itemize}
hold, where the relations
$\hb$,
$\eco$,
$\porceleven$,
and~$\ppoasm$ are defined as follows:

\begin{mathpar}
  \hb \triangleq
    \transC{(\mathhl\porceleven\disj\sw)}
\and
  \eco \triangleq
    \transC{(
      \mathhl\rfe\disj\mo\disj\rb
    )}
\\
\begin{array}{r@{}l}
  \mathhl{\porceleven}
         &\mathhl{\;\triangleq
    \settorel{\EventE\setminus\WriteMd\nt}\seq
    \po}
\\
         &\mathhl{\;\cup\;
    \po\seq
    \settorel{\RMWMd\tso\cup\FenceAtLeast\stf}}
\\
         &\mathhl{\;\cup\;
    \perloc\po\seq\settorel\WriteE}
\end{array}
\and
\begin{array}{r@{}l}
  \mathhl{\ppoasm}
         &\mathhl{\;\triangleq
    \po\seq\settorel{\RMWMd\tso\cup\FenceAtLeast\stf}}
\\
         &\mathhl{\;\cup\;
    \settorel{\ReadMd\tso\cup\RMWMd\tso\cup\FenceMd\sc}\seq\po}
\\

         &\mathhl{\;\cup\;
    \settorel{\FenceAtLeast\stf}\seq\po\seq\settorel{\EventE\setminus\ReadE}}
\\
         &\mathhl{\;\cup\;
    \settorel{\WriteMd\tso}\seq\po\seq\settorel{\EventE\setminus\ReadE\setminus\WriteMd\nt}}
\\
         &\mathhl{\;\cup\;
    \settorel{\EventE\setminus\ReadE\setminus\WriteMd\nt}\seq\po\seq
    \settorel{\WriteMd\tso}}
\end{array}
\end{mathpar}

\end{definition}

This definition diverges from \rceleven in multiple ways:
\begin{description}
\item[Diagram of access modes.]
The diagram of access modes unites \rceleven modes
and \intelext-inspired modes
into the same picture.
It is intriguing because
it misses some
orderings that one would naturally expect,
such as~$\tso\sqsubset\sc$ or perhaps
even~~$\na\sqsubset\nt$.
Given that non-temporal stores break
release-acquire synchronization,
as we shall explain,
it is not
difficult to understand the absence of
the ordering~$\na\sqsubset\nt$.
Perhaps more striking
is the absence of the ordering~$\tso\sqsubset\sc$.
We explain in~\cref{subsubsection:diagram}
that adding such an ordering
violates (at least) one of our desiderata.

\item[Definition of $\hb$.]
Instead of the full~$\po$ relation,
now the definition of~$\hb$ uses
a restricted version of~$\po$ that excludes
edges starting in non-temporal stores,
unless they reach a~\sc fence,
a~\stf fence,
a~\tso read-modify-write,
or a write to the same location.
In~\cref{subsubsection:definition:hb},
we explain in detail the motivation for this change,
but, for now, let us simply say that 
this relaxation of~$\hb$ is necessary, for example,
to allow the weak behavior of Program~\ref{prog:mp-nt}.

\item[Definition of $\eco$.]
In~\rceleven, the relation~$\eco$
can be defined using either the full~$\rf$ relation
or the external restriction~$\rfe$.
The two formulations of~\rceleven are equivalent.
In the presence of inline assembly,
especially of non-temporal stores,
however,
the definition of~$\eco$ must use~$\rfe$:
a formulation of~\rcelevenext where~$\eco$
is defined using~$\rf$ is unsound.
In~\cref{subsubsection:definition:eco},
we explain in detail why this is the case.

\item[Consistency Condition] - \axiomCoherenceII.
The consistency conditions now postulate the
absence of cycles in~$\ppoasm\disj\eco$.
This condition is the key principle that
allows one to reason about inline assembly
using our model.
In~\cref{subsubsection:condition:coherence:II},
we shall see that this condition is an 
adaptation of one of \intelext-consistency
conditions.
We believe that extensions
of \rceleven with support for inline assembly
for other architectures could be obtained by
redefining $\ppoasm$.

\item[Consistency Condition] - \axiomCoherenceIII.
The addition of this condition is a technicality.
In \rceleven,
Condition~\axiomCoherenceI ensures that~$\moi$
and~$\rbi$ are included in~$\po$.
In \rcelevenext,
however,
Condition~\axiomCoherenceI is insufficient to rule out
cases that violate these properties,
because a~$\po$ edge that starts with a non-temporal store
is not necessarily included in~$\hb$.
As a consequence,
the existence of an event~$\evA$
such that~%
$\pair\evA\evA\in
\settorel{\WriteMd\nt}\seq\po\seq(\rb\disj\mo)$
is not a contradiction to~$\irr{\hb\seq\eco}$.
This new condition must therefore be included.
\end{description}

\subsubsection{Diagram of Access Modes.}
\label{subsubsection:diagram}
Let us start by explaining how the mode~\stf
fits in \fref{fig:cpp:x86:diagram}.
It naturally sits between the two strongest
modes allowed in a fence:~$\acqrel$ and~$\sc$.
This positioning is natural because
an~$\acqrel$ fence is erased by the
standard compilation schemes to \intel,
so they cannot be used to
stop the reordering of non-temporal stores.
Moreover, a \sc fence can be used to
stop the reordering of a write and a read,
for which a store fence is insufficient.
This explains the ordering~$\stf\sqsubset\sc$.

An interesting implication of the (derived) ordering~%
$\rel\sqsubset\stf$ is that the model allows store fences
to establish release-acquire synchronization.
In other words,
a store fence is allowed in the beginning of a~$\sw$ edge.
It can thus be used to rule out behaviors that contradict
the irreflexivity of~$\hb\seq\refC\eco$ (\axiomCoherenceI).
This is exhibited by the following pair of programs:

\begin{equation*}
\begin{array}{@{}l@{\hspace{2cm}}r@{}}
\inarrII{
  \AsmNTWritePL\locx{1}\\
  \mathhl{\FencePL\rel}\\
  \WritePL\rlx\locy{1}
}{
  \ReadPL\acq\regA\locy\outcomeC{1}\\
  \ReadPL\rel\regB\locx\outcomeC{0}
}
&
\inarrII{
  \AsmNTWritePL\locx{1}\\
  \mathhl{\AsmSFencePL}\\
  \WritePL\rlx\locy{1}
}{
  \ReadPL\acq\regA\locy\outcomeI{1}\\
  \ReadPL\rel\regB\locx\outcomeI{0}
}
\end{array}
\end{equation*}

The behavior depicted is allowed by our model in the program on the left,
but forbidden in the program on the right.
This is in agreement with the behavior exhibited by these
programs in \intelext after compilation,
because the \rel fence would then be erased.

Let us now explain the positioning of~\nt
in the diagram.
That non-temporal stores are deemed weaker than
relaxed writes is easy to understand
when we take Program~\ref{prog:mp-nt} into account.
Indeed,
the weak behavior of~\ref{prog:mp-nt}
is disallowed when a \rlx write is used
instead of a non-temporal store:
\begin{equation*}
\inarrII{
  \mathhl{\WritePL\rlx\locx{1}}\\
  \WritePL\rel\locy{1}
}{
  \ReadPL\acq\regA\locy\outcomeI{1}\\
  \ReadPL\rlx\regB\locx\outcomeI{0}
}
\end{equation*}

This explains the ordering~$\nt\sqsubset\rlx$.

The lack of the ordering~$\na\sqsubset\nt$
can be similarly explained:
\begin{equation*}
\begin{array}{@{}l@{\hspace{2cm}}r@{}}
\inarrII{
  \mathhl{\WritePL\na\locx{1}}\\
  \WritePL\rel\locy{1}
}{
  \ReadPL\acq\regA\locy\outcomeI{1}\\
  \IfThenVertical{\regA\;\kw{==}\;1}{
    \ReadPL\rlx\regB\locx\outcomeI{0}
  }
}
  &
\inarrII{
  \mathhl{\AsmNTWritePL\locx{1}}\\
  \WritePL\rel\locy{1}
}{
  \ReadPL\acq\regA\locy\outcomeC{1}\\
  \IfThenVertical{\regA\;\kw{==}\;1}{
    \ReadPL\rlx\regB\locx\outcomeC{0}
  }
}
\end{array}
\end{equation*}

The \kw{if}-branching is just to prevent a data race
between the~\na write and the~\rlx read to~$\locx$:
it makes sure that, when the read is issued,
it is preceded by the write
with respect to~$\hb$.
The program on the left cannot exhibit
the depicted behavior because of a cycle
in~$\hb\seq\rb$,
forbidden in both \rceleven
and~\rcelevenext
(since it is an \hyperlink{P4}{extension of \rceleven}).
The program on the right can exhibit
the annotated behavior because of the
reordering of non-temporal stores with writes
to distinct locations.

The lack of the ordering $\nt\sqsubset\na$
is justified by the \textit{catch-fire}
semantics of $\na$.
A data race makes every behavior
allowed by the model:
\begin{equation*}
\begin{array}{@{}l@{\hspace{2cm}}r@{}}
\inarrII{
  \mathhl{\WritePL\na\locx{1}}
}{
  \ReadPL\rlx\regA\locx\\
  \ReadPL\rlx\regB\locy
  \outcomeC{42}
}
&
\inarrII{
  \mathhl{\AsmNTWritePL\locx{1}}
}{
  \ReadPL\rlx\regA\locx\\
  \ReadPL\rlx\regB\locy
  \outcomeI{b \neq 0}
}
\end{array}
\end{equation*}

This example might instigate the reader to ask the question:
why do non-temporal stores,
or, more generally, inline-assembly accesses,
not follow a catch-fire semantics?
There are multiple reasons to avoid this approach.
First,
assigning catch-fire semantics to racy inline-assembly accesses
compromises \hyperlink{P5}{Property P5}
(because it allows the behavior of~\ref{prog:iriw})
and \hyperlink{P4}{Property P4}
(because the semantics of a
racy program written entirely using inline
\intelext assembly would diverge from the semantics
given by \intelext).
Additionally,
the reasons that justify the catch-fire semantics of \na accesses
do not apply to inline-assembly accesses.
Indeed, there are roughly two reasons why
the catch-fire semantics of \na accesses
is necessary:
(1) to validate compiler optimizations
(for example,
the reordering of \na accesses to different locations),
and
(2) to support the mapping of \na accesses to plain accesses in
architectures that do not enforce the acyclicity of~$\po\disj\rf$.
In our setting, the compiler is not expected to reorder
inline assembly, and
our compilation schemes are only to \intelext,
which enforces this acyclicity condition.

Finally, let us explain
how \tso is placed in the diagram.
Because the \hyperlink{P3}{strengthening to \tso accesses}
is one of our desired properties,
\tso is placed above every non-\sc access.
The lack of the ordering~$\tso\sqsubset\sc$
however
is intriguing,
because sequential consistency is stronger than
\textit{total store order}~\cite{sindhu-92}.
The problem is that,
in general,
\rceleven does not enforce
SC semantics to programs that mix \sc and non-\sc accesses
to the same location.
The following pair of examples
(inspired by the Z6.U example from \cite{rc11})
shows that
the semantics assigned to \sc accesses by \rceleven
can be weaker than
the semantics assigned to \tso accesses by our model:
\begin{equation*}
\begin{array}{@{}lr@{\hspace{4cm}}}
\inarrIII{
  \mathhl{\AsmWritePL\locx{1}}\\
  \WritePL\rel\locy{1}
}{
  \mathhl{\AsmReadPL\regA\locy}\outcomeI{1}\\
  \ReadPL\rlx\regB\locz\outcomeI{0}
}{
  \WritePL\sc\locz{1}\\
  \FencePL\sc\\
  \ReadPL\rlx\regC\locx\outcomeI{0}
}
    &
\end{array}
\end{equation*}

\begin{equation*}
\begin{array}{@{\hspace{4cm}}r@{}}
\inarrIII{
  \mathhl{\WritePL\sc\locx{1}}\\
  \WritePL\rel\locy{1}
}{
  \mathhl{\ReadPL\sc\regA\locy}\outcomeC{1}\\
  \ReadPL\rlx\regB\locz\outcomeC{0}
}{
  \WritePL\sc\locz{1}\\
  \FencePL\sc\\
  \ReadPL\rlx\regC\locx\outcomeC{0}
}
\end{array}
\end{equation*}

\subsubsection{Consistency Condition} - \axiomCoherenceII.
\label{subsubsection:condition:coherence:II}
Condition \axiomCoherenceII is the key principle
that allows one to reason about programs with inline assembly.
Ideally, one would like to reason about such instructions
using the hardware model, \intelext,
by relying on the guarantee that every cycle containing
at least one inline-assembly instruction should comply
to \intelext-consistency.
However,
as explained in \cref{subsection:intuitive:account},
such an approach would be too strong,
ruling out behaviors that could be introduced
by standard compiler optimizations.
We thus argued that a possible solution would be
to enforce the guarantee that every cycle in which
every pair of $\po$-separated events contains
at least one inline assembly instruction should comply
to \intelext-consistency.
This is the approach implemented by \axiomCoherenceII,
with some small caveats.

The formulation of \intelext-consistency, as introduced by~\citet{raad-22},
includes two consistency conditions:
an \emph{internal} condition,
which applies to cycles confined within single threads;
and an \textit{external} condition,
which posits the absence of certain cycles
spanning over multiple threads.

The internal condition in \intelext
posits the irreflexivity of~$\po\seq(\rfi\disj\moi\disj\rbi)$.
This condition
is equivalent to the irreflexivity of both~$\po\seq\rfi$
and~$\po\seq(\moi\disj\rbi)$.
\footnote{%
The condition
$\irr{{A}\seq({B}\disj{C})}$
is equivalent to
$\irr{{A}\seq{B}}\land\irr{{A}\seq{C}}$,
for any relations~$A$,~$B$, and~$C$.
}
Condition \axiomNTA is stronger than the irreflexivity of~$\po\seq\rfi$,
and
Conditions~\axiomCoherenceI and~\axiomCoherenceIII
together rule out reflexive edges in~$\po\seq(\moi\disj\rbi)$.

Therefore,
Condition \axiomCoherenceII focus
on integrating the external condition to the model.
To recall the definition of \intelext's external condition,
and to make its comparison with \axiomCoherenceII clear,
we include this definition here,
putting it side-by-side with \axiomCoherenceII:
\hypertarget{\nameExternal}{}
\[\begin{array}{@{}l@{\quad\quad}|@{\quad\quad}r@{}}
\includegraphics{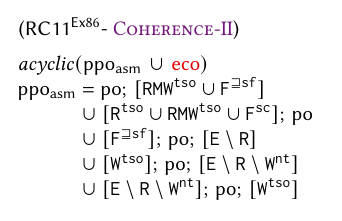}
&
\includegraphics{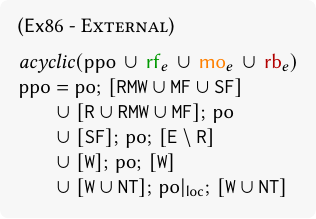}
\end{array}\]

We keep the notation used by~\citet{raad-22} in the statement
of \axiomExternal,
which diverges from ours in two minor ways:
(1) instead of a single set of fences,
\intelext introduces one set exclusively for store fences (\SFenceE)
and one set exclusively for memory fences (\MFenceE);
(2) analogously,
instead of a single set of write events,
there is one exclusive set for non-temporal stores (\NTWriteE)
and one for regular writes (\WriteE).

The side-by-side comparison reinforces the claim
that \axiomCoherenceII integrates \intelext-consistency
into \rcelevenext under the condition that
pairs of~$\po$-separated events in a violating cycle
include at least one inline-assembly event.
Indeed, most cases of~$\ppoasm$ edges
either start or end in a event with mode~$\tso$,~$\nt$, or~$\stf$.
There is one exception to this case:
edges that either either start or end in a \sc fence.
This is explained by how we model memory fences.
The condition therefore rules out certain kinds of cycles
with no inline-assembly instructions,
provided that the $\po$-separated events include a \sc fence.
Such cycles however are already ruled out by Condition~\axiomSC.

To conclude, let us comment on the differences
between the statements of the acyclicity conditions:
\axiomCoherenceII uses~$\eco$,
which includes the internal edges~$\moi$ and~$\rbi$;
whereas
\axiomExternal uses~$\rfe\disj\moe\disj\rbe$,
thereby including only external edges.
The inclusion of~%
$\settorel{\WriteE\cup\NTWriteE}\seq
\perloc\po\seq
\settorel{\WriteE\cup\NTWriteE}$
edges in the definition of~$\ppo$
compensates for the absence of~$\moi$,
whereas
the inclusion of~$\settorel{\ReadE}\seq\po$ edges
compensates for the absence of~$\rbi$.
This explanation also justifies why,
in the statement of \axiomCoherenceII,
we can omit the ``per-location'' case in the definition of~$\ppoasm$,
and reuse~$\eco$.
The attentive reader might notice that the internal edges
in~$\eco$ evade the constraint of one inline-assembly event
per pair of $\po$-separated events.
They however pose no risk to the soundness
of compiler optimizations,
because
(1) no optimization applies to pairs of
a read and a write to the same location,
so~$\rbi$ edges cannot be undone;
and (2)~$\moi$ edges between
plain \rceleven accesses
in a~$\ppoasm\disj\eco$ cycle can always be merged
into an edge of type~$\moe$, $\rbe$, or~$\ppoasm$.

\subsubsection{Definition of~$\hb$.}
\label{subsubsection:definition:hb}
To see why \hb is defined using~$\porceleven$ instead of~$\po$,
let us consider Program~\ref{prog:mp-nt}.
As we shall see,
whether the final state~$\store$ that maps both~$\locx$
and~$\locy$ to~$1$ is allowed
(that is, whether~$\store\in\interp{\textnormal{\ref{prog:mp-nt}}}$)
depends on the definition of~$\hb$.

In our model,
the final state~$\store$ is allowed,
thanks to the use of $\porceleven$ in the definition of~$\hb$.
If, however,~$\hb$ was defined as in \rceleven,
that is, $\hb_\rceleven = \transC{(\po\disj\sw)}$,
then the state~$\store$ would be disallowed.
This is of course problematic,
because
the behavior is allowed by the \intelext-compiled
version of this program.

In~\cref{section:intro},
we informally justified why this behavior
is allowed in \intelext after compilation
in terms of possible reorderings.
Having introduced the key consistency condition
of \intelext (Condition \axiomExternal),
we can now formally justify why this is the case.
We take this opportunity to illustrate our idea
of \textit{mixed execution graphs},
a reasoning tool we introduce to conduct
proofs of compilation correctness.
It allows us to represent graphs from
both source and compiled programs simultaneously:
\begin{equation*}
\includegraphics{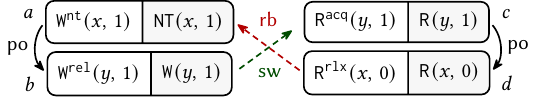}
\end{equation*}

Nodes in this graph carry
pairs of a \rcelevenext event, issued by the source program,
and a \intelext event, issued by the compiled program.
Using this structure,
we are able to make several observations:
\begin{enumerate}
\item The behavior is allowed by \intelext after compilation,
because~$\pair\evA\evB\notin\ppo$,
therefore
the cycle~$\tuple\evA{\evB,\evC,\evD}$ does
not violate~\axiomExternal.

\item The behavior is allowed by~\rcelevenext.
Two conditions could potentially be violated by the
cycle~$\tuple\evA{\evB,\evC,\evD}$:
\axiomCoherenceI and \axiomCoherenceII.
The cycle does not violate \axiomCoherenceI,
because~$\pair\evA\evB\notin\porceleven$.
The cycle does not violate \axiomCoherenceII,
because~$\pair\evA\evB\notin\ppoasm$.

\item The behavior breaks the irreflexivity of~$\hb_\rceleven\seq\eco$,
because~$\pair\evA\evD\in\hb_\rceleven$
and~$\pair\evD\evA\in\rb\subseteq\eco$.
Therefore, a naive extension of~\rceleven that
keeps~$\hb_\rceleven$ would be unsound.
\end{enumerate}

\subsubsection{Definition of~$\eco$.}
\label{subsubsection:definition:eco}
To see why~$\rfe$ is used in~$\eco$,
let us consider the following example:
\begin{equation*}
\begin{array}{@{}c@{}}
  \inarrII{
    \AsmNTWritePL\locx{1}\\
    \ReadPL\rlx\regA\locx\outcomeC{1}\\
    \WritePL\rel\locy{1}
  }{
    \ReadPL\acq\regB\locy\outcomeC{1}\\
    \ReadPL\rlx\regC\locx\outcomeC{0}
  }
\end{array}
\end{equation*}

This program is a slight variation of \ref{prog:mp-nt},
where we add a read instruction between the
non-temporal store and the write to~$\locy$.
Again, we wish to study whether the annotated behavior
is allowed by \intelext after compilation.
If that is the case,
then the behavior must be allowed by our model.
As we shall see,
the behavior is indeed exhibited by the
compiled program and our model
correctly allows it,
thanks to the exclusion of~$\rfi$ edges from~$\eco$.
The following mixed execution graph
helps to sustain these claims:
\begin{center}
\includegraphics{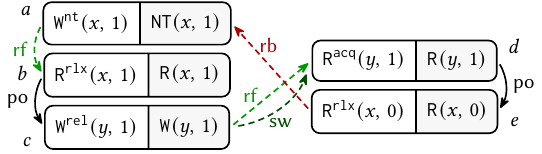}
\end{center}

This is the only execution graph that corresponds to
the annotated behavior,
because these~$\rf$ edges are the only ones that
comply with the results of the read operations.
Here is the summary of the conclusions we can
draw by studying this graph:
\begin{enumerate}
\item The behavior is allowed by \intelext after compilation,
because the graph is \intelext-consistent.
Indeed, both the edges~$\pair\evA\evB$
and~$\pair\evA\evC$ do not belong to~$\transC\ppo$,
therefore~$\tuple\evA{\evC,\evD,\evE}$ does
not violate~\axiomExternal.

\item The behavior is allowed by~\rcelevenext.
Two conditions could potentially be violated by the
cycle~$\tuple\evA{\evC,\evD,\evE}$:
\axiomCoherenceI and \axiomCoherenceII.
The cycle does not violate \axiomCoherenceII,
because~$\pair\evA\evC\notin\transC\ppoasm$.
The cycle does not violate \axiomCoherenceI,
because~$\pair\evE\evA$ is the longest~$\eco$ edge
starting from~$\evE$,
and because~$\pair\evA\evB\notin\porceleven$,
so extending the~$\hb$ edge~$\pair\evB\evE$ with~%
$\eco$ does not close the cycle.

\item The behavior breaks the irreflexivity of~$\hb\seq\eco_\rceleven$,
even when the \rcelevenext definition of~$\hb$ is used.
Indeed,
both the edges~$\pair\evB\evC$ and~$\pair\evD\evE$
belong to~$\porceleven$,
and~$\pair\evA\evB\in\rfi\subseteq\eco_\rceleven$,
so~$\pair\evB\evB$ forms a reflexive edge in~%
$\hb\seq\eco_\rceleven$.
Therefore, a naive extension of~\rceleven that
keeps~$\eco_\rceleven$ would be unsound.
\end{enumerate}

\section{Metatheory}
\label{section:metatheory}

In this section,
we study properties of~\rcelevenext.
In particular,
we study the correctness of compilation,
the correctness of compiler optimizations,
and the \emph{data-race-freedom} property:
the property that,
if a program~$\prog$ has races only on \sc accesses,
then~$\prog$ can exhibit only sequentially consistent behaviors.
Data-race freedom is one of the main design goals of~\rceleven,
so it is important to show that~\rcelevenext preserves this property.

The discussion is organized as follows.
In~\cref{subsection:schemes},
we define two compilation schemes to \intelext.
In~\cref{subsection:mixed:graphs},
we introduce the notion of mixed execution graphs,
a key concept in %
our proofs of compilation correctness,
whose sketch we present in~\cref{subsection:proof:sketch}.
In~\cref{subsection:optimizations},
we discuss our results of compiler-optimization correctness.
Finally,
in~\cref{subsection:drf},
we present the formal statement of data-race freedom.
The property that~\rcelevenext is
an extension of \rceleven and \intelext
is in the
Appendix~\appendixref{thm:extension:I:rcelevenext:app,thm:extension:II:rcelevenext:app}.

\subsection{Compilation Schemes -- Definition and Correctness}
\label{subsection:schemes}

Following the traditional approach in the weak-memory literature,
we formalize the notion of compilation as a
\textit{compilation scheme}.
Roughly speaking, a compilation scheme is a
program transformation that modifies only memory instructions:
the main structure of the program,
including control flow and the distribution of threads,
is kept,
whereas memory instructions from the source language are
mapped to zero, one, or multiple instructions from
the target language.
Therefore, this approach allows us to concentrate on
how the transition from the model of the source language
to the model of the target language affects
the way in which the program interacts with memory.
Intuitively, the compilation scheme is correct
if the execution of the transformed program can
update memory only to a subset of the final states
reachable from the execution of the source program.

\begin{definition}[Compilation Scheme from \rcelevenextlang to \intelextlang]
\label{def:scheme:rcelevenext}
\[\begin{array}{@{}c@{~~~~}c@{}}
\begin{array}{@{}r@{}c@{\;}l@{}}
\compile{\WritePL\sc\exprA\exprB}
  &\eqdef&
    \Seq{\WritePL{}\exprA\exprB}{\MFencePL}
\\
\compile{\WritePL{\neq\sc}\exprA\exprB}
  &\eqdef&
    \WritePL{}\exprA\exprB
\\
\compile{\ReadPL\md\reg\expr}
  &\eqdef&
    \ReadPL{}\reg\expr
\\
\compile{\Seq\cmdA\cmdB}
  &\eqdef&
    \Seq{\compile\cmdA}{\compile\cmdB}
\\
\compile{\Skip}
  &\eqdef&
    \Skip
\\
\compile{\Asm[\cmd]}
  &\eqdef&
    \cmd
\end{array}
&
\begin{array}{@{}r@{}c@{\;}l@{}}
\compile{\FencePL\sc}
  &\eqdef&
    \MFencePL
\\
\compile{\FencePL{\neq\sc}}
  &\eqdef&
    \Skip
\\
\compile{\RMWPL\md\reg\exprOne\exprTwo\exprThree}
  &\eqdef&
    \RMWPL{}\reg\exprOne\exprTwo\exprThree
\\
\compile{\While\expr\cmd}
  &\eqdef&
    \While\expr{\compile\cmd}
\\
\compile{\IfThen\expr\cmd}
  &\eqdef&
    \IfThen\expr{\compile\cmd}
\\
&&
\end{array}
\end{array}\]
\end{definition}

\begin{definition}[Alternative Compilation Scheme]
\label{def:alt:scheme:rcelevenext}
Same as Def.~\ref{def:scheme:rcelevenext} except for the following cases:
\[\begin{array}{cc}
\begin{array}{r@{}c@{\;}l}
\compilealt{\WritePL{\sc}\exprA\exprB}
  &\eqdef&
    \Seq\SFencePL{
    \Seq{\WritePL{}\exprA\exprB}
        \MFencePL
    }
\\
\compilealt{\WritePL{\rel}\exprA\exprB}
  &\eqdef&
    \Seq\SFencePL{\WritePL{}\exprA\exprB}
\end{array}
&
\begin{array}{r@{}c@{\;}l}
\compilealt{\WritePL\rlx\exprA\exprB}
  &\eqdef&
    \NTWritePL\exprA\exprB
\\
\compilealt{\FencePL{\rel,\,\acqrel}}
  &\eqdef&
    \SFencePL
\end{array}
\end{array}\]
\end{definition}

\defref{def:scheme:rcelevenext}
follows largely the scheme from~\citet{rc11}.
Perhaps more striking is
\defref{def:alt:scheme:rcelevenext},
which provides an alternative scheme for \intelext,
where relaxed writes can be compiled to non-temporal stores.
The price to pay is the addition of store fences
to the compilation of \rel/\sc writes
and \rel/\acqrel fences.
The idea is to ensure that
every $\sw$ edge starts with a store fence.
In this way,
non-temporal stores,
even when emitted from the compilation of \rlx writes,
cannot invalidate release-acquire synchronization.

In a similar way to how we constructed the function~$\interp[\rceleven]\_$,
which defines the semantics of \rceleven programs,
and to how we implicitly constructed~$\interp[\rcelevenext]\_$,
we can introduce the function~$\interp[\intelext]\_$
defining the semantics of \intelextlang programs.
The definition is in the
Appendix~\appendixref{def:intelextlang:semantics:app}.
The statement of compilation correctness
is then straightforward:
\begin{theorem}
\label{thm:correctness:rcelevenext}
[Correctness of Definitions \ref{def:scheme:rcelevenext} and \ref{def:alt:scheme:rcelevenext}]
For every~\rcelevenextlang program~$\prog$,
the set of final states of~$\compile\prog$
defined by \intelext
is included in the set of final states of $\prog$
defined by \rcelevenext:
\[
\forall\prog.\;
\interp[\intelext]{\compile\prog}
\subseteq
\interp[\rcelevenext]{\prog}
\]
\end{theorem}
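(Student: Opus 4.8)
The plan is to prove the inclusion pointwise. Fix a final state $\store\in\interp[\intelext]{\compile\prog}$. Unfolding the \intelext semantics, either $\compile\prog$ reaches $\store$ through undefined behaviour, in which case the corresponding clause makes \emph{every} state available and the inclusion is immediate, or $\store$ is produced by a \intelext-consistent execution graph $\graphB$ of $\compile\prog$ with $\store=\finalSt(\graphB,\dots)$. In the latter case the goal is to construct from $\graphB$ a source graph $\graphA$ (with reads-from and modification-order relations) such that (i)~$\graphA$ is reachable by the source pool reduction from $\toPool\prog$, (ii)~$\graphA$ is \rcelevenext-consistent, and (iii)~$\finalSt(\graphA,\dots)=\store$; the first clause of \defref{def:rcelevenlang:semantics} then yields $\store\in\interp[\rcelevenext]\prog$. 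Note that data races are irrelevant here: consistency is a purely graph-theoretic property, so a consistent witness suffices whether or not $\prog$ is racy.

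To build $\graphA$ I would use the \emph{mixed execution graph} of \cref{subsection:mixed:graphs}, which records, for each source command, the block of target events that its compilation emits. For both schemes this correspondence is a near-identity: every memory command yields a single principal target event, the only exceptions being \sc writes and \sc fences, which gain a trailing memory fence, and, for the alternative scheme, \rel/\sc writes, which gain a leading store fence, and \rlx writes, which are realised as non-temporal stores. From the relations of $\graphB$ I read off those of $\graphA$: $\mo$ is inherited verbatim on the unique write-like event of each block, and $\rf$ is set so that a source read reads from a source write exactly when their principal target events are $\rf$-related in $\graphB$. Since compilation preserves the control-flow skeleton and the register computations, the reduction trace of $\graphB$ lifts to one witnessing~(i); and since the per-location write sets and their $\mo$-order are preserved by the block map (fences are never writes, and each non-temporal store carries the location and value of the \rlx write it implements), the extracted memory coincides with $\store$, giving~(iii).

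The substance is~(ii): deriving the six conditions of \defref{def:rcelevenext:consistent} from the internal and external conditions of \intelext-consistency. The uniform tool is a \emph{simulation lemma} asserting that each source edge is witnessed by a target edge or path --- each $\porceleven$ (and $\po$) edge by a target $\po$ path through the relevant blocks, each $\sw$ edge by a cross-thread path that \intelext orders through preserved program order and multi-copy atomicity (this is also why release/acquire fences may be erased), each $\eco$ edge by a target $\eco$ path (the use of $\rfe$ rather than $\rf$ in the \rcelevenext $\eco$ being precisely what makes the internal reads-from drop out), and each $\ppoasm$ edge by a target $\ppo$ edge. For the four conditions inherited from \rceleven (\axiomCoherenceI, \axiomSC, \axiomAtomicity, \axiomNTA) I would adapt the established \rceleven-to-\intel correctness argument of \citet{rc11}; the \rcelevenext restrictions only \emph{weaken} these conditions (since $\porceleven\subseteq\po$ and the $\rfe$-based $\eco$ is contained in the $\rf$-based one), so the real content is to re-run those arguments in the presence of inline-assembly events, with the memory fences after \sc writes and in place of \sc fences supplying the ordering needed for \axiomSC. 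Condition \axiomCoherenceIII transports directly to \intelext's internal irreflexivity of $\po\seq(\rfi\disj\moi\disj\rbi)$, since a reflexive $\settorel{\WriteMd\nt}\seq\po\seq(\rb\disj\mo)$ is in particular a reflexive $\po\seq(\moi\disj\rbi)$.

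The main obstacle is \axiomCoherenceII, the acyclicity of $\ppoasm\disj\eco$, where the two models must be reconciled edge by edge. Two points demand care. First, $\eco$ carries the internal edges $\moi$ and $\rbi$, which have no individual counterpart among the external target edges $\rfe\disj\moe\disj\rbe$; here I would invoke the observation of \cref{subsubsection:condition:coherence:II} that within a $\ppoasm\disj\eco$ cycle an $\moi$ edge between two plain \rceleven writes can always be rerouted into an adjacent $\moe$, $\rbe$, or $\ppoasm$ edge, while the per-location write--write and read-prefixed edges already carried in $\ppoasm$ absorb the remaining internal coherence, so that the transported cycle genuinely lives in \intelext's external relation. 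Second, the $\ppoasm\to\ppo$ correspondence must be checked case by case against the side-by-side definitions, with the \sc-fence cases treated separately: they correspond to memory fences rather than inline-assembly events, but any cycle they would create is independently ruled out by \axiomSC. Finally, the two schemes must be handled uniformly, and the delicate case is the alternative scheme, where I must verify that prefixing every \rel/\sc write with a store fence forces each source $\sw$ edge to be realised in the target by a path beginning at a store-fence-ordered event, so that compiling \rlx writes to non-temporal stores can never invalidate release--acquire synchronization --- exactly the property advertised after \defref{def:alt:scheme:rcelevenext}.
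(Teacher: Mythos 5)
Your proposal follows essentially the same route as the paper: build the source graph by replaying the pool reduction into a mixed execution graph, then show, axiom by axiom, that the Internal/External conditions of \intelext-consistency on the mixed graph force all six \rcelevenext-consistency conditions (with \RULE{\nameCoherenceIII} immediate from \RULE{\nameInternal}, \RULE{\nameCoherenceII} handled by absorbing $\moi$/$\rbi$ into the target-level preserved program order, and a separate node set plus the ``every $\sw$ edge starts with a store fence'' argument for the alternative scheme). Two cosmetic slips do not affect this: the \intelext semantics has no undefined-behaviour clause, so your first branch is vacuous --- producing a consistent source witness already suffices --- and the per-location write--write and read-prefixed edges that absorb the internal coherence edges live in \intelext's $\ppo$, not in $\ppoasm$.
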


\subsection{Mixed Execution Graphs}
\label{subsection:mixed:graphs}

Our proofs of compilation correctness
rely on the novel notion of \textit{mixed execution graphs},
a type of execution graph whose nodes
contain events from both the source-level and target-level models.
Before presenting the proof sketch of our compilation-correctness results,
let us give a brief introduction to mixed execution graphs.

Informally speaking,
a mixed execution graph is the
superposition of two execution graphs:
one called \textit{source graph},
which is associated with a source program~$\prog$;
and
one called \textit{target graph},
which is associated with the compilation of~$\prog$.
The key feature of a mixed execution graph is that
it captures the fact that source and target graphs
share the same overall structure.
Indeed,
because a compilation scheme preserves the
control flow of the source program and
changes only how memory operations are mapped
to operations in the target language,
for every execution graph of the compiled program,
one can always construct an execution graph of the
source program that preserves much of the structure
of the target graph,
including its primitive relations~$\po$,~$\rf$, and~$\mo$.
The only mismatches between these graphs come from how
one memory operation from the source language might be
mapped to
zero, one, or multiple memory operations from the target language.

To account for these mismatches,
nodes in a mixed graph,
called \textit{mixed nodes},
carry events from both source and target models.
Events from the two models however cannot be
arbitrarily assembled in a mixed node:
the source-level events in a mixed node correspond
to the events of a single source instruction
and the target-level events correspond to the
events emitted by the snippet of target-level language
produced by the mapping of this instruction.
Therefore,
the range of mixed nodes is fixed and
determined by the underlying compilation scheme.

Mixed graphs form a very convenient tool for proving
compilation-correctness results
because they allow
one to work with the execution graphs from both the source
program and its compiled version at the same time, and
because they allow one to forget about the compilation
scheme which is ultimately encoded in the set of permissible
mixed nodes.
Moreover, it is possible to lift the consistency conditions from
the models of source and target languages to this mixed-graph structure.
Both models can thus be defined on the same structure,
thereby allowing one to formally reason about statements of the kind
``\textit{one model is stronger than the other}''.
In fact, the main convenience of mixed execution graphs is
precisely to allow one to formulate the compilation correctness result
as a statement in this fashion:
``\textit{in a mixed execution graph with nodes taken from a well-chosen set,
if the consistency conditions of the target model hold,
then so do the consistency conditions of the source model}''.
The set of nodes has to be well chosen so as to correctly reflect
the compilation scheme begin considered.

\begin{figure}[t]
\includegraphics{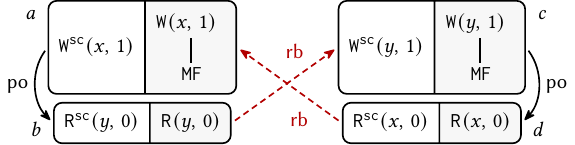}
\caption{Example of a mixed execution graph.}
\Description{}
\label{fig:mixed:graph:example}
\end{figure}

To give an illustration of mixed execution graphs,
let us consider the example
depicted in~\fref{fig:mixed:graph:example}.
We refer the reader to the
Appendix~\appendixref{subsection:mixed:execution:graphs:app}
for a complete exposition of mixed execution graphs
and for a more thorough explanation of this example.
The nodes are depicted as domino-shaped boxes
where the first part contains \rcelevenext events
and the second part contains \intelext events.
There are two types of nodes in this example:
one captures how a \sc write is compiled to a
plain write followed by a memory fence;
the other one captures
how a \sc read is compiled to a plain read.
In this simple example, it is easy to see how
a \rcelevenext graph~$\graphA$ and a
a \intelext graph~$\graphB$ can be recovered from
the mixed structure.
We wish to argue that the behavior represented by the mixed graph
is disallowed in~$\graphA$
because all access modes are \sc.
In other words,
we wish to argue that~$\graphA$ is inconsistent.
If compilation is correct,
then~$\graphB$ should also be inconsistent.
Thanks to the mixed graph structure,
we can carry out both proofs in the same graph:
$\graphA$ is inconsistent because the
cycle~$\tuple\evA{\evB,\evC,\evD}$
contradicts~\axiomSC,
and
$\graphB$ is inconsistent because the same
cycle contradicts~\axiomExternal.

\subsection{Compilation Correctness - Proof Sketch}
\label{subsection:proof:sketch}

The overall structure of our proofs
is depicted by the following diagram:
\begin{center}
\includegraphics{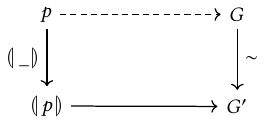}
\end{center}

It illustrates the first step
of a two-steps strategy to prove that~$\compile\_$ is correct.

This first step consists of showing that,
for every program~$\prog$,
for every execution graph~$\graphB$
associated with~$\compile\prog$,
there exists a graph~$\graphA$ associated with~$\prog$,
such that~$\graphA$ \textit{is simulated by}~$\graphB$~%
\appendixref{def:gsim:app},
noted~$\graphA\gsim\graphB$,
which means that~$\graphA$ and~$\graphB$
can be merged into a mixed graph~$\graphM$.
This first step is accomplished by induction over
the construction of the graph~$\graphB$.
Intuitively,
because the compiled program~$\compile\prog$
preserves much of the structure of~$\prog$,
it is possible to replay the pool-reduction steps
from~$\compile\prog$ and yield a graph~$\graphA$
that satisfies the desired properties.

The second step is then to show that,
if~$\graphM$ is \intelext-consistent,
then it is \rcelevenext-consistent,
for notions of \intelext-consistency
and \rcelevenext-consistency
adapted to mixed graphs~%
\appendixref{def:mixed:intelext:consistent:app,%
def:mixed:rcelevenext:consistent:app}.
The consistency of a mixed graph
holds iff each of its constituent graphs
is consistent,
a property we call \textit{Transfer Principle}~%
\appendixref{thm:transfer:app}.
It follows from this principle that
the second step is equivalent to the
proof that,
if~$\graphB$ is \intelext-consistent,
then~$\graphA$ is \rcelevenext-consistent.
This is sufficient to conclude the proof.

\subsection{Compiler Optimizations}
\label{subsection:optimizations}

We now study the compiler optimizations discussed by~\citet{rc11}.
We wish to determine under which conditions they are sound in \rcelevenext.
As previously stated,
our model validates all thread-local optimizations.
The only optimization that is only valid under additional conditions
is sequentialization, which is a global transformation.

Following~\citet{rc11},
we formalize a compiler optimization as a \textit{program transformation}:
a mapping that takes and produces programs in the source language,
which, in our case, is the language~\rcelevenextlang.
When discussing a given transformation,
we use the notation~$\transf\progA\progB$ to express that
$\progB$ can be obtained by applying the transformation to~$\progA$.

A program transformation is sound,
if applying this transformation does not introduce new behaviors.
Formally speaking,
this means that,
if~$\transf\progA\progB$ holds,
then the set of behaviors of~$\progB$
is a subset of the set of behaviors of~$\progA$,
that is,~$\interp\progB\subseteq\interp\progA$.

To prove the soundness of a program transformation,
we usually resort to its natural generalization
to the level of execution graphs:
a transformation that applies to events in
an execution graph rather than to instructions.
In the transformations considered here,
this generalization is straightforward.
We use the notation~$\transf\graphA\graphB$
to express that~$\graphB$ can be obtained by
applying the transformation to~$\graphA$.
The property that allows us to shift our attention to the graph
transformation when proving soundness of a program
transformation is the following:
if~$\transf\progA\progB$,
and if~$\graphB$ is an execution graph associated
with~$\progB$, then there exists an execution
graph~$\graphA$ associated with~$\progA$
such that~$\transf\graphA\graphB$.
Under this property,
to show the soundness of the program transformation,
it suffices to show (1) that,
if~$\graphB$ is \rcelevenext-consistent,
then so is~$\graphA$;
and (2) that,
if~$\graphB$ is racy,
then so is~$\graphA$.

\subsubsection{Register Promotion.}

Register promotion replaces accesses
to a memory location with accesses to a register,
provided that this location is accessed by only one thread
and that this location is not
accessed via an inline-assembly read-modify-write.
At the level of execution graphs,
the transformation~$\transf\graphA\graphB$
removes all the accesses to a location~$\locx$ in~$\graphA$,
provided that these accesses are related by~$\graphA.\po$
and that their intersection with~$\RMWMd\tso$ is empty.
Avoiding~$\RMWMd\tso$ is necessary,
because RMWs act as barriers in~\intel.
Intuitively,
this transformation is correct because
a consistency-violating cycle in~$\graphA$ involving more than one thread
must not contain accesses to~$\locx$
(because~$\locx$ is never shared between two threads),
so such a cycle would still exist in~$\graphB$.

\subsubsection{Strengthening.}
Strengthening replaces an access mode
with a stronger one with respect to the ordering of access modes
(\fref{fig:cpp:x86:diagram}).
Definitions in \rcelevenext are \textit{monotonic}:
only upward-closed ranges of the form ``$\sqsupseteq\md$'' occur.
\footnote{%
Sets of the form~${S}^{\md}$, for~$\md\in\{\sc,\tso\}$,
can be rewritten as~${S}^{\sqsupseteq\md}$,
and sets of the form~${S}\setminus\WriteMd\nt$
can be rewritten as~%
$({S}\setminus\WriteE)\cup({S}\cap\WriteAtLeast\na)$.
}
The correctness of this transformation is thus trivial,
because, every edge of the original graph is preserved.

\subsubsection{Deordering and Merging.}
Deordering transforms sequential composition
into parallel composition:
$\transf{\Seq\cmdA\cmdB}{\inarrII\cmdA\cmdB}$.
Merging transforms two consecutive instructions into one:
$\transf{\Seq\cmdA\cmdB}{\cmd''}$.
\citet[Table 1 and Figure 11]{rc11}
defines the pairs of deorderable instructions
and mergeable instructions permitted in \rceleven.
Both transformations remain valid in~\rcelevenext
when restricted to the same deorderable
and mergeable pairs of instructions.
Intuitively,
the correctness argument relies on the remark
that these transformations have no effect on~$\ppoasm$.
Therefore,
the additional \axiomCoherenceII condition of our extended model
does not pose a risk to the correctness of these optimizations,
because cycles in~$\ppoasm\cup\eco$ cannot be undone
by deordering and merging.

\subsubsection{Sequentialization.}

\begin{figure}[t]
\[\begin{array}{@{}c@{}}
\begin{array}{@{}ccc@{}}
\inarrIII{
  \AsmNTWritePL\locx{1}
}{
  \ReadPL\rlx\regA\locx\outcomeI{1} \\
  \WritePL\rel\locy{1}
}{
  \ReadPL\acq\regB\locy\outcomeI{1} \\
  \ReadPL\rlx\regC\locx\outcomeI{0} \\
}
&
\rightsquigarrow
&
\inarrII{
  \AsmNTWritePL\locx{1}             \\
  \ReadPL\rlx\regA\locx\outcomeC{1} \\
  \WritePL\rel\locy{1}
}{
  \ReadPL\acq\regB\locy\outcomeC{1} \\
  \ReadPL\rlx\regC\locx\outcomeC{0} \\
}
\end{array}
\\[7mm]
\begin{array}{@{}lr@{\hspace{4.2cm}}}
\inarrIII{
  \AsmWritePL\locx{1}
}{
  \AsmReadPL\regA\locx\outcomeI{1} \\
  \AsmReadPL\regB\locy\outcomeI{0} \\
}{
  \AsmWritePL\locy{1}              \\
  \AsmMFencePL                     \\
  \AsmReadPL\regC\locx\outcomeI{0} \\
}
&
\rightsquigarrow
\end{array}
\\[6mm]
\begin{array}{@{\hspace{7.5cm}}r@{}}
\inarrII{
  \AsmWritePL\locx{1}              \\
  \AsmReadPL\regA\locx\outcomeC{1} \\
  \AsmReadPL\regB\locy\outcomeC{0} \\
}{
  \AsmWritePL\locy{1}              \\
  \AsmMFencePL                     \\
  \AsmReadPL\regC\locx\outcomeC{0} \\
}
\end{array}
\end{array}\]
\caption{Counterexamples showing the unsoundness of sequentialization in \rcelevenext.}
\Description{}
\label{fig:sequentialization:counterexample}
\end{figure}

Sequentialization merges two threads into one
by interleaving their instructions.
\Cref{fig:sequentialization:counterexample}
depicts two counterexamples showing the
unsoundness of sequentialization in~\rcelevenext.

Sequentialization is unsound because, when merging two threads,
an external~$\rf$ edge might become internal.
Because internal~$\rf$ edges are not included
in~$\porceleven$, in~$\ppoasm$, or in~$\eco$,
exchanging a~$\rfe$ edge for a~$\rfi$ edge might undo
cycles in~$\ppoasm\disj\eco$, in~$\hb\seq\refC\eco$, or in~$\psc$.

The omission of~$\rfi$ edges from~$\porceleven$,~$\ppoasm$, and~$\eco$,
is necessary because non-temporal stores break
release-acquire synchronization.
Moreover,
the omission of~$\rfi$ in the statement of \axiomCoherenceII
is inherited from \intelext, which also omits~$\rfi$ edges
in the statement of \axiomExternal.
For this reason, sequentialization is also
unsound in plain \intelext~\cite{promising-semantics}.

Because sequentialization is unsound in \intelext,
its support is incompatible with
\hyperlink{P4}{Property P4}.
If we ignore \hyperlink{P4}{Property P4},
then there are two approaches to add support for sequentialization:
(1) to relax the model so as to allow the behavior of the programs on the
left-hand side of \cref{fig:sequentialization:counterexample},
or
(2) to make the model stronger than, or incomparable to,
\rcelevenext so as to
disallow the behavior of the programs on the
right-hand side of \cref{fig:sequentialization:counterexample}.
The first approach leads to a lost of reasoning principles,
whereas the second approach
invalidates the straightforward identity map as
a sound compilation scheme for inline assembly
(\hyperlink{P1}{Property P1}).
Therefore,
instead of aiming to support sequentialization
for the price of abandoning
\hyperlink{P4}{Property P4},
we investigate conditions under which
sequentialization is sound in \rcelevenext as is.

We call a~$\rfe$ edge
of a \rcelevenext-inconsistent graph~$\graph$
\textit{problematic}
if sequentialization transforms~$\graph$
into a \rcelevenext-consistent graph~$\graphB$.
We note that a problematic edge must contain at least
one inline-assembly event.
Indeed,
because~%
$\settorel{\EventE\setminus\WriteMd\nt}\seq\rfi$
is included in~$\porceleven$,
transforming a~$\graphA.\rfe$ edge between plain \rceleven events
into~$\graphB.\rfi$ makes this edge part of~$\graphB.\hb$,
so it cannot undo a cycle in~$\hb\seq\eco$.
Such a transformation cannot undo a cycle in~$\ppoasm\disj\eco$ either,
because,
by definition of~$\ppoasm$,
every edge~%
$\pair\evA\evB\in\graph.\rfe\seq\settorel{\ReadMd{\neq\tso}}$
that is part of a cycle in~$\ppoasm\disj\eco$
must be followed by an edge~%
$\pair\evB\evC\in\po\seq\settorel{\RMWMd\tso\cup\FenceAtLeast\stf}$,
therefore~$\pair\evA\evC\in\graphB.\ppoasm$.

When we consider two threads,
a sufficient purely syntactic condition
to rule out the existence of such problematic~$\rf$ edges
is the following:~%
(1) if one thread includes
plain \rceleven reads then the addresses of all these accesses
and the addresses of all locations modified by the other thread
using inline assembly should be statically known and disjoint,
and~%
(2) if one thread includes
inline-assembly reads then the addresses of all these accesses
and the addresses of all locations modified by the other thread
(using inline assembly or not) should be statically known and disjoint.
We call this condition
\textit{No Interaction Through Inline Assembly} (NITIA).
Notice that,
thanks to the inclusions~%
$\settorel{\RMWE}\seq\rfi\subseteq\porceleven\cap\ppoasm$
and~%
$\rfi\seq\settorel{\RMWE}\subseteq\porceleven\cap\ppoasm$,
read-modify-writes can be ignored when
checking the NITIA condition.
Refining the statement of sequentialization to
require this condition to hold when merging two threads
leads to a sound optimization.
We prove this claim in the
Appendix~\appendixref{thm:nitia:sequentialization:app}.

Another possible refinement of sequentialization is
to add a~$\sc$ fence between the threads to be merged.
Inserting such a fence
imposes the constraint that the instructions from
one thread are ordered with respect to
the instructions from the other thread.
In retrospect,
with the NITIA-refinement of sequentialization,
threads can be arbitrarily interleaved.
We prove soundness of this second version of sequentialization
in the Appendix~\appendixref{thm:fence:sequentialization:app}.

\subsection{Data-Race Freedom}
\label{subsection:drf}
Informally stated,
the data-race-freedom property posits that,
if a program~$\prog$ has races only on \sc accesses,
then~$\prog$ can exhibit only sequentially consistent behaviors.
This property enforces the reasoning principle that,
to recover the relative simplicity of sequential consistency,
it suffices to show the absence of races on non-\sc accesses.

Because the notion of a race,
as introduced in \defref{def:data:race},
applies to execution graphs,
not to programs,
to formalize this statement
we must define what it means for
a program \emph{to have races only on \sc accesses},
that is, to be \emph{data-race free}:
\begin{definition}[Data-Race Free]
A program~$\prog$
\emph{has races only on \sc accesses},
or, is \emph{data-race free},
if
every \SC-consistent execution graph~$\graph$
associated with~$\prog$
has races only on~\sc accesses:
\[
\prog\;\textit{is data-race free}\;\iff\;
\begin{array}{@{\,}l@{}}
\forall\,\graph,\,\mo,\,\rf,\,\evA,\,\evB.
\\
\quad\left(\begin{array}{@{\,}l@{\,}}
    \poolstep*{\toPool\prog}\initGraph\_\graphA
  \\
    \tuple\graphA{\rf,\mo}\;\textit{is \SC-consistent}
  \\
    \pair\evA\evB\;\textit{forms a data race}
\end{array}\right)
\implies
\dotMode\evA = \dotMode\evB = \sc
\end{array}
\]
\end{definition}

The definition relies on the notion of \SC-consistency,
captured by a single condition:
the acyclicity of~$\po\disj\rf\disj\mo\disj\rb$.
The restriction to \SC-consistent graphs
strengthens the reasoning principle enforced by data-race freedom.
If, for example,
the graphs were assumed to be \rcelevenext-consistent,
then the resulting property would offer no benefit over
\rcelevenext itself.

Finally, data-race freedom is formally stated as follows:
\begin{theorem}[Data-Race Freedom]
\label{thm:drf}
$\forall\,\prog.\;
\prog\;\text{is \textit{data-race free}}\implies
\interp[\rcelevenext]\prog = \interp[\SC]\prog$
\end{theorem}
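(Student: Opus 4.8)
The plan is to prove the set equality by two inclusions, both of which reduce to a single \emph{key lemma} together with the easy observation that \SC-consistency is strictly stronger than \rcelevenext-consistency. For the latter I would first establish \textbf{Fact~A}: every \SC-consistent graph $\tuple\graphA{\rf,\mo}$ is \rcelevenext-consistent. Since $\acyc{\po\disj\rf\disj\mo\disj\rb}$ holds and $\porceleven\subseteq\po$, $\ppoasm\subseteq\po$, $\sw\subseteq\transC{(\po\disj\rf)}$, and $\eco\subseteq\transC{(\po\disj\rf\disj\mo\disj\rb)}$, each of the six \rcelevenext conditions (\axiomCoherenceI, \axiomSC, \axiomAtomicity, \axiomNTA, \axiomCoherenceII, \axiomCoherenceIII) is the (ir)reflexivity or acyclicity of a relation contained in $\transC{(\po\disj\rf\disj\mo\disj\rb)}$, and so follows directly. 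The substantive direction is the converse, which I would isolate as the \textbf{Key Lemma}: if $\prog$ is data-race free and $\tuple\graphA{\rf,\mo}$ is a \rcelevenext-consistent execution of $\prog$, then $\acyc{\po\disj\rf\disj\mo\disj\rb}$, i.e. $\graphA$ is \SC-consistent.

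Granting Fact~A and the Key Lemma, I would finish as follows. Under data-race freedom neither semantics is ever triggered through its $\UB$ clause: an \SC $\UB$ requires an \SC-consistent graph with a race involving an \na (hence non-\sc) access, contradicting the hypothesis directly; an \rcelevenext $\UB$ requires an \rcelevenext-consistent graph with such a race, which by the Key Lemma is \SC-consistent, again contradicting the hypothesis. With $\UB$ excluded, both $\interp[\SC]{\prog}$ and $\interp[\rcelevenext]{\prog}$ are exactly the images under $\finalSt$ of the respective consistent graphs of $\prog$; as $\finalSt$ depends only on $\tuple\graphA\mo$, Fact~A yields $\interp[\SC]{\prog}\subseteq\interp[\rcelevenext]{\prog}$ and the Key Lemma yields $\interp[\rcelevenext]{\prog}\subseteq\interp[\SC]{\prog}$.

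To prove the Key Lemma I would argue by contradiction, manufacturing an \SC-consistent execution of $\prog$ that carries a non-\sc race. Assume $\graphA$ is \rcelevenext-consistent but $\po\disj\rf\disj\mo\disj\rb$ has a cycle; note first that $\hb=\transC{(\porceleven\disj\sw)}$ is acyclic, being contained in $\transC{(\po\disj\rf)}$ (acyclic by \axiomNTA). Choosing a cycle with the fewest edges, I would decompose it into maximal $\hb$-segments separated by $\mo$, $\rb$, or $\rfe$ edges, and use coherence (\axiomCoherenceI, $\irr{\hb\seq\refC\eco}$) together with \axiomAtomicity, \axiomCoherenceII, and \axiomCoherenceIII to rule out the degenerate shapes; this exposes two conflicting events $\evA,\evB$ (same location, at least one a write) with $\pair\evA\evB\notin\hb\disj\inv\hb$, a data race per \defref{def:data:race}. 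A further sub-argument shows the race must be on non-\sc accesses: were both endpoints \sc, they would be forced into $\scb$ and hence $\psc$, so $\acyc\psc$ (\axiomSC) would order them and either close a shorter cycle or contradict minimality.

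The remaining and principal obstacle is that this race lives in $\graphA$, which is \rcelevenext-consistent but \emph{not} \SC-consistent, whereas the data-race-freedom hypothesis constrains only \SC-consistent graphs. To bridge the gap I would use a prefix-and-extend construction: restrict $\graphA$ to a $(\po\disj\rf)$-downward-closed set of events that is \SC-consistent yet still justifies both racing accesses, then replay the pool reduction of \fref{fig:pool:reduction} — exploiting that \ruleReadStep imposes no constraint on the value read, so the construction is receptive — to obtain a genuine \SC-consistent execution $\graphB$ of $\prog$ in which $\evA$ and $\evB$ occur and remain $\hb$-unordered. This produces a non-\sc race in an \SC-consistent execution, contradicting the hypothesis. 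I expect the two hardest points to be (i) certifying that the minimal-cycle boundary edge is genuinely a conflicting, $\hb$-unordered, non-\sc pair despite the \rcelevenext-specific weakenings — in particular the failure of the usual $\rfi\subseteq\hb$ identity caused by $\porceleven$ and the use of $\rfe$ in $\eco$ — and (ii) carrying out the prefix-and-extend step so that the extracted race survives into a fully \SC-consistent execution of $\prog$.
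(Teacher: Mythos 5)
Your outer scaffolding is fine: Fact~A is correct, the $\UB$ bookkeeping is correct, and reducing everything to the Key Lemma (``\rcelevenext-consistent $+$ data-race free $\implies$ \SC-consistent'') matches the shape of the paper's argument. The genuine gap is in the Key Lemma itself, at the step you flag as hard point~(i): the claim that a minimal $(\po\disj\rf\disj\mo\disj\rb)$-cycle in the \emph{given} graph exposes a conflicting, $\hb$-unordered, \emph{non-\sc} pair. The appendix contains an explicit boxed remark (just after \thmref{thm:drf:app}) exhibiting an \rceleven/\rcelevenext-consistent graph that is not \racy{\hb} --- i.e., all of its $\hb$-races are between \sc{} accesses --- and yet is not \SC-consistent. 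On such a graph your extraction yields at best an \sc--\sc{} pair, which the data-race-freedom hypothesis tolerates, and your fallback (``both endpoints \sc{} $\implies$ $\psc$ closes a shorter cycle'') cannot produce a contradiction because the graph already satisfies $\acyc\psc$; the weakness of \axiomSC{} on mixed \sc/non-\sc{} locations (the Z6.U phenomenon the paper discusses) is precisely what lets the \SC-violating cycle survive. So for this case your method finds no usable race at all, in this graph or in any restriction of it.

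The missing ingredient is the paper's \lemmaref{lemma:rf:incl:onsc:rf} and \cororef{coro:hb:racefree:po:rf:racefree}: one first upgrades \racefree{\hb} to \racefree{(\po\disj\onsc\rf)}, and this step requires \emph{surgery on the reads-from relation} --- take a minimal $\rf$ edge outside $\transC{(\po\disj\onsc\rf)}$, restrict to a $(\po\disj\rf)$-closed prefix, and reroute the read to the $\imm\mo$-preceding write, thereby manufacturing a \emph{different}, \SC-consistent execution of $\prog$ that carries a non-\sc{} race. Your ``prefix-and-extend'' step only restricts and replays the same $\rf$ choices, so it can never surface a race that the given graph hides behind release/acquire synchronization through non-\sc{} reads. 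With that strengthening in hand, the paper's route is: every \rcelevenext-consistent graph of a \racefree{(\po\disj\onsc\rf)} program is not \racy{(\po\disj\onsc\rf)} (\lemmaref{lemma:non:racy} --- the prefix argument you gesture at, which is itself an eleven-claim case analysis and not simply ``the downward closure of the racing pair is \SC-consistent''), and any non-\racy{(\po\disj\onsc\rf)} \rcelevenext-consistent graph is \SC-consistent (\lemmaref{lemma:sc:consistent}). You should restructure the Key Lemma around the relation $\po\disj\onsc\rf$ and the $\rf$-rerouting construction rather than trying to read the race off the \SC-violating cycle of the graph at hand.
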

A detailed proof of this theorem can
be found in the Appendix~\appendixref{subsection:drf:app}.

\section{Related Work}
\label{section:related}

To the extent of our knowledge,
we are the first authors to consider the problem of
extending \cpp's memory model with support for inline assembly.
In the following paragraphs,
we discuss related work on topics that we covered in this paper.

\paragraph{Models of \intel.}

\citet{sewell-al-10}
introduce an operational model of \intel
that,
according to the documented tests,
agrees with the behavior of actual \intel machines
and is proven to be equivalent to the axiomatic formulation
of total store order (TSO)~\cite{sindhu-92}.
Such a model is devoid of the ambiguity
that is often present in
the documentation of multiprocessors
written in informal prose.
An interesting application of the model is to explain
the correctness of an optimization that was the
subject of a famous discussion in the~\citet{spinunlock-opt}.
In this paper,
we rely on \citet{raad-22}'s \intelext,
an extension of \intel with support for
(1) non-temporal stores,
(2) store fences,
and (3) reads and writes to the full range of Intel's
\textit{memory types} (\textit{uncacheable},
\textit{write-combined}, and \textit{write-through}).
More specifically,
we rely on the  axiomatic formulation of \intelext,
which formulation is included in the
Appendix~\appendixref{def:intelext:consistent:app}.

\paragraph{Models of \cpp.}

\citet{c11mm} introduce the first formal memory model of \cpp
as a formalization of the \cpp standard~\cite{cpp11}
mechanized in Isabelle/HOL~\cite{isabelle}.
\citet{rc11} however identify several issues with this model.
They introduce \rceleven (for \textit{Repaired C11})
in an attempt to repair these flaws.
Indeed, \citet{rc11} identify at least four problems with the
original model of~\citet{c11mm}:
(1) the proposed compilation
schemes~\cite{batty-al-12,sarkar-al-12} to \power is unsound;
(2) the semantics of \sc fences is too weak,
the authors show that placing \sc fences between every memory
access is not sufficient to enforce only sequentially consistent behaviors,
and they argue that \sc fences are not \textit{cumulative};
(3) \textit{out-of-thin-air} behaviors are allowed
even though they cannot be observed in any actual hardware;
(4) the model lacks \emph{monotonicity}~\cite{vafeiadis-et-al-15}.
The \rceleven model fixes these issues with the Axiom~\axiomSC,
which weakens the semantics of programs mixing \sc and non-\sc accesses
so that the compilation schemes to \power are sound and which strengthens the
semantics of \sc fences;
and with the Axiom~\axiomNTA,
which disallows out-of-thin-air behaviors.
The latter axiom has the undesired effect
of also disallowing \textit{load buffering} behaviors,
which can be observed in actual hardware.

\paragraph{Multi-language semantics.}

Devising a model for \cpp with inline assembly
can be framed as a problem of combining the semantics of two
different languages: \cpp and the assembly language of the
underlying hardware architecture.
We identify some works that propose general solutions to the
problem of specifying \textit{multi-language semantics}.
\citet{sammler-al-23}
introduce DimSum, a generic framework to reason about
programs written in different languages.
Inspired by process calculi,
one of the key ideas is to consider the semantics
of a program as a labeled transition system
where nodes represent the (global) state and transitions are labeled by events.
The semantics of a program is written as a refinement statement
that accounts for both \textit{demonic non-determinism}
(the usual flavor of non-determinism)
and \textit{angelic non-determinism}~\cite{floyd-nondet-67},
which is motivated by situations where
the representation of a value in one language
matches the representation of multiple different values in
another language.
This framework is inadequate for our purposes because,
as it stands, it is limited to sequential languages.
Moreover, there is also a difference in the nature of our works:
whereas~\citet{sammler-al-23} concentrate on a general framework
to define the semantics programs written in different languages,
with special attention on how the memory representation differs
in each of these languages,
our focus is rather to underpin the exact (consistency)
semantics of programs combining two specific languages, \cpp and assembly.
\citet{goens-al-23} study the question of devising memory models for
\textit{heterogeneous processors}, processors that mix
CPUs and GPUs and allow them to share memory.
Their contribution is the introduction of the notion of a
\textit{compound memory model},
a way to combine the different memory models from each of
devices sharing memory.
As the authors put it,
``\textit{a compound memory model is not a new memory model}'',
in the sense that
threads from devices abiding by different memory models
continue to adhere to these models.
This is in contrast with our work,
where (1) our extended model constitutes a new model and
(2) single threads can mix accesses
from two different models, \rceleven and \intelext.

\paragraph{Compilation-correctness proofs.}

\citet{rc11} prove the correctness of compilation schemes
from \rceleven to several architectures (\intel, \power, and \arm*).
\citet{imm-19}
introduce the idea of an \textit{intermediate memory model} (IMM),
a model to which high-level languages, such as \cpp,
can be mapped and from which low-level code can be produced
according to compilation schemes proven correct once and for all.
The authors argue that IMM is useful for structuring proofs
of correctness compilation,
because, for example,
in the situation where one has to establish the correctness
of compilation schemes of a language to~$N$ architectures,
instead of proving~$N$ results,
one could instead prove correctness of
a mapping from this language to IMM
(assuming that the mappings exists).
Such a proof would still be a proof of compilation
correctness (from the given language to IMM);
we argue that our idea of mixed execution graphs
would be valuable in this
compilation-correctness-proof
effort.
\citet{kokologiannakis-al-23} develop Kater,
a tool that automates reasoning about the metatheory of memory models.
The tool can decide the inclusion between two relations in an
execution graph and it is possible, even though intricate,
to formulate compilation-correctness statements in this fashion.
At the start of our project,
the tool was unfit to our purposes because
the notion of events comes as a built-in, thereby
precluding its use with new types of events such
as non-temporal stores and store fences.
The tool has since then been extended with support for
introducing user-defined sets of events.
However,
at the time of writing,
this feature lacks a comprehensive documentation
and
the tool lacks a specification of
the facts that it takes as assumptions.

\section{Conclusion}
\label{section:conclusion}

In this paper, we have presented a formal model for \ccpp with inline \intel
assembly as an extension of the \rceleven formal consistency model for \ccpp.
One can similarly try to extend \rceleven with inline assembly for other hardware
platforms, such as \arm**.  Doing so is expected to involve a few more challenges,
since the \arm** model makes use of syntactic dependencies between instructions,
which do not have an analogue in the \ccpp setting and are not guaranteed to be
preserved by compilers.   Another possible extension of our work would be to
model the persistency semantics of architectures over non-volatile memory.
We think that both extensions are worth exploring and leave them for future
work.

\begin{acks}
Paulo Emílio de Vilhena is supported by the UKRI Future Leaders Fellowship
MR/V024299/1.
Ori Lahav is supported by
the European Research Council under the European Union's Horizon 2020
research and innovation programme (grant agreement No.~851811)
and by the Israel Science Foundation (grant No.~814/22).
Viktor Vafeiadis is supported by
the European Research Council under the European Union’s Horizon 2020
research and by innovation programme (grant agreement No.~101003349).
Azalea Raad is supported by the UKRI Future Leaders Fellowship MR/V024299/1,
by the EPSRC grant EP/X037029/1, and by VeTSS.

\end{acks}

\bibliography{english,local}

\newpage

\appendix %

\title{Extending the C/\cpp Memory Model with Inline Assembly -- Technical Appendix}

\section{Construction of Execution Graphs}
\label{section:execution:graphs:app}

\let\oldRefTirName\RefTirName
\let\RefTirName\TirName

\begin{figure}[H]

\raggedright

\judgment[Pool reduction]{\poolstep\pool\graph\pool\graph}

\begin{mathpar}
    \inferrule[\nameReadStep]{\begin{array}{@{}c@{}}
      \poolLkp>*>*>*[\Seq{\ReadPL\md\reg\expr}\cmd]<
        \quad
      \poolB =
        \poolA\left[\begin{array}{@{}l@{}}
          \inti :=
          \poolLkp[:=]
            >*[\phi[\reg:=\intn]]     %
            >*[\intj+1]               %
            >*<                       %
        \end{array}\right]
        \arrcr[.6cm]
         \loc=\interp\expr_\phi  \quad
         \evA=\pair\inti\intj   \quad
        \graphB =
          \graphUpt*{\{\evA\}}                   %
                   *{[\evA:=\Read\md\loc\intn]}< %
                   {}                            %
    \end{array}}{
      \poolstep\poolA\graphA\poolB\graphB
    }
\\
    \inferrule[\nameWriteStep]{\begin{array}{@{}c@{}}
      \poolLkp>*>*>*[\Seq{\WritePL\md\exprA\exprB}\cmd]<
        \quad
      \poolB =
        \poolA\left[\begin{array}{@{}l@{}}
          \inti :=
          \poolLkp[:=]
            >                         %
            >*[\intj+1]               %
            >*<                       %
        \end{array}\right]
        \arrcr[.4cm]
         \loc=\interp\exprA_\phi   \quad
         \intn=\interp\exprB_\phi  \quad
         \evA=\pair\inti\intj     \quad
        \graphB =
          \graphUpt*{\{\evA\}}                   %
                   *{[\evA:=\Write\md\loc\intn]}< %
                   {}                            %
    \end{array}}{
      \poolstep\poolA\graphA\poolB\graphB
    }
\\
    \inferrule[RMWSuccessStep]{\begin{array}{@{}c@{}}
      \poolLkp>*>*>*[\Seq{\RMWPL\md\reg\exprOne\exprTwo\exprThree}\cmd]<
        \quad
      \poolB =
        \poolA\left[\begin{array}{@{}l@{}}
          \inti :=
          \poolLkp[:=]
            >*[\phi[\reg:=\intn]]    %
            >*[\intj+1]              %
            >*[\cmd]<                %
        \end{array}\right]
      \arrcr[.6cm]
      \loc=\interp\exprOne_\phi
        \quad
      \intn=\interp\exprTwo_\phi
        \quad
      \intm = \interp\exprThree_\phi
        \quad
      \evA = \pair\inti\intj
        \quad
      \graphB =
        \graphUpt*{\evA}                                  %
                 *{[\evA:=\RMW[\md]\loc\intn\intm]}<      %
                 {}<                                      %
    \end{array}}{
      \poolstep\poolA\graphA\poolB\graphB
    }
\\
    \inferrule[RMWFailStep]{\begin{array}{@{}c@{}}
      \poolLkp>*>*>*[\Seq{\RMWPL\md\reg\exprOne\exprTwo\exprThree}\cmd]<
        \quad
      \poolB =
        \poolA\left[\begin{array}{@{}l@{}}
          \inti :=
          \poolLkp[:=]
            >*[\phi[\reg:=\intm]]    %
            >*[\intj+1]              %
            >*[\cmd]<                %
        \end{array}\right]
      \arrcr[.5cm]
        \loc=\interp\exprOne_\phi
          \quad
        \intn=\interp\exprTwo_\phi
          \quad
        \evA = \pair\inti\intj
          \quad
        \intm \neq \intn
      \quad
      \graphB =
        \graphUpt*{\evA}                                  %
                 *{[\evA:=\RMW[\md]\loc\intm\bot]}<       %
                 {}<                                      %
    \end{array}}{
      \poolstep\poolA\graphA\poolB\graphB
    }
\\
    \inferrule[FenceStep]{\begin{array}{@{}c@{}}
      \poolLkp>>*>*[\Seq{\FencePL\md}\cmd]<
        \quad
      \poolB =
        \poolA\left[\begin{array}{@{}l@{}}
          \inti :=
          \poolLkp[:=]
            >            %
            >*[\intj+1]  %
            >*[\cmd]<    %
        \end{array}\right]
        \arrcr[.3cm]
      \evA = \pair\inti\intj
        \quad
      \graphB =
        \graphUpt*{\{\evA\}}               %
                 *{[\evA:=\Fence\md]}<     %
                 {}                        %
    \end{array}}{
      \poolstep\poolA\graphA\poolB\graphB
    }
\end{mathpar}
\caption{Pool-reduction rules (Part 1 of 2).}
\Description{}
\label{fig:pool:reduction:partI:app}
\end{figure}

\begin{figure}[H]
\begin{mathpar}
    \inferrule[\nameIfStep]{
      \poolLkp>*>>*[\Seq{\IfThen\expr\cmdA}\cmdB]<
        \and
      \intn=\interp\expr_\phi
        \\\\
      \begin{array}{@{}l@{}}
        \poolB =
          \poolA\left[\begin{array}{@{}l@{}}
            \inti :=
            \poolLkp[:=]
              >                                                    %
              >                                                    %
              >*[\metaIfThenElse{\intn\neq0}{\Seq\cmdA\cmdB}\cmdB]< %
          \end{array}\right]
      \end{array}
    }{
      \poolstep\poolA\graphA\poolB\graphA
    }
\and
    \inferrule[SkipStep]{
      \poolA[\inti].\nextcmd = \Seq\Skip{\cmd}
        \\\\
      \poolB =
        \poolA[\inti].\nextcmd := \cmd
    }{
      \poolstep\poolA\graphA\poolB\graphA
    }
\and
    \inferrule[WhileStep]{
      \poolA[\inti].\nextcmd = \Seq{\While\expr{\cmdA}}{\cmdB}
        \\\\
      \poolB =
      \poolA[\inti].\nextcmd :=
        \Seq{\IfThen\expr{\Seq\cmdA{\While\expr\cmdA}}}\cmdB
    }{
      \poolstep\poolA\graphA\poolB\graphA
    }
 \and
    \inferrule[SeqStep]{
      \poolA[\inti].\nextcmd = \Seq{(\Seq{\cmd_1}{\cmd_2})}{\cmd_3}
        \\\\
      \poolB =
        \poolA[\inti].\nextcmd := \Seq{\cmd_1}{(\Seq{\cmd_2}{\cmd_3})}
    }{
      \poolstep\poolA\graphA\poolB\graphA
    }
\\
    \inferrule[\nameTerminateStep]{
      \poolA[\inti].\nextcmd = \Skip
        \and
      \poolB = 
        \lambda\intj\in\dom\poolA\setminus\{\inti\}.\;\poolA[\intj]
    }{
      \poolstep\poolA\graphA\poolB\graphA
    }
\end{mathpar}

\caption{Pool-reduction rules (Part 2 of 2).}
\Description{}
\label{fig:pool:reduction:partII:app}
\end{figure}

\let\RefTirName\oldRefTirName

\section{Models}
\label{section:models:app}

This section contains the definitions of
\intelext~(\sref{subsection:models:intelext:app}),
\rceleven~(\sref{subsection:models:rceleven:app}), and
\rcelevenext~(\sref{subsection:models:rcelevenext:app}).
Each of these models includes
the definition of a set of labels,
the definition of its consistency conditions,
and
the syntax of a programming language
whose semantics is given by the model.
For every such language, the construction of the set of execution graphs
associated with a program is essentially the same as the one presented
in \sref{section:execution:graphs:app}.

\subsection{\intelext}
\label{subsection:models:intelext:app}

\begin{figure}[H]
\[\begin{array}{@{}r@{\;}r@{\;}l@{}}
  \typeExpr\ni\expr & ::=
               & \intn\;(\in\NN)
          \mid   \reg\;(\in\typeReg)
          \mid   \loc\;(\in\typeLoc\triangleq\NN)
          \mid   \Plus\expr\expr
          \mid   \Sub\expr\expr
          \mid   \Times\expr\expr
\\[.5mm]
  \typeCmd\ni\cmd & ::=
               & \ReadPL{}\reg\expr
         \mid    \WritePL{}\expr\expr
         \mid    \RMWPL{}\reg\expr\expr\expr
         \mid    \MFencePL
         \mid    \SFencePL
         \mid    \NTWritePL\expr\expr
\\
        &\mid &  \IfThen\expr\cmd
         \mid    \While\expr\cmd
         \mid    \Seq\cmd\cmd
         \mid    \Skip
\end{array}\]
\caption{Syntax of \intelextlang.}
\Description{}
\label{fig:intelext:syntax:app}
\end{figure}

\begin{figure}[H]
\[\begin{array}{l}
\typeLabel\intelext\ni\evA
::=
  \Write{}\loc\intn
\mid
  \Read{}\loc\intn
\mid
  \RMW[]\loc\intn{\intm^{?}}
\mid
  \MFence
\mid
  \SFence
\mid
  \NTWrite\loc\intn
\end{array}\]
\caption{Set of labels of \intelext.}
\Description{}
\label{fig:intelext:labels:app}
\end{figure}

\begin{definition}[\intelext-Consistency]
\label{def:intelext:consistent:app}
An execution graph
$\tuple\graphA{\rf,\mo}$ is
\emph{\intelext-consistent} if the
conditions
\begin{itemize}
\item \axiomInternalDef\hfill
        \textnormal{(\RefTirName\nameInternal)}
\item $\acyc\ob$
        \hfill
        \textnormal{(\TirName\nameExternal)}
\end{itemize}
hold, where the relations~$\ob$,~$\rb$,~$\pb$, and~$\ppo$
are defined as follows:

\[\begin{array}{@{}lr@{}}
\begin{array}{@{}r@{\;\;}c@{\;\;}l}
\ob
  &\eqdef&
    \ppo\disj
    \rfe\disj\moe\disj\rbe
\\[2mm]
\rb
  &\eqdef&
    (\inv\rf\seq\mo)\setminus\settorel\EventE
\end{array}
&
\begin{array}{@{}r@{\;}c@{\;\;}l@{}}
\ppo
  &\eqdef&
    \po\seq\settorel{\RMWE\cup\MFence\cup\SFenceE}
\\
  &\;\cup&
    \settorel{\ReadE\cup\RMWE\cup\MFence}\seq\po
\\
  &\;\cup&
    \settorel{\SFenceE}\seq\po\seq\settorel{\EventE\setminus\ReadE}
\\
  &\;\cup&
    \settorel\WriteE\seq\po\seq\settorel\WriteE
\\
  &\;\cup&
    \settorel{\WriteE\cup\NTWriteE}\seq
    \perloc\po\seq
    \settorel{\WriteE\cup\NTWriteE}
\end{array}
\end{array}\]

\end{definition}

\begin{definition}[Final State]
\label{def:intelext:finalSt:app}
The \emph{final state} of an execution graph~$\tuple\graph{\rf,\mo}$
is a partial function~$\wtt\finalSt{\typeArrow\typeLoc\NN}$
that maps a location~$\loc$ to the value of
the~$\mo$-maximal (write) event on~$\loc$:
\[
\finalStTypical(\loc)=\intn
  \iff
    \exists\,\evA\neq\initEv\_.\;
    \land\,\left\{\begin{array}{@{\;}l}
      \graphA.\lab(\evA)\in\atloc\WriteE\loc\cup\atloc\RMWSE\loc\\
      \Write{}\loc\intn=\maximum{\mo_\loc}
    \end{array}\right.
\]
\end{definition}

\begin{definition}[\intelextlang Semantics]
\label{def:intelextlang:semantics:app}
The semantics of a \intelextlang program~$\prog$ is defined as the
set of final states to which an initial memory
(where every location initially stores~$0$)
can be updated:
\[
\store\in\interp[\intelext]\prog
  \iff
    \exists\,\graphA,\,\rf,\,\mo.\;
    \land\,\left\{\begin{array}{@{\;}l}
      \poolstep*{\toPool\prog}\initGraph\emptyset\graph \\
      \tuple\graph{\rf,\mo}\;\textit{is \intelext-consistent}\\
      \store = \finalSt(\graphA)
    \end{array}\right.
\]
\end{definition}

\subsection{\rceleven}
\label{subsection:models:rceleven:app}

\begin{figure}[H]

\raggedright

\begin{minipage}{.75\textwidth}
\[\begin{array}{@{}r@{\;}r@{\;}l@{}}
  \typeExpr\ni\expr & ::=
               & \intn\;(\in\NN)
          \mid   \reg\;(\in\typeReg)
          \mid   \loc\;(\in\typeLoc\triangleq\NN)
          \mid   \Plus\expr\expr
          \mid   \Sub\expr\expr
          \mid   \Times\expr\expr

\\[.5mm]

  \typeCmd\ni\cmd & ::=
               & \ReadPL\md\reg\expr
         \mid    \WritePL\md\expr\expr
         \mid    \RMWPL\md\reg\expr\expr\expr
\\
        &\mid &  \FencePL\md
         \mid    \IfThen\expr\cmd
         \mid    \While\expr\cmd
         \mid    \Seq\cmd\cmd
         \mid    \Skip

\\[.5mm]

  \typeMode\ni\md & ::=
            & \na
         \mid \rlx
         \mid \rel
         \mid \acq
         \mid \acqrel
         \mid \sc
\end{array}\]
\end{minipage}
\begin{minipage}{.15\textwidth}
\centering
\hspace{-1.1cm}
\includegraphics{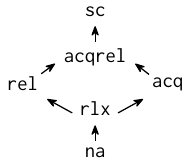}
\end{minipage}
\vspace{-10pt}
\caption{Syntax of \rcelevenlang.}
\vspace{-10pt}
\Description{}
\label{fig:cpp:syntax:app}
\end{figure}

\begin{figure}[H]
\[\begin{array}{l}
\typeLabel\rceleven\ni\evA::=
  \Write\md\loc\intn
\mid
  \Read\md\loc\intn
\mid
  \Fence\md
\mid
  \RMW[\md]\loc\intn{\intm^{?}}
\end{array}\]
\caption{Set of labels of \rceleven.}
\Description{}
\label{fig:rceleven:labels:app}
\end{figure}

\begin{definition}[\rceleven-Consistency]
\label{def:rceleven:consistent:app}
An execution graph~$\tuple\graphA{\rf,\mo}$
is \emph{\rceleven-consistent} if the conditions
\begin{itemize}
\item \axiomCoherenceDef\hfill\textnormal{(\TirName\nameCoherence)}
\item \axiomSCDef\hfill\textnormal{(\TirName\nameSC)}
\item \axiomAtomicityDef\hfill\textnormal{(\TirName\nameAtomicity)}
\item \axiomNTADef\hfill\textnormal{(\TirName\nameNTA)}
\end{itemize}
hold, where the relations
$\hb$,
$\sw$,
$\eco$,
$\rb$,
$\psc$,
and $\scb$
are defined as follows:

\[\begin{array}{@{}l@{\hspace{1cm}}r@{}}
\begin{array}{@{}r@{\;}c@{\;\;}l@{}}
  \rb &\eqdef&
    (\inv\rf\seq\mo)\setminus\settorel\EventE
\\[.4mm]
  \hb &\eqdef&
    \transC{(\union\po\sw)}
\\[.4mm]
  \eco &\eqdef&
    \transC{(\union\rf{\union\mo\rb})}
\end{array}
&
\begin{array}{@{}l@{}}
  \sw\eqdef
    \left\{\begin{array}{@{}l@{}}
    \settorel{\EventAtLeast\rel}\seq
     \refC{(
       \settorel\FenceE\seq
       \po
     )}\seq
    \settorel{\WriteAtLeast\rlx}\seq\\
    \quad\transC\rf\seq\\
    \settorel{\ReadAtLeast\rlx}\seq
    \refC{(\po\seq{\settorel\FenceE})}\seq
    \settorel{\EventAtLeast\acq}
    \end{array}\right.
\end{array}
\end{array}\]

\begin{minipage}{.2\textwidth}
\[\begin{array}{r@{\;}c@{\;\;}l}
  \psc &\eqdef& \union\pscBase\pscFence
\end{array}\]
\end{minipage}
\begin{minipage}{.8\textwidth}
\[\begin{array}{r@{\;}c@{\;\;}l}
  \scb &\eqdef&
    \union\po{
    \union{{\neqloc\po}\seq{\hb\seq{\neqloc\po}}}{
    \union{\perloc\hb}{
    \union\mo\rb
    }}}
\\[.4mm]
  \pscBase &\eqdef&
    {(
      \union{\settorel{\EventMd\sc}}
            {{\settorel{\FenceMd\sc}}\seq{\refC\hb}}
    )}\seq{
      \scb\seq{(
        \union{\settorel{\EventMd\sc}}
              {{\refC\hb}\seq{\settorel{\FenceMd\sc}}}
      )}
    }
\\[.4mm]
  \pscFence &\eqdef&
    {\settorel{\FenceMd\sc}}\seq{
      {
        (\union\hb{{\hb\seq\eco}\seq\hb})
      }\seq{
        \settorel{\FenceMd\sc}
      }}
\end{array}\]
\end{minipage}

\end{definition}

The following definition of data race diverges from
the one presented in the main text~(\defref{def:rceleven:behaviors})
by generalizing~$\hb$ to an arbitrary relation~$R$.
In the main text, this generalization is not necessary,
because only the notion of~$\raceS\hb$ is used.
Here, it is necessary to avoid redundancy
in~\sref{subsection:drf:app}.

\begin{definition}[Data Race]
\label{def:rcelevenext:data:race:app}
Let~$\graph$ be an execution graph,
and~$R$ be a relation on the set of events~$\graph.\EventE$.
A pair of events~$\pair\evA\evB$ forms a
\emph{data race with respect to $R$},
or a \emph{\race{R}} for short,
if the following conditions hold:
\begin{itemize}
\item $\evA \neq \evB$
\item $\dotLoc\evA = \dotLoc\evB$
\item $\{\evA, \evB\}\cap(\WriteE\cup\RMWSE) \neq \emptyset$
\item $\pair\evA\evB\notin\transC{R}$ and~$\pair\evB\evA\notin\transC{R}$
\end{itemize}
The set of $R$-races of~$\graph$ is noted~$\graph.\raceS{R}$
or simply~$\raceS{R}$, when~$\graph$ can be easily inferred.
\end{definition}

\begin{definition}[\rceleven-Behaviors]
\label{def:rceleven:behaviors:app}
\begin{mathpar}
\hspace{-10pt}%
    \inferrule[]{
      \poolstep*{\toPool\prog}\initGraph\emptyset\graphA
        \\\\
      \tuple\graphA{\rf,\mo}\;\textit{is}\;
        \textit{\rceleven-consistent}
    }{
      \progbeh\prog{\tuple\graphA{\rf,\mo}}
    }
\and
    \inferrule[]{
      \poolstep*{\toPool\prog}\initGraph\_\graphA
        \\\\
      \tuple\graphA{\rf,\mo}\;\textit{is}\;
        \textit{\rceleven-consistent}
        \\\\
      \pair\evA\evB\in\raceS\hb \;\land\; \na\in\{\dotMode\evA,\dotMode\evB\}
    }{
      \progbeh\prog\UB
    }
\end{mathpar}
\end{definition}

\begin{definition}[\rcelevenlang Semantics]
\label{def:rcelevenlang:semantics:app}
The semantics of a \rcelevenlang program~$\prog$ is defined as
its set of final states:
\[
\store\in\interp[\rceleven]\prog
  \iff
\progbeh\prog\UB
\;\lor\;
\exists\,\graphA,\,\rf,\,\mo.\;
\progbeh\prog{\tuple\graphA{\rf,\mo}}
  \,\land\,
\store = \finalSt(\graph)
\]
\end{definition}

\subsection{\rcelevenext}
\label{subsection:models:rcelevenext:app}

\begin{figure}[H]
\[\begin{array}{@{}r@{\;}r@{\;}l@{}}
  \typeCmd\ni\cmd ::=
             \ldots
    & \mid & \AsmWritePL\expr\expr
      \mid   \AsmReadPL\reg\expr
      \mid   \AsmRMWPL\reg\expr\expr\expr
      \mid   \AsmMFencePL
  \\
    & \mid & \AsmNTWritePL\expr\expr
      \mid   \AsmSFencePL
  \\
  \typeMode\ni\md ::=
             \ldots
    & \mid & \nt
      \mid   \stf
      \mid   \tso
\end{array}\]
\caption{Syntax of \rcelevenextlang.}
\Description{}
\label{fig:cpp:x86:syntax:app}
\end{figure}

\begin{figure}[H]
\[\begin{array}{c}
\includegraphics{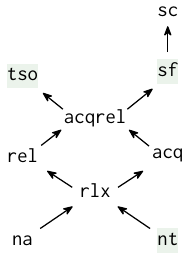}
\end{array}\]
\caption{Diagram of access modes.}
\Description{}
\label{fig:cpp:x86:diagram:app}
\end{figure}

\begin{definition}[\rcelevenext-Consistency]
\label{def:rcelevenext:consistent:app}
An execution graph~$\tuple\graphA{\rf,\mo}$ is
\emph{\rcelevenext-consistent}
if the conditions
\begin{itemize}
\item $\irr{\mathhl\hb\seq\refC{\mathhl\eco}}$
      \hfill\textnormal{(\TirName\nameCoherenceI)}
\item \greenhl{\axiomCoherenceIIDef}\hfill
      \greenhl{\textnormal{(\TirName\nameCoherenceII)}}
\item \greenhl{\axiomCoherenceIIIDef}\hfill
      \greenhl{\textnormal{(\TirName\nameCoherenceIII)}}
\item \axiomAtomicityDef\hfill\textnormal{(\TirName\nameAtomicity)}
\item \axiomSCDef\hfill\textnormal{(\TirName\nameSC)}
\item \axiomNTADef\hfill\textnormal{(\TirName\nameNTA)}
\end{itemize}
hold, where the relations
$\hb$,
$\sw$,
$\eco$,
$\rb$,
$\psc$,
$\scb$,
$\porceleven$,
and $\ppoasm$
are defined as follows:

\[\begin{array}{@{}l@{\;\;\;}r@{}}
\begin{array}{@{}r@{\,}c@{\;}l@{}}
  \rb &\triangleq&
    (\inv\rf\seq\mo)\setminus\settorel\EventE
\\[.4mm]
  \hb &\triangleq&
    \transC{(\mathhl\porceleven\disj\sw)}
\\[.4mm]
  \eco &\triangleq&
    \transC{(
      \mathhl\rfe\disj\mo\disj\rb
    )}
\end{array}
&
\begin{array}{@{}r@{\;}l@{}}
  \sw &\triangleq
    \left\{\begin{array}{@{}l@{}}
    \settorel{\EventAtLeast\rel}\seq
     \refC{(
       \settorel\FenceE\seq
       \po
     )}\seq
    \settorel{\WriteAtLeast\rlx}\seq\\
    \quad\transC\rf\seq\\
    \settorel{\ReadAtLeast\rlx}\seq
    \refC{(\po\seq{\settorel\FenceE})}\seq
    \settorel{\EventAtLeast\acq}
    \end{array}\right.
\end{array}
\end{array}
\]
\[\begin{array}{@{}l@{}r@{}}
\begin{array}{r@{}l}
  \mathhl{\porceleven}
         &\mathhl{\;\triangleq
    \settorel{\EventE\setminus\WriteMd\nt}\seq
    \po}
\\
         &\mathhl{\;\cup\;
    \po\seq
    \settorel{\RMWMd\tso\cup\FenceAtLeast\stf}}
\\
         &\mathhl{\;\cup\;
    \perloc\po\seq\settorel\WriteE}
\end{array}
&
\begin{array}{r@{}l}
  \mathhl{\ppoasm}
         &\mathhl{\;\triangleq
    \po\seq\settorel{\RMWMd\tso\cup\FenceAtLeast\stf}}
\\
         &\mathhl{\;\cup\;
    \settorel{\ReadMd\tso\cup\RMWMd\tso\cup\FenceMd\sc}\seq\po}
\\

         &\mathhl{\;\cup\;
    \settorel{\FenceAtLeast\stf}\seq\po\seq\settorel{\EventE\setminus\ReadE}}
\\
         &\mathhl{\;\cup\;
    \settorel{\WriteMd\tso}\seq\po\seq\settorel{\EventE\setminus\ReadE\setminus\WriteMd\nt}}
\\
         &\mathhl{\;\cup\;
    \settorel{\EventE\setminus\ReadE\setminus\WriteMd\nt}\seq\po\seq
    \settorel{\WriteMd\tso}}
\end{array}
\end{array}\]

\[\begin{array}{r@{\;}c@{\;\;}l}
  \psc &\eqdef&
    \pscBase\disj\pscFence
\\[.4mm]
  \scb &\eqdef&
    \po\disj
    {{\neqloc\po}\seq{\hb\seq{\neqloc\po}}}\disj
    \perloc\hb\disj
    \mo\disj
    \rb
\\[.4mm]
  \pscBase &\eqdef&
    {(
      {\settorel{\EventMd\sc}}\disj
      {{\settorel{\FenceMd\sc}}\seq{\refC\hb}}
    )}\seq{
      \scb\seq{(
        {\settorel{\EventMd\sc}}\disj
        {{\refC\hb}\seq{\settorel{\FenceMd\sc}}}
      )}
    }
\\[.4mm]
  \pscFence &\eqdef&
    {\settorel{\FenceMd\sc}}\seq{
      {(
        \hb\disj{{\hb\seq\eco}\seq\hb}
      )}\seq{
        \settorel{\FenceMd\sc}
      }}
\end{array}\]

\end{definition}

\begin{definition}[\rcelevenext - Behaviors]
\label{def:rcelevenext:behaviors:app}
Analogous to~\defref{def:rceleven:behaviors:app}.
\end{definition}

\begin{definition}[\rcelevenextlang Semantics]
\label{def:rcelevenextlang:semantics:app}
Analogous to~\defref{def:rcelevenlang:semantics:app}.
\end{definition}

\section{Metatheory}

\subsection{Data-Race Freedom}
\label{subsection:drf:app}

\paragraph{Simplifying Assumption.}
For simplicity, we ignore~$\RMWE$ accesses in this subsection.
An idea to overcome this limitation is
to replace our formulation of \rcelevenext-consistency
with one that (like the original formulation of \rceleven~\cite{rc11})
models a~$\RMWE$ access as either a read
or a pair of a write and a read related by a~$\rmw$ relation.

\begin{definition}[\SC-Consistency]
\label{sc:consistent}
An execution graph~$\tuple\graph{\rf,\mo}$ is \emph{\SC-consistent}
if the following condition holds:
\begin{itemize}
\item \axiomSCCoherenceDef\hfill\textnormal{(\RefTirName\nameSCCoherence)}
\end{itemize}
\end{definition}

\begin{definition}[Racy]
\label{sc:racy:app}
Let~$R$ be a relation on events.
An execution graph~$\graph$ is~\emph{\racy{R}} if
it contains a pair of events~$\pair\evA\evB$
that forms a~\race{R} and for which
either~$\evA$ or~$\evB$ is not $\sc$:
\[
\graph\;\textit{is \racy{R}}
\;\iff\;
\exists\,\evA,\,\evB.\;
    \pair\evA\evB\in\raceS{R}
  \land
    (\dotMode\evA\neq\sc\lor\dotMode\evB\neq\sc)
\]
\end{definition}

\begin{definition}[Data-Race Free]
Let~$R$ be a relation on events.
A program~$\prog$ is
\emph{data-race free with respect to~$R$},
or simply~\emph{is \racefree{R}},
if every \SC-consistent execution graph~$\graph$
associated with~$\prog$ is not~\racy{R}:
\[
\prog\;\textit{is \racefree{R}}\;\iff\;
\begin{array}{@{\,}l@{}}
\forall\,\graph,\,\mo,\,\rf,\,\evA,\,\evB.
\\
\quad\left(\begin{array}{@{\,}l@{\,}}
    \poolstep*{\toPool\prog}\initGraph\_\graphA
  \\
    \tuple\graphA{\rf,\mo}\;\textit{is \SC-consistent}
  \\
    \pair\evA\evB\in\raceS{R}
\end{array}\right)
\implies
\dotMode\evA = \dotMode\evB = \sc
\end{array}
\]
\end{definition}

We show that our extended model \rcelevenext
enjoys the data-race-freedom property;
that is,
the semantics assigned by \rcelevenext to a
\racefree\hb program coincides with \SC:
\begin{theorem}[\rcelevenext - Data-Race Freedom]
\label{thm:drf:app}
\[\forall\,\prog.\;
\prog\;\text{is \racefree\hb}\implies
\interp[\rcelevenext]\prog = \interp[\SC]\prog\]
\end{theorem}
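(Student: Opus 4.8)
The plan is to prove \thmref{thm:drf:app} by establishing the two inclusions of the claimed equality separately, after first noting that the \racefree\hb hypothesis will rule out undefined behavior on both sides, so that each semantics reduces to the set of final states of its consistent execution graphs. Throughout, write $\mathit{scr}\eqdef\po\disj\rf\disj\mo\disj\rb$, so that \SC-consistency is exactly $\acyc{\mathit{scr}}$ (\axiomSCCoherence).

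\textbf{Easy inclusion} $\bigl(\interp[\SC]\prog\subseteq\interp[\rcelevenext]\prog\bigr)$. First I would show that every \SC-consistent graph is \rcelevenext-consistent by a routine relation-containment argument: each relation occurring in the six \rcelevenext conditions is contained in $\transC{\mathit{scr}}$. Indeed $\porceleven\subseteq\po$ and $\sw\subseteq\transC{(\po\disj\rf)}$, hence $\hb\subseteq\transC{\mathit{scr}}$; also $\eco=\transC{(\rfe\disj\mo\disj\rb)}\subseteq\transC{\mathit{scr}}$, $\ppoasm\subseteq\po$, and $\psc$, being assembled from $\settorel{\EventMd\sc}$, $\hb$, $\eco$, and $\scb$, lies in $\transC{\mathit{scr}}$ as well. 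Therefore $\acyc{\mathit{scr}}$ immediately discharges \axiomCoherenceI, \axiomCoherenceII, \axiomCoherenceIII, \axiomAtomicity, \axiomSC, and \axiomNTA. Since under \racefree\hb no \SC-consistent graph has a non-$\sc$ race, it has in particular no $\na$-race and so yields no $\UB$; the inclusion then follows because each \SC-consistent graph is an \rcelevenext-consistent graph with the same final state.

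\textbf{Hard inclusion} $\bigl(\interp[\rcelevenext]\prog\subseteq\interp[\SC]\prog\bigr)$. Here the core lemma is that, under \racefree\hb, every \rcelevenext-consistent graph of $\prog$ is \SC-consistent, which I would prove by a minimal-counterexample argument. Suppose some \rcelevenext-consistent graph of $\prog$ has cyclic $\mathit{scr}$, and pick such a $G$ with the fewest events. Truncating threads—removing a $(\po\disj\rf)$-maximal event, which exists by \axiomNTA—yields a strictly smaller \rcelevenext-consistent execution of $\prog$ (all conditions are acyclicity/irreflexivity statements, preserved under deletion of events), hence \SC-consistent by minimality; so every proper sub-execution of $G$ is \SC-consistent. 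Next I would reduce the $\mathit{scr}$-cycle to races: since $\hb$ is acyclic (\axiomCoherenceI), the cycle cannot consist solely of $\hb$-edges, and every $\mathit{scr}$-edge $\pair\evA\evB$ between two conflicting events that is not an $\hb$-edge is by definition a \race\hb. Applying \racefree\hb to a suitable proper \SC-consistent sub-execution containing each such pair forces $\dotMode\evA=\dotMode\evB=\sc$. Consequently the cycle decomposes into $\hb$-edges and conflict edges ($\mo$, $\rb$, or $\rfe$) between $\sc$ events, which assemble into a $\psc$-cycle—contradicting \axiomSC. This refutes the existence of $G$, so every \rcelevenext-consistent graph is \SC-consistent; a residual $\na$-race in such a graph would then be an $\na$-race in an \SC-consistent graph, again contradicting \racefree\hb, so no $\UB$ arises and the final states coincide.

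\textbf{Main obstacle.} I expect the race-extraction step of the hard inclusion to be the crux, for two reasons specific to \rcelevenext. First, because $\hb$ is built from $\porceleven$ rather than the full $\po$, a plain $\po$-edge out of a non-temporal store need not lie in $\hb$ even when it joins \emph{non-conflicting} events; such edges are not races, so the classical ``every $\mathit{scr}$-cycle contains a race'' reduction must be refined to neutralise the $\po$-edges emanating from $\WriteMd\nt$ events (precisely the events responsible for the failure of $\po\subseteq\hb$), which is where I would invoke \axiomCoherenceII and \axiomCoherenceIII. Second, turning an extracted race into a genuine \SC-consistent witness sub-execution of $\prog$—and checking that the restricted graph is both a valid pool reduction of $\prog$ and \SC-consistent, even when the racing pair is close to the $\mathit{scr}$-maximal frontier—requires careful prefix bookkeeping; likewise, assembling the final $\psc$-cycle from $\sc$-$\sc$ conflict edges demands attention to the mixed $\sc$/non-$\sc$ case and the exact shape of $\scb$, which is the very subtlety that motivated \axiomSC.
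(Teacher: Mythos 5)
Your easy inclusion is fine, and the skeleton of your hard inclusion (minimal counterexample, prefix-closed restriction, transferring a race back to an \SC-consistent sub-execution of~$\prog$) matches the shape of the paper's \lemmaref{lemma:non:racy}, including the two danger points you flag. The gap is in the final step, and it is not the kind that "careful attention to $\scb$" can repair: the cycle you end up with, namely $\hb$-segments joined by conflict edges between $\sc$ events, does \emph{not} assemble into a $\psc$-cycle. The relation $\scb$ contains $\po$, $\perloc\hb$, $\mo$, $\rb$ and $\neqloc\po\seq\hb\seq\neqloc\po$, but not arbitrary $\hb$; a segment of the form $\settorel{\EventMd\sc}\seq\hb\seq\settorel{\EventMd\sc}$ between two $\sc$ accesses (not fences) whose $\hb$-path starts with a cross-location $\sw$ edge lies outside $\pscBase$, so \axiomSC gives you nothing. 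The appendix makes exactly this point in its boxed remark: a consistent execution graph that is not \racy\hb need not be \SC-consistent. Hence "every non-$\hb$ conflict edge is a race, hence between $\sc$ events, hence we get a $\psc$-cycle" cannot close the argument.

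The missing ingredient is a strengthening of the race notion \emph{before} any cycle analysis. The paper first proves (\lemmaref{lemma:rf:incl:onsc:rf}, \cororef{coro:hb:racefree:po:rf:racefree}) that a \racefree\hb program is already \racefree{(\po\disj\onsc\rf)}: taking a minimal $\rf$-edge not covered by $\transC{(\po\disj\onsc\rf)}$, it \emph{redirects the read to the $\imm\mo$-preceding write}, producing a new \SC-consistent execution of~$\prog$ in which the pair genuinely races with respect to $\hb$. This read-redirection construction is absent from your proposal and is not obtainable from restriction arguments alone, since the offending $\rf$-edge is typically itself an $\hb$-edge and therefore never a \race\hb{} in the original graph. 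Once race-freedom with respect to $\po\disj\onsc\rf$ is available, every $\rf$, $\mo$, $\rb$ edge is either between $\sc$ events or absorbed into $\transC{(\po\disj\onsc\rf)}$, so an \axiomSCCoherence-violating cycle reduces (\lemmaref{lemma:sc:consistent}) to one in $\onsc\po\disj\onsc\rfe\disj\onsc\moe\disj\onsc\rbe$ with no residual $\hb$-segments, each constituent of which genuinely is a $\pscBase$ edge. Your two stated obstacles (the $\po$-edges leaving $\WriteMd\nt$ events and the frontier bookkeeping) are real but secondary; without the redirection lemma the proof does not go through even for plain \rceleven.
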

\begin{proof}
From~\cororef{coro:hb:racefree:po:rf:racefree},
it follows that~$\prog$ is~\racefree{(\po\disj\onsc\rf)}.
Let~$\graph$ be a \rcelevenext-consistent graph associated with~$\prog$.
By \lemmaref{lemma:non:racy},
the graph is not~\racy{{(\po\disj\onsc\rf)}}.
Then, by \lemmaref{lemma:sc:consistent},
we conclude that~$\graph$ is~\SC-consistent.
\end{proof}

We now state and prove the lemmas on which the proof
of~\thmref{thm:drf:app} relies.

\begin{tcolorbox}[colback=gray!10,     %
                  colframe=black,      %
                  width=\textwidth,    %
                  arc=1mm, auto outer arc,
                 ]
\paragraph{Remark}
A \rcelevenext-consistent (or \rceleven-consistent)
that is not~\racy\hb is not necessarily \SC-consistent:
\[\scalebox{.90}{
\includegraphics{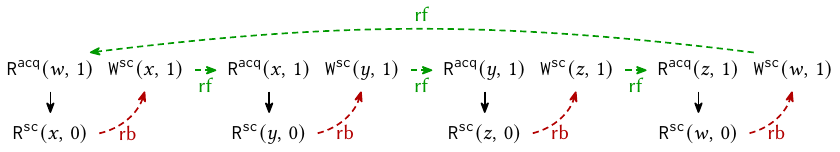}}\]

\end{tcolorbox}

\paragraph{Notation.}
The relation~$\coerce{R}{A}$ denotes the restriction of~$R$
to a set~$A$:~%
$\coerce{R}{A}\eqdef\settorel{A}\seq{R}\seq\settorel{A}$.
We use the abbreviation~$\onsc{R}$ for the restriction of~$R$
to~$\EventMd\sc$:~$\onsc{R}\eqdef\coerce{R}{\EventMd\sc}$.
The graph~$\coerce\graph{A}$
is the restriction of~$\graph$ to~$A$:~%
$\coerce\graph{A}\eqdef\graph.\{\EventE:=(\graph.\EventE\cap{A})\}$.

\begin{definition}[$R$-closed]
A set~$A$ is \emph{closed with respect to a relation~$R$},
or simply~\emph{$R$-closed},
if the inclusion~%
$R\seq\settorel{A}\subseteq\settorel{A}\seq{R}$
holds.
\end{definition}

\begin{lemma}
\label{lemma:restriction}
Let~$\tuple\graph{\rf, \mo}$ be a
\rcelevenext-consistent (resp.~\SC-consistent)
graph associated with a program~$\prog$,
and
let~$D\subseteq\graph.\EventE$ be a~%
$(\po\disj\rf)$-closed set.
The graph~$\tuple{\coerce\graph{D}}{\coerce\rf{D},\coerce\mo{D}}$ is
\rcelevenext-consistent (resp.~\SC-consistent).
Moreover, the graph~$\coerce\graph{D}$ is associated with~$\prog$,
and~$\coerce{\graph.\hb}{D}$
(the restriction of~$\graph.\hb$ to~$D$)
is included in~$\coerce\graph{D}.\hb$
(the happens-before relation derived
from~$\coerce\po{D}$ and~$\coerce\rf{D}$).
\end{lemma}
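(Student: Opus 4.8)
The plan is to reduce everything to two facts about how restricting to a $(\po\disj\rf)$-closed set $D$ interacts with the graph-derived relations: that $\coerce{\rf}{D}$ is still a well-formed (functional and total) reads-from relation on $\coerce{\graph}{D}$, and that every relation $X$ appearing in a consistency axiom satisfies $\coerce{\graph}{D}.X\subseteq\graph.X$. The first fact is where the hypothesis is essential: functionality of $\coerce{\rf}{D}$ is inherited for free, but \emph{totality} — that every read in $D$ reads from some write — requires that the unique $\rf$-predecessor of each read in $D$ also lie in $D$, which is exactly $\rf$-predecessor-closure. Since $\coerce{\mo}{D}$ is plainly a strict total order per location on the writes in $D$ (a restriction of a strict total order is one), $\tuple{\coerce{\graph}{D}}{\coerce{\rf}{D},\coerce{\mo}{D}}$ is a genuine execution graph.

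For part~(1) I would first observe that $\coerce{\po}{D},\coerce{\rf}{D},\coerce{\mo}{D}$ are sub-relations of $\po,\rf,\mo$, and that every relation used in either consistency definition ($\rb$, $\hb$, $\eco$, $\ppoasm$, $\scb$, $\pscBase$, $\pscFence$, $\psc$, and the \SC relation $\po\disj\rf\disj\mo\disj\rb$) is built from these by \emph{monotone} operations: union, sequential composition, reflexive/transitive closure, and intersection with label-determined sets. The only steps that are not obviously monotone are the set differences defining $\rb$ (subtracting $\settorel\EventE$) and the external components $\rfe,\moe,\rbe$ (subtracting the internal part); but these remove pairs characterised intrinsically by the events (reflexivity, same-thread), so they are preserved under restriction and still yield $\coerce{\graph}{D}.X\subseteq\graph.X$. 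Consequently $\coerce{\graph}{D}.X\subseteq\graph.X$ for each composite relation $X$ named in an axiom, and since acyclicity and irreflexivity are inherited by sub-relations, every axiom holding for $\graph$ holds for $\coerce{\graph}{D}$. This argument is uniform across the \rcelevenext and \SC cases.

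For part~(2), $\po$-predecessor-closure makes $D$ a union of per-thread program-order prefixes (and forces all initialisation events into $D$, since they are $\po$-before everything). I would take a reduction $\poolstep*{\toPool\prog}\initGraph\emptyset\graph$ witnessing that $\graph$ is associated with $\prog$, retain exactly the steps emitting events of $D$ together with the intervening control-flow steps, and argue this truncated sequence is itself a valid reduction $\poolstep*{\toPool\prog}\initGraph\_{\coerce{\graph}{D}}$; prefix-closedness guarantees no retained step depends on a discarded event.

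The heart of the lemma is part~(3), and this is where I expect the real work. The goal is $\coerce{\graph.\hb}{D}\subseteq\coerce{\graph}{D}.\hb$, which I would obtain as an equality by showing that $P\eqdef\porceleven\disj\sw$ is \emph{predecessor-closed for $D$}: if $\pair\evA\evB\in P$ and $\evB\in D$ then $\evA\in D$. For a $\porceleven$ edge this is immediate since $\porceleven\subseteq\po$ and $D$ is $\po$-closed. For a $\sw$ edge it is the delicate case: one traces the witnessing path backwards from the acquire endpoint $\evB$ through $\refC{(\po\seq\settorel\FenceE)}$, then $\transC\rf$, then $\refC{(\settorel\FenceE\seq\po)}$, checking that every intermediate event stays in $D$ — each backward $\po$ step using $\po$-closure and each backward $\rf$ step using $\rf$-closure, the latter applied along the whole chain to cover $\transC\rf$. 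The same trace shows $\coerce{\graph}{D}.\porceleven=\coerce{\graph.\porceleven}{D}$ and $\coerce{\graph}{D}.\sw=\coerce{\graph.\sw}{D}$, so $\coerce{P}{D}=\coerce{\graph}{D}.\porceleven\disj\coerce{\graph}{D}.\sw$. Finally, for any predecessor-closed $D$ the transitive closure commutes with restriction, $\coerce{\transC P}{D}=\transC{(\coerce{P}{D})}$ (the forward inclusion is trivial; the reverse is backward induction along a $P$-path ending in $D$, using predecessor-closure to place every vertex in $D$), whence $\coerce{\graph}{D}.\hb=\coerce{\graph.\hb}{D}$, as asserted. I expect the bookkeeping in the backward trace through $\sw$ — in particular handling the $\transC\rf$ segment uniformly — to be the main obstacle.
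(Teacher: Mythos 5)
Your proposal is correct and follows essentially the same route as the paper: the paper's proof consists precisely of the backward trace through the $\sw$ chain (showing $\coerce{\graph.\sw}{D}\subseteq\coerce\graph{D}.\sw$ via $(\po\disj\rf)$-closure) and leaves the monotonicity argument for consistency, the well-formedness of $\coerce\rf{D}$, and the truncated-reduction argument implicit. Your write-up simply makes those implicit parts explicit, which is fine.
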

\begin{proof}
Because~$D$ is~$(\po\disj\rf)$-closed,
the following inclusion holds:
\[\begin{array}{rcl}
\coerce{\graph.\sw}{D}
  &\subseteq&
\settorel{D}\seq
\settorel{\EventAtLeast\rel}\seq
     \refC{(
       \settorel\FenceE\seq
       \po
     )}\seq
    \settorel{\WriteAtLeast\rlx}\seq
    \transC\rf\seq
    \settorel{\ReadAtLeast\rlx}\seq
    \refC{(\po\seq{\settorel\FenceE})}\seq
    \settorel{\EventAtLeast\acq}
\seq\settorel{D}
\\
  &\subseteq&
\settorel{\EventAtLeast\rel}\seq
     \refC{(
       \settorel\FenceE\seq
       \coerce\po{D}
     )}\seq
    \settorel{\WriteAtLeast\rlx}\seq
    \transC{\coerce\rf{D}}\seq
    \settorel{\ReadAtLeast\rlx}\seq
    \refC{(\coerce\po{D}\seq{\settorel\FenceE})}\seq
    \settorel{\EventAtLeast\acq}
\\
  &\subseteq&
\coerce\graph{D}.\sw
\end{array}\]
It is then easy to see that~%
$\coerce{\graph.\hb}{D}\subseteq\coerce\graph{D}.\hb$.
\end{proof}

The restriction of~$\rf$ and~$\mo$ to~$D$ in the statement of
\lemmaref{lemma:restriction} is not necessary.
However, these restrictions result in a stronger statement,
because \rcelevenext-consistency (resp. \SC-consistency)
is monotonic with respect to both~$\rf$ and~$\mo$.

\begin{lemma}
The set~$\dom{\reftransC{R}\seq\settorel{A}}$ is $R$-closed,
for any set~$A$ and relation~$R$.
\end{lemma}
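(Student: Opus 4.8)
The plan is to unfold the definition of $R$-closedness and reduce the goal to a single reachability fact about the set $D \eqdef \dom{\reftransC{R}\seq\settorel{A}}$, namely that $D$ is closed under prepending a single $R$-edge. By the definition of $R$-closed I must establish the inclusion $R\seq\settorel{D}\subseteq\settorel{D}\seq R$. Spelling out the composition $\seq$ together with the fact that $\settorel{D}$ is the identity relation on $D$, a pair $\pair\evA\evB$ lies in $R\seq\settorel{D}$ precisely when $\pair\evA\evB\in R$ and $\evB\in D$; symmetrically, $\pair\evA\evB\in\settorel{D}\seq R$ precisely when $\evA\in D$ and $\pair\evA\evB\in R$. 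Hence the desired inclusion is equivalent to the implication: whenever $\pair\evA\evB\in R$ and $\evB\in D$, then also $\evA\in D$.

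Next I would discharge this implication using transitivity of the reflexive-transitive closure. By definition of the domain, $\evB\in D$ means there exists some $\evC\in A$ with $\pair\evB\evC\in\reftransC{R}$. Since $R\subseteq\reftransC{R}$, the hypothesis $\pair\evA\evB\in R$ gives $\pair\evA\evB\in\reftransC{R}$, and composing with $\pair\evB\evC\in\reftransC{R}$ yields $\pair\evA\evC\in\reftransC{R}$ by transitivity of $\reftransC{R}$. As $\evC\in A$, this exhibits a witness showing $\evA\in\dom{\reftransC{R}\seq\settorel{A}}=D$, which closes the implication and therefore the proof.

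There is essentially no hard step here: the only real content is the observation that $D$ collects exactly those events from which $A$ is $\reftransC{R}$-reachable, so prepending one more $R$-edge keeps us inside $D$. The point that requires care is purely notational bookkeeping of the composition $\seq$ and the identity relation $\settorel{\cdot}$, that is, correctly reading off that membership in $R\seq\settorel{D}$ fixes the right endpoint to lie in $D$ while leaving the underlying $R$-edge unchanged, and likewise for $\settorel{D}\seq R$ on the left.
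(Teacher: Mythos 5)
Your proof is correct and follows essentially the same route as the paper's: reduce $R\seq\settorel{D}\subseteq\settorel{D}\seq R$ to showing that the left endpoint of an $R$-edge into $D$ is itself in $D$, then compose the edge with the witnessing $\reftransC{R}$-path to $A$ using $R\seq\reftransC{R}\subseteq\reftransC{R}$. The only difference is presentational: you spell out the element-wise reading of the compositions with $\settorel{D}$, which the paper leaves implicit.
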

\begin{proof}
Let~$D$ stand for~$\dom{\reftransC{R}\seq\settorel{A}}$,
and let~$\pair\evA\evB\in{R}\seq\settorel{D}$.
It suffices to show that~$\evA\in{D}$.
There exists~$\evC$ such that~%
and~$\pair\evB\evC\in\reftransC{R}\seq\settorel{A}$.
Therefore~$\pair\evA\evC\in{R}\seq\reftransC{R}\seq\settorel{A}
\subseteq\reftransC{R}\seq\settorel{A}$,
which implies that~$\evA\in{D}$.
\end{proof}

\begin{corollary}
\label{coro:po:rf:closed}
Let~$\graph$ be a graph,
and~$D$ be a subset of~$\graph.\EventE$.
The set~$\dom{\reftransC{(\po\disj\rf)}\seq\settorel{D}}$
is $(\po\disj\rf)$-closed.
\end{corollary}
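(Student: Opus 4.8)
The plan is to read this off directly from the lemma immediately above, which establishes that $\dom{\reftransC{R}\seq\settorel{A}}$ is $R$-closed for an \emph{arbitrary} relation $R$ and set $A$. First I would instantiate that lemma with $R \eqdef \po\disj\rf$ and $A \eqdef D$. The resulting set is exactly $\dom{\reftransC{(\po\disj\rf)}\seq\settorel{D}}$, and its guaranteed property is exactly $(\po\disj\rf)$-closedness, so nothing further is required: the corollary is a pure specialization of the general lemma and inherits its proof verbatim.

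If one wishes to see the argument in this concrete case rather than by citation, write $A$ for $\dom{\reftransC{(\po\disj\rf)}\seq\settorel{D}}$ and unfold the definition of $(\po\disj\rf)$-closedness as $(\po\disj\rf)\seq\settorel{A}\subseteq\settorel{A}\seq(\po\disj\rf)$: given $\pair\evA\evB\in(\po\disj\rf)\seq\settorel{A}$ we get $\evB\in A$, hence $\pair\evB\evC\in\reftransC{(\po\disj\rf)}$ for some $\evC\in D$, and prepending the edge yields $\pair\evA\evC\in(\po\disj\rf)\seq\reftransC{(\po\disj\rf)}\subseteq\reftransC{(\po\disj\rf)}$, so $\evA\in A$ and the edge survives. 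This simply retraces the general lemma's proof in the special case at hand.

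There is no real obstacle. The only point worth flagging is that the preceding lemma was deliberately stated for a general relation $R$, so the passage to the composite relation $\po\disj\rf$ is immediate and needs no separate bookkeeping for the $\po$ and $\rf$ components. The corollary exists only to record the particular instance that feeds into \lemmaref{lemma:restriction}, where $(\po\disj\rf)$-closed subsets $D$ of the event set are the objects of interest.
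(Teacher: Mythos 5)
Your proposal is correct and matches the paper exactly: the corollary is stated as an immediate instantiation of the preceding lemma with $R \eqdef \po\disj\rf$ and $A \eqdef D$, and the paper gives no further argument. Your concrete retracing of the general lemma's proof in this special case is also faithful to the paper's proof of that lemma.
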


\begin{lemma}
\label{lemma:rf:incl:onsc:rf}
Let~$\prog$ be a \racefree\hb program
and let~$\tuple\graph{\rf, \mo}$
be a \SC-consistent graph associated with~$\prog$.
The relation~$\rf$ is included in~$\transC{(\po\disj\onsc\rf)}$.
\end{lemma}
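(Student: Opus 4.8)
Throughout this subsection $\RMWE$ accesses are ignored, so every $\transC\rf$ collapses to a single $\rf$ edge; I would record this first, since it means each $\sw$ edge contains \emph{exactly one} reads-from step, giving $\sw\subseteq\refC\po\seq\rf\seq\refC\po$. Together with $\porceleven\subseteq\po$ this yields $\hb=\transC{(\porceleven\disj\sw)}\subseteq\transC{(\po\disj\rf)}$. Since $\tuple\graph{\rf,\mo}$ is \SC-consistent, $\po\disj\rf\disj\mo\disj\rb$ is acyclic, so $\transC{(\po\disj\rf)}$ is a well-founded strict order on the finite event set, which is what I would induct on. The first reduction is: for any $\rf$ edge $\pair\evA\evB$ that is \emph{not} between two $\sc$ events, I claim $\pair\evA\evB\in\hb$. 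Indeed $\pair\evA\evB$ already satisfies clauses (1)--(3) of a data race ($\evA\neq\evB$, same location, $\evA$ a write); if it were also $\hb$-unordered it would be an $\hb$-race with a non-$\sc$ endpoint, so $\graph$ would be \racy\hb, contradicting that $\prog$ is \racefree\hb\ (the given graph is \SC-consistent). The orientation $\pair\evB\evA\in\hb$ is excluded because $\hb\subseteq\transC{(\po\disj\rf)}$ and $\pair\evA\evB\in\rf$ would then close a $\po\disj\rf$ cycle. Hence $\pair\evA\evB\in\hb$.

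\textbf{Induction and decomposition of $\hb$.}
I would then prove $\pair\evA\evB\in\transC{(\po\disj\onsc\rf)}$ by well-founded induction on $\evB$ along $\transC{(\po\disj\rf)}$, the inductive hypothesis being that every $\rf$ edge whose target lies strictly below $\evB$ is already in $\transC{(\po\disj\onsc\rf)}$. If $\evA,\evB$ are both $\sc$, then $\pair\evA\evB\in\onsc\rf$ and we are done; otherwise $\pair\evA\evB\in\hb$ and I decompose a witnessing path $\evA=\evD_0,\dots,\evD_k=\evB$ into $\porceleven$ and $\sw$ steps. Each $\porceleven$ step lies in $\po$; for a $\sw$ step the flanking $\refC\po$ segments are harmless and its single internal $\rf$ edge $\pair\evC\evD$ has target $\evD$ strictly $\transC{(\po\disj\rf)}$-below $\evB$, so the induction hypothesis rewrites it into $\transC{(\po\disj\onsc\rf)}$. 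The only edge the hypothesis cannot reach is when the \emph{last} step is a $\sw$ edge whose reads-from step lands on $\evB$ itself; by uniqueness of $\rf$ that step must be $\pair\evA\evB$. Inspecting the shape of $\sw$, its leading $\refC{(\settorel\FenceE\seq\po)}$ before $\evA$ must be the identity (a genuine $\po$ edge $\pair{\evD_{k-1}}\evA$ would, with $\pair\evA{\evD_{k-1}}\in\transC{(\po\disj\rf)}$ from the path prefix, contradict \SC-consistency), leaving exactly the residual configuration: $\evA$ a release write, $\evB$ an acquire read in a \emph{different} thread (same-thread would put $\pair\evA\evB$ into $\porceleven$), $\pair\evA\evB\in\rf$, and not both $\sc$.

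\textbf{The obstruction case.}
This residual configuration is the crux, and I would discharge it by showing it \emph{cannot} arise in a \racefree\hb\ program. The idea is to build an alternative \SC-consistent execution of $\prog$ in which $\evA$ and $\evB$ race on a non-$\sc$ access. Using \cororef{coro:po:rf:closed}, restrict $\graph$ to the $(\po\disj\rf)$-closed prefix generated by the $\po$-predecessors of $\evB$; by \lemmaref{lemma:restriction} this restriction is \SC-consistent, is associated with $\prog$, and (crucially, because it is $(\po\disj\rf)$-closed) preserves the control flow of $\evB$'s thread so that this thread is left poised to execute the read $\evB$. I would then let it take one further \ruleReadStep reading the initial value, and continue the pool reduction scheduling the thread of $\evA$ afterwards, so that $\evA$ becomes $\mo$-later than the value read by $\evB$. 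In the resulting graph $\pair\evB\evA\in\rb$ (hence $\evB$ precedes $\evA$ and they are $\hb$-unordered), $\evA,\evB$ conflict on $\dotLoc\evA=\dotLoc\evB$, and at least one is non-$\sc$, so the execution is \racy\hb\ --- contradicting data-race freedom. Thus the obstruction never occurs, every $\hb$ step rewrites into $\transC{(\po\disj\onsc\rf)}$, and $\pair\evA\evB\in\transC{(\po\disj\onsc\rf)}$ as required.

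\textbf{Where the difficulty lies.}
The routine parts are the inclusions $\porceleven\subseteq\po$, $\sw\subseteq\refC\po\seq\rf\seq\refC\po$, and the induction bookkeeping. The genuine obstacle is the witness construction in the obstruction case: I must verify (i) that the re-scheduled run really is a reduction of the initial pool for $\prog$ --- which rests entirely on restricting to a $(\po\disj\rf)$-closed prefix so that the reads preceding $\evB$ keep their values and the same branch is taken --- and (ii) that the extended graph is \SC-consistent, i.e.\ that the fresh $\rb$ edges out of $\evB$ do not close a cycle in $\po\disj\rf\disj\mo\disj\rb$, together with (iii) that $\evA$ and $\evB$ are $\hb$-unordered in it (no $\sw$ can start at the plain read $\evB$, and $\evA$ cannot reach $\evB$ once $\evB$ no longer reads from $\evA$). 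This is the step I expect to require the most care.
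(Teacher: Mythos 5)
Your overall strategy is the same as the paper's: isolate a minimal offending $\rf$ edge $\pair\evA\evB$ (your well-founded induction on $\evB$ along $\transC{(\po\disj\rf)}$ plays the role of the paper's choice of a pair in $\rf\setminus\transC{(\po\disj\onsc\rf)}$ with $\evB$ minimal), and then manufacture an \SC-consistent execution of $\prog$ in which $\evA$ and $\evB$ race, contradicting that $\prog$ is \racefree\hb. Your reduction to the release/acquire ``residual configuration'' (every $\rf$ edge with a non-$\sc$ endpoint lies in $\hb$ by race-freedom, then decompose the $\hb$ path and discharge interior $\rf$ edges by the induction hypothesis) is sound, but it does more than is needed: the paper performs its surgery on any minimal offending edge after only observing that it is external.

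The genuine gap is the obstruction case. First, the prefix you restrict to --- the $(\po\disj\rf)$-closure of the $\po$-predecessors of $\evB$ --- need not contain $\evA$ or its justifying events, and ``continuing the pool reduction scheduling the thread of $\evA$'' is not a replay: the re-executed reads of that thread may have lost their $\rf$-sources (which can lie outside the prefix), so the control flow need not reach $\evA$ again and no valid $\rf$ need exist for the re-emitted events. Second, having $\evB$ read the initialization value inserts $\rb$ edges from $\evB$ to \emph{every} non-initial write on $\dotLoc\evB$, including any such write that is $(\po\disj\rf)$-before $\evB$ (e.g.\ $\po$-before it in its own thread); this closes a cycle in $\po\disj\rf\disj\mo\disj\rb$, so the modified graph is not \SC-consistent and race-freedom --- which constrains only \SC-consistent graphs --- yields no contradiction. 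The paper avoids both problems by taking $D=\dom{\reftransC{(\po\disj\rf)}\seq\settorel{\{\evA,\evB\}}}$, so that $\evA$ survives with its original justification and only the value in $\evB$'s read step changes; by proving (from the same minimality you have) that $\evA$ and $\evB$ are $\po$-maximal in $D$ and that no nontrivial $(\po\disj\rf)$-path connects them; and by redirecting $\evB$ to read from the \emph{immediate $\mo$-predecessor} $\evE$ of $\evA$ rather than from the initialization, so that the only new reads-before edge is $\pair\evB\evA$, which maximality allows to be absorbed into $\mo$ in any candidate cycle. You would need to import essentially all of this machinery --- which your sketch defers to ``the step requiring the most care'' --- for the obstruction case to go through.
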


\begin{proof}
We introduce the following notation:
\[\begin{array}{rcl}
R &\triangleq& \transC{(\po\disj\onsc\rf)}\\
S &\triangleq& \rf\setminus{R}
\end{array}\]

Suppose by contradiction that~$S$ is non-empty.
Let~$\pair\evA\evB$ be a pair in~$S$
for which~$\evB$ is minimal with respect to~$\po\disj\rf$.
(Such a pair exists because $\po\disj\rf$ is acyclic in~$\graph$.)

\paragraph{Claim~1}
\hypertarget{lemma:rf:incl:onsc:rf:C1}{}
The pair~$\pair\evA\evB$ belongs to~$\rfe$.
\begin{proof}
If~$\pair\evA\evB\in\rfi$,
then~$\pair\evA\evB\in\po\subseteq{R}$.
\end{proof}

Let~$D$
be the set~%
$\dom{\reftransC{(\po\disj\rf)}\seq\settorel{\{\evA,\evB\}}}$.

\paragraph{Claim~2}
\hypertarget{lemma:rf:incl:onsc:rf:C2}{}
If~$\evC$ belongs to~$D$, then~%
$\evC\in\dom{\reftransC{(\po\disj\onsc\rf)}\seq\settorel{\{\evA,\evB\}}}$.
\begin{proof}
Suppose by contradiction that there exists~$\evC\neq\evA,\evB$
such that~$\evC\in\dom{\reftransC{(\po\disj\rf)}\seq
\settorel{\{\evA,\evB\}}}$
but~%
$\evC\notin\dom{\reftransC{(\po\disj\onsc\rf)}\seq\settorel{\{\evA,\evB\}}}$.
Then it must be the case that~%
\[\evC\in\dom{
\reftransC{(\po\disj\rf)}\seq
{S}\seq
\reftransC{(\po\disj\rf)}\seq
\settorel{\{\evA,\evB\}}}.\]
However, this implies the existence of a pair in~%
$S\setminus\{\pair\evA\evB\}$
that contradicts the minimality of~$\evB$ with respect to~$\po\disj\rf$.
\end{proof}

\paragraph{Claim~3}
\hypertarget{lemma:rf:incl:onsc:rf:C3}{}
The graph~%
$\tuple{\coerce\graph{D}}{\coerce\rf{D}, \coerce\mo{D}}$
is not \racy{(\coerce\graph{D}.\hb)}, \SC-consistent,
and associated with~$\prog$.
\begin{proof}
That the graph is \SC-consistent and associated with~$\prog$
is a direct consequence from \lemmaref{lemma:restriction}.
Let~$\evC,\evD\in{D}$ be a pair of distinct events
acting on the same location.
Because~$\graph$ is not \racy{\graph.\hb},
it follows that~$\pair\evC\evD\in\graph.\hb\disj\inv{\graph.\hb}$.
It follows from~\lemmaref{lemma:restriction}
that~$\pair\evC\evD\in\coerce\graph{D}.\hb\disj\inv{\coerce\graph{D}.\hb}$.
\end{proof}

\paragraph{Claim~4}
\hypertarget{lemma:rf:incl:onsc:rf:C4}{}
The events~$\evA$ and~$\evB$ are $\coerce\po{D}$-maximal.
\begin{proof}
Suppose by contradiction that there exists~$\evC\in{D}$
such that~$\pair\evA\evC\in\po$.
Because~$\evC\in{D}$,
either~$\pair\evC\evA\in\reftransC{(\po\disj\rf)}$
or~$\pair\evC\evB\in\reftransC{(\po\disj\rf)}$.
If~$\pair\evC\evA\in\reftransC{(\po\disj\rf)}$,
then~$\pair\evA\evC$ violates
the acyclicity of~$\po\disj\rf$.
If~$\pair\evC\evB\in\reftransC{(\po\disj\rf)}$,
then, from \hyperlink{lemma:rf:incl:onsc:rf:C2}{Claim~2},
it follows that
$\pair\evC\evB\in\reftransC{(\po\disj\onsc\rf)}$.
But then~$\pair\evA\evB\in\po\seq\reftransC{(\po\disj\onsc\rf)}\subseteq{R}$,
a contradiction to~$\pair\evA\evB\in{S}$.

Suppose by contradiction that there exists~$\evC\in{D}$
such that~$\pair\evB\evC\in\po$.
If~$\pair\evC\evA\in\reftransC{(\po\disj\rf)}$,
then~$\tuple\evA{\evB,\evC}$ violates
the acyclicity of~$\po\disj\rf$.
Analogously,
if~$\pair\evC\evB\in\reftransC{(\po\disj\rf)}$,
then~$\tuple\evB\evC$ violates
the acyclicity of~$\po\disj\rf$.
\end{proof}

Let~$\evE\in{D}$ be the write event
that immediately precedes~$\evA$ in~$\mo$,
that is, $\pair\evE\evA\in\imm\mo$.
We define the graph~$\graphB$, and the relations~$\rf'$ and~$\rb'$
as follows:
\[\begin{array}{rcl}
\graphB &\eqdef&

  \coerce\graph{D}.\{
    \lab :=
      \coerce\graph{D}.\lab[\evB\mapsto\Read{\md_2}\locx\val]
  \}
\;\;\text{where}\;
\left\{\begin{array}{@{}l@{}}
\coerce\graph{D}.\lab(\evE) = \Write{\md_1}\locx\val\\
\coerce\graph{D}.\lab(\evB) = \Read{\md_2}\locx\_
\end{array}\right.
\\
\rf'&\eqdef&
  (\coerce\rf{D}\setminus\{\pair\evA\evB\})\cup\{\pair\evE\evB\}
\end{array}\]

\paragraph{Claim~5}
\hypertarget{lemma:rf:incl:onsc:rf:C5}{}
The graph~%
$\tuple\graphB{\rf', \coerce\mo{D}}$
is \SC-consistent, and associated with~$\prog$.
\begin{proof}
Suppose, by contradiction, that there is a cycle~$C$
in~%
$\coerce\po{D}\disj\rf'\disj\coerce\mo{D}\disj\rb'$.
Let~$\rb'$ denote the derived reads-before relation~$\graphB.\rb$.
Because~$\rb'\setminus\coerce\rb{D} = \{\pair\evB\evA\}$
and~$\rf'\setminus\coerce\rf{D} = \{\pair\evE\evB\}$,
the only edges
that could possibly be in~$\cycle$
but not in~$\coerce\graph{D}$ are~$\pair\evB\evA$
and~$\pair\evE\evB$.
If~$C$ includes the~$\rf'$ edge~$\pair\evE\evB$,
then it must be followed by a~$\rb'$ edge,
because $\evB$ is a $\coerce\po{D}$-maximal read event.
By definition of the reads-before relation,
it follows that~$\rf'\seq\rb'$ is included
in~$\graphB.\mo$,
which coincides with~$\coerce\mo{D}$.
The sequence~%
$\settorel\evE\seq\rf'\seq\settorel\evB\seq\rb'$
can thus be exchanged with~%
$\settorel\evE\seq\coerce\mo{D}$.
Moreover,
if~$C$ includes the~$\rb'$ edge~$\pair\evB\evA$,
then it must be followed by a~$\coerce\mo{D}$
edge because
$\evA$ is a $\coerce\po{D}$-maximal write event.
The sequence~%
$\settorel\evB\seq\rb'\seq\settorel\evA\seq\coerce\mo{D}$
can thus be exchanged with~%
$\settorel\evB\seq\coerce\rb{D}$.
Performing all these exchanges yields a cycle in~%
$\coerce\po{D}\disj\coerce\rf{D}\disj\coerce\mo{D}\disj\coerce\rb{D}$,
a contradiction to the \SC-consistency of~$\coerce\graph{D}$.

\end{proof}

\paragraph{Claim~6}
\hypertarget{lemma:rf:incl:onsc:rf:C6}{}
The graph~%
$\tuple\graphB{\rf', \coerce\mo{D}}$
is \racy{(\graph'.\hb)}.
\begin{proof}
The events~$\evA$ and~$\evB$ are distinct,
they act on the same location,
at least one of them is a write event.
Moreover,
$\minimum{\dotMode\evA, \dotMode\evB}\neq\sc$,
because otherwise~$\pair\evA\evB\in{R}$.
Finally,
because both are~$\coerce\po{D}$-maximal,
and because~$\evA\notin\dom{\rf'}$,
it follows that~$\settorel{\{\evA,\evB\}}\seq\graphB.\hb\subseteq
\settorel{\{\evA,\evB\}}\seq\transC{(\coerce\po{D}\disj\rf')}
=\emptyset
$.
Therefore,
the pair~$\pair\evA\evB$ forms a \race{(\graph'.\hb)}.
\end{proof}

From
\hyperlink{lemma:rf:incl:onsc:rf:C5}{Claims~5}
and
\hyperlink{lemma:rf:incl:onsc:rf:C6}{6},
it follows that
the graph~$\tuple\graphB{\rf', \coerce\mo{D}}$
contradicts the assumption that~$\prog$
is~\racefree\hb.
\end{proof}

\begin{corollary}
\label{coro:hb:racefree:po:rf:racefree}
If a program~$\prog$ is~\racefree\hb,
then it is~\racefree{{(\po\disj\onsc\rf)}}.
\end{corollary}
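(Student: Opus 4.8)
The plan is to derive this as a short corollary of \lemmaref{lemma:rf:incl:onsc:rf} by arguing the contrapositive: I will show that any data race witnessing $\racy{(\po\disj\onsc\rf)}$ in a $\SC$-consistent graph of a \racefree\hb program is in fact also an $\hb$-race with the same non-$\sc$ endpoint, thereby contradicting \racefree\hb-ness. Concretely, suppose $\prog$ is \racefree\hb but \emph{not} \racefree{(\po\disj\onsc\rf)}. Then there is a $\SC$-consistent graph $\tuple\graph{\rf,\mo}$ associated with $\prog$ and a pair $\pair\evA\evB\in\raceS{(\po\disj\onsc\rf)}$ with at least one of $\evA,\evB$ not $\sc$. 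I will show that $\pair\evA\evB$ is also a \race\hb, so that $\graph$ is \racy\hb, contradicting the assumption.

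The crux is the inclusion $\hb\subseteq\transC{(\po\disj\onsc\rf)}$ for every $\SC$-consistent graph of a \racefree\hb program. Writing $T\eqdef\transC{(\po\disj\onsc\rf)}$ and recalling that $T$ is transitive (it is a transitive closure, so $\transC T=T$), it suffices to prove $\po\disj\sw\subseteq T$, since then $\hb=\transC{(\po\disj\sw)}\subseteq\transC T=T$. The inclusion $\po\subseteq T$ is immediate. For $\sw$ I would unfold its definition as a sequential composition of reflexive guards (the $\settorel{\cdots}$ factors, which are subsets of the identity and hence only restrict), of $\po$- and $\po\seq\settorel\FenceE$-type factors (subsets of $T$), and of the single central factor $\transC\rf$. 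Here is where \lemmaref{lemma:rf:incl:onsc:rf} does the work: it gives $\rf\subseteq T$, whence $\transC\rf\subseteq\transC T=T$. Since every non-guard factor of $\sw$ then lies in the transitive relation $T$ and the guards merely restrict, the whole composite satisfies $\sw\subseteq T$.

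With $\hb\subseteq T$ established, the conclusion is routine bookkeeping. By \defref{def:rcelevenext:data:race:app}, $\pair\evA\evB\in\raceS{(\po\disj\onsc\rf)}$ means that $\evA\neq\evB$, that $\dotLoc\evA=\dotLoc\evB$, that at least one of them is a write, and that $\pair\evA\evB\notin T$ and $\pair\evB\evA\notin T$. Because $\hb\subseteq T$, and because $\hb$ is already transitively closed (so $\transC\hb=\hb$), the last condition yields $\pair\evA\evB\notin\hb$ and $\pair\evB\evA\notin\hb$; the first three conditions are unchanged. Hence $\pair\evA\evB\in\raceS\hb$, and since one of its endpoints is non-$\sc$, the graph $\graph$ is \racy\hb. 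As $\graph$ is $\SC$-consistent and associated with $\prog$, this contradicts $\prog$ being \racefree\hb, completing the argument.

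The genuine difficulty of the development has already been absorbed into \lemmaref{lemma:rf:incl:onsc:rf}; within this corollary the only delicate point is the $\sw\subseteq T$ step. There I must take care that the reflexive guards contribute only identity restrictions, and in particular that the lone $\transC\rf$ factor — the single place where $\rf$ rather than merely $\onsc\rf$ feeds into $\hb$ — is exactly the one replaced by the lemma. Everything else reduces to the transitivity of $T$ and the inclusion of $\po$ and the fence-guarded program-order fragments in $T$.
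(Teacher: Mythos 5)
Your proof is correct and matches the intended argument: the paper states this as an immediate corollary of \lemmaref{lemma:rf:incl:onsc:rf}, and the only content is exactly the inclusion $\hb\subseteq\transC{(\po\disj\onsc\rf)}$ obtained by pushing the lemma's $\rf\subseteq\transC{(\po\disj\onsc\rf)}$ through the $\transC\rf$ factor of $\sw$, which you carry out cleanly (and your argument is robust to whether $\hb$ is built from $\po$ or the weaker $\porceleven$). No gaps.
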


\begin{lemma}
\label{lemma:sc:consistent}
Let~$\graph$ be a \rcelevenext-consistent graph.
If~$\graph$ is not \racy{{(\po\disj\onsc\rf)}},
then~$\graph$ is \SC-consistent.
\end{lemma}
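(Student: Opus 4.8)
The plan is to prove the statement directly: assuming $\graph$ is \rcelevenext-consistent and not \racy{(\po\disj\onsc\rf)}, I establish the single \SC-consistency condition \axiomSCCoherence, i.e.\ acyclicity of $\po\disj\rf\disj\mo\disj\rb$. Throughout I set $W \eqdef \transC{(\po\disj\onsc\rf)}$ and record at the outset that $W$ is acyclic, since $W \subseteq \transC{(\po\disj\rf)}$, which is irreflexive by \axiomNTA. The hypothesis unfolds to: every conflicting pair (distinct, same location, at least one a write/successful RMW) with a non-$\sc$ endpoint is $\transC W$-ordered in one direction.

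First I would perform an \emph{internal reduction}, showing $\rfi\disj\moi\disj\rbi\subseteq\po$. For $\rfi$ this is immediate from \axiomNTA (a reversed $\po$ edge would close a $\po\disj\rf$ cycle). For $\moi$ and $\rbi$, since $\po$ is total on each thread the only alternative is a reversed edge $\pair\evB\evA\in\po$: when $\evB$ is not a non-temporal store, or when $\evA,\evB$ are same-location writes, this reversed edge already lies in $\porceleven\subseteq\hb$ (via the $\settorel{\EventE\setminus\WriteMd\nt}\seq\po$ or the $\perloc\po\seq\settorel\WriteE$ clause), contradicting \axiomCoherenceI through $\mo,\rb\subseteq\eco$; when $\evB$ is a non-temporal store the reversed edge escapes $\porceleven$ and is instead excluded by \axiomCoherenceIII. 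Hence $\po\disj\rf\disj\mo\disj\rb$ and $\po\disj\rfe\disj\moe\disj\rbe$ have the same transitive closure, and it suffices to prove acyclicity of the latter.

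Next I would prove \emph{forward alignment}: $\rf\subseteq W$, and every $\moe$ or $\rbe$ edge with a non-$\sc$ endpoint lies in $\transC W$. For $\rf$: internal edges are in $\po\subseteq W$, an $\rfe$ between two $\sc$ events is in $\onsc\rf\subseteq W$, and an $\rfe$ with a non-$\sc$ endpoint is a conflict, hence $\transC W$-ordered by non-racy with the reverse direction killed by \axiomNTA. The same conflict-plus-non-racy reasoning orders every non-$\sc$-touching $\moe$/$\rbe$ edge by $\transC W$, and the crux is to exclude the \emph{reverse} orientation. \textbf{This reverse-exclusion is the main obstacle.} Given $\pair\evA\evB\in\moe$ (or $\rbe$) with $\pair\evB\evA\in\transC W$, I would decompose the $W$-path from $\evB$ to $\evA$ into $\po$ and $\onsc\rf$ steps: since $\onsc\rf\subseteq\sw\subseteq\hb$ and every $\po$ step not leaving a non-temporal store lies in $\porceleven\subseteq\hb$, a path free of such steps gives $\pair\evB\evA\in\hb$, contradicting \axiomCoherenceI as $\pair\evA\evB\in\eco$; the residual $\po$ steps out of non-temporal stores are discharged by \axiomCoherenceIII, and an $\rbe$/$\mo$ adjacency by \axiomAtomicity. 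The genuine difficulty, absent in the plain \rceleven DRF argument, is precisely that $W\not\subseteq\hb$ because non-temporal stores drop out of $\porceleven$, so isolating and separately discharging these po edges via \axiomCoherenceIII is where the real work lies.

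Finally, I would assemble the contradiction. After replacing each internal edge by $\po\subseteq W$ and each non-$\sc$-touching external edge by its $\transC W$ witness, any cycle in $\po\disj\rfe\disj\moe\disj\rbe$ becomes a cycle built from $W$-segments together with external $\mo$/$\rb$/$\rf$ edges \emph{between two $\sc$ events}. If there are no such all-$\sc$ edges, the cycle lies in $W$, contradicting its acyclicity. Otherwise I would build a $\psc$-cycle and invoke \axiomSC: each all-$\sc$ edge lies in $\scb$ ($\mo,\rb\subseteq\scb$, and $\rfe$ between $\sc$ events is in $\perloc\hb\subseteq\scb$), so that $\settorel{\EventMd\sc}$-bracketed compositions fall into $\pscBase\subseteq\psc$. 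The remaining care — a second place where non-temporal stores bite — is in routing the intervening $W$-segments between consecutive $\sc$ anchors into $\scb$ (through its $\po$, $\perloc\hb$, and $\neqloc\po\seq\hb\seq\neqloc\po$ components), again peeling off any nt-store po edges with \axiomCoherenceIII before applying \axiomSC. I expect the overall architecture to mirror the \rceleven DRF-SC proof, with the two nt-store complications (reverse-exclusion and the $\psc$-assembly) being the new technical content.
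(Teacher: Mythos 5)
Your proof follows the same architecture as the paper's: use non-raciness to embed every $\rf$, $\mo$ and $\rb$ edge with a non-$\sc$ endpoint \emph{forwards} into $\transC{(\po\disj\onsc\rf)}$, rewrite an \axiomSCCoherence-violating cycle accordingly, and dispatch the residual all-$\sc$ cycle via \axiomSC. You are also right that the load-bearing step is the one the paper dispatches in a single line: non-raciness only yields an ordering in \emph{some} direction, so the reverse orientation of each $\moe$/$\rbe$ edge must be excluded before the forward inclusions $\mo\subseteq\onsc\mo\disj\transC{(\po\disj\onsc\rf)}$ and $\rb\subseteq\onsc\rb\disj\transC{(\po\disj\onsc\rf)}$ become available.

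That exclusion, however, cannot be carried out the way you propose, and the step is in fact false. Consider the graph in which thread~$1$ consists of $\evB=\Write\nt\locx{1}$ followed by $\evC=\Write\sc\locy{1}$, thread~$2$ consists of $\evD=\Read\sc\locy{1}$ (reading from $\evC$) followed by $\evA=\Write\rlx\locx{2}$, and $\atloc\mo\locx$ places $\evA$ before $\evB$. Then $\pair\evA\evB\in\moe$ while $\pair\evB\evA\in\po\seq\onsc\rf\seq\po\subseteq\transC{(\po\disj\onsc\rf)}$, so the two writes to $\locx$ do not race and the graph is not \racy{(\po\disj\onsc\rf)}; yet that same path together with $\pair\evA\evB\in\mo$ is an \axiomSCCoherence cycle, so the graph is not \SC-consistent. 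It is nonetheless \rcelevenext-consistent: $\evB$ has no outgoing $\hb$ edge, because $\pair\evB\evC$ leaves a non-temporal store towards a different-location, non-fence, non-RMW event and so escapes every clause of $\porceleven$; $\ppoasm$ is empty; and---this is exactly where your argument breaks---\axiomCoherenceIII only forbids $\settorel{\WriteMd\nt}\seq\po\seq(\rb\disj\mo)$, i.e.\ a \emph{single} $\po$ step out of the non-temporal store landing directly back in $\mo\disj\rb$, whereas here the residual $\po$ step out of $\evB$ targets an unrelated location and is followed by further steps of the path before reaching $\evA$. Neither \axiomAtomicity nor \axiomSC fires either. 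So the reverse orientation cannot be excluded, the forward inclusion for $\mo$ fails, and this graph is a countermodel to the lemma as stated (it is, in essence, Program~\ref{prog:mp-nt} with the second read replaced by a write to $\locx$). The paper's own proof silently asserts the same forward inclusions and therefore shares the hole; a repair seems to require either measuring races against $\porceleven\disj\onsc\rf$ rather than $\po\disj\onsc\rf$, or importing the global race-freedom of the program that the surrounding DRF development (Lemma~\ref{lemma:non:racy}) actually has at its disposal.
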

\begin{proof}
Because~$\graph$ is not \racy{{(\po\disj\onsc\rf)}},
the following inclusions hold:
\[\begin{array}{@{}l@{\;}c@{\;}r@{\;}c@{\;\;}l@{}}
\rf &\subseteq&
  \onsc\rf&\disj&\transC{(\po\disj\onsc\rf)}
\\
\mo &\subseteq&
  \onsc\mo&\disj&\transC{(\po\disj\onsc\rf)}
\\
\rb &\subseteq&
  \onsc\rb&\disj&\transC{(\po\disj\onsc\rf)}
\end{array}\]

We can use these inclusions to show that
the violation of \SC-consistency yields
the following chain of implications:
\[\begin{array}{@{}l@{}}
\cyc{\po\disj\rf\disj\mo\disj\rb}
\\
\begin{array}{@{\implies}l@{}}
\cyc{
\po\disj\onsc\rf
  \disj
\onsc\mo
  \disj
\onsc\rb
  \disj
\transC{(\po\disj\onsc\rf)}
}
\\
\cyc{
\po
  \disj
\onsc\rfe
  \disj
\onsc\moe
  \disj
\onsc\rbe
}
\end{array}
\end{array}\]

The internal edges of a cycle in~%
$\po
  \disj
\onsc\rfe
  \disj
\onsc\moe
  \disj
\onsc\rbe$
are separated by external edges
whose domain and codomain is included in~$\EventMd\sc$.
Consequently,
the domain and codomain of
the internal edges~$\po$
must also be included in~$\EventMd\sc$.
It follows that
the relation~%
$\onsc\po
  \disj
\onsc\rfe
  \disj
\onsc\moe
  \disj
\onsc\rbe$
is cyclic,
a contradiction to Condition~\axiomSC.
\end{proof}

\begin{lemma}
\label{lemma:non:racy}
If a program~$\prog$ is \racefree{{(\po\disj\onsc\rf)}},
then
every \rcelevenext-consistent graph associated with~$\prog$
is not \racy{{(\po\disj\onsc\rf)}}.
\end{lemma}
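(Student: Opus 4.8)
The plan is to argue by induction on the number of events of the execution graph, adapting the surgery of \lemmaref{lemma:rf:incl:onsc:rf} from the \SC-consistent setting to the \rcelevenext-consistent one. Fix a \racefree{{(\po\disj\onsc\rf)}} program $\prog$ and suppose, for contradiction, that some \rcelevenext-consistent graph $\graph$ associated with $\prog$ is \racy{{(\po\disj\onsc\rf)}}; take $\graph$ with the fewest events. By \cororef{coro:po:rf:closed} and \lemmaref{lemma:restriction}, every proper $(\po\disj\rf)$-closed subgraph of $\graph$ is again \rcelevenext-consistent and associated with $\prog$, hence not \racy{{(\po\disj\onsc\rf)}} by minimality, hence \SC-consistent by \lemmaref{lemma:sc:consistent}. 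I would record two consequences: first, $\po\disj\rf$ is acyclic throughout $\graph$ (Axiom \axiomNTA), so $(\po\disj\rf)$-maximal events exist; second, on any such prefix $\transC{(\po\disj\rf)}$ coincides with $\transC{(\po\disj\onsc\rf)}$, because an $\rf$ edge escaping $\transC{(\po\disj\onsc\rf)}$ would, together with acyclicity, form a \race{{(\po\disj\onsc\rf)}} on a non-racy graph — a contradiction.

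First I would push the race to the boundary. Removing a $(\po\disj\rf)$-maximal event yields a smaller \rcelevenext-consistent graph, non-racy by minimality, so any race $\pair\evA\evB$ of $\graph$ must involve a maximal event. Using the coincidence of $\transC{(\po\disj\rf)}$ and $\transC{(\po\disj\onsc\rf)}$ on the prefix preceding the top event — exactly as in Claims~1, 2 and~4 of \lemmaref{lemma:rf:incl:onsc:rf} — I would refine the choice of the race so that both $\evA$ and $\evB$ are $\po$-maximal and, writing $\evA$ for the write participant and $\evB$ for the read participant, so that $\evB$ reads from $\evA$: any $(\po\disj\rf)$-path into the top read that is not already a $(\po\disj\onsc\rf)$-path must terminate in an external $\rf$ edge from the conflicting write, and otherwise the pair would be ordered in $\transC{(\po\disj\onsc\rf)}$, contradicting that it is a race.

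The core is then the graph surgery of Claims~5--6 of \lemmaref{lemma:rf:incl:onsc:rf}. Let $D \eqdef \graph.\EventE \setminus \{\evB\}$, so that $\coerce\graph{D}$ is \SC-consistent. I build $\graphB$ by making $\evA$ the $\mo$-maximal write at its location and re-adding $\evB$ as a read of $\evE$, the immediate $\coerce\mo{D}$-predecessor of $\evA$; since $\evB$ originally reads $\evA$, this changes the reads-before relation only by adding the single edge $\pair\evB\evA$. I would then show $\graphB$ is \SC-consistent: any cycle in $\coerce\po{D}\disj\rf'\disj\coerce\mo{D}\disj\rb'$ must traverse one of the two new edges $\pair\evE\evB\in\rf'$ or $\pair\evB\evA\in\rb'$, and the $\po$-maximality of $\evA$ and $\evB$ lets these fragments collapse ($\rf'\seq\rb'\subseteq\coerce\mo{D}$ and $\rb'\seq\coerce\mo{D}\subseteq\coerce\rb{D}$), rewriting the cycle into one confined to the \SC-consistent prefix $\coerce\graph{D}$ — a contradiction. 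Finally $\graphB$ stays \racy{{(\po\disj\onsc\rf)}}: $\evA$ and $\evB$ are distinct conflicting $(\po\disj\onsc\rf)$-sinks, one of them is non-$\sc$, and the only modified $\rf$ edge $\pair\evE\evB$ cannot order $\evA$ before $\evB$ (as $\evE$ is the $\mo$-predecessor of $\evA$), so $\pair\evA\evB\notin\transC{(\po\disj\onsc\rf)}$ persists. Being \SC-consistent, associated with $\prog$, and racy, $\graphB$ contradicts \racefree{{(\po\disj\onsc\rf)}}, closing the induction.

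The hard part will be the \SC-consistency of the surgered graph $\graphB$, in particular ruling out cycles that enter the re-added read $\evB$ through its program-order predecessors rather than through the new $\rf'$ edge. This is exactly where having $\evB$ read the immediate $\mo$-predecessor of $\evA$ (so that the only fresh reads-before edge is $\pair\evB\evA$) together with the $\po$-maximality of both race participants is essential: it is what allows every offending cycle to be rewritten into one living in the \SC-consistent prefix $\coerce\graph{D}$. Establishing the boundary refinement of the second paragraph — that the race can be taken between two $\po$-maximal events with the read reading the conflicting write — is the other delicate point, and I expect it to require the same case analysis on the last edge of a $(\po\disj\rf)$-path into the top event that underlies Claims~1--4 of \lemmaref{lemma:rf:incl:onsc:rf}.
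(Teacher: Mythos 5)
Your proposal correctly identifies several ingredients that the paper's proof also uses---a minimality argument, restriction to $(\po\disj\rf)$-closed prefixes via \lemmaref{lemma:restriction}, conversion of non-racy \rcelevenext-consistent prefixes to \SC-consistent ones via \lemmaref{lemma:sc:consistent}, and the $\po$-maximality of the racy pair---but the core of your argument has a genuine gap. You reduce to the case where the racy pair $\pair\evA\evB$ is an $\rf$ edge (``so that $\evB$ reads from $\evA$'') and then perform the read-re-pointing surgery of \lemmaref{lemma:rf:incl:onsc:rf}. That reduction is not available: a \race{(\po\disj\onsc\rf)} only requires two conflicting accesses, at least one a write, that are unordered by $\transC{(\po\disj\onsc\rf)}$. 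The pair may consist of two writes, or of a write $\evA$ and a read $\evB$ that reads from some third write. Your observation that any non-$(\po\disj\onsc\rf)$ path \emph{into} the top read must end in an $\rf$ edge from the conflicting write shows only that \emph{if} such a bad path exists it comes from $\evA$; it does not show that $\evB$ reads from $\evA$, and the ``otherwise the pair would be ordered'' step is false (two unrelated conflicting writes in different threads are a race with no $\rf$ edge between them). When the pair is not an $\rf$ edge there is no read to re-point and the surgery has nothing to operate on. The paper's proof instead keeps the graph intact: it restricts $\graph$ to the downward closure $D$ of $\{\evA,\evB\}$, shows the three strictly smaller restrictions obtained by deleting $\evA$, $\evB$, or both are \SC-consistent, and then proves $\coerce\graph{D}$ itself is \SC-consistent by a case analysis on whether $\pair\evA\evB$ lies in $\rf$, in $\mo\disj\rb$, or in neither, rerouting any putative \axiomSCCoherence-violating cycle around the maximal events using $\rf\seq\rb\subseteq\mo$ and $(\mo\disj\rb)\seq\mo\subseteq\mo\disj\rb$.

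A second, smaller point: even in the sub-case where $\evB$ does read from $\evA$, the surgery solves a non-problem here. In \lemmaref{lemma:rf:incl:onsc:rf} the re-pointing is needed because the hypothesis there is $\hb$-race-freedom and the $\rf$ edge itself may place the pair in $\hb$; in the present lemma the hypothesis is already \racefree{{(\po\disj\onsc\rf)}} and the pair is by assumption unordered by $\transC{(\po\disj\onsc\rf)}$, so the \emph{unmodified} restriction already carries the required race. The work you would still owe---and which your proposal does not supply for the non-$\rf$ cases---is precisely the \SC-consistency of that unmodified restriction, which is the content of the paper's final claim.
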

\begin{proof}

\newcommand{\D}{D}
\newcommand{\Dab}{D_{\evA\evB}}
\newcommand{\Da}{D_\evA}
\newcommand{\Db}{D_\evB}

Let~$\tuple\graph{\rf,\mo}$ be a \rcelevenext-consistent
graph associated with~$\prog$.
Suppose by contradiction that~$\graph$ is
\racy{{(\po\disj\onsc\rf)}}.
In other words, suppose that the set~%
$S\triangleq
\raceS{\po\disj\onsc\rf}
\setminus{(\EventMd\sc\times\EventMd\sc)}$
is non-empty.
Let~$\pair\evA\evB$ be a pair in~$S$
that is minimal with respect to~$\po\cup\rf$,
that is:
\[
\forall\,\evC,\,\evD.\;
\pair\evC\evD\in{S}
\implies
\evC,\evD\in
\dom{\reftransC{(\po\cup\rf)}\seq\settorel{\{\evA,\evB\}}}
\implies
\pair\evC\evD\in\{\pair\evA\evB,\pair\evB\evA\}.
\]

Let~$D$ be the set~%
$\dom{\reftransC{(\po\cup\rf)}\seq\settorel{\{\evA,\evB\}}}$.

\paragraph{Claim~1}
\hypertarget{lemma:non:racy:C1}{}
The events~$\evA$ and~$\evB$ belong to different threads.
\begin{proof}
If either~$\pair\evA\evB\in\po$ or~$\pair\evB\evA\in\po$,
then~$\pair\evA\evB$ would not form a~%
\race{(\po\disj\onsc\rf)}.
\end{proof}

\paragraph{Claim~2}
\hypertarget{lemma:non:racy:C2}{}
$
\reftransC{(\po\disj\rf)}\seq
\settorel{\EventE\setminus\{\evA,\evB\}}\seq
(\po\disj\rf)\seq
\settorel{\{\evA,\evB\}}
\subseteq
\transC{(\po\disj\onsc\rf)}
$
\begin{proof}
Suppose by contradiction that~%
\[
\reftransC{(\po\disj\rf)}\seq
\settorel{\EventE\setminus\{\evA,\evB\}}\seq
(\po\disj\rf)\seq
\settorel{\{\evA,\evB\}}
\setminus
\transC{(\po\disj\onsc\rf)}
\neq
\emptyset.
\]

Then,
there must be an edge~%
$\pair\evC\evD\in\reftransC{(\po\disj\rf)}$,
such that~%
\[\evD\in
\dom{\settorel{\EventE\setminus\{\evA,\evB\}}\seq
(\po\disj\rf)\seq
\settorel{\{\evA,\evB\}}},\]
and~$\pair\evC\evD\notin\transC{(\po\disj\onsc\rf)}$.

We claim that~%
$\evD\in\dom{(\po\disj\onsc\rf)\seq\settorel{\{\evA,\evB\}}}$.
If~$\evD\in\dom{(\po\disj\rfi)\seq\settorel{\{\evA,\evB\}}}$,
then the assertion follows immediately.
Moreover, if~%
$\evD\in\dom{\rfe\seq\settorel{\{\evA,\evB\}}}$, and if~%
$\evD\notin\dom{(\po\disj\onsc\rf)\seq\settorel{\{\evA,\evB\}}}$,
then it must be the case that~%
$\evD\notin\dom{(\rfe\setminus\onsc\rf)\seq\settorel{\{\evA,\evB\}}}$.
However, in this case,
either the pair~$\pair\evD\evA$ or the pair~$\pair\evD\evB$
would contradict the minimality of~$\pair\evA\evB$.

We now proceed by induction on the number
of~$(\po\disj\rf)$ steps between~$\evC$ and~$\evD$.
In the base case, when~$\evC=\evD$,
the desired conclusion follows from the previous paragraph.
In the inductive case, we have~$\evE$ such that~%
$\pair\evC\evE\in(\po\disj\rf)$
and~%
$\pair\evE\evD\in\transC{(\po\disj\onsc\rf)}$.
We wish to prove that~$\pair\evC\evE\in\transC{(\po\disj\onsc\rf)}$.
If not, then
the pair~$\pair\evC\evE$
forms a \race{(\po\disj\onsc\rf)}.
The minimality of~$\pair\evA\evB$ implies that~%
$\pair\evC\evE\in\{\pair\evA\evB,\pair\evB\evA\}$.
However,
because~%
$\evE\in\dom{\transC{(\po\disj\onsc\rf)}\seq\settorel{\{\evA,\evB\}}}$,
this contradicts either~\axiomNTA
or the fact that~$\pair\evA\evB\in{S}$.
\end{proof}

\paragraph{Claim~3}
\hypertarget{lemma:non:racy:C3}{}
$
\settorel{\{\evA,\evB\}}\seq
\transC{(\po\disj\rf)}\seq
{(\po\disj\rf)}\seq
\settorel{\{\evA,\evB\}}
=
\emptyset$.
\begin{proof}
The relation~$R \triangleq\settorel{\{\evA,\evB\}}\seq
\transC{(\po\disj\rf)}\seq
{(\po\disj\rf)}\seq
\settorel{\{\evA,\evB\}}$
is included in~
\[\{\pair\evA\evA, \pair\evA\evB, \pair\evB\evA, \pair\evB\evB \}.\]

The inclusions~$\pair\evA\evA\in{R}$ and~$\pair\evB\evB\in{R}$
contradict \axiomNTA.
Therefore, if~$R$ is non-empty,
then it must be the case that either~$\pair\evA\evB\in{R}$
or~$\pair\evB\evA\in{R}$.
Suppose by contradiction, and without loss of generality,
that~$\pair\evA\evB\in{R}$.
Then, there exists~$\evC$ such that~%
$\pair\evA\evC\in\transC{(\po\disj\rf)}$
and~%
$\pair\evC\evB\in{(\po\disj\rf)}$.
If~$\evC\in\{\evA,\evB\}$,
then either~$\pair\evA\evC$
or~$\pair\evC\evB$
contradicts \axiomNTA.
If~$\evC\notin\{\evA,\evB\}$,
then, by \hyperlink{lemma:non:racy:C2}{Claim~2},
it follows that~$\pair\evA\evB\in\transC{(\po\disj\onsc\rf)}$,
a contradiction to the fact that~$\pair\evA\evB\in{S}$.
\end{proof}

\paragraph{Claim~4}
\hypertarget{lemma:non:racy:C4}{}
The events~$\evA$ and~$\evB$ are~$\coerce\po{D}$-maximal.

If~$\pair\evA\evB\in\rf$ (resp.~$\pair\evB\evA\in\rf$),
then~$\evB$ (resp.~$\evA$) is~%
$(\coerce\po\D\disj\coerce\rf\D)$-maximal.

If~$\pair\evA\evB\notin\rf\disj\inv\rf$,
then~$\evA$ and~$\evB$ are~%
$(\coerce\po\D\disj\coerce\rf\D)$-maximal.
\begin{proof}
Suppose by contradiction,
and without loss of generality,
that~$\evA$ is not~$\coerce\po\D$-maximal,
and let~$\evC$ be an event in~$\D$
such that~$\pair\evA\evC\in\po$.
Then it must be the case that
either~$\pair\evC\evA\in\reftransC{(\po\disj\rf)}$
or~$\pair\evC\evB\in\reftransC{(\po\disj\rf)}$.
The first case contradicts \axiomNTA.
If~$\pair\evC\evB\in\reftransC{(\po\disj\rf)}$,
then~$\pair\evC\evB\in\transC{(\po\disj\rf)}$,
otherwise~$\evC=\evB$ and~$\pair\evA\evB\in\po$,
a contradiction to~\hyperlink{lemma:non:racy:C1}{Claim~1}.
The inclusion~$\pair\evC\evB\in\transC{(\po\disj\rf)}$,
however,
contradicts
\hyperlink{lemma:non:racy:C3}{Claim~3},
because then~$
\pair\evA\evB\in
  \settorel\evA\seq
  \po\seq
  \settorel\evC\seq
  \transC{(\po\disj\rf)}\seq
  \settorel\evB
\subseteq
  \settorel{\{\evA,\evB\}}\seq
  \transC{(\po\disj\rf)}\seq(\po\disj\rf)\seq
  \settorel{\{\evA,\evB\}}
$.

If~$\pair\evA\evB\notin\rf\disj\inv\rf$,
then we prove, without loss of generality,
that~$\evA$ is~%
${(\coerce\po\D\disj\coerce\rf\D)}$-maximal.
It suffices to show that~%
$\settorel\evA\seq\transC{(\coerce\po\D\disj\coerce\rf\D)}$
is empty.
Suppose by contradiction that there exists~$\evC\in\D$
such that~$\pair\evA\evC\in(\coerce\po\D\disj\coerce\rf\D)$
and~$\pair\evC\_\in\reftransC{(\coerce\po\D\disj\coerce\rf\D)}$.
Because~$\evA$ is~$\coerce\po\D$-maximal,
the edge~$\pair\evA\evC$ belongs to~$\coerce\rf\D$.
Moreover, because~$\evC\in\D$
it is the case that either~%
$\pair\evC\evA\in\reftransC{(\po\disj\rf)}$
or~$\pair\evC\evB\in\reftransC{(\po\disj\rf)}$.
The first case contradicts \axiomNTA.
In the second case,
the edge~$\pair\evC\evB$ must also belong
to~$\transC{(\po\disj\rf)}$;
otherwise, the events~$\evC$ and~$\evB$ would coincide,
and therefore~$\pair\evA\evB\in\rf$,
a contradiction to the assumption
that~$\pair\evA\evB\notin\rf\disj\inv\rf$.
The inclusion~$\pair\evC\evB\in\transC{(\po\disj\rf)}$,
however, contradicts
\hyperlink{lemma:non:racy:C3}{Claim~3},
because then~$
\pair\evA\evB\in
  \settorel\evA\seq
  \rf\seq
  \settorel\evC\seq
  \transC{(\po\disj\rf)}\seq
  \settorel\evB
\subseteq
  \settorel{\{\evA,\evB\}}\seq
  \transC{(\po\disj\rf)}\seq(\po\disj\rf)\seq
  \settorel{\{\evA,\evB\}}
$.

Finally, we prove, without loss of generality,
that, if~$\pair\evA\evB\in\rf$,
then~$\evB$ is~$(\coerce\po\D\disj\coerce\rf\D)$-maximal.
Because~$\evB$ is a~$\coerce\po\D$-maximal read event,
the relation~$\settorel\evB\seq(\coerce\po\D\disj\coerce\rf\D)$
is empty.
Therefore, the relation~%
$\settorel\evB\seq\transC{(\coerce\po\D\disj\coerce\rf\D)}$
is empty,
which conclusion finishes the proof.
\end{proof}

We now introduce the following sets:
\[\begin{array}{@{}r@{\;\;}c@{\;\;}l@{}}
\Dab &\triangleq& \D\setminus\{\evA,\evB\}
\\
\Da &\triangleq& \D\setminus\{\evA\}
\\
\Db &\triangleq& \D\setminus\{\evB\}
\end{array}\]

\paragraph{Claim~5}
\hypertarget{lemma:non:racy:C5}{}
The set~$\Dab$ is~$(\po\disj\rf)$-closed.

If~$\pair\evA\evB\in\rf$ (resp.~$\pair\evB\evA\in\rf$),
then~$\Db$ (resp.~$\Da$) is~$(\po\disj\rf)$-closed.

If~$\pair\evA\evB\notin\rf\disj\inv\rf$,
then the sets~$\Da$ and~$\Db$ are both~$(\po\disj\rf)$-closed.
\begin{proof}
To show that~$\Dab$ is~$(\po\disj\rf)$-closed,
it suffices to prove that the inclusion~%
$(\po\disj\rf)\seq\settorel\Dab\subseteq
\settorel\Dab\seq(\po\disj\rf)$ holds.
Suppose by contradiction that it does not,
and let~$\pair\evC\evD$ be a pair such that~%
$\pair\evC\evD\in(\po\disj\rf)\seq\settorel\Dab$
and
$\pair\evC\evD\notin\settorel\Dab\seq(\po\disj\rf)$.
Because~$\D$ is~$(\po\disj\rf)$-closed
(by definition),
and because~$\Dab\subseteq\D$,
it follows that~%
$\pair\evC\evD\in\settorel\D\seq(\po\disj\rf)$.
We thus conclude that~$\evC\in\{\evA,\evB\}$,
because~%
$\pair\evC\evD\in
(\settorel\D\seq(\po\disj\rf))
\setminus
(\settorel\Dab\seq(\po\disj\rf))
\subseteq
\settorel{\{\evA,\evB\}}\seq(\po\disj\rf)
$.
Because~$\evD\in\Dab\subseteq\D$,
there exists~$\evE\in\{\evA,\evB\}$
such that~%
$\pair\evD\evE\in\reftransC{(\po\disj\rf)}$,
by definition of~$\D$.
In fact, it must be the case that~%
$\pair\evD\evE\in\transC{(\po\disj\rf)}$,
because~$\evD\notin\{\evA,\evB\}$.
We then reach a contradiction to
\hyperlink{lemma:non:racy:C3}{Claim~3},
because~$
\pair\evC\evE\in
  \settorel{\{\evA,\evB\}}\seq
  (\po\disj\rf)\seq
  \settorel\evD\seq
  \transC{(\po\disj\rf)}\seq
  \settorel{\{\evA,\evB\}}\seq
\subseteq
  \settorel{\{\evA,\evB\}}\seq
  \transC{(\po\disj\rf)}\seq(\po\disj\rf)\seq
  \settorel{\{\evA,\evB\}}
$.
The remaining claims follow a similar proof.
\end{proof}

\paragraph{Claim~6}
\hypertarget{lemma:non:racy:C6}{}
The graph~$\tuple{\coerce\graph\Dab}{\coerce\rf\Dab,\coerce\mo\Dab}$
is not \racy{{(\coerce\po\Dab\disj\onsc{\coerce\rf\Dab})}}.
\begin{proof}
Suppose by contradiction that~%
$\raceS{{(\coerce\po\Dab\disj\onsc{\coerce\rf\Dab})}}
\setminus(\EventMd\sc\times\EventMd\sc)$
is non-empty,
and let~$\pair\evC\evD$ be a pair in this set.
We claim that~$\pair\evC\evD\in{S}$.
If not, then~%
$\pair\evC\evD\in
\transC{(\po\disj\onsc\rf)}
\disj
\inv{(\transC{(\po\disj\onsc\rf)})}$.
Suppose without loss of generality
that~$\pair\evC\evD\in\transC{(\po\disj\onsc\rf)}$.
Because~$\Dab$ is~$(\po\disj\rf)$-closed,
and because~$\evC,\evD\in\Dab$,
it follows that~%
$\pair\evC\evD\in\transC{(\coerce\po\Dab\disj\onsc{\coerce\rf\Dab})}$,
a contradiction with the fact that~%
$\pair\evC\evD$ forms a~%
\race{(\coerce\po\Dab\disj\onsc{\coerce\rf\Dab})}.
Therefore, it must be the case that~$\pair\evC\evD\in{S}$.
Because~$\evC,\evD\in\D$,
it follows from the minimality of~$\pair\evA\evB$
that~$\pair\evC\evD\in\{\pair\evA\evB,\pair\evB\evA\}$,
a contradiction with the fact that~$\evC,\evD\in\Dab$.
\end{proof}

\paragraph{Claim~7}
\hypertarget{lemma:non:racy:C7}{}
The graph~$\tuple{\coerce\graph\Da}{\coerce\rf\Da,\coerce\mo\Da}$
is not \racy{{(\coerce\po\Da\disj\onsc{\coerce\rf\Da})}}.

Analogously,
the graph~$\tuple{\coerce\graph\Db}{\coerce\rf\Db,\coerce\mo\Db}$
is not \racy{{(\coerce\po\Db\disj\onsc{\coerce\rf\Db})}}.
\begin{proof}
Proof similar to \hyperlink{lemma:non:racy:C6}{Claim~6}.
\end{proof}

\paragraph{Claim~8}
\hypertarget{lemma:non:racy:C8}{}
The graph~$\tuple{\coerce\graph\Dab}{\coerce\rf\Dab,\coerce\mo\Dab}$
is \SC-consistent.
\begin{proof}
From \hyperlink{lemma:non:racy:C5}{Claim~5}
and \lemmaref{lemma:restriction},
it follows that~$\coerce\graph\Dab$ is \rcelevenext-consistent.
Finally,
thanks to~\hyperlink{lemma:non:racy:C6}{Claim~6}
and~\lemmaref{lemma:sc:consistent},
it follows that~$\coerce\graph\Dab$ is \SC-consistent.
\end{proof}

\paragraph{Claim~9}
\hypertarget{lemma:non:racy:C9}{}
If~$\pair\evA\evB\in\rf$~(resp.~$\pair\evB\evA\in\rf$),
then~$\tuple{\coerce\graph\Db}{\coerce\rf\Db,\coerce\mo\Db}$
(resp.~$\tuple{\coerce\graph\Da}{\coerce\rf\Da,\coerce\mo\Da}$)
is \SC-consistent.
\begin{proof}
Proof similar to \hyperlink{lemma:non:racy:C8}{Claim~8}.
\end{proof}

\paragraph{Claim~10}
\hypertarget{lemma:non:racy:C10}{}
If~$\pair\evA\evB\notin\rf\disj\inv\rf$,
then~$\tuple{\coerce\graph\Da}{\coerce\rf\Da,\coerce\mo\Da}$
and~$\tuple{\coerce\graph\Db}{\coerce\rf\Db,\coerce\mo\Db}$
are \SC-consistent.
\begin{proof}
Proof similar to \hyperlink{lemma:non:racy:C8}{Claim~8}.
\end{proof}

\paragraph{Claim~11}
\hypertarget{lemma:non:racy:C11}{}
The graph~$\tuple{\coerce\graph\D}{\coerce\rf\D,\coerce\mo\D}$
is \SC-consistent, and associated with~$\prog$.
\begin{proof}
It follows from~\lemmaref{lemma:restriction} that~%
$\coerce\graph\D$ is \rcelevenext-consistent, and associated with~$\prog$.
To prove that~$\coerce\graph\D$ is \SC-consistent,
suppose by contradiction that it is not,
and let~$C$ be a cycle that violates~\axiomSCCoherence.
Now, proceed by case disjunction on whether~%
$\pair\evA\evB\in
(\rf\disj\mo\disj\rb)
\disj
\inv{(\rf\disj\mo\disj\rb)}
$.
If it does not, then a cycle that violates~\axiomSCCoherence
must be included in~$\Dab$.
However, such a cycle would contradict
\hyperlink{lemma:non:racy:C6}{Claim~6}.
Let us consider the case in which~%
$\pair\evA\evB\in
(\rf\disj\mo\disj\rb)
\disj
\inv{(\rf\disj\mo\disj\rb)}
$,
and let us assume, without loss of generality,
that~$\pair\evA\evB\in(\rf\disj\mo\disj\rb)$.
We now consider the following cases:
\begin{itemize}
\item Case:~$\pair\evA\evB\in\rf$.\\
If~$\pair\evA\evB\in\rf$,
then, by~\hyperlink{lemma:non:racy:C9}{Claim~9},
the graph~$\coerce\graph\Db$ is \SC-consistent.
Therefore, the cycle~$C$ must include~$\evB$,
otherwise it would violate the \SC-consistency
of~$\coerce\graph\Db$.
Because~$\evB$ is~$\coerce\po\D$-maximal
(\hyperlink{lemma:non:racy:C4}{Claim~4}),
the only edge that can follow~$\evB$ in~$C$
is a~$\rb$ edge.
However, because~$\rf\seq\rb\subseteq\mo$,
the sequence~$\rf\seq\settorel{\evB}\seq\rb$
can be replaced with a~$\mo$ edge that avoids~$\evB$,
thereby yielding a cycle that violates
the \SC-consistency of~$\coerce\graph\Db$.

\item Case:~$\pair\evA\evB\in(\mo\disj\rb)$.\\
If~$\pair\evA\evB\in(\mo\disj\rb)$,
then, by~\hyperlink{lemma:non:racy:C10}{Claim~10},
both~$\coerce\graph\Da$ and~$\coerce\graph\Db$ are \SC-consistent.
Therefore, the cycle~$C$ must include~$\evA$ and~$\evB$,
otherwise it would violate the \SC-consistency
of either~$\coerce\graph\Da$ or~$\coerce\graph\Db$.
Because~$\evB$ is~$(\coerce\po\D\disj\coerce\rf\D)$-maximal
(\hyperlink{lemma:non:racy:C4}{Claim~4}),
the only edge that can follow~$\evB$ in~$C$
is a~$\mo$ edge.
However, because~$(\mo\disj\rb)\seq\mo\subseteq(\mo\disj\rb)$,
the sequence~$(\mo\disj\rb)\seq\settorel{\evB}\seq\mo$
can be replaced with a~$(\mo\disj\rb)$ edge that avoids~$\evB$,
thereby yielding a cycle that violates
the \SC-consistency of~$\coerce\graph\Db$.

\end{itemize}
\end{proof}

The graph~$\coerce\graph\D$ contradicts
the premise that~$\prog$ is~\racefree{(\po\disj\onsc\rf)},
because
the graph~$\coerce\graph\D$ is
(1)~\SC-consistent;
(2)~associated with~$\prog$;
and
(3)~contains a~\race{(\po\disj\onsc\rf)},
the pair~$\pair\evA\evB$.
\end{proof}

\subsection{Extension Property}

\begin{theorem}[\rcelevenext - Extension-I]
\label{thm:extension:I:rcelevenext:app}
\rceleven-consistency conditions are equivalent
to those of \rcelevenext in every execution graph
containing only \rceleven events.
\end{theorem}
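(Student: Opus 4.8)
The plan is to fix an execution graph $\tuple\graphA{\rf,\mo}$ all of whose events are \rceleven events, i.e.\ no event carries mode $\nt$, $\stf$, or $\tso$, and to prove both implications. First I would record the simplifications this hypothesis forces on the \rcelevenext-specific relations. Since $\WriteMd\nt=\emptyset$, the first disjunct of $\porceleven$ becomes $\settorel\EventE\seq\po=\po$, and since every disjunct of $\porceleven$ is contained in $\po$ we get $\porceleven=\po$; hence the \rcelevenext $\hb$ coincides with the \rceleven $\hb$, and in particular $\po\subseteq\hb$. Likewise $\RMWMd\tso=\ReadMd\tso=\WriteMd\tso=\emptyset$ and $\FenceMd\stf=\emptyset$ (so $\FenceAtLeast\stf$ restricted to the graph is just $\FenceMd\sc$), which collapses $\ppoasm$ to $\settorel{\FenceMd\sc}\seq\po\disj\po\seq\settorel{\FenceMd\sc}$ and makes the relation $\settorel{\WriteMd\nt}\seq\po\seq(\rb\disj\mo)$ of \axiomCoherenceIII empty, so \axiomCoherenceIII holds vacuously.

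Second, I would dispatch the mismatch between the two definitions of $\eco$ (full $\rf$ in \rceleven versus $\rfe$ in \rcelevenext). Both models impose \axiomNTA, making $\po\disj\rf$ acyclic; since any two same-thread events are $\po$-comparable, this forces $\rfi\subseteq\po\subseteq\hb$ (otherwise $r\po w\rf r$ would be a $\po\disj\rf$-cycle) and $\hb$ irreflexive. Using $\rfi\subseteq\hb$ one absorbs internal reads-from edges into the flanking $\hb$: in a putative violation $a\hb b$, $b\eco a$ of $\irr{\hb\seq\refC\eco}$, a trailing $\rfi$ either closes an $\hb$-cycle directly or turns the preceding $\mo$- or $\rb$-step into an $\rfe$-free witness, so $\irr{\hb\seq\refC\eco}$ is insensitive to the $\rf$-versus-$\rfe$ choice; the same absorption leaves the $\hb\seq\eco\seq\hb$ term of $\pscFence$—the only occurrence of $\eco$ in $\psc$—unchanged. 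As \axiomAtomicity and \axiomNTA are literally identical in the two definitions, this yields \axiomCoherenceI$\iff$\axiomCoherence and \rcelevenext's \axiomSC$\iff$\rceleven's \axiomSC on our graph. This already settles the implication \rcelevenext-consistency $\Rightarrow$ \rceleven-consistency, because \rcelevenext only adds conditions.

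Third, for the converse I must derive the two extra conditions from \rceleven-consistency. \axiomCoherenceIII is free by the first paragraph, so the crux is \axiomCoherenceII, namely $\acyc{\ppoasm\disj\eco}$ with $\ppoasm=\settorel{\FenceMd\sc}\seq\po\disj\po\seq\settorel{\FenceMd\sc}$. The key observation is that an \sc fence is neither the source nor the target of any $\eco$-edge (these require memory-access endpoints), whereas every $\ppoasm$-edge has an \sc fence as an endpoint. Hence a cycle containing no fence is a nonempty cycle in $\eco$, impossible since $\eco$ is acyclic ($\mo$ is a per-location total order and $\rb=\inv\rf\seq\mo$); and a cycle through fences $f_1,\dots,f_k$ in cyclic order decomposes, between consecutive fences, as $f_i\mathbin{\po}x$, then $x\mathbin{\refC\eco}y$, then $y\mathbin{\po}f_{i+1}$ (the only $\ppoasm$-edges that can occur are the one leaving $f_i$ and the one entering $f_{i+1}$, and the interior, carrying no fence endpoint, is pure $\eco$, merged by transitivity). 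Since $\po\subseteq\hb$, each segment lies in $\settorel{\FenceMd\sc}\seq(\hb\disj\hb\seq\eco\seq\hb)\seq\settorel{\FenceMd\sc}=\pscFence$, so $f_1,\dots,f_k$ form a $\psc$-cycle, contradicting \axiomSC.

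The main obstacle I anticipate is precisely this combinatorial embedding for \axiomCoherenceII: one must check carefully that no $\ppoasm$-edge can hide in the interior of an inter-fence segment, and that the per-location nature of $\eco$ does not obstruct merging consecutive $\eco$-edges into a single $\refC\eco$-step. The $\rf$-versus-$\rfe$ bookkeeping of the second paragraph is routine but must be carried out for every axiom mentioning $\eco$—only \axiomCoherenceI/\axiomCoherence and the $\pscFence$ part of \axiomSC—and it is exactly there that \axiomNTA is invoked to guarantee $\rfi\subseteq\hb$.
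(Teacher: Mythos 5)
Your proposal is correct and follows essentially the same route as the paper's proof: simplify $\porceleven$ and $\ppoasm$ under the no-\intelext-events hypothesis, absorb $\rfi$ edges into $\hb$ (via the $\rf\,$-vs-$\rfe$ bookkeeping for \axiomCoherence and $\pscFence$), and derive \axiomCoherenceII from \axiomSC by cutting the $\ppoasm\disj\eco$ cycle at its \sc{} fences into $\pscFence$ segments. Your two small deviations are both sound and arguably cleaner than the paper's: discharging \axiomCoherenceIII by vacuity of $\settorel{\WriteMd\nt}$ rather than deriving it from \axiomCoherence, and explicitly invoking \axiomNTA to justify $\rfi\subseteq\po\subseteq\hb$, which the paper leaves implicit.
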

\begin{proof}
In the absence of~$\intelext$ events,
the relation~$\porceleven$ is equivalent to~$\po$,
and
the relation~$\ppoasm$ is equivalent to
$
\po\seq\settorel{\FenceMd\sc}\disj
\settorel{\FenceMd\sc}\seq\po
$.
Therefore,
the only differences between the two models,
with respect to the derived relations,
is the definition of~$\eco$:
in \rceleven,
it is defined as~$\transC{(\rf\disj\mo\disj\rb)}$,
whereas,
in \rcelevenext,
it is defined as~$\transC{(\rfe\disj\mo\disj\rb)}$.
Let us use the names of these models as a prefix to
distinguish to which version of $\eco$
(or of any other relation defined on top of $\eco$)
we refer.

To complete the proof it is thus sufficient to show that
\axiomCoherenceII
and
\axiomCoherenceIII
are a consequence of
\rceleven-consistency, and that,
in the remaining \rceleven-consistency conditions,
the relations~$\rceleven.\eco$ and~$\rcelevenext.\eco$
can be used interchangeably.
That is, it suffices to show that the following assertions hold:
\begin{enumerate}
\item $\irr{\hb\seq\rceleven.\eco}\implies
       \irr{\po\seq\rb}$
\item $\acyc{\rceleven.\psc}\implies
       \acyc{\ppoasm\disj\rcelevenext.\eco}$
\item $\irr{\hb\seq{\refC{\rceleven.\eco}}}\iff
       \irr{\hb\seq{\refC{\rcelevenext.\eco}}}$
\item $\acyc{\rceleven.\psc}\iff\acyc{\rcelevenext.\psc}$
\end{enumerate}

\paragraph{Proof of Assertion~(1).}
Immediate from
$\rb\subseteq\eco$ and
$\po\subseteq\hb$
(which holds of \rceleven-events-only execution graphs).

\paragraph{Proof of Assertion~(2).}
We proceed by contradiction;
that is,
we show that,
if~$\ppoasm\disj\rcelevenext.\eco$ is cyclic, then so is~$\rceleven.\psc$:
\[\begin{array}{@{}l@{}}
\cyc{\ppoasm\disj\rcelevenext.\eco}
\\
\begin{array}{@{}r@{\;}c@{\;}l@{}}
  &\implies&
    \cyc{\po\seq\settorel{\FenceMd\sc}\disj\settorel{\FenceMd\sc}\seq\po
    \disj
    \rceleven.\eco}
\\
  &\implies&
    \cyc{\po\seq\settorel{\FenceMd\sc}\disj\settorel{\FenceMd\sc}\seq\po\disj
    \rceleven.\eco}
\\
  &\implies&
    \cyc{\rcelevenext.\eco} \;\;\absurd
\\
  &\lor&
    \cyc{
      \settorel{\FenceMd\sc}\seq\po\seq
      \rceleven.\eco\seq
      \po\seq\settorel{\FenceMd\sc}
    }
\\
  &\implies&
    \cyc{
      \settorel{\FenceMd\sc}\seq\hb\seq
      \rceleven.\eco\seq
      \hb\seq\settorel{\FenceMd\sc}
    }
\\
  &\implies&
    \cyc{
      \settorel{\FenceMd\sc}\seq\hb\seq
      \rceleven.\eco\seq
      \hb\seq\settorel{\FenceMd\sc}
    }
\\
  &\implies&
    \cyc{\rceleven.\psc}
\end{array}
\end{array}\]

\paragraph{Proof of Assertion~(3).}
Because~$\rcelevenext.\eco$
is included in~$\rceleven.\eco$,
the left-to-right implication is trivial.
The other direction follows by contradiction.
The proof exploits the equality~%
$\eco=\rf\disj(\mo\disj\rb)\seq\refC\rf$
and the inclusion~%
$\settorel{\EventE\setminus\WriteMd\nt}\seq\rfi\subseteq\hb$:
\[\begin{array}{@{}l@{}}
\;\notirr{\hb\seq\refC{\rceleven.\eco}}
\\
\begin{array}{@{}c@{\;\;}l@{}}
  \implies&
    \notirr{\overbrace{\hb\seq\refC\rfi}^{\subseteq\;\hb}}
      \;\lor\;
    \notirr{\hb\seq\refC\rfe}
\\
  \lor&
    \notirr{\hb\seq\refC{(\mo\disj\rb)}\seq\refC\rfe}
      \;\lor\;
    \underbrace{
    \notirr{\hb\seq\refC{(\mo\disj\rb)}\seq\refC\rfi}
    }_{\implies\;\notirr{\refC\rfi\seq\hb\seq\refC{(\mo\disj\rb)}}}
\\
  \implies&
    \notirr{\hb\seq\refC\rfe}
      \;\lor\;
    \notirr{\hb\seq\refC{(\mo\disj\rb)}\seq\refC\rfe}
\\
  \implies&
    \notirr{\hb\seq\refC{\rcelevenext.\eco}}
\end{array}
\end{array}\]

\paragraph{Proof of Assertion~(4).}
We show the following equality (which holds of \rceleven-events-only execution graphs):
\[\settorel{\FenceMd\sc}\seq\hb\seq\rceleven.\eco\seq\hb\seq\settorel{\FenceMd\sc}
=
\settorel{\FenceMd\sc}\seq\hb\seq\rcelevenext.\eco\seq\hb\seq\settorel{\FenceMd\sc}\]

Because~$\rcelevenext.\eco\subseteq\rceleven.\eco$,
it is easy to see that the relation on the right-hand side of the
equality is included
in the relation on the left-hand side.
Let us now show the inclusion in the other direction:
\[
\settorel{\FenceMd\sc}\seq\hb\seq\rceleven.\eco\seq\hb\seq\settorel{\FenceMd\sc}
  \subseteq
    \settorel{\FenceMd\sc}\seq\hb\seq\rcelevenext.\eco\seq\hb\seq\settorel{\FenceMd\sc}
\]

Exploiting the equality~%
$\rceleven.\eco=\rf\disj(\mo\disj\rb)\seq\refC\rf$
and the inclusion~%
$\settorel{\EventE\setminus\WriteMd\nt}\seq\rfi\subseteq\hb$,
it is then easy to see that
every edge~$\rfi$ in~$\rceleven.\eco$
can be merged into the~$\hb$ edge
that either precedes or succeeds~$\rceleven.\eco$.

\end{proof}

\begin{theorem}[\rcelevenext - Extension-II]
\label{thm:extension:II:rcelevenext:app}
\intelext-consistency conditions are equivalent
to those of \rcelevenext in every execution graph
containing only \intelext events.
\end{theorem}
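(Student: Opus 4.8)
The plan is to prove the two implications separately, after first rewriting every \rcelevenext-relation on a graph all of whose events carry a mode in $\{\tso,\nt,\stf,\sc\}$ (plus the initialization writes). On such a graph the \intelext event sets coincide with the \rcelevenext ones: $\ReadE=\ReadMd\tso$, $\RMWE=\RMWMd\tso$, the memory fences are exactly $\FenceMd\sc$, the store fences are exactly $\FenceMd\stf=\SFenceE$, and the two \intelext write sets match $\WriteMd\tso$ and $\WriteMd\nt$. With this dictionary I would establish, by a line-by-line comparison of disjuncts, the two central inclusions
\[
\ppoasm\subseteq\ppo \qquad\text{and}\qquad \eco\subseteq\transC\ob ,
\]
where $\ppo$ and $\ob=\ppo\disj\rfe\disj\moe\disj\rbe$ are the \intelext relations. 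The first is purely structural. The second, on any graph satisfying \axiomInternal, uses that $\eco$ omits $\rfi$ and that $\moi,\rbi,\rfi\subseteq\po$ there (if, say, $a\moi b$ but $b\po a$, then $\pair b b\in\po\seq\moi$ contradicts \axiomInternal); hence $\moi$ lands in the per-location-writes disjunct of $\ppo$ and $\rbi$ in the $\settorel\ReadE\seq\po$ disjunct, so $\mo,\rb\subseteq\ob$. I also record that initialization events, being $\mo$-minimal and the target of no $\rf$, $\rb$ or $\po$ edge, lie on no cycle and may be ignored throughout.

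\textbf{From \rcelevenext-consistency to \intelext-consistency.} For \axiomInternal I split $\irr{\po\seq(\rfi\disj\moi\disj\rbi)}$ into the three irreflexivities $\irr{\po\seq\rfi}$, $\irr{\po\seq\moi}$, $\irr{\po\seq\rbi}$; the first follows from \axiomNTA, and the latter two from \axiomCoherenceI together with \axiomCoherenceIII, exactly as sketched in \cref{subsubsection:condition:coherence:II}. For \axiomExternal I show $\ob\subseteq\transC{(\ppoasm\disj\eco)}$: the disjuncts $\rfe,\moe,\rbe$ are already inside $\eco$, and $\ppo\subseteq\transC{(\ppoasm\disj\eco)}$ because its first four disjuncts lie in $\ppoasm$ while its per-location write disjunct is absorbed by replacing an edge $\pair{w_1}{w_2}\in\perloc\po$ between writes with $\pair{w_1}{w_2}\in\mo\subseteq\eco$ (the alternative $w_2\mo w_1$ is ruled out since \axiomCoherenceI and \axiomCoherenceIII give $\irr{\po\seq\moi}$). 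Acyclicity of $\ppoasm\disj\eco$ (\axiomCoherenceII) then transfers to $\ob$.

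\textbf{From \intelext-consistency to \rcelevenext-consistency: the routine conditions.} \axiomCoherenceII is immediate from \axiomExternal via $\ppoasm\disj\eco\subseteq\transC\ob$. \axiomNTA follows from \axiomExternal by a contraction argument: in a $\po\disj\rf$ cycle reduce $\rfi$ to $\po$, observe that every maximal $\po$-run begins at a node entered by $\rfe$ — necessarily a read or RMW — and that $\settorel{\ReadE\cup\RMWE}\seq\po\subseteq\ppo$, so contracting each run to a single $\ppo$ edge yields an $\ob$ cycle, a contradiction. \axiomAtomicity and \axiomCoherenceIII are obtained by a case split on whether the offending edges are internal or external: the external sub-case gives an $\ob$ cycle (contradicting \axiomExternal), while the internal sub-case forces, through $\moi,\rbi\subseteq\po$, two opposite $\po$ edges between the same pair of events, contradicting irreflexivity of $\po$.

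\textbf{The main obstacle: \axiomCoherenceI and \axiomSC.} These are C/\cpp-level guarantees with no syntactic counterpart among the \intelext axioms, and they are the crux of the reverse direction. The difficulty is that $\hb$ (through $\porceleven$ and $\sw$) contains program-order edges that \intel genuinely reorders and that therefore lie \emph{outside} $\ob$: a \tso-write followed by a read, a store fence followed by a read, and a \tso-write followed by a non-temporal write to another location. Hence neither $\hb\subseteq\transC\ob$ nor $\psc\subseteq\transC\ob$ holds outright. The key observation is that each such escaping edge is a \emph{same-thread} $\po$ edge whose target is a read or a non-temporal write. I therefore plan to normalize any reflexive $\hb\seq\refC\eco$ edge, and any $\psc$ cycle, so that the cross-thread segments stay inside $\ob$ (using $\ppoasm\disj\eco\subseteq\transC\ob$ and, for $\psc$, the fact that the \sc-fences $\FenceMd\sc$ contribute the full \intel barriers $\settorel{\FenceMd\sc}\seq\po$ and $\po\seq\settorel{\FenceMd\sc}$), whereupon the only way to close the cycle through an escaping edge is to produce a per-location configuration in which a read observes a write that is $\po$-later on its own thread — precisely the pattern forbidden by \axiomInternal. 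Making this precise — in particular the bookkeeping that pushes each internal escaping edge until it collides with the returning $\eco$/coherence edge — is where the real work lies, and it mirrors the original \rceleven argument that TSO-style coherence and \sc-fence ordering follow from per-location coherence plus external acyclicity. With \axiomCoherenceI and \axiomSC secured, assembling the six conditions completes the equivalence.
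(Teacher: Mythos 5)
Your overall architecture coincides with the paper's: set up the dictionary between \intelext and \rcelevenext event sets, establish $\ppoasm\subseteq\ppo$ and $\eco\subseteq\transC\ob$ (using \axiomInternal to orient $\moi$ and $\rbi$ along $\po$ and absorb them into $\ppo$), and then discharge the conditions one by one. The pieces you actually carry out are correct and essentially match the paper's derivations of \axiomInternal, \axiomExternal, \axiomCoherenceII, \axiomCoherenceIII, \axiomNTA and \axiomAtomicity (your use of \axiomCoherenceI rather than \axiomCoherenceII for $\irr{\po\seq\moi}$ is a harmless variation).

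The gap is precisely the part you flag as ``where the real work lies'': \axiomCoherenceI and \axiomSC in the direction from \intelext-consistency are announced as a plan, not proved, and the plan as stated does not yet close. Your ``key observation'' --- that every escaping edge collapses to a per-location pattern forbidden by \axiomInternal --- is not the uniform mechanism: an escaping edge in $\settorel{\WriteMd\tso}\seq\po\seq\settorel\ReadE$ or $\settorel{\FenceMd\stf}\seq\po\seq\settorel\ReadE$ occurring in the \emph{middle} of an $\hb$ chain is absorbed because reads cannot head a $\sw$ edge, so the run continues via $\settorel{\ReadMd\tso}\seq\po\subseteq\ppoasm$; an escaping edge into a $\WriteMd\nt$ at the \emph{end} of a run dies because $\eco$ is per-location and $\nt\not\sqsupseteq\rlx$, so an NT write can neither head a $\sw$ nor return via $\eco$ to a write on a different location; only the residual read-at-the-end case reduces to \axiomInternal (and there one must also normalize $\transC\rf$ inside $\sw$ by replacing $\rfi\seq\rb$ with $\mo$). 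The paper packages all of this into one syntactic decomposition of $\settorel{\codom\eco}\seq\porceleven\seq\refC{(\settorel{\FenceAtLeast\stf}\seq\po)}\seq\settorel{\WriteMd\tso}$ into disjuncts each landing in $\ppoasm$ or in $\moi\subseteq\eco$, thereby reducing any $\hb\seq\refC\eco$ violation to a \axiomCoherenceII violation already derived from \axiomExternal; you need to produce that case analysis (or an equivalent one). For \axiomSC you additionally need $\psc\subseteq\transC\ob$; here you should exploit that in a pure-\intelext graph $\EventMd\sc=\FenceMd\sc=\MFenceE$, so both $\settorel{\FenceMd\sc}\seq\po$ and $\po\seq\settorel{\FenceMd\sc}$ lie in $\ppo$ and $\pscBase$ degenerates, but the $\hb\seq\eco\seq\hb$ component of $\pscFence$ and the $\neqloc\po\seq\hb\seq\neqloc\po$ and $\perloc\hb$ components of $\scb$ still require the same $\hb$ decomposition as \axiomCoherenceI. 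Until those two derivations are written out, the stated equivalence is not established.
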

\begin{proof}
The proof is split into two parts.
First we prove that \intelext-consistency implies \rcelevenext-consistency,
then we prove the converse:
\begin{enumerate}
\item %
\intelext-consistency~$\implies$ \rcelevenext-consistency.
\\

It is thus sufficient to prove that following conditions hold:
\begin{itemize}
\item \axiomCoherenceI. \\
The proof follows by contradiction.
We prove that,
in the absence of \rceleven events,
the violation of \axiomCoherenceI leads
to the violation of \intelext-consistency
or to the violation of \axiomCoherenceII
(which we show to hold in the next item):
\[\begin{array}{@{}r@{\;}l@{}}
  &\notirr{\hb\seq\refC\eco}
\\
  &\implies\;\,
    \notirr{\hb}
      \;\lor\;
    \notirr{\hb\seq\eco}
\\
  &\hrulefill
\\[3mm]
  &\notirr{\hb}
\\
  &\implies\;\,
    \cyc{
      \underbrace{\settorel{\RMWMd\tso\cup\ReadMd\tso}\seq\po}_{\subseteq\;\ppointel}
      \disj\rfe
    }
\\
  &\implies\;\,
    \cyc{\ppointel\disj\rfe\disj\moe\disj\rbe}
\\
  &\hrulefill
\\[3mm]
  &\notirr{\hb\seq\eco}
\\
  &\implies\;\,
    \notirr*{
      \begin{array}{@{}l@{}}
        \overbrace{
          \settorel{\codom\eco}\seq
          \porceleven\seq
           \refC{(
             \settorel{\FenceAtLeast\stf}\seq
             \po
           )}\seq
           \settorel{\WriteMd\tso}
         }^{\begin{array}{@{}l@{}}
        \subseteq\;\left(\begin{array}{@{}l@{}}
          \settorel{\RMWMd\tso\cup\ReadMd\tso}\seq\po               \;\cup\\
          \settorel{\WriteMd\tso}\seq
            \po\seq\settorel{\EventE\setminus\ReadE\setminus\WriteMd\nt}\;\cup\\
          \settorel{\WriteMd\nt}\seq
            \po\seq
            \settorel{\RMWMd\tso\cup\FenceAtLeast\stf}\seq
            \refC\po                                                    \;\cup\\
          \settorel{\WriteMd\nt}\seq\perloc\po
        \end{array}\right)
        \end{array}}
        \\
        \transC{(\rfe\seq\refC\po)}\seq
        \eco
      \end{array}}
\\
  &\implies\;\,
    \notirr{
      \ppoasm\seq
      \transC{(
        \rfe\seq\refC{(
          \underbrace{\settorel{\ReadMd\tso\cup\RMWMd\tso}\seq\po}_{
            \subseteq\;\ppoasm})}
      )}\seq
      \eco}
\\
  &\implies\;\,
    \cyc{\ppoasm\cup\eco}
\end{array}\]

\item \axiomCoherenceII. \\
It suffices to exploit the inclusion
$
  (\moi\disj\rbi)
    \subseteq
  \perloc\po\seq
  \settorel{\EventE\setminus\ReadE}
    \subseteq
  \ppointel
$
to show that the violation of \axiomCoherenceII leads
to the violation of \axiomExternal:
\[\begin{array}{@{~~~~~~~~}r@{\;}c@{\;}l@{}}
\cyc{\ppoasm\disj\eco}
  &\implies&
    \cyc{\ppoasm\disj\rfe\disj\mo\disj\rb}
\\
  &\implies&
    \cyc{
      (\underbrace{\ppoasm\disj\moi\disj\rbi}_{\subseteq\ppointel})
        \disj
       \rfe
        \disj
       \moe
        \disj
       \rbe
      }
\\
  &\implies&
    \cyc{\ppointel\disj\rfe\disj\moe\disj\rbe}
\end{array}\]

\item \axiomCoherenceIII. (Immediate from \axiomInternal.)

\item \axiomNTA. \\
We show that the negation of
\axiomNTA leads to a contradiction with \axiomExternal:
\[\begin{array}{@{}r@{\;}c@{\;}l@{}}
\cyc{\po\disj\rf}
  &\implies&
    \cyc{(\underbrace{\po\disj\rfi}_{\subseteq\;\po})\disj\rfe} \\
  &\implies&
    \cyc{(\underbrace{\settorel{\ReadMd\tso\disj\RMWMd\tso}\seq\po}_{\subseteq\,\ppointel})\disj\rfe} \\
  &\implies&
    \cyc{\ppointel\disj\rfe\disj\moe\disj\rbe}
\end{array}\]
\end{itemize}

\item %
\rcelevenext-consistency~$\implies$ \intelext-consistency. \\
Condition~\axiomExternal is an immediate consequence of~\axiomCoherenceII.
The proof of Condition~\axiomInternal is split into the three following
subconditions:
\begin{itemize}
\item $\irr{\po\seq\rfi}$. (Immediate from \axiomNTA.)

\item $\irr{\po\seq\moi}$. \\
The violation of this condition leads to a contradiction with
\axiomCoherenceII:
\[\begin{array}{@{}r@{\;}c@{\;\,}l@{}}
\notirr{\po\seq\moi}
&\implies&
  \notirr{
    (\underbrace{
       \perloc\po\seq
       \settorel{\EventE\setminus\ReadE}
     }_{\subseteq\;\ppoasm}
    )\seq
    \moi
  }
\\
&\implies&
  \notirr{\ppoasm\seq\eco}
\\
&\implies&
  \cyc{\ppoasm\disj\eco}
\end{array}\]
\item $\irr{\po\seq\rbi}$. (Immediate from \axiomCoherenceIII.)
\end{itemize}
\end{enumerate}
\end{proof}

\section{Compilation}

\subsection{Compilation Schemes}
\label{subsection:compilation:scheme:app}

We present two compilation schemes
from \rcelevenextlang to \intelextlang:
one that naturally extends the scheme studied by~\citet[Fig. 8]{rc11},
and a slightly more elaborated one that
maps \rlx writes to non-temporal stores,
and adds store fences
to the mapping of \rel/\sc writes and
to the mapping of \rel/\acqrel fences.
We prove that these schemes are correct (with respect to \rcelevenext)
in \sref{subsection:compilation:correctness:app}.

\begin{definition}[Compilation Scheme from \rcelevenextlang to \intelextlang]
\label{def:scheme:rcelevenext:app}
\[\begin{array}{@{}c@{}}
\begin{array}{@{}c@{~~}c@{~~}c@{}}
\begin{array}{@{}r@{}c@{\;}l@{}}
\compile{\WritePL\sc\exprA\exprB}
  &\eqdef&
    \Seq{\WritePL{}\exprA\exprB}{\MFencePL}
\\
\compile{\WritePL{\neq\sc}\exprA\exprB}
  &\eqdef&
    \WritePL{}\exprA\exprB
\\
\compile{\ReadPL\md\reg\expr}
  &\eqdef&
    \ReadPL{}\reg\expr
\end{array}
&
\begin{array}{@{}r@{}c@{\;}l@{}}
\compile{\FencePL\sc}
  &\eqdef&
    \MFencePL
\\
\compile{\FencePL{\neq\sc}}
  &\eqdef&
    \Skip
\\
\compile{\IfThen\expr\cmd}
  &\eqdef&
    \IfThen\expr{\compile\cmd}
\end{array}
&
\begin{array}{@{}r@{}c@{\;}l@{}}
\compile{\Seq\cmdA\cmdB}
  &\eqdef&
    \Seq{\compile\cmdA}{\compile\cmdB}
\\
\compile{\Skip}
  &\eqdef&
    \Skip
\\
\compile{\Asm[\cmd]}
  &\eqdef&
    \cmd
\end{array}
\end{array}
\\
\begin{array}{@{}c@{~~~~}c@{}}
\compile{\RMWPL\md\reg\exprOne\exprTwo\exprThree}
  \eqdef
    \RMWPL{}\reg\exprOne\exprTwo\exprThree
&
\compile{\While\expr\cmd}
  \eqdef
    \While\expr{\compile\cmd}
\end{array}
\end{array}\]
\end{definition}
\begin{definition}[Alternative Compilation Scheme]
\label{def:alt:scheme:rcelevenext:app}
Same as Def.~\ref{def:scheme:rcelevenext:app} except for the following cases:
\[\begin{array}{cc}
\begin{array}{r@{}c@{\;}l}
\compilealt{\WritePL{\sc}\exprA\exprB}
  &\eqdef&
    \Seq\SFencePL{
    \Seq{\WritePL{}\exprA\exprB}
        \MFencePL
    }
\\
\compilealt{\WritePL{\rel}\exprA\exprB}
  &\eqdef&
    \Seq\SFencePL{\WritePL{}\exprA\exprB}
\end{array}
&
\begin{array}{r@{}c@{\;}l}
\compilealt{\WritePL\rlx\exprA\exprB}
  &\eqdef&
    \NTWritePL\exprA\exprB
\\
\compilealt{\FencePL{\rel,\,\acqrel}}
  &\eqdef&
    \SFencePL
\end{array}
\end{array}\]
\end{definition}

\subsection{Mixed Execution Graphs}
\label{subsection:mixed:execution:graphs:app}

Our proofs of compilation correctness
(\cref{subsection:compilation:correctness:app})
rely on the novel notion of \textit{mixed execution graphs},
a type of execution graph whose nodes
contain events from both the source-level and target-level models.
Before presenting our compilation-correctness results,
let us give a brief introduction to mixed execution graphs.

Informally speaking,
a mixed execution graph is the
superposition of two execution graphs:
one called \textit{source graph},
which is associated with a source program~$\prog$;
and
one called \textit{target graph},
which is associated with the compilation of~$\prog$.
The key feature of a mixed execution graph is that
it captures the fact that source and target graphs
share the same overall structure.
Indeed,
because a compilation scheme preserves the
control flow of the source program and
changes only how memory operations are mapped
to operations in the target language,
for every execution graph of the compiled program,
one can always construct an execution graph of the
source program that preserves much of the structure
of the target graph,
including its primitive relations~$\po$,~$\rf$, and~$\mo$.
The only mismatches between these graphs come from how
one memory operation from the source language might be
mapped to
zero, one, or multiple memory operations from the target language.

To account for these mismatches,
nodes in a mixed graph,
called \textit{mixed nodes},
carry events from both source and target models.
Events from the two models however cannot be
arbitrarily assembled in a mixed node:
the source-level events in a mixed node correspond
to the events of a single source instruction
and the target-level events correspond to the
events emitted by the snippet of target-level language
produced by the mapping of this instruction.
Therefore,
the range of mixed nodes is fixed and
determined by the underlying compilation scheme.

Mixed graphs form a very convenient tool for proving
compilation-correctness results
because they allow
one to work with the execution graphs from both the source
program and its compiled version at the same time, and
because they allow one to forget about the compilation
scheme which is ultimately encoded in the set of permissible
mixed nodes.
Moreover, it is possible to lift the consistency conditions from
the models of source and target languages to this mixed-graph structure.
Both models can thus be defined on the same structure,
thereby allowing one to formally reason about statements of the kind
``\textit{one model is stronger than the other}''.
In fact, the main convenience of mixed execution graphs is
precisely to allow one to formulate the compilation correctness result
as a statement in this fashion:
``\textit{in a mixed execution graph with nodes taken from a well-chosen set,
if the consistency conditions of the target model hold,
then so do the consistency conditions of the source model}''.
The set of nodes has to be well chosen so as to correctly reflect
the compilation scheme begin considered.

To give an illustration of mixed execution graphs,
let us consider \rcelevenext as the source model,
\intelext as the target model,
and~$\compile\_$ (\defref{def:scheme:rcelevenext:app})
as the compilation scheme.

\begin{figure}[t]

\raggedright

\begin{mathpar}
    \inferrule[\nameWWMF]{}{\tikz{\mnodeWWMF}}
\and
    \inferrule[\nameWW]{}{\tikz{\mnodeWW[][][\rc/\asm]}}
\and
    \inferrule[\nameWNTW]{}{\tikz{\mnodeWNTW}}
\\
    \inferrule[\nameRMWRMWS]{}{\tikz{\mnodeRMWRMWS[][][\rc/\asm]}}
\and
    \inferrule[\nameRMWRMWF]{}{\tikz{\mnodeRMWRMWF[][][\rc/\asm]}}
\\
    \inferrule[\nameFMF]{}{\tikz{\mnodeFMF}}
\and
    \inferrule[\nameFSF]{}{\tikz{\mnodeFSF}}
\and
    \inferrule[\nameFB]{}{\tikz{\mnodeFB}}
\and
    \inferrule[\nameRR]{}{\tikz{\mnodeRR[][][\rc/\asm]}}
\end{mathpar}

\caption{
Set of mixed nodes reflecting the
compilation scheme from
\defref{def:scheme:rcelevenext:app}.
}
\Description{}
\label{fig:mnodes:intelext:app}
\end{figure}

\fref{fig:mnodes:intelext:app}
shows our choice for the set of permissible mixed nodes.
The nodes are depicted as domino-shaped boxes
where the left component stores \rcelevenext events
and the right component stores \intelext events.
We use the symbol~$\bot$ to denote an empty set of events
(in addition to its meaning as the \textit{none} element
of an option type).
It is easy to see how this definition mimics
the compilation scheme from~\defref{def:scheme:rcelevenext:app}.
Indeed, Node~\nodeRR
reflects how read instructions
are compiled to plain reads.
Node~\nodeWWMF
reflects how a \sc write
is compiled to a plain write followed by a memory fence.
Moreover, node~\nodeFB
reflects how fences weaker than \stf
are erased by the compilation scheme.
Nodes~\nodeWNTW and \nodeFSF
reflects the compilation of inline-assembly instructions.
Finally, nodes~\nodeRMWRMWS and~\nodeRMWRMWF
reflect the compilation of read-modify-writes.

To see an example of a mixed execution graph
constructed with these nodes,
consider the following program:
\[
\prog\eqdef
\left(\begin{array}{l@{~~~}||@{~~~}l}
\begin{array}{@{}l@{}}
\WritePL\sc\locx{1}\scol\\
\ReadPL\sc\reg\locy
\end{array}
&
\begin{array}{@{}l@{}}
\WritePL\sc\locy{1}\scol\\
\ReadPL\sc\reg\locx
\end{array}
\end{array}\right)
\]

This program implements the \textit{store-buffering litmus test} (SB).
SB is one of the simplest demonstrations of
non-sequentially consistent behaviors:
it would happen if both read instructions returned
the value~$0$.
The \rcelevenextlang program~$\prog$, however,
exhibits only sequentially consistent behaviors
because the access mode of all memory instructions is \sc.
The following mixed execution graph allows us to see
simultaneously
how \rcelevenext rules out SB in~$\prog$ and
how \intelext rules out SB in the compilation of~$\prog$,
the program~$\compile\prog$:
\begin{center}
\includegraphics{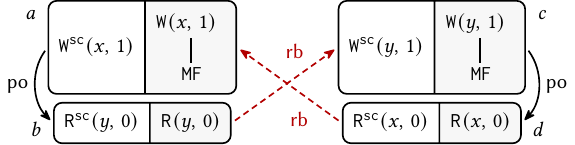}
\end{center}

To show that SB is ruled out (in both source and compiled programs),
we must show that both \rcelevenext and \intelext graphs are inconsistent.
Indeed,
both graphs are inconsistent because of the cycle~%
$\tuple\evA{\evB,\evC,\evD}$.
In the \intelext graph, this cycle violates Condition~\axiomExternal.
In the \rcelevenext graph, this cycle violates Condition~\axiomSC.
Here is a summary of the technical arguments sustaining these claims:
\[\begin{array}{@{}c@{}c}
\begin{array}{@{}l}
\text{Cycle in}\;\psc\,\text{:}
\\[2pt]
\begin{array}{@{\;}l}
  \bullet\;\pair\evA\evB,\,\pair\evC\evD\in
    {\REventMd\sc}\seq{\po\seq{\REventMd\sc}}
    \subseteq\pscBase
    \subseteq\psc
  \\[2pt]
  \bullet\;\pair\evB\evC,\,\pair\evD\evA\in
    {\REventMd\sc}\seq{\rb\seq{\REventMd\sc}}
    \subseteq\pscBase
    \subseteq\psc
\end{array}
\end{array}
&
\begin{array}{@{}l}
\text{Cycle in}\;\relExternalDef\,\text{:}
\\[2pt]
\begin{array}{@{\;}l}
  \bullet\;\pair\evA\evB,\,\pair\evC\evD\in\ppointel\\[2pt]
  \bullet\;\pair\evB\evC,\,\pair\evD\evA\in\rbe
\end{array}
\end{array}
\end{array}\]

At first glance,
the outlined arguments might seem
informal because the relations so specified
apply only to events of a specific model,
not to events of the mixed graph.
However,
we show that these arguments can be made valid:
in essence, it suffices to lift the relations from
source and target models to the structure of mixed graphs.
That is, when working with mixed graphs,
we manipulate custom versions of these relations
defined as relations on mixed nodes.

The following definition formally introduces mixed execution graphs
and its custom version of the relations~$\rf$ and~$\mo$:

\begin{definition}[Mixed Execution Graph]%
\label{def:mgraph:rcelevenext:intelext:app}
A~\emph{mixed execution graph} is a graph where every node,
called a~\emph{mixed node},
is a pair of a set of \rcelevenext events
and a set of \intelext events.
\fref{fig:mnodes:intelext:app}
depicts the set of mixed nodes allowed in a mixed graph.
The two basic relations are~$\rf$ and~$\mo$:
\begin{enumerate}
\item \emph{Reads-from ($\rf$).}
The reads-from relation is a surjective and functional
relation with domain and codomain specified as follows:

\includegraphics{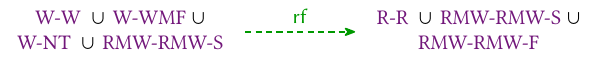}

\item \emph{Modification-order ($\mo$).}
The modification-order has domain and codomain specified as follows:

\includegraphics{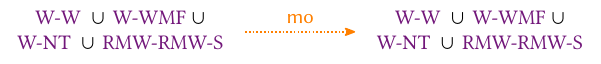}

\end{enumerate}
We introduce the following sets of mixed nodes:
\[\begin{array}{@{}l@{\hspace{2cm}}r@{}}
\begin{array}{@{}r@{\;}c@{\;\;}l@{}}
\WriteE
  &\eqdef&
    \textnormal\nodeWW    \,\cup\,
    \textnormal\nodeWWMF  \,\cup\,
    \textnormal\nodeWNTW
\\
\RMWExtE[]
  &\eqdef&
    \textnormal\nodeRMWRMWS \,\cup\,
    \textnormal\nodeRMWRMWF
\\
\ReadExtE[]
  &\eqdef&
    \textnormal\nodeRR
\\
\FenceExtE[]
  &\eqdef&
    \textnormal\nodeFMF   \,\cup\,
    \textnormal\nodeFSF   \,\cup\,
    \textnormal\nodeFB
\end{array}
&
\begin{array}{@{}r@{\;}c@{\;\;}l@{}}
\NTWriteExtE[]
  &\eqdef&
    \textnormal\nodeWNTW
\\
\SFenceExtE[]
  &\eqdef&
    \textnormal\nodeFSF
\\
\MFenceExtE[]
  &\eqdef&
    \textnormal\nodeFMF
\\
  &&
\end{array}
\end{array}\]
When applicable,
we annotate sets of mixed nodes with superscripts of the form
``$\sqsupseteq\md$'' (and variations of it)
to specify the range of access modes
of the \rceleven events in the left component of mixed nodes.
\end{definition}

Naturally,
reasoning at the level of mixed graphs and its corresponding
version of the relations~$\rf$ and~$\mo$ leads to
facts about mixed graphs and nodes;
to extract a result about the source and target models,
we provide a theorem that allows one to transfer
results between these structures.
For example, we prove that,
if the consistency conditions of \rcelevenext hold of a mixed graph,
then it also holds of the source graph.

Before we introduce this theorem,
let us clarify the notions of
\textit{source graph} and \textit{target graph}.
These concepts are not yet well defined,
because, given a mixed graph,
we have not explained how they can be obtained.
The missing piece of information is the notion of
\textit{source} and \textit{target projections}:
given a mixed graph~$\graphM$,
its source projection~$\lproj\graphM$
and target projection~$\rproj\graphM$
correspond to the source and target graphs
whose superposition is~$\graphM$.

\begin{figure}[t]
\begin{center}
\includegraphics{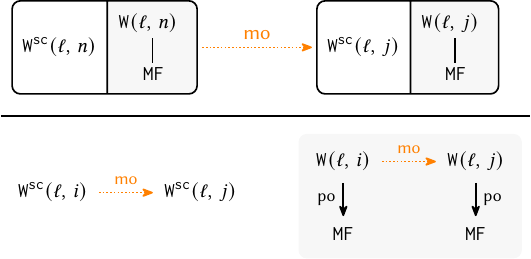}\end{center}

\vspace{10pt}

\begin{mathpar}
\includegraphics{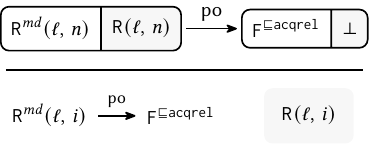}
\and
\includegraphics{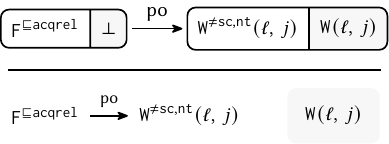}\end{mathpar}

\caption{Selection of projection rules.}
\Description{}
\label{fig:proj:intelext:app}
\end{figure}

\begin{definition}[Projections]
\label{def:graph:proj:app}
Let~$\graphM$ be a mixed execution graph.
The \emph{source} and \emph{target projections} of~$\graphM$,
noted~$\lproj\graphM$ and~$\rproj\graphM$,
are \rcelevenext and \intelext execution graphs.
The nodes of~$\lproj\graphM$ and~$\rproj\graphM$ correspond
to the first and second parts of~$\graphM$ nodes.
The edges of~$\lproj\graphM$ and~$\rproj\graphM$ 
are constructed through \emph{projection rules}.
A projection rule formalizes the correspondence
between edges in~$\graphM$
(appearing at the top of the rule)
and
the edges in~$\lproj\graphM$ and~$\rproj\graphM$
(appearing respectively at the bottom left
and at the bottom right of the rule).
\fref{fig:proj:intelext:app}
shows a selection of the projection rules.
\end{definition}

\begin{definition}[Mixed Execution Graph - \intelext-Consistency]
\label{def:mixed:intelext:consistent:app}
A mixed execution graph
$\tuple\graphM{\rf,\mo}$ is \emph{\intelext-consistent}
if the conditions from \defref{def:intelext:consistent:app} hold
when the sets and relations to their corresponding mixed-graph versions
as introduced in \defref{def:mgraph:rcelevenext:intelext:app},
and the~$\ppo$ relation with the following one:
\[\begin{array}{@{}r@{\;}c@{\;\;}l@{}}
\ppo
  &\eqdef&
    \settorel{\EventE\setminus\nodeFB}\seq
    \po\seq
    \settorel{\RMWE\cup\MFence\cup\SFenceE}
\\
  &\;\cup&
    \settorel{\ReadE\cup\RMWE\cup\MFence}\seq
    \po\seq
    \settorel{\EventE\setminus\nodeFB}
\\
  &\;\cup&
    \settorel{\EventE\setminus\nodeFB}\seq
    \refC\po\seq
    \settorel{\nodeWWMF}\seq
    \po\seq
    \settorel{\EventE\setminus\nodeFB}
\\
  &\;\cup&
    \settorel{\SFenceE}\seq\po\seq\settorel{\EventE\setminus\nodeFB\setminus\ReadE}
\\
  &\;\cup&
    \settorel{\WriteMd{\neq\nt}}\seq\po\seq\settorel{\WriteMd{\neq\nt}}
\\
  &\;\cup&
    \settorel{\WriteE}\seq
    \perloc\po\seq
    \settorel{\WriteE}
\end{array}\]
\end{definition}

\begin{definition}[Mixed Execution Graph - \rceleven-Consistency]
\label{def:mixed:rcelevenext:consistent:app}
A mixed execution graph
$\tuple\graphM{\rf,\mo}$ is \emph{\rcelevenext-consistent}
if the conditions from \defref{def:rcelevenext:consistent:app} hold
when we replace the sets and relations to their corresponding mixed-graph versions
as introduced in \defref{def:mgraph:rcelevenext:intelext:app}.
\end{definition}

\begin{theorem}[Transfer Principle]
\label{thm:transfer:app}
Let~$\tuple\graphM{\rf,\mo}$ be a mixed execution graph.
The consistency conditions from
\defref{def:mixed:rcelevenext:consistent:app}
hold of~$\graphM$ if, and only if,
the \rcelevenext-consistency conditions
(\defref{def:rcelevenext:consistent:app})
hold of~$\lproj\graphM$.
Analogously,
the consistency conditions from
\defref{def:mixed:intelext:consistent:app}
hold of~$\graphM$ if, and only if,
the \intelext-consistency conditions
(\defref{def:intelext:consistent:app})
hold of~$\rproj\graphM$.
\end{theorem}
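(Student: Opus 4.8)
The plan is to prove both biconditionals by showing that every mixed-graph relation occurring in a consistency condition projects faithfully onto the corresponding relation of the source (resp.\ target) graph, so that the acyclicity and irreflexivity conditions transfer verbatim. First I would establish the base case: using the projection rules of \fref{fig:proj:intelext:app} together with \defref{def:mgraph:rcelevenext:intelext:app}, show that the edges of $\lproj{\graphM}$ are exactly those obtained by reading $\po$, $\rf$, $\mo$ off the left components of mixed nodes, and that $\rproj{\graphM}$ likewise carries the $\po$, $\rf$, $\mo$ obtained from the right components. Each mixed-graph set ($\WriteE$, $\ReadE$, $\RMWE$, $\FenceE$, and the mode-annotated variants) is checked against its projected counterpart in the same way.

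For the source direction the decisive observation is that \emph{every} mixed node of \fref{fig:mnodes:intelext:app} carries exactly one source event. Hence the projection to source events is a bijection between mixed nodes and events of $\lproj{\graphM}$, under which the mixed versions of $\WriteE$, $\ReadE$, $\RMWMd\tso$, $\FenceAtLeast\stf$, $\WriteMd\nt$ and of $\po$, $\rf$, $\mo$ coincide with the source ones. Since $\porceleven$, $\hb$, $\eco$, $\scb$, $\psc$, $\rb$ and $\ppoasm$ are built from these primitives by identical formulas in \defref{def:mixed:rcelevenext:consistent:app} and \defref{def:rcelevenext:consistent:app}, each derived relation is preserved under the bijection, and therefore each of \axiomCoherenceI, \axiomCoherenceII, \axiomCoherenceIII, \axiomAtomicity, \axiomSC, \axiomNTA holds of $\graphM$ iff it holds of $\lproj{\graphM}$. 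This direction is essentially a relabeling and needs no cycle surgery.

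For the target direction the projection is not a bijection: node $\nodeWWMF$ contributes two target events---a plain write followed by a memory fence joined by an internal $\po$ edge---while node $\nodeFB$ contributes none. I would therefore argue at the level of cycles. One first shows that the target projection sends mixed $\rf$ and $\mo$ onto target $\rf$ and $\mo$ (the extra $\MFence$ of a $\nodeWWMF$ and the empty $\nodeFB$ are never $\rf$/$\mo$ endpoints) and sends mixed $\po$ onto target $\po$ once each $\nodeWWMF$ is expanded into its write-then-fence chain. The heart of the argument is that the custom mixed $\ppo$ of \defref{def:mixed:intelext:consistent:app} projects exactly onto the target $\ppo$ of \defref{def:intelext:consistent:app}: the exclusions $\EventE\setminus\nodeFB$ in the relevant domains and codomains mirror the fact that an erased fence carries no target event and so originates and terminates no target $\ppo$ edge, while the dedicated clause $\settorel{\EventE\setminus\nodeFB}\seq\refC\po\seq\settorel{\nodeWWMF}\seq\po\seq\settorel{\EventE\setminus\nodeFB}$ reconstructs precisely those target $\ppo$ edges that factor through the memory fence hidden inside a $\nodeWWMF$ node. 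Granting this, a cycle in the mixed $\ob = \ppo\disj\rfe\disj\moe\disj\rbe$ matches a cycle in the target $\ob$, and the internal condition transfers in the same manner, yielding the second equivalence.

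The main obstacle is this last target computation. I would settle it by a \emph{path-surgery} argument: given any $\ob$-cycle in $\rproj{\graphM}$, contract each consecutive (write, $\MFence$) pair back into its originating $\nodeWWMF$ node and reinstate every deleted $\nodeFB$, verifying clause by clause that the contracted edges land in the mixed $\ppo$---in particular that an edge entering the write and one leaving the fence fuse into the dedicated $\nodeWWMF$ clause---and conversely expand any mixed $\ob$-cycle and check it remains an $\ob$-cycle in the target. The delicate points are the orientation of the internal $\po$ edge inside $\nodeWWMF$ (incoming edges must attach at the write and outgoing edges at the fence, or else be absorbed by the dedicated clause) and the guarantee that no spurious cycle is created or destroyed when $\nodeFB$ nodes vanish; both are discharged by the way \defref{def:mixed:intelext:consistent:app} was engineered to exclude $\nodeFB$ and to special-case $\nodeWWMF$.
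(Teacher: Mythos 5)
Your proposal is correct and follows essentially the same route as the paper's proof: a one-to-one relabeling argument for the source projection, and, for the target projection, an edge-level correspondence between the mixed $\ppo$ and the target $\ppo$ that isolates exactly the two problematic node types $\nodeFB$ and $\nodeWWMF$ and reasons about $\transC\ppo$ edges occurring in violating cycles. The only nitpick is that a mixed $\ppo$ edge passing through a $\nodeWWMF$ node projects to a \emph{composition} of two target $\ppo$ edges (an edge of $\transC\ppo$) rather than a single $\ppo$ edge, but your path-surgery step already accounts for this by contracting consecutive write--fence pairs.
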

\begin{proof}
The first claim,
that~$\graphM$ is \rcelevenext-consistent
iff~$\lproj\graphM$ is \rcelevenext-consistent,
is easy to see because every mixed node carries exactly
one \rcelevenext event.
Therefore, the projected nodes and relations
can be related by a one-to-one correspondence.
Since the consistency conditions
from~\defref{def:mixed:rcelevenext:consistent:app}
are essentially the same as those from \rcelevenext
(\defref{def:rcelevenext:consistent:app}),
this observation is sufficient to establish this claim.

The second claim,
that~$\graphM$ is \intelext-consistent
iff~$\rproj\graphM$ is \intelext-consistent,
is slightly more intricate to prove than the previous one
because of the nodes \nodeFB and \nodeWWMF,
which do not have a one-to-one correspondence
with the projected \intelext events;
and because of the differences between
the two versions of~$\ppo$ from
Definitions
\ref{def:intelext:consistent:app}
and
\ref{def:mixed:intelext:consistent:app}.
That the \intelext-consistency of~$\rproj\graphM$
implies that of~$\graphM$
follows from the fact that
the projection of every
edge in~$\graphM.\ppo$ is an edge in~$(\rproj\graphM).\ppo$.
Indeed, it is easy to see that the two
problematic types of nodes, \nodeFB and \nodeWWMF,
are correctly handled by the definition of~$\graphM.\ppo$:
nodes of type \nodeFB are excluded from~$\graphM.\ppo$,
and
nodes of type \nodeWWMF always have a trailing~$\po$
edge so that a memory fence is always between the
two endpoints of the resulting projected edge.
This concludes one direction of the logical equivalence.
To prove the converse,
that the \intelext-consistency of~$\graphM$
implies that of~$\rproj\graphM$,
it suffices to show that every $(\rproj\graphM).\transC\ppo$
edge in a cycle that violates \axiomExternal
is the projection of a $\graphM.\transC\ppo$ edge.
This condition can be easily checked;
the only non-trivial case is
when there is a memory-fence event between the endpoints of
a $(\rproj\graphM).\transC\ppo$ edge,
because this memory fence could be the projection
of either a $\nodeFMF$ node or a $\nodeWWMF$ node.
In both cases,
it is easy to see that the $(\rproj\graphM).\transC\ppo$
edge is the projection of an edge of type
$\graphM.(
\settorel{\EventE\setminus\nodeFB}\seq
\refC\po\seq
\settorel{\MFenceE\cup\nodeWWMF}\seq
\po\seq
\settorel{\EventE\setminus\nodeFB}
)$,
which is included in~$\graphM.\transC\ppo$.
\end{proof}

To conclude this discussion,
we introduce the notion of \textit{graph simulation},
an auxiliary concept for our upcoming compilation-correctness
proofs (\sref{subsection:compilation:correctness:app}):
\begin{definition}[Graph Simulation]
\label{def:gsim:app}
A \rcelevenext graph~$\graphA$ is \emph{simulated by}
a \intelext graph~$\graphB$,
noted~$\graphA\gsim\graphB$,
if there exists a mixed execution graph~$\graphM$
such that~$\lproj\graphM = \graphA$ and~$\rproj\graphM = \graphB$.
\end{definition}

\subsection{Compilation Correctness}
\label{subsection:compilation:correctness:app}

We state and prove correctness of the
compilation schemes from
Definitions~\ref{def:scheme:rcelevenext:app}
and~\ref{def:alt:scheme:rcelevenext:app}.
The statement of correctness is straightforward:

\begin{theorem}[Correctness of \defref{def:scheme:rcelevenext:app}]
\label{thm:correctness:rcelevenext:app}
For every program~$\prog$,
the set of behaviors of~$\compile\prog$
defined by \intelext
is included in the set of behaviors of~$\prog$
defined by \rcelevenext:
\[
\forall\prog.\;
\interp[\intelext]{\compile\prog}
\subseteq
\interp[\rcelevenext]{\prog}
\]
\end{theorem}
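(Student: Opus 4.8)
The plan is to follow the two-step strategy outlined in \sref{subsection:proof:sketch}, routing the entire argument through mixed execution graphs and the Transfer Principle (\thmref{thm:transfer:app}). First I would unfold the two semantics: a state $\store \in \interp[\intelext]{\compile\prog}$ arises from some execution graph $\graphB$ associated with $\compile\prog$ that is \intelext-consistent and satisfies $\store = \finalSt(\graphB)$. The goal is to produce an \rcelevenext-consistent graph $\graphA$ associated with $\prog$ with $\finalSt(\graphA) = \store$, which by \defref{def:rcelevenextlang:semantics:app} witnesses $\store \in \interp[\rcelevenext]\prog$.

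The first step constructs $\graphA$ together with a witnessing mixed graph $\graphM$ such that $\lproj\graphM = \graphA$, $\rproj\graphM = \graphB$, and hence $\graphA \gsim \graphB$ (\defref{def:gsim:app}). This proceeds by induction over the pool-reduction derivation of $\graphB$: because $\compile\prog$ preserves the control flow of $\prog$ and rewrites only memory instructions into the snippets recorded by the mixed nodes of \fref{fig:mnodes:intelext:app}, each reduction step of $\compile\prog$ can be matched by a reduction step (or a short sequence of steps) of $\prog$, assembling the corresponding mixed node. The bookkeeping here is routine but tedious; the key invariant is that $\graphM$ only ever uses nodes from the permitted set, so that the projections are well-defined \rcelevenext and \intelext graphs sharing the primitive relations $\po$, $\rf$, and $\mo$.

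The second and substantive step is to show that \intelext-consistency of $\graphM$ implies its \rcelevenext-consistency; the two applications of \thmref{thm:transfer:app} then bridge between $\graphB$ and $\graphM$ and between $\graphM$ and $\graphA$. I would discharge each \rcelevenext-consistency condition by exhibiting, from any offending source-level cycle, a corresponding \intelext cycle violating \axiomExternal or \axiomInternal. Several conditions are immediate or mirror \thmref{thm:extension:II:rcelevenext:app}: \axiomAtomicity transfers because $\rb$ and $\mo$ are shared; \axiomNTA follows from the $\irr{\po \seq \rfi}$ clause of \axiomInternal together with \axiomExternal; \axiomCoherenceIII follows from \axiomInternal; and \axiomCoherenceI reduces, after splitting $\eco$ through its $\rfi$/$\rfe$ components, to \axiomExternal. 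The two genuinely delicate conditions are \axiomCoherenceII and \axiomSC. For \axiomCoherenceII I would verify that every $\ppoasm$ edge of the mixed graph projects into a $\ppo$ edge of $\rproj\graphM$, so a $\ppoasm \disj \eco$ cycle becomes an $\ob$ cycle; the internal $\moi$/$\rbi$ edges absorbed into $\eco$ are handled using the per-location clauses of $\ppo$ exactly as in \cref{subsubsection:condition:coherence:II}.

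The hard part will be \axiomSC, and it is here that the \textit{fence-after-\sc-write} design of \defref{def:scheme:rcelevenext:app} is essential. I must show that a $\psc$ cycle in $\lproj\graphM$ yields an $\ob$ cycle in $\rproj\graphM$, which requires tracking how each \sc event and \sc fence of the source is realised after compilation: an \sc write compiles to the \nodeWWMF node, whose trailing memory fence supplies the $\ppo$ ordering needed to close the cycle on the target side, while \sc reads and \sc fences compile to plain reads and memory fences that likewise furnish the requisite $\ppo$ edges. Unpacking the definitions of $\pscBase$ and $\pscFence$ — each built from $\scb$, $\hb$, and $\eco$ fragments flanked by \sc events or \sc fences — and matching every fragment to a target-level $\ppo \disj \rfe \disj \moe \disj \rbe$ path is the most intricate bookkeeping in the proof. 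Once $\graphA$ is shown \rcelevenext-consistent, it remains only to check $\finalSt(\graphA) = \finalSt(\graphB)$: since write and successful-RMW nodes are in one-to-one correspondence across the projections and $\mo$ is literally shared by $\graphM$, the $\mo$-maximal write on each location is the same in both graphs, so the two final states coincide and equal $\store$. Finally, the same skeleton applies to the alternative scheme $\compilealt\_$ of \defref{def:alt:scheme:rcelevenext:app}, using the correspondingly enlarged set of mixed nodes featuring non-temporal stores and store fences.
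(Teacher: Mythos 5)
Your proposal follows the paper's own two-step strategy essentially verbatim: construct the simulating source graph and witnessing mixed graph by induction on the pool reduction of $\compile\prog$, then prove that \intelext-consistency of the mixed graph implies its \rcelevenext-consistency (the paper's Lemma~\ref{lemma:rcelevenext:weaker:intelext}) and transfer back to the projections via Theorem~\ref{thm:transfer:app}, with the same per-condition case analysis and the same identification of \axiomSC (via $\psc\subseteq\transC\ob$) and \axiomCoherenceII as the delicate cases. The argument is correct and matches the paper's proof in structure and substance.
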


\begin{theorem}[Correctness of \defref{def:alt:scheme:rcelevenext:app}]
\label{thm:correctness:rcelevenext:alt:app}
Statement analogous to \thmref{thm:correctness:rcelevenext:app}
\end{theorem}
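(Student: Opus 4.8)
The plan is to reuse, essentially verbatim, the two-step mixed-execution-graph strategy behind \thmref{thm:correctness:rcelevenext:app} (\cref{subsection:proof:sketch}), re-instantiated for the map $\compilealt\_$ of \defref{def:alt:scheme:rcelevenext:app}. The single genuinely new ingredient is a fresh catalogue of permissible mixed nodes $\mathcal{N}_{\textnormal{alt}}$ mirroring this scheme: reads still use \nodeRR; \na writes keep a plain target write; a \rlx write uses a \nodeWNTW-shaped node carrying mode \rlx on the left and a bare non-temporal store on the right; a \rel write uses \nodeWSFW (an $\SFence$ followed by a plain write); a \sc write uses \nodeWSFWMF; \rel and \acqrel fences use \nodeFSF; \sc fences use \nodeFMF; \acq fences use \nodeFB; and RMWs together with the remaining inline-assembly commands are handled as in the standard scheme. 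As in \cref{subsection:proof:sketch}, the target $\ppo$ lifted to $\mathcal{N}_{\textnormal{alt}}$ must be redefined so that each $\SFence$-prefixed write node contributes the expected store-fence orderings, and a Transfer Principle analogous to \thmref{thm:transfer:app} must be re-established for $\mathcal{N}_{\textnormal{alt}}$.

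First I would discharge the simulation step. Since $\compilealt\_$, like $\compile\_$, alters only how individual memory operations are emitted while preserving control flow and the distribution of threads, the induction over the construction of a target graph $\graphB$ of $\compilealt\prog$ is unchanged in shape: at each pool-reduction step one replays the matching source step, building a mixed graph $\graphM$ over $\mathcal{N}_{\textnormal{alt}}$ with $\lproj\graphM$ associated with $\prog$ and $\rproj\graphM = \graphB$, so that $\lproj\graphM \gsim \graphB$ in the sense of \defref{def:gsim:app}. The enlarged node shapes ($\SFence;W$, $\SFence;W;\MFence$, and the lone non-temporal store) merely extend the set of admissible nodes and do not complicate the replay.

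The substance of the argument is the second step: that \intelext-consistency of the mixed graph forces \rcelevenext-consistency, which by the Transfer Principle reduces to ``$\rproj\graphM$ \intelext-consistent $\implies$ $\lproj\graphM$ \rcelevenext-consistent''. I would prove each source axiom by contraposition, turning a \rcelevenext-violating cycle on the source side into an \intelext-violating cycle on the target side. The cases \axiomCoherenceIII, \axiomAtomicity and \axiomSC reuse the reasoning of \thmref{thm:correctness:rcelevenext:app} almost unchanged, since the compiled non-temporal stores play no essential role there; and many $\ppoasm$ edges survive automatically because $\po$ is transitive, so a regular-write-to-regular-write ordering is retained even when a compiled \rlx write sits between its endpoints.

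The hard part will be the clauses of $\porceleven$, $\ppoasm$ and $\sw$ incident to a \rlx write, because $\compilealt\_$ turns such a write into a non-temporal store that \intelext may reorder with writes to other locations. Soundness here rests entirely on the invariant built into \defref{def:alt:scheme:rcelevenext:app}: every source write of mode at least \rel, and the start of every $\sw$ edge, compiles to a block beginning with $\SFence$, and on the target this store fence lies $\po$-after every preceding non-temporal store and $\po$-before the synchronizing write, thereby re-creating the ordering that a freely-floating non-temporal store would otherwise lose. The central lemma to prove is therefore that every source $\porceleven$-, $\ppoasm$- and $\sw$-edge projects into $\rproj\graphM.\transC\ppo$ or $\rproj\graphM.\ob$---routing through that intervening $\SFence$ whenever the edge emanates from a compiled \rlx write---so that any source $\hb\seq\refC\eco$ or $\ppoasm\disj\eco$ cycle is reproduced on the target. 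Completing this store-fence bookkeeping, and in particular verifying that no compiled non-temporal store can slip across a synchronization boundary and silently break a cycle that the source retains, is the delicate point that distinguishes \defref{def:alt:scheme:rcelevenext:app} from the standard scheme and where the proof must be most careful.
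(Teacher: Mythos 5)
Your proposal matches the paper's proof in all essentials: the same re-instantiated catalogue of mixed nodes (\nodeWSFW for \rel writes, \nodeWSFWMF for \sc writes, a non-temporal-store node for \rlx writes, \nodeFSFalt for \rel/\acqrel fences), the same re-stated $\ppo$ and Transfer Principle for the alternative node set, the same inductive simulation step, and the same crux --- that every $\sw$ edge now begins with a store fence or stronger barrier, so the $\po$ prefix of an $\hb$ edge still lands in the target $\ppo$ despite \rlx writes compiling to non-temporal stores. This is precisely how the appendix carries out the argument (via \lemmaref{lemma:rcelevenext:weaker:intelext:alt} and \thmref{thm:transfer:alt:app}), so no further comment is needed.
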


\subsubsection{Proof Sketch.}

The overall structure of our proofs
is depicted by the following diagram:
\begin{center}
\includegraphics{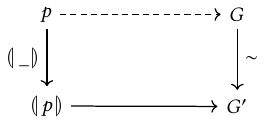}
\end{center}

It illustrates the first step
of a two-steps strategy to prove that~$\compile\_$ is correct.

This first step consists of showing that,
for every program~$\prog$,
for every execution graph~$\graphB$
associated with~$\compile\prog$,
there exists a graph~$\graphA$ associated with~$\prog$,
such that~$\graphA\gsim\graphB$.

The second step is then to show that,
if the consistency conditions
from the target model hold of a mixed execution graph~$\graphM$,
then the consistency conditions
from the source model also hold of~$\graphM$.
Finally,
by invoking Theorem~\ref{thm:transfer:app},
it follows that,
if~$\graphB$ is consistent
(with respect to the target model),
then~$\graphA$ is consistent
(with respect to the source model).
In particular,
this implies that
every final state of~$\compile\prog$
is a final state of~$\prog$,
which statement corresponds precisely
to the formulation of compilation correctness.

The first step is accomplished by induction over
the construction of the graph~$\graphB$.
Intuitively,
because the compiled program~$\compile\prog$
preserves much of the structure of~$\prog$,
it is possible to replay the pool-reduction steps
(Figures~\ref{fig:pool:reduction:partI:app}
and~\ref{fig:pool:reduction:partII:app})
from~$\compile\prog$ and yield a graph~$\graphA$
that satisfies the desired properties.

The second step is the crux of our proofs and
it is where we concentrate our attention.
Next, we discuss how to accomplish this
step in the case of \intelext,
first considering the standard compilation scheme
(\defref{def:scheme:rcelevenext:app})
and then the alternative one
(\defref{def:alt:scheme:rcelevenext:app}).

\subsubsection{Compilation to \intelext.}

\begin{lemma}[\rcelevenext-Weaker-Than-\intelext]
\label{lemma:rcelevenext:weaker:intelext}
Let~$\graphM$ be a mixed execution graph.
If~$\graphM$ is \intelext-consistent,
then~$\graphM$ is \rcelevenext-consistent.
\end{lemma}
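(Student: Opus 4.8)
The plan is to prove, one by one, the six \rcelevenext-consistency conditions of \defref{def:mixed:rcelevenext:consistent:app} for $\graphM$, using only the two hypotheses \axiomInternal and \axiomExternal of \defref{def:mixed:intelext:consistent:app}. The uniform method mirrors \thmref{thm:extension:II:rcelevenext:app}(1): assuming a \rcelevenext condition fails, I exhibit either a thread-local $\po$-violation contradicting \axiomInternal, or I massage the offending cycle into a cycle in $\ob=\ppo\disj\rfe\disj\moe\disj\rbe$ contradicting \axiomExternal, where $\ppo$ is the mixed-graph relation of \defref{def:mixed:intelext:consistent:app}. Before the case analysis I would record the standard auxiliary facts $\rfi\subseteq\po$ and, from \axiomInternal, that per-location internal edges agree with program order; this yields the absorption inclusions $\moi\subseteq\ppo$ and $\rbi\subseteq\ppo$ (through the per-location write disjunct and the read-before-everything disjunct of the mixed $\ppo$), which compensate for the fact that $\eco$ carries the internal edges $\moi,\rbi$ that $\ob$ deliberately omits.

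The conditions \axiomNTA, \axiomAtomicity and \axiomCoherenceIII should fall out fairly directly. For \axiomNTA I split $\rf=\rfi\disj\rfe$, collapse each read-rooted $\po$-segment into a single $\ppo$ edge (a read $\ppo$-precedes all its $\po$-successors), and so turn a $\po\disj\rf$ cycle into an $\ob$ cycle. For \axiomAtomicity I reduce a reflexive $\rb\seq\mo$ edge by a four-way internal/external split: the fully internal case is a $\po$ cycle, the mixed cases give a two-edge $\ob$ cycle after rewriting the internal half through $\ppo$, and the fully external case is immediate. \axiomCoherenceIII is the same idea restricted to a reflexive $\settorel{\WriteMd\nt}\seq\po\seq(\rb\disj\mo)$ edge, which is necessarily thread-local and hence caught by \axiomInternal. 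The recurring bookkeeping is the \nodeFB nodes (erased fences, with empty target component): since no $\rf$, $\mo$ or $\rb$ edge is incident to a fence, such a node can only sit between two $\ppoasm$ edges of a cycle, and I would splice these into one $\ppoasm$ (hence $\ppo$) edge using transitivity of $\po$ and the $\ppo$-decoration already present on the accesses flanking the erased fence.

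The hard cases, and the main obstacle, are \axiomCoherenceI and \axiomSC. For \axiomCoherenceII I transform a $\ppoasm\disj\eco$ cycle into an $\ob$ cycle via the absorption inclusions and \nodeFB splicing. For \axiomCoherenceI I would follow the shape of \thmref{thm:extension:II:rcelevenext:app}(1): split a reflexive $\hb\seq\refC\eco$ edge into $\notirr\hb$ and $\notirr{\hb\seq\eco}$, use the $\eco$ edges to pin the event types at the ends of the intervening $\hb$ fragments, and argue that each fragment collapses to a $\ppo$ edge, yielding an $\ob$ cycle. The delicate point that distinguishes the mixed setting from the pure-\intelext theorem is that the writes bounding an $\eco$ segment may now be $\sc$ or $\rel$ writes rather than $\WriteMd\tso$: here I rely on TSO providing release/acquire synchronization ``for free'' for plain accesses (a release write is a plain write whose preceding writes are $\ppo$-ordered before it, and an acquire read $\ppo$-orders before everything), and on the memory fence emitted by \nodeWWMF to supply the ordering for $\sc$ writes. \axiomSC is a genuinely new obligation, since \thmref{thm:extension:II:rcelevenext:app} never sees real $\sc$ reads or writes; a $\psc$ cycle must likewise be closed in $\ob$, and I expect the hardest step to be discharging the $\settorel{\EventMd\sc}$ and $\settorel{\FenceMd\sc}$ markers of $\pscBase$ and $\pscFence$ by appealing to \nodeWWMF (each $\sc$ write becomes a write then a memory fence) and \nodeFMF (each $\sc$ fence becomes a memory fence), these fences being exactly what push the surrounding $\po$ edges into $\ppo$. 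Throughout I must account for the trailing $\po$ edge that \nodeWWMF carries and for the bypassing of the erased \nodeFB fences, which is where the bulk of the routine-but-fiddly case analysis will live.
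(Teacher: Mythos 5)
Your plan is correct and follows essentially the same route as the paper's proof: a case analysis on which \rcelevenext condition fails, reducing each failure to a contradiction with \axiomInternal or to a cycle in $\ppo\disj\rfe\disj\moe\disj\rbe$ via the absorption inclusions $\moi,\rbi\subseteq\ppo$, the $\ppo$-prefix of $\sw$ edges supplied by the write/fence disjuncts of the mixed $\ppo$, and the claim $\psc\subseteq\transC\ob$ for \axiomSC. You also correctly identify \axiomCoherenceI and \axiomSC as the cases carrying the bulk of the work, which is where the paper's proof spends its effort.
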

\begin{proof}
Suppose that $\graphM$ is \intelext-consistent
but not \rcelevenext-consistent.
Then at least one of \rcelevenext consistency conditions must
not hold of $\graphM$.
We show that the violation of any of them
leads to a contradiction:
\begin{enumerate}
\item \axiomCoherenceI.\\
The violation of \axiomCoherenceI implies that (at least)
one of the following assertions holds:
\begin{enumerate}
\item Assertion: $\notirr\hb$
\[\begin{array}{@{}r@{\;}c@{\;}l@{}}
\notirr\hb
  &\implies&
    \cyc{\po\disj\rfe}
\\
  &\implies&
    \cyc{(\underbrace{\settorel{\codom\rfe}\seq\po}_{\subseteq\;\ppo})\disj\rfe}
\\
  &\implies&
    \cyc{\ppo\disj\rfe\disj\moe\disj\rbe}
\end{array}\]

\item Assertion: $\notirr{\hb\seq\rfe}$
\[\begin{array}{@{}r@{\;}c@{\;}l@{}}
\notirr{\hb\seq\rfe}
  &\implies&
    \cyc{\po\disj\rfe}
\\
  &\implies&
    \cyc{\ppo\disj\rfe\disj\moe\disj\rbe}
\end{array}\]

\item Assertion: $\notirr{\hb\seq(\mo\disj\rb)}$
\[\begin{array}{@{~~~~~~~}l@{}}
\begin{array}{@{}l@{\;}c@{\;}l@{}}
\notirr{\hb\seq(\moi\disj\rbi)}
  &\implies&
    \notirr{\transC{(\po\disj\rfe)}\seq(\moi\disj\rbi)}
\\
  &\implies&
    \notirr{\transC{(\po\disj\rfe)}\seq\po}
\\
  &\implies&
    \cyc{\po\disj\rfe}
\\
  &\implies&
    \cyc{\ppo\disj\rfe\disj\moe\disj\rbe}
\end{array}
\\
\hrulefill
\\[1mm]
\begin{array}{@{}l@{}l@{}}
  &\notirr{\hb\seq(\moe\disj\rbe)}
\\
  &\implies
    \notirr{\transC{(\porceleven\disj\sw)}\seq(\moe\disj\rbe)}
    \implies
\\
  &
    \notirr*{%
      \begin{array}{l}
        \overbrace{
          \settorel{\codom{\mo\cup\rb}}\seq
          \porceleven\seq
          \settorel{\EventAtLeast\rel}\seq
           \refC{(
             \settorel\FenceE\seq
             \po
           )}\seq
          \settorel{\WriteAtLeast\rlx}
        }^{\begin{array}{l}\subseteq\;
           \left(\begin{array}{l}
            \settorel{\WriteMd{\neq\nt}}\seq
            \po\seq
            \settorel{\WriteMd{\neq\nt}}
            \;\cup
          \\
            \settorel{\WriteMd\nt}\seq
            \perloc\po\seq
            \settorel{\WriteE}
            \;\cup
          \\
            \settorel{\WriteMd\nt}\seq
            \po\seq
            \settorel{\RMWMd\tso\cup\FenceAtLeast\stf}\seq
            \po
          \end{array}\right)
          \;\subseteq\;\ppo
          \end{array}}
        \seq
      \\
        \transC{(\rfe\seq\refC\ppo)}\seq
        (\moe\disj\rbe)
      \end{array}
    }
\\
  &\implies
    \notirr{%
        \ppo\seq\transC{(\rfe\seq\refC\ppo)}\seq
        (\moe\disj\rbe)
    }
\\
  &\implies
    \cyc{\ppo\disj\rfe\disj\moe\disj\rbe}
\end{array}
\end{array}\]

\item Assertion: $\notirr{\hb\seq(\mo\disj\rb)\seq\rfe}$
\[\begin{array}{@{~~~~}l@{}}
\begin{array}{@{}l@{\;}c@{\;}l@{}}
\notirr{\hb\seq(\moi\disj\rbi)\seq\rfe}
  &\implies&
    \notirr{\hb\seq\porceleven\seq\rfe}
\\
  &\implies&
    \notirr{\hb\seq\rfe}
\end{array}
\\
\hrulefill
\\
\begin{array}{@{}l@{}l@{}}
  &\notirr{\hb\seq(\moe\disj\rbe)\seq\rfe}
\\
  &\implies
    \notirr{\reftransC{(\po\cup\rfe)}\seq(\moe\disj\rbe)\seq\rfe}
\\
  &\implies
    \notirr{
      \settorel{\codom\rf}\seq
      \po\seq
      \reftransC{(\rfe\seq\refC\po)}\seq
      (\moe\disj\rbe)\seq\rfe
    }
\\
  &\implies
    \notirr{
      \ppo\seq
      \reftransC{(\rfe\seq\refC\ppo)}\seq
      (\moe\disj\rbe)\seq\rfe
    }
\\
  &\implies
    \cyc{\ppo\disj\rfe\disj\moe\disj\rbe}
\end{array}
\end{array}\]

\end{enumerate}

\item \axiomCoherenceII.
\[\begin{array}{@{}r@{\;\;}c@{\;\;}l@{}}
\cyc{\ppoasm\disj\eco}
  &\implies&
    \cyc{\ppoasm\disj\rfe\disj\mo\disj\rb}
\\
  &\implies&
    \cyc{(\ppoasm\disj\moi\disj\rbi)\disj\rfe\disj\moe\disj\rbe}
\\
  &\implies&
    \cyc{\ppo\disj\rfe\disj\moe\disj\rbe}
\end{array}\]

\item \axiomCoherenceIII. (Immediate by \axiomInternal.)

\item \axiomAtomicity.
\[\begin{array}{@{}r@{\;\;}c@{\;\;}l@{}}
\notirr{\rb\seq\mo}
  &\implies&
    \notirr{\rbi\seq\moi}
      \;\lor\;
    \notirr{\rbe\seq\moe}
\\
  &\implies&
    \notirr{\po}\;\lor\;
    \cyc{\ppo\disj\rfe\disj\moe\disj\rbe}
\end{array}\]

\item \axiomSC. \\
We prove that $\psc\subseteq\transC\ob$,
therefore~$\cyc\psc\implies\cyc\ob$.
\[\begin{array}{@{}l@{}}
\begin{array}{@{}r@{\;\;}c@{\;\;}l@{}}
\psc
  &\eqdef&
    \pscFence \disj
    \pscBase
\\[2mm]
\pscFence
  &\eqdef&
    \settorel{\FenceMd\sc}\seq(\hb\disj\hb\seq\eco\seq\hb)\seq\settorel{\FenceMd\sc}
\\
  &\;=&
    \settorel{\FenceMd\sc}\seq\hb\seq\settorel{\FenceMd\sc}
      \disj
    \settorel{\FenceMd\sc}\seq\hb\seq\eco\seq\hb\seq\settorel{\FenceMd\sc}
\end{array}
\\
\hrulefill
\\[2mm]
\begin{array}{@{}r@{\;\;}c@{\;\;}l@{}}
\settorel{\FenceMd\sc}\seq\hb\seq\settorel{\FenceMd\sc}
  &\subseteq&
    \underbrace{
      \settorel{\FenceMd\sc}\seq\po\seq
    }_{\subseteq\;\ppo}
    \reftransC{(\rfe\disj\ppo)}
   \subseteq
    \transC{(\ppo\disj\rfe)}\subseteq
    \transC\ob
\end{array}
\\
\hrulefill
\\[2mm]
\begin{array}{@{}l@{}}
  \settorel{\FenceMd\sc}\seq\hb\seq\eco\seq\hb\seq\settorel{\FenceMd\sc}
\\
\quad
  \begin{array}{@{}l@{}}
    \subseteq\;
      \begin{array}{@{}l@{}}
        \overbrace{
          \settorel{\FenceMd\sc}\seq\po
        }^{\subseteq\;\ppo}
       \seq
       \reftransC{(\rfe\seq\refC\ppo)}\seq
       \overbrace\eco^{\subseteq\;\transC\ob}\seq
      \\
        \underbrace{
          \settorel{\codom\eco}\seq
          \refC\porceleven\seq
          \refC{(
            \settorel{\EventAtLeast\rel}\seq
            \refC{(
              \settorel\FenceE\seq
              \po
            )}\seq
            \settorel{\WriteAtLeast\rlx}\seq
            \reftransC{(\rfe\seq\refC\ppo)}
          )}\seq
          \settorel{\FenceMd\sc}
        }_{
        \begin{array}{l}\subseteq\;
          \left(\begin{array}{l}
            \settorel{\ReadE\cup\RMWE}\seq
            \po\seq
            \reftransC{(\rfe\seq\refC\ppo)}
          \;\cup\\
            \settorel{\WriteMd{\neq\nt}}\seq
            \po\seq
            \settorel{\WriteMd{\neq\nt}\cup\FenceMd\sc}\seq
            \reftransC{(\rfe\seq\refC\ppo)}
          \;\cup\\
            \settorel{\WriteMd\nt}\seq
            \perloc\po\seq
            \settorel{\EventE\setminus\ReadE}\seq
            \reftransC{(\rfe\seq\refC\ppo)}
          \;\cup\\
            \settorel{\WriteMd\nt}\seq
            \po\seq
            \settorel{\RMWMd\tso\cup\FenceAtLeast\stf}\seq
            \po\seq
            \settorel{\EventE\setminus\ReadE}\seq
            \reftransC{(\rfe\seq\refC\ppo)}
          \end{array}\right)^{?}
        \end{array}
        }
      \end{array}
  \\
    \subseteq\;
      \ppo\seq
      \reftransC{(\rfe\seq\refC\ppo)}\seq
      \transC\ob\seq
      \refC\ppo\seq
      \reftransC{(\rfe\seq\refC\ppo)}
    \;\subseteq\;\transC\ob
  \end{array}

\end{array}
\\
\hrulefill
\\[2mm]
\begin{array}{@{}l@{}}
\pscBase \eqdef
  \begin{array}{@{\;}l@{}}
  (\settorel{\EventMd\sc}\disj\settorel{\FenceMd\sc}\seq\refC\hb)\seq\\
  \quad(\po\disj\neqloc\po\seq\hb\seq\neqloc\po\disj\perloc\hb\disj\mo\disj\rb)\seq\\
  (\refC\hb\seq\settorel{\FenceMd\sc}\disj\settorel{\EventMd\sc})
  \end{array}
\\
\begin{array}{@{}r@{\;\;}ll@{}}
  \subseteq&
    \settorel{\EventMd\sc}\seq
    \transC{(\po\disj\rfe)}
&
    (\subseteq
      \settorel{\EventMd\sc}\seq\po\seq\reftransC{(\rfe\seq\refC\ppo)}
    \subseteq
      \transC\ob
    )
\\
  \cup&
    \settorel{\FenceMd\sc}\seq
    \hb\seq
    \eco\seq
    \hb\seq
    \settorel{\FenceMd\sc}
  &(\subseteq\transC\ob)
\\
  \cup&
    \settorel{\FenceMd\sc}\seq
    \hb\seq
    (\mo\disj\rb)
&(\subseteq
  \settorel{\FenceMd\sc}\seq
  \po\seq
  \reftransC{(\rfe\seq\refC\ppo)}\seq
  \eco
  \subseteq
  \transC\ob
)
\\
  \cup&
    (\mo\disj\rb)\seq
    \hb\seq
    \settorel{\FenceMd\sc}
&(\text{See proof of}\;
\settorel{\FenceMd\sc}\seq\hb\seq\eco\seq\hb\seq\settorel{\FenceMd\sc}
\subseteq
\transC\ob
)
\\
  \cup&
    (\mo\disj\rb)
&(\subseteq
(\moi
\disj
\rbi)
\disj
\moe\disj\rbe
\subseteq
\ob
)
\end{array}
\end{array}
\end{array}\]

\item \axiomNTA.
\[\begin{array}{@{}r@{\;\;}c@{\;\;}l@{}}
\cyc{\po\disj\rfe}
  &\implies&
    \cyc{\underbrace{(\settorel{\codom\rf}\seq\po)}_{\subseteq\;\ppo}\disj\rfe}
\\
  &\implies&
    \cyc{\ppo\disj\rfe\disj\moe\disj\rbe}
\end{array}\]
\end{enumerate}

\end{proof}

\subsubsection{Alternative Compilation to \intelext.}

\begin{figure}[t]

\raggedright

\begin{center}
\begin{minipage}{.35\textwidth}
\begin{mathpar}
    \inferrule[\nameWSFWMF]{}{\tikz{\mnodeWSFWMF}}
\end{mathpar}
\end{minipage}
\begin{minipage}{.55\textwidth}
\begin{mathpar}
    \inferrule[\nameWSFW]{}{\tikz{\mnodeWSFW}}
\and
    \inferrule[\nameWNTWalt]{}{\tikz{\mnodeWNTW[][][\rlx,\nt]}}
\and
    \inferrule[\nameFSFalt]{}{\tikz{\mnodeFSF[][][\stf,\rel,\acqrel]}}
\end{mathpar}
\end{minipage}
\end{center}
\caption{
Selected set of mixed nodes reflecting the
alternative compilation scheme from
\defref{def:alt:scheme:rcelevenext:app}.
}
\Description{}
\label{fig:mnodes:intelext:alt:app}
\end{figure}

To apply our methodology of mixed execution graphs
to show the correctness of the alternative compilation scheme
(\defref{def:alt:scheme:rcelevenext:app}),
we need to complete the three following preliminary steps:
\begin{enumerate}
\item Define the set of mixed nodes that reflect
        the alternative compilation scheme.
\item State \rcelevenext-consistency and \intelext-consistency
        of mixed execution graphs containing this new set of nodes.
\item Prove the transfer principle for these new consistency
        definitions.
\end{enumerate}

\fref{fig:mnodes:intelext:alt:app}
depicts a selection of the updated mixed nodes
reflecting the alternative compilation scheme.
The complete set of nodes is the same as
\fref{fig:mnodes:intelext:app}
with the following exceptions:
(1) nodes of type~\nodeWWMF are replaced with~\nodeWSFWMF,
(2) nodes of type~\nodeFB with access mode~$\rel/\acqrel$
      are replaced with~\nodeFSFalt, and
(3) nodes of type~\nodeWW with access modes~$\rlx$ and~$\rel$
      are respectively replaced with~\nodeWNTWalt and~\nodeWSFW.

To make the distinction between
mixed graphs composed of nodes
as defined in~\fref{fig:mnodes:intelext:app}
and
mixed graphs composed of nodes
as defined in~\fref{fig:mnodes:intelext:alt:app}
clear,
we call the later \textit{alternative mixed execution graphs}.
To state \rcelevenext-consistency for alternative mixed graphs,
it suffices to update the notation introduced in
\defref{def:mixed:intelext:consistent:app}:
\[\begin{array}{@{}l@{\hspace{2cm}}r@{}}
\begin{array}{@{}r@{\;}c@{\;\;}l@{}}
\WriteE
  &\eqdef&
    \textnormal\nodeWW    \,\cup\,
    \mathhl{\textnormal\nodeWSFWMF}\,\cup\,
    \textnormal\nodeWNTWalt
\\
\RMWExtE[]
  &\eqdef&
    \textnormal\nodeRMWRMWS \,\cup\,
    \textnormal\nodeRMWRMWF
\\
\ReadExtE[]
  &\eqdef&
    \textnormal\nodeRR
\\
\FenceExtE[]
  &\eqdef&
    \textnormal\nodeFMF   \,\cup\,
    \textnormal\nodeFSFalt   \,\cup\,
    \textnormal\nodeFB
\end{array}
&
\begin{array}{@{}r@{\;}c@{\;\;}l@{}}
\NTWriteExtE[]
  &\eqdef&
    \textnormal\nodeWNTWalt
\\
\SFenceExtE[]
  &\eqdef&
    \textnormal\nodeFSFalt
\\
\MFenceExtE[]
  &\eqdef&
    \textnormal\nodeFMF
\\
  &&
\end{array}
\end{array}\]

The statement of \rcelevenext-consistency thus corresponds
to \defref{def:rcelevenext:consistent:app}
when we replace the sets and relations to their corresponding
mixed-graph versions just introduced.
The statement of~\intelext-consistency however
needs more attention,
so we state it in a separate definition:
\begin{definition}[Alternative Mixed Execution Graph - \intelext-Consistency]
\label{def:alt:mixed:intelext:consistent:app}
An alternative mixed execution graph
$\tuple\graphM{\rf,\mo}$ is \emph{\intelext-consistent}
if the conditions from \defref{def:intelext:consistent:app} hold
when we replace~$\rf$,~$\mo$, and the sets of nodes with the ones
just introduced and the~$\ppo$ relation with the following one:
\[\begin{array}{@{}r@{\;}c@{\;\;}l@{}}
\ppo
  &\eqdef&
    \settorel{\EventE\setminus\nodeFB}\seq
    \po\seq
    \settorel{\RMWE\cup\MFence\cup\SFenceE\cup\nodeWSFWMF}
\\
  &\;\cup&
    \settorel{\ReadE\cup\RMWE\cup\MFence\cup\nodeWSFWMF}\seq
    \po\seq
    \settorel{\EventE\setminus\nodeFB}
\\
  &\;\cup&
    \settorel{\EventE\setminus\nodeFB}\seq
    \po\seq
    \settorel{\nodeWSFW}\seq
    \refC\po\seq
    \settorel{\EventE\setminus\nodeFB}
\\
  &\;\cup&
    \settorel{\SFenceE}\seq\po\seq\settorel{\EventE\setminus\nodeFB\setminus\ReadE}
\\
  &\;\cup&
    \settorel{\WriteMd{\neq\nt}}\seq\po\seq\settorel{\WriteMd{\neq\nt}}
\\
  &\;\cup&
    \settorel{\WriteE}\seq
    \perloc\po\seq
    \settorel{\WriteE}
\end{array}\]
\end{definition}

Finally, we state and prove the corresponding transfer principle
for alternative mixed graphs:
\begin{theorem}[Transfer Principle]
\label{thm:transfer:alt:app}
Let~$\tuple\graphM{\rf,\mo}$ be an alternative mixed execution graph.
The \rcelevenext-consistency conditions
hold of~$\graphM$ iff
they hold of~$\lproj\graphM$.
Analogously,
the \intelext-consistency conditions
hold of~$\graphM$ iff
they hold of~$\rproj\graphM$.
\end{theorem}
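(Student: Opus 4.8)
The plan is to follow the proof of the standard Transfer Principle (\thmref{thm:transfer:app}) almost verbatim, adapting only the parts that depend on the shape of the multi-event mixed nodes. As there, the statement splits into two independent claims: that $\graphM$ is \rcelevenext-consistent iff $\lproj\graphM$ is, and that $\graphM$ is \intelext-consistent iff $\rproj\graphM$ is. The first claim is settled exactly as before: every mixed node of \fref{fig:mnodes:intelext:alt:app} still carries exactly one \rcelevenext event in its left component, so the projected nodes and the relations $\porceleven$, $\ppoasm$, $\sw$, $\hb$, $\eco$, $\psc$ are in one-to-one correspondence with their mixed counterparts, and the \rcelevenext-consistency conditions transfer trivially. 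All the work is therefore in the second claim, and specifically in reconciling $\graphM.\ppo$ (\defref{def:alt:mixed:intelext:consistent:app}) with $(\rproj\graphM).\ppo$ (\defref{def:intelext:consistent:app}).

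For the direction ``$\rproj\graphM$ \intelext-consistent $\implies$ $\graphM$ \intelext-consistent'', I would show that every $\graphM.\ppo$ edge projects to a $(\rproj\graphM).\transC\ppo$ edge, so that a mixed $\ob$-cycle yields a target $\ob$-cycle. I would check the six clauses of the alternative mixed $\ppo$ in turn. The single-event cases, together with the regular write-to-write and per-location clauses, are identical to the standard proof. The genuinely new cases route through the store-fence-leading nodes \nodeWSFW and \nodeWSFWMF. For \nodeWSFWMF I expect no difficulty: since it both begins with a store fence and ends with a memory fence, it is legitimately a codomain of the first mixed clause and a domain of the second, and the internal chain $\SFence \to \WriteE \to \MFence$ projects into $(\rproj\graphM).\transC\ppo$ because any event $\po$-before a store fence, and a memory fence $\po$-before anything, are preserved. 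The delicate clause is the third, $\settorel{\EventE\setminus\nodeFB}\seq\po\seq\settorel{\nodeWSFW}\seq\refC\po\seq\settorel{\EventE\setminus\nodeFB}$: its predecessor reaches $\SFence$ by the first target clause, and $\SFence$ reaches a non-read trailing event by the store-fence clause, so for non-read trailing endpoints the projection lies in $(\rproj\graphM).\transC\ppo$; the read case is the subtle point I return to below.

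For the converse, ``$\graphM$ \intelext-consistent $\implies$ $\rproj\graphM$ \intelext-consistent'', I would argue as in \thmref{thm:transfer:app} that every $(\rproj\graphM).\transC\ppo$ edge occurring in a cycle that violates \axiomExternal is the projection of a $\graphM.\transC\ppo$ edge. Edges of $\rfe$, $\moe$, $\rbe$ lift directly, since each underlying memory event sits in a unique mixed node. The only subtlety is a projected $\ppo$ edge that passes through a fence event internal to a multi-event node. In the standard proof there was a single such ambiguity -- a memory fence arising from \nodeFMF or \nodeWWMF; here there are two, because a memory fence in $\rproj\graphM$ may now be the trailing fence of a \nodeWSFWMF node, and a store fence may be the leading fence of a \nodeWSFW or \nodeWSFWMF node rather than a \nodeFSFalt node. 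In each case I would exhibit the lifting compound edge: a memory-fence crossing lifts through $\settorel{\EventE\setminus\nodeFB}\seq\refC\po\seq\settorel{\MFenceE\cup\nodeWSFWMF}\seq\po\seq\settorel{\EventE\setminus\nodeFB}$, and a store-fence crossing lifts through the first clause (when the predecessor merely reaches the fence), the fourth clause (for a pure \nodeFSFalt node followed by a non-read), or the third clause (when a \nodeWSFW node is traversed) -- all of which are contained in $\graphM.\transC\ppo$.

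The main obstacle, absent from the standard Transfer Principle, is the bookkeeping around these \emph{leading} store fences. A memory fence lifts symmetrically because it orders both every $\po$-predecessor and every $\po$-successor; a store fence orders every predecessor but, on the successor side, orders only its \emph{non-read} events. The forward direction therefore turns on showing that the third clause does not assert more ordering than the target graph enjoys: when the trailing endpoint is a read, the store fence alone does not order the \nodeWSFW write before it, so I must establish that no $\ob$-cycle of $\graphM$ relies on such an edge without a corresponding cycle already present in $\rproj\graphM$ -- equivalently, that the third clause is calibrated exactly to the store fence's asymmetric ordering and that any read-trailing instance is subsumed by the shorter $\ppo$ edge into $\SFence$ together with the regular write and per-location clauses. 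Confirming this calibration for the two store-fence-leading nodes is where I expect the real work to lie; once it is settled, \axiomInternal and the unchanged definitions of $\rf$ and $\mo$ transfer exactly as in \thmref{thm:transfer:app}, completing the argument.
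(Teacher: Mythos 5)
Your overall strategy coincides with the paper's: the left projection is handled by the one-to-one correspondence of \rcelevenext events in the mixed nodes; the forward direction by showing that each clause of the mixed $\ppo$ of \defref{def:alt:mixed:intelext:consistent:app} projects into $(\rproj\graphM).\transC\ppo$; and the backward direction by lifting the target $\transC\ppo$ edges of an \axiomExternal-violating cycle, with a case analysis on the fence and RMW events internal to multi-event nodes (the paper phrases this as an induction on the number of target $\ppo$ edges between the endpoints). You have also put your finger on exactly the right pressure point: the asymmetry of the leading store fence in \nodeWSFW, which orders all of its $\po$-predecessors but only its non-read $\po$-successors, in contrast with the trailing memory fence of the standard scheme.

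The gap is that you leave this point unresolved, and your sketched resolution does not work. A third-clause edge $\settorel{\EventE\setminus\nodeFB}\seq\po\seq\settorel{\nodeWSFW}\seq\refC\po\seq\settorel{\EventE\setminus\nodeFB}$ whose trailing endpoint $b$ is a read genuinely fails to project into $(\rproj\graphM).\transC\ppo$: the only events guaranteed to lie $\po$-between the projected endpoints are the store fence and the write, and no clause of the target $\ppo$ of \defref{def:intelext:consistent:app} reaches a read except $\settorel{\ReadE\cup\RMWE\cup\MFence}\seq\po$, whose domain contains neither a store fence nor a plain write. So the edge is \emph{not} ``subsumed by the shorter $\ppo$ edge into the store fence together with the regular write and per-location clauses''; and such an edge followed by $\rbe$ could close a mixed $\ob$-cycle with no counterpart in $\rproj\graphM$, breaking the direction ``$\rproj\graphM$ consistent $\implies$ $\graphM$ consistent''. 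The way out --- which the paper takes silently, by carrying out the projection argument only for edges of type $\settorel{\EventE\setminus\nodeFB}\seq\po\seq\settorel{\nodeWSFW}\seq\refC\po\seq\settorel{\EventE\setminus\nodeFB\setminus\ReadE}$ --- is that the trailing set of this clause must exclude reads, exactly mirroring the target clause $\settorel{\SFenceE}\seq\po\seq\settorel{\EventE\setminus\ReadE}$. You need to either state and use that restriction or prove that read-trailing instances never occur in a violating cycle; the latter is not true in general. With the restriction in place, your clause-by-clause check, your treatment of \nodeWSFWMF (whose trailing memory fence orders both sides), and your lifting argument all go through.
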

\begin{proof}
As in the proof of~\thmref{thm:transfer:app},
that~$\graphM$ is \rcelevenext-consistent iff~$\lproj\graphM$ is
\rcelevenext-consistent,
is straightforward,
because there is a one-to-one correspondence
between~$\graphM$ and~$\lproj\graphM$ and
because \rcelevenext-consistency conditions are
equivalently defined for both graphs.

We now prove
that~$\graphM$ is \intelext-consistent iff~$\rproj\graphM$ is
\intelext-consistent.
By studying the definition of~$\ppo$
from \defref{def:alt:mixed:intelext:consistent:app},
it is easy to see that every $\graphM.\ppo$ edge
is projected to a $(\rproj\graphM).\transC\ppo$ edge.
To give an example,
edges of type~%
\[\graphM.(
\settorel{\EventE\setminus\nodeFB}\seq
\po\seq
\settorel{\nodeWSFW}\seq
\refC\po\seq
\settorel{\EventE\setminus\nodeFB\setminus\ReadE}
)\]
are projected to edges of type~%
\[(\rproj\graphM).(
\underbrace{%
\po\seq\settorel{\SFence}%
}_{\subseteq\;\ppo}
\seq
\underbrace{%
\settorel{\SFence}\seq
\po\seq
\settorel\WriteE\seq
\refC\po\seq
\settorel{\EventE\setminus\ReadE}
}_{\subseteq\;\ppo}
),\]
who belong to~$(\rproj\graphM).\transC\ppo$.
Therefore, if the~$\intelext$-consistency
conditions hold of~$\rproj\graphM$
they must hold of~$\graphM$.

To show the converse,
it suffices to check that every edge of type
\[R\eqdef(\rproj\graphM).(
\settorel{\EventE\setminus\MFenceE\setminus\SFenceE}\seq
\transC\ppo\seq
\settorel{\EventE\setminus\MFenceE\setminus\SFenceE})\]
is the projection of an edge of type~$\graphM.\transC\ppo$.
(It is sound to restrict our attention to edges
that do not start or end in a fence,
because only this type of edge can be used to form
cycles that violate~\axiomExternal.)
Let~$\pair\evA\evB$ be an edge in~$R$.
The proof that~$\pair\evA\evB$ is the projection
of an edge of type~$\graphM.\transC\ppo$
goes by disjunction of cases on whether there
is a fence or a read-modify-write event
between~$\evA$ and~$\evB$.

In the negative case,
the~$(\rproj\graphM).\ppo$ edges between~$\evA$ and~$\evB$
are of type either
\[(\rproj\graphM).(
\settorel{\intelext.\WriteE}\seq
\po\seq
\settorel{\intelext.\WriteE})
\quad\text{or}\quad
(\rproj\graphM).(
\settorel{\intelext.(\WriteE\cup\NTWriteE)}\seq
\perloc\po\seq
\settorel{\intelext.(\WriteE\cup\NTWriteE)}).\]
It is easy to see that, in this case, events in~%
$\dom{(\rproj\graphM).(
\settorel{\WriteE\cup\NTWriteE}\seq
\ppo\seq\settorel{\evB})}$
come from the projection of mixed nodes of type either~%
$\nodeWW$,or~$\nodeWNTWalt$;
and that~$(\rproj\graphM).\ppo$ edges between these nodes
correspond to either~%
\[\graphM.(
\settorel{\WriteMd{\neq\nt}}\seq
\po\seq
\settorel{\WriteMd{\neq\nt}}
)
\quad\text{or}\quad
\graphM.(
\settorel{\WriteE}\seq
\perloc\po\seq
\settorel{\WriteE}
),\]
both of which are included in~$\graphM.\ppo$.

In the affirmative case,
there must be at least two~$(\rproj\graphM).\ppo$
edges between~$\evA$ and~$\evB$;
then the proof follows by induction on the number~$n$
of the remaining~$(\rproj\graphM).\ppo$ edges.
In the inductive case,
we can assume that there is only one fence or read-modify-write event
between~$\evA$ and~$\evB$ and that this event is the target
of the immediate~$(\rproj\graphM).\ppo$ coming out from~$\evA$,
because, otherwise, the edge~$\pair\evA\evB$ would fit
into a smaller number of~$(\rproj\graphM).\ppo$ edges
and the inductive hypothesis would be applicable.
\end{proof}

\begin{lemma}[\rcelevenext-Weaker-Than-\intelext-Alt]
\label{lemma:rcelevenext:weaker:intelext:alt}
Let~$\graphM$ be an alternative mixed execution graph
(that is, a mixed graph formed of nodes as specified in~%
\fref{fig:mnodes:intelext:alt:app}).
If~$\graphM$ is \intelext-consistent,
then~$\graphM$ is \rcelevenext-consistent.
\end{lemma}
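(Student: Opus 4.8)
The plan is to replay, verbatim in structure, the proof of \lemmaref{lemma:rcelevenext:weaker:intelext}, now interpreting all target-level notions over the \emph{alternative} mixed nodes of \fref{fig:mnodes:intelext:alt:app} and with~$\ppo$ as given in \defref{def:alt:mixed:intelext:consistent:app}. I would argue by contradiction: assuming~$\graphM$ is \intelext-consistent but that some \rcelevenext-consistency condition fails, I exhibit in each case a cycle in~$\ob=\ppo\disj\rfe\disj\moe\disj\rbe$, contradicting~\axiomExternal, which holds because~$\graphM$ is \intelext-consistent. Since the derived relations~$\hb$,~$\eco$,~$\porceleven$, and~$\ppoasm$ are computed from the \emph{source} components of mixed nodes, they are literally unchanged from \lemmaref{lemma:rcelevenext:weaker:intelext}; only the target-level~$\ppo$ differs, so the entire burden is to re-check that the~$\ppo$-inclusions driving each case still hold.

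First I would dispatch the scheme-insensitive cases. Condition~\axiomCoherenceIII is immediate from~\axiomInternal; the internal subcases of~\axiomAtomicity and of~\axiomCoherenceI collapse, via~$\moi\disj\rbi\subseteq\po$ and the irreflexivity of~$\po$, exactly as before; and~\axiomNTA together with the~$\notirr\hb$ and~$\notirr{\hb\seq\rfe}$ subcases reduce to~$\cyc{\po\disj\rfe}$ and thence to a cycle in~$\ob$, using~$\settorel{\codom\rf}\seq\po\subseteq\ppo$ (the read-to-anything clause of the alternative~$\ppo$, modulo erased fences). The~\axiomCoherenceII case likewise goes through using~$\ppoasm\disj\moi\disj\rbi\subseteq\ppo$. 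None of these inclusions touch the three node types that the alternative scheme changes, so they transfer without modification.

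The crux --- and the step I expect to be the main obstacle --- is the family of subcases in which a source-level~$\hb$ segment (built from~$\porceleven$ and~$\sw$) must be shown to embed into the target-level~$\ppo$: namely the external~$\moe\disj\rbe$ subcases of~\axiomCoherenceI and the~$\pscBase$/$\pscFence$ analysis of~\axiomSC. Here~\rlx writes are compiled to non-temporal stores (node~\nodeWNTWalt), which~$\ppo$ orders only per-location, so the standard absorption of a~$\porceleven$ edge into the start of a~$\sw$ edge can no longer be justified by the~$\settorel{\WriteMd\nt}\seq\po\seq\settorel{\RMWMd\tso\cup\FenceAtLeast\stf}$ clause. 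The defining feature of the alternative scheme rescues the argument: every source event of mode~$\sqsupseteq\rel$ that may begin a~$\sw$ edge carries a leading store fence in its target image --- release writes are~\nodeWSFW and~\rel/\acqrel fences are~\nodeFSFalt (the latter lying in~$\SFenceE$). Hence a~$\po$ edge from any event that is not an erased fence --- in particular from a non-temporal store --- into such a release event is absorbed by~$\ppo$ through the clauses~$\settorel{\EventE\setminus\nodeFB}\seq\po\seq\settorel{\SFenceE}$ and~$\settorel{\EventE\setminus\nodeFB}\seq\po\seq\settorel{\nodeWSFW}\seq\refC\po$ of \defref{def:alt:mixed:intelext:consistent:app}.

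With these store-fence absorption inclusions established, I would then rewrite each violating cycle by replacing every maximal~$\porceleven$/$\sw$ segment with the corresponding~$\ppo$ (or~$\ppo\seq\transC{(\rfe\seq\refC\ppo)}$) segment, combining them with~$\eco\subseteq\transC\ob$ to close a cycle in~$\ob$; the~$\psc\subseteq\transC\ob$ containment needed for~\axiomSC is then assembled from its~$\pscFence$ and~$\pscBase$ pieces precisely as in \lemmaref{lemma:rcelevenext:weaker:intelext}. The only genuine work is the case-by-case verification that each~$\sw$-entering edge is matched by one of the store-fence clauses of the alternative~$\ppo$; once this bookkeeping is discharged, the cycle-closing steps are mechanical copies of the standard proof.
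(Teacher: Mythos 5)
Your proposal is correct and follows essentially the same route as the paper: the paper likewise replays the proof of \lemmaref{lemma:rcelevenext:weaker:intelext} and isolates, as the only genuinely new step, the fact that under the alternative scheme every $\sw$ edge starts with a store fence (or stronger), so the $\po$ prefix into a release event is absorbed into $\ppo$ via the $\SFenceE$/\nodeWSFW/\nodeWSFWMF clauses of \defref{def:alt:mixed:intelext:consistent:app} --- exactly the inclusions you identify.
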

\begin{proof}
The proof is analogous to the proof of~%
\lemmaref{lemma:rcelevenext:weaker:intelext}.
The main difference is how we show
that the~$\po$ prefix of a~$\hb$ edge
is included in~$\ppo$.
To give an illustration,
we include here the proof that,
in the alternative mixed graph~$\graphM$,
the violation of the condition~%
$\irr{\hb\seq(\moe\disj\rbe)}$
(ensured by~\axiomCoherenceI)
implies the violation of \axiomExternal.
The idea is to exploit the fact that,
thanks to the alternative compilation scheme
(which is encoded in the structure of~$\graphM$),
a $\sw$ edge always starts with a store fence
or a stronger barrier:
\[\begin{array}{@{}l@{}l@{}}
  &\notirr{\hb\seq(\moe\disj\rbe)}
\\
  &\implies
    \notirr{\transC{(\porceleven\disj\sw)}\seq(\moe\disj\rbe)}
    \implies
\\
  &
    \notirr*{%
      \begin{array}{l}
        \overbrace{
          \settorel{\codom{\mo\cup\rb}}\seq
          \refC\porceleven\seq
          \settorel{\EventAtLeast\rel}\seq
          \refC{(\settorel{\FenceE}\seq\po)}\seq
          \settorel{\WriteAtLeast\rlx}
        }^{\begin{array}{l}\subseteq\;
             \left(\begin{array}{@{}l@{}}
               \settorel{\WriteMd\nt}\seq
               \perloc\po\seq
               \settorel{\WriteE}
               \;\cup
             \\
               \settorel{\WriteE}\seq
               \po\seq
               \left[\begin{array}{@{}l@{}}
                 \MFenceE\cup
                 \SFenceE\cup
                 \RMWE\,\cup\\
                 \nodeWSFW\,\cup\\
                 \nodeWSFWMF
               \end{array}\right]\seq
               \refC\po\seq\settorel{\WriteE}
             \end{array}\right)^{?}
             \;\subseteq\;\refC\ppo
          \end{array}}
        \seq
      \\
        \transC{(\rfe\seq\refC\ppo)}\seq
        (\moe\disj\rbe)
      \end{array}
    }
\\
  &\implies
    \notirr{%
        \refC\ppo\seq
        \transC{(\rfe\seq\refC\ppo)}\seq
        (\moe\disj\rbe)
    }
\\
  &\implies
    \cyc{\ppo\disj\rfe\disj\moe\disj\rbe}
\end{array}\]

\end{proof}

\subsection{Compiler Optimizations}

\subsubsection{Register Promotion.}
The following counterexample shows that
applying this optimization to inline-assembly
read-modify-writes is unsound:
\[\begin{array}{@{}ccc@{}}
\inarrII{
  \AsmNTWritePL\locx{1}                 \\
  \AsmRMWPL\regA\locz{0}{1}\outcomeI{0} \\
  \WritePL\rel\locy{1}
}{
  \ReadPL\acq\regB\locy\outcomeI{1} \\
  \ReadPL\rlx\regC\locx\outcomeI{1} \\
}
&
\rightsquigarrow
&
\inarrII{
  \AsmNTWritePL\locx{1}       \\
  \regD\kw{:=}\,0             \\
  \regA\kw{:=}\,\regD         \\
  \regD\kw{:=}\,1             \\
  \WritePL\rel\locy{1}
}{
  \ReadPL\acq\regB\locy\outcomeC{1} \\
  \ReadPL\rlx\regC\locx\outcomeC{1} \\
}
\end{array}\]

\begin{theorem}[Register Promotion]
\label{thm:register:promotion:app}
The transformation that promotes accesses to a register
a location used by only one thread and not via
inline-assembly read-modify-writes
is sound:
for every~$\progA$ and~$\progB$,
if~$\transf\progA\progB$,
then~$\interp\progB\subseteq\interp\progA$.
\end{theorem}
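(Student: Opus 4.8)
The plan is to follow the two-step methodology for program transformations described in \sref{subsection:optimizations}. The first step reduces the claim to the graph level: given $\transf\progA\progB$ and an execution graph $\graphB$ associated with $\progB$, I must exhibit a graph $\graphA$ associated with $\progA$ with $\transf\graphA\graphB$, where the graph transformation deletes every $\locx$-event. Since $\progA$ and $\progB$ differ only in that the thread-local location $\locx$ is replaced by a register, the run-time values of that register in $\graphB$ determine the values read and written on $\locx$ in $\graphA$; I re-insert these events along the single thread that accesses $\locx$ and fix $\rf$ and $\mo$ on $\locx$ by the \emph{sequential rule} — $\mo$ restricted to $\locx$ follows $\po$, and every $\locx$-read reads from its immediate $\po$-preceding $\locx$-write (or from the initialization). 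This is exactly register semantics, so $\graphA$ is a genuine execution graph of $\progA$ that agrees with $\graphB$ on the final value of every location other than $\locx$, which by the promotion precondition is not observable. It then suffices to show (1) that $\graphB$ being $\rcelevenext$-consistent implies $\graphA$ is $\rcelevenext$-consistent, and (2) that $\graphB$ being racy implies $\graphA$ is racy.

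For step (1) I would first record the structural facts that follow from $\locx$ being touched by a single thread $t$: no two threads access $\locx$, so the $\locx$-restrictions of $\rf$, $\mo$, $\rb$ are entirely \emph{internal}, whence $\rfe$, $\moe$, $\rbe$ contain no $\locx$-edge; and the sequential choice of $\rf,\mo$ forces the $\locx$-restrictions of $\rf$, $\mo$, $\rb$ to be included in $\po$. Crucially, by hypothesis the deleted events contain no $\RMWMd\tso$, so none of them acts as the unconditional barrier that an inline-assembly read-modify-write provides in $\ppoasm$ and $\porceleven$ through their domain-unrestricted clause $\po\seq\settorel{\RMWMd\tso\cup\FenceAtLeast\stf}$. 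I would then argue consistency preservation contrapositively: a consistency-violating cycle of $\graphA$ — in $\hb\seq\refC\eco$, in $\ppoasm\disj\eco$, in $\psc$, or in $\po\disj\rf$ — yields a violating cycle of $\graphB$. Because each $\locx$-event has only intra-thread incident edges (either $\po$, or internal $\rf/\mo/\rb$ edges, which were just shown to lie in $\po$), every cycle meets thread $t$ along $\po$-segments, so I can splice out the $\locx$-events, replacing an incoming-$\po$/$\locx$-event/outgoing-$\po$ triple by the direct $\po$ edge present in $\graphB$.

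The main obstacle is checking that this splicing preserves the \emph{classification} of the resulting $\po$-edges under $\porceleven$ and $\ppoasm$, whose clauses constrain their endpoints. Three observations make it go through. First, the clauses with unrestricted domain or codomain — notably $\po\seq\settorel{\RMWMd\tso\cup\FenceAtLeast\stf}$ and $\settorel{\EventE\setminus\WriteMd\nt}\seq\po$ — are robust to deleting intermediate $\locx$ reads and writes, so edges into barriers and edges out of non-$\WriteMd\nt$ sources survive. Second, whenever a removed non-$\RMWMd\tso$ write genuinely supplied a $\ppoasm$ edge that is lost, the lost ordering reappears in $\hb$: the spliced source is then necessarily not a $\WriteMd\nt$ and hence lies in the domain of the first clause of $\porceleven$, so the violation may migrate from \axiomCoherenceII to \axiomCoherenceI but never disappears. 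Third, the $\RMWMd\tso$ case excluded by the hypothesis is precisely the irrecoverable one exhibited by the counterexample of this subsection, in which the read-modify-write places even a preceding non-temporal store into $\hb$ — an ordering nothing reconstructs once it is promoted. I expect the bulk of the work to be the case analysis handling how non-temporal stores to $\locx$ interact with these clauses.

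For step (2) I would observe that re-inserting the thread-local $\locx$-events to build $\graphA$ from $\graphB$ creates no new $\sw$ edge (synchronization requires a cross-thread $\rf$, whereas $\locx$'s reads-from is internal) and no new cross-thread $\porceleven$ path (any path through thread $t$'s $\locx$-events remains within $t$). Consequently $\graphA.\hb$ and $\graphB.\hb$ coincide on the events common to both graphs, so any pair $\pair\evA\evB$ that forms a race in $\graphB$ still forms a race, with the same access modes, in $\graphA$. Together with step (1), this establishes $\interp\progB\subseteq\interp\progA$.
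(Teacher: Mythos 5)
Your proposal follows essentially the same route as the paper's proof: the same construction of $\graphA$ by re-inserting the promoted location's accesses with $\rf$, $\mo$, $\rb$ on that location confined to $\po$, the same key observations (all edges at the promoted location are internal and $\po$-included, and the excluded inline-assembly read-modify-writes are exactly the events that would act as unremovable barriers in $\porceleven$ and $\ppoasm$), and the same two obligations discharged by splicing the promoted events out of any violating cycle and by showing $\hb$ agrees on the common events. The extra fallback you sketch (a lost $\ppoasm$ edge migrating into a \RULE{\nameCoherenceI} violation) appears to be unnecessary once the case analysis on the $\ppoasm$ clauses is carried out, but it does not affect correctness.
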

\begin{proof}
Let~$\locz$ be the location that is promoted to a register,
let~$\intj$ be the identifier of the thread
to which this location belongs,
and let~$\graphB$ be a %
execution graph associated with~$\progB$.
Let~$\graphA$ be a graph associated with~$\progA$,
obtained by extending~$\graphB.\EventE$ with
the missing~$\locz$ accesses
and by extending~$\graphB.\mo,\rf$
in such a way that~%
$\graphA.\atloc\mo\locz,\atloc\rb\locz,\atloc\rf\locz\subseteq\graphA.\po$
(this is possible because
the register instructions in~$\progB$,
to which the accesses to~$\locz$ were promoted,
are executed in order).
We show that, if~$\graphB$ is \rcelevenext-consistent,
then so is~$\graphA$,
and that, if~$\graphB$ is racy, then so is~$\graphA$.

The proof relies on the fact that~%
$\settorel{\graphB.\EventE}\seq
\graphA.(\transC\porceleven)\seq
\settorel{\graphB.\EventE}$
is included in~%
$\graphB.(\transC\porceleven)$.
Indeed,
let~$\pair\evA\evB$ be a pair in~$\graphB.\EventE$,
such that~%
$\pair\evA\evB\in\graphA.(\transC\porceleven)$.
We show that~%
$\pair\evA\evB\in\graphB.(\transC\porceleven)$.
If~$\evA\notin\WriteMd\nt$, then this is clearly the case.
Now, suppose that~$\evA\in\WriteMd\nt$.
Consequently,
for~$\pair\evA\evB$ to be included in~%
$\graphA.(\transC\porceleven)$,
there must be an event~$\evC\in\RMWMd\tso\disj\FenceAtLeast\stf$,
such that~$\pair\evA\evC,\pair\evC\evB\in\graphA.\porceleven$.
Such an event is not an access to~$\locz$,
because~$\locz$ is not accessed through inline-assembly
read-modify-writes
and because a fence is not an access to~$\locz$.
This event is thus included in~$\graphB.\EventE$.
It is then easy to see that~%
$\pair\evA\evB\in\graphB.(\transC\porceleven)$.

From this fact, it follows that~%
$\settorel{\graphB.\EventE}\seq
\graphA.\hb\seq
\settorel{\graphB.\EventE}$
is included in~$\graphB.\hb$.

We now prove that,
if~$\graphB$ is racy, then so is~$\graphA$.
If~$\graphB$ is racy,
then there are~$\evA,\evB\in\graphB.\EventE$,
such that~$\pair\evA\evB\in\raceS{\graphB.\hb}$.
To show that~$\pair\evA\evB\in\raceS{\graphA.\hb}$,
it suffices to show that~%
$\pair\evA\evB\notin\graphA.\hb\disj\graphA.\inv\hb$.
Suppose, by contradiction, and without loss of generality,
that~$\pair\evA\evB\in\graphA.\hb$.
Then, from the inclusion~%
$\settorel{\graphB.\EventE}\seq
\graphA.\hb\seq
\settorel{\graphB.\EventE}\subseteq\graphB.\hb$,
it follows that~$\pair\evA\evB\in\graphB.\hb$,
a contradiction to~$\pair\evA\evB\in\raceS{\graphB.\hb}$.

Now, suppose that~$\graphB$ is \rcelevenext-consistent.
We prove that so is~$\graphA$.
Suppose, by contradiction, this is not the case,
and let~$\cycle$ be a cycle violating
\rcelevenext-consistency in~$\graphA$.
This cycle must contain at least one access to~$\locz$,
because, if every event in~$\cycle$ is included in~$\graphB.\EventE$,
then~$\cycle$ is also a violation to \rcelevenext
in~$\graphB$.
The cycle is either contained in thread~$\intj$
or it spans over more than one thread.
If the cycle spans over more than one thread,
then every access to~$\locz$ in~$\cycle$
must be surrounded by two accesses to other locations
that are distinct from~$\locz$
(but not necessarily between themselves),
because~$\locz$ is not shared among threads.
The accesses to~$\locz$ in~$\cycle$ can thus be avoided,
thereby yielding a consistency-violating cycle in~$\graphB$,
a contradiction to its consistency.
If the cycle is contained in thread~$\intj$,
then there must be two accesses~%
$\evA,\evB\in\cycle$,
such that~$\pair\evA\evB\in\graph.\po$
and~$\pair\evB\evA\in\graph.(\rf\disj\mo\disj\rb)$.
These accesses cannot be to~$\locz$,
because~%
$\graphA.\atloc\mo\locz,\atloc\rb\locz,\atloc\rf\locz\subseteq\graphA.\po$.
Therefore, this
consistency-violating pair of events also belongs to~$\graphB$,
a contradiction to its consistency.
\end{proof}

\subsubsection{Sequentialization.}
The following counterexample shows that
sequentialization is unsound in~\intelext
(and therefore also unsound in~\rcelevenext
because of~\hyperlink{P4}{Property P4}):
\[\begin{array}{@{}ccc@{}}
\inarrIII{
  \WritePL{}\locx{1}
}{
  \ReadPL{}\regA\locx\outcomeI{1} \\
  \ReadPL{}\regB\locy\outcomeI{0}
}{
  \WritePL{}\locy{1}     \\
  \MFencePL              \\
  \ReadPL{}\regC\locx\outcomeI{0}
}
&
\rightsquigarrow
&
\inarrII{
  \WritePL{}\locx{1}              \\
  \ReadPL{}\regA\locx\outcomeC{1} \\
  \ReadPL{}\regB\locy\outcomeC{0}
}{
  \WritePL{}\locy{1}     \\
  \MFencePL              \\
  \ReadPL{}\regC\locx\outcomeC{0}
}
\\[10mm]
\begin{array}{@{}c@{}}
\centering
{\includegraphics{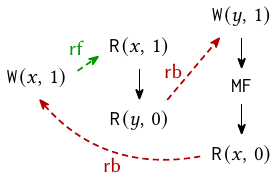}}
\end{array}
&
&
\begin{array}{@{}c@{}}
\centering
\includegraphics{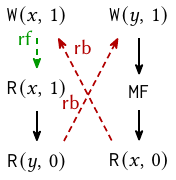}
\end{array}
\end{array}\]

\begin{theorem}[NITIA-Sequentialization]
\label{thm:nitia:sequentialization:app}
The transformation that merges two threads that satisfy NITIA is sound:
for every~$\progA$ and~$\progB$,
if~$\transf\progA\progB$,
then~$\interp\progB\subseteq\interp\progA$.
\end{theorem}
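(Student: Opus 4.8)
The plan is to follow the two-step recipe for transformation soundness from \cref{subsection:optimizations}, specialized to the reverse (splitting) direction. Invoking the property stated there, for any graph~$\graphB$ associated with~$\progB$ there is a graph~$\graphA$ associated with~$\progA$ with~$\transf\graphA\graphB$; concretely I construct~$\graphA$ by keeping the events,~$\rf$, and~$\mo$ of~$\graphB$ unchanged (as sets of pairs) and only splitting the single merged thread back into the two original threads~$T_1,T_2$, deleting the cross-thread~$\po$ edges. Since~$\mo$ and the write events are preserved, $\graphA$ and~$\graphB$ induce the same final state, so it suffices to show (1) if~$\graphB$ is \rcelevenext-consistent then so is~$\graphA$, and (2) if~$\graphB$ is racy then so is~$\graphA$. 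The only relational difference is that each~$\rf$ edge between~$T_1$ and~$T_2$ is internal in~$\graphB$ but external in~$\graphA$; hence~$\graphA.\po\subseteq\graphB.\po$, giving~$\graphA.\hb\subseteq\graphB.\hb$,~$\graphA.\porceleven\subseteq\graphB.\porceleven$, and~$\graphA.\ppoasm\subseteq\graphB.\ppoasm$ by monotonicity in~$\po$, whereas~$\graphA.\eco\supseteq\graphB.\eco$ because~$\graphA$ has strictly more~$\rfe$ edges.

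Claim (2) is immediate: since~$\graphA.\hb\subseteq\graphB.\hb$, any pair left unordered by~$\graphB.\hb$ is also unordered by~$\graphA.\hb$, so every race of~$\graphB$ persists in~$\graphA$. The substance lies in claim (1), which I would argue through the notion of a \emph{problematic}~$\rfe$ edge from \cref{subsection:optimizations}: an edge whose reclassification from external to internal upon merging could destroy a consistency-violating cycle. As established there, a problematic edge must carry at least one inline-assembly event, and, by the inclusions~$\settorel\RMWE\seq\rfi\subseteq\porceleven\cap\ppoasm$ and~$\rfi\seq\settorel\RMWE\subseteq\porceleven\cap\ppoasm$, a problematic edge cannot touch a read-modify-write. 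So the crux is to show that NITIA forbids any cross-thread~$\rf$ edge carrying a non-RMW inline-assembly event.

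This I would settle by a short case analysis on a hypothetical cross-thread edge~$\pair\evA\evB\in\rf$ (so~$\evA$ writes and~$\evB$ reads the same location, in distinct merged threads) carrying an inline-assembly but non-RMW event. If~$\evB$ is an inline-assembly read, NITIA clause~(2) applied to~$\evB$'s thread forces~$\dotLoc\evB$ disjoint from every location modified by the other thread; but~$\evA$ modifies~$\dotLoc\evA=\dotLoc\evB$, a contradiction. Otherwise~$\evB$ is a plain \rceleven read and the inline-assembly event must be~$\evA$, i.e.\ a~\tso or~\nt write; then NITIA clause~(1) applied to~$\evB$'s thread forces~$\dotLoc\evB$ disjoint from every location modified via inline assembly by the other thread, again contradicting~$\dotLoc\evA=\dotLoc\evB$. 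Hence no problematic edge exists. I then conclude by contradiction: if~$\graphA$ were inconsistent, fix a violating cycle; the only edges it can use that are absent from the corresponding~$\graphB$ relation are cross-thread~$\rfe$ edges, each of which is non-problematic, hence either between plain non-\nt events (so its merged image~$\rfi$ lies in~$\porceleven\subseteq\hb$, via~$\settorel{\EventE\setminus\WriteMd\nt}\seq\rfi\subseteq\porceleven$) or composes with its successor~$\po\seq\settorel{\RMWMd\tso\cup\FenceAtLeast\stf}$ step into a~$\graphB.\ppoasm$ edge. The rerouted cycle then violates the same condition (\axiomCoherenceI,~\axiomCoherenceII,~\axiomSC,~\axiomAtomicity, or~\axiomNTA) in~$\graphB$, contradicting its consistency.

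The main obstacle I anticipate is precisely this final rerouting: verifying, condition by condition, that every cross-thread~$\rfe$ edge appearing in a violating cycle of~$\graphA$ is genuinely re-absorbed in~$\graphB$. In particular, for~\axiomCoherenceII one must confirm that an~$\rfe$ edge into a non-\tso read is always immediately followed in the cycle by a~$\po\seq\settorel{\RMWMd\tso\cup\FenceAtLeast\stf}$ step, so the pair merges into a single~$\ppoasm$ edge, and the analogous book-keeping for the~$\psc$-based condition~\axiomSC requires tracking how the reclassified edges sit inside~$\pscBase$ and~$\pscFence$. These are the places where the argument is most delicate, though none of them should require more than the monotonicity facts above together with the NITIA-based exclusion of problematic edges.
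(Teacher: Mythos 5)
Your proposal is correct and follows essentially the same route as the paper's proof: the raciness claim via the inclusion $\graphA.\hb\subseteq\graphB.\hb$ (which holds because $\hb$ does not distinguish $\rfi$ from $\rfe$), and the consistency claim by observing that any cycle-destroying reclassified edge would have to be a \emph{problematic} $\rfe$ edge carrying a non-RMW inline-assembly event, which NITIA rules out. Your explicit case analysis on the two NITIA clauses and the condition-by-condition rerouting are just a more detailed spelling-out of the argument the paper gives more tersely.
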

\begin{proof}
Let~$\graphB$ be an execution graph associated with~$\progB$,
and let~$\graphA$ be an execution graph associated with~$\progA$
such that~$\transf\graphA\graphB$.
We show that,~%
(1) if~$\graphB$ is consistent, then so is~$\graphA$;
and that,~%
(2) if~$\graphB$ is racy, then so is~$\graphA$.

The proof of~(1) follows by contradiction:
suppose that~$\graphA$ is \rcelevenext-inconsistent,
but~$\graphB$ is not.
Then there exists a problematic edge between
the two threads being merged.
We saw that such an edge must contain at least
one inline-assembly event.
This contradicts the NITIA condition,
which prevents the existence of~$\rfe$ edges
between such accesses.

To show~(2),
it suffices to notice that~%
$\graphA.\hb$ is included in $\graphB.\hb$.
Indeed, this inclusion holds
because the definition of~$\hb$
does not distinguish the internal
from the external components of~$\rf$
and because~$\graphA.\po$ is included in~$\graphB.\po$.
\end{proof}

\begin{theorem}[Fence-Sequentialization]
\label{thm:fence:sequentialization:app}
The transformation that merges two threads
by inserting a~$\sc$ fence between them
is sound:
for every~$\progA$ and~$\progB$,
if~$\transf\progA\progB$,
then~$\interp\progB\subseteq\interp\progA$.
\end{theorem}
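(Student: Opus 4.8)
The plan is to instantiate the graph-level proof method of \sref{subsection:optimizations}: given an execution graph $\graphB$ of $\progB$ and the graph $\graphA$ of $\progA$ with $\transf\graphA\graphB$ supplied by that framework, it suffices to show (1) if $\graphB$ is \rcelevenext-consistent then so is $\graphA$, and (2) if $\graphB$ is racy then so is $\graphA$. Here $\graphA$ is obtained from $\graphB$ by splitting the single merged thread back into the two original threads $\thread_1,\thread_2$ and deleting the inserted $\sc$-fence event $f$, keeping $\rf$ and $\mo$ as the same sets of pairs. Consequently $\graphA.\po\subseteq\graphB.\po$, the relations $\mo$ and $\rb$ coincide as pair sets, and the \emph{only} reclassified edges are the $\rf$ edges that cross the $\thread_1/\thread_2$ boundary: external in $\graphA$ but internal in $\graphB$. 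First I would record the inclusions forced by $\graphA.\po\subseteq\graphB.\po$ and $\graphA.\rf=\graphB.\rf$, namely $\graphA.\porceleven\subseteq\graphB.\porceleven$, $\graphA.\ppoasm\subseteq\graphB.\ppoasm$, $\graphA.\sw\subseteq\graphB.\sw$ (since $\sw$ does not distinguish $\rfi$ from $\rfe$), and hence $\graphA.\hb\subseteq\graphB.\hb$ and $\graphA.\scb\subseteq\graphB.\scb$. Part (2) then follows exactly as in \thmref{thm:nitia:sequentialization:app}: a $\graphB.\hb$-race remains a $\graphA.\hb$-race because $\graphA.\hb\subseteq\graphB.\hb$.

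The crux is part (1), and the whole role of the $\sc$ fence is captured by one fact. In a consistent $\graphB$ every boundary $\rf$ edge $\pair\evA\evB$ must run forward, with write $\evA\in\thread_1$ and read $\evB\in\thread_2$; a backward edge would, together with the $\po$-ordering through the fence, close a $\po\disj\rf$ cycle and contradict \axiomNTA. Thus $\evA\po f\po\evB$ in $\graphB$ with $f\in\FenceMd\sc\subseteq\FenceAtLeast\stf$, so $\pair\evA f\in\po\seq\settorel{\FenceAtLeast\stf}$ and $\pair f\evB\in\settorel{\EventE\setminus\WriteMd\nt}\seq\po$, giving $\pair\evA\evB\in\transC{\graphB.\porceleven}\subseteq\graphB.\hb$; symmetrically $\pair\evA f\in\po\seq\settorel{\FenceAtLeast\stf}$ and $\pair f\evB\in\settorel{\FenceMd\sc}\seq\po$ give $\pair\evA\evB\in\transC{\graphB.\ppoasm}$. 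Crucially this holds even when $\evA$ is a non-temporal store, which is precisely the configuration that makes unfenced sequentialization unsound.

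With this fact I would transfer each consistency condition by contradiction, mapping a violating cycle in $\graphA$ into one in $\graphB$. \axiomNTA, \axiomAtomicity and \axiomCoherenceIII are immediate, as they only involve $\po,\rf,\mo,\rb$, all contained in their $\graphB$ counterparts. \axiomCoherenceII is direct: every boundary $\rfe$ lies in $\transC{\graphB.\ppoasm}$ and the remaining $\rfe,\mo,\rb$ lie in $\graphB.\eco$, so $\graphA.(\ppoasm\disj\eco)\subseteq\transC{(\graphB.\ppoasm\disj\graphB.\eco)}$ and a $\graphA$-cycle yields a $\graphB$-cycle. For \axiomCoherenceI and \axiomSC, where boundary edges enter only through $\eco$, I would use the algebraic identity $\rf\seq\rb\subseteq\mo$: inside any $\eco$ path a boundary $\rfe$ edge is either terminal or immediately followed by an $\rb$ edge, and in the latter case $\rfe\seq\rb$ collapses to a genuine $\mo\subseteq\graphB.\eco$ edge. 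This removes all interior boundary edges, turning the cycle into a $\graphB.\hb/\graphB.\eco$ cycle with at most one trailing boundary edge; that trailing edge lies in $\graphB.\hb$ and is absorbed into the flanking $\graphB.\hb$ by transitivity, producing a genuine reflexive edge of $\graphB.\hb\seq\refC{\graphB.\eco}$ and contradicting \axiomCoherenceI of $\graphB$. The $\pscFence$ case of \axiomSC is handled identically, since its $\eco$ occurrence is flanked by $\hb$ on both sides, while $\pscBase$ transfers directly from $\graphA.\scb\subseteq\graphB.\scb$ and $\graphA.\hb\subseteq\graphB.\hb$.

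The main obstacle I anticipate is exactly this \axiomCoherenceI/\axiomSC transfer. Unlike \axiomCoherenceII, these conditions are not phrased in terms of $\ppoasm$, so boundary edges cannot be discharged by their membership in $\transC{\graphB.\ppoasm}$, and they cannot simply be reclassified as $\hb$ edges either, because $\irr{\hb\seq\refC\eco}$ does not on its own forbid $\hb/\eco$-alternating cycles. The $\rf\seq\rb\subseteq\mo$ collapse is what breaks this deadlock by keeping interior boundary edges inside $\eco$ rather than $\hb$. The one place needing extra care is successful read-modify-writes, where a boundary $\rfe$ into an $\RMWE$ event may be followed by a further $\rfe$ or by $\mo$ rather than by $\rb$; I would dispatch these either by the same collapse applied on the RMW's write side or, following the simplifying assumption of \sref{subsection:drf:app}, by temporarily setting $\RMWE$ accesses aside.
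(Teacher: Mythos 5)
Your proof rests on exactly the same key observation as the paper's (one-sentence) proof: because the inserted $\sc$ fence $f$ sits between the two merged threads, every boundary edge $\pair\evA\evB\in\graphA.\rfe$ satisfies $\pair\evA\evB\in\graphB.(\po\seq\settorel{\FenceMd\sc}\seq\po)\subseteq\graphB.(\transC\porceleven\cap\transC\ppoasm)$, so its reclassification as $\rfi$ cannot undo a violating cycle. The paper stops there. Where you genuinely go further is in the transfer of \axiomCoherenceI and \axiomSC: you are right that membership of the boundary edge in $\graphB.\hb$ is not by itself sufficient when that edge occurs in the \emph{interior} of an $\eco$ path, because $\irr{\hb\seq\refC\eco}$ does not forbid alternating $\hb$/$\eco$ cycles (store buffering is such a cycle and is allowed), so one cannot simply re-type the edge as $\hb$ and declare the cycle preserved. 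Your repair is sound: an interior boundary $\rf$ edge into a plain read must be followed in $\eco$ by an $\rb$ edge, and $\rf\seq\rb\subseteq\mo$ keeps the collapsed edge inside $\graphB.\eco$, leaving at most one terminal boundary edge, which is then absorbed into the flanking $\hb$ by transitivity. This fills a step the paper's proof elides. The residual $\RMWE$ case you flag is real but benign: for a successful RMW reading across the boundary, \axiomAtomicity forces the source write to be its immediate $\mo$-predecessor, so that boundary $\rf$ edge can be replaced outright by an $\mo$ edge; failing that, the paper's own simplifying assumption of setting $\RMWE$ aside (as in its data-race-freedom section) applies. In short: same route and same pivotal fact as the paper, but your elaboration of the \axiomCoherenceI/\axiomSC cases is more rigorous than the published argument.
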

\begin{proof}
It suffices to notice that
an external edge~$\pair\evA\evB\in\graphA.\rfe$
becomes part of both~$\graphB.\transC\porceleven$ and~%
$\graphB.\transC\ppoasm$,
because
a~$\sc$ fence is inserted between~$\evA$ and~$\evB$:
$\pair\evA\evB\in
\graphB.(\po\seq\settorel{\FenceMd\sc}\seq\po)\subseteq
\graphB.(\transC\porceleven\cap\transC\ppoasm)$.
\end{proof}

\subsubsection{Deordering.}

\begin{figure}[t]
\renewcommand{\arraystretch}{1.5}
\begin{tabular}{@{}|c||c|c|c|c|@{}}
\cline{2-5}
\multicolumn{1}{c|}{}
&$\ReadE^{\mdB}_{\locy}$
&$\WriteE^{\mdB}_{\locy}$
&$\RMWE^{\mdB}_{\locy}$
&$\FenceE^{\mdB}$
\\
\hhline{-|=|=|=|=|}
$\ReadE^{\mdA}_{\locx}$
&$\mdA\sqsubseteq\rlx$
&$\begin{array}{c}
\mdA,\mdB\sqsubseteq\rlx\\[-5pt]
\na\in\{\mdA,\mdB\}
\end{array}$
&$\begin{array}{c}
\mdA=\na\\[-5pt]
\mdB\sqsubseteq\acq
\end{array}$
&$\begin{array}{c}
\mdA\neq\rlx\\[-5pt]
\mdB=\acq
\end{array}$
\\
\hline
$\WriteE^{\mdA}_{\locx}$
&$\begin{array}{c}
\mdA\neq\sc\\[-5pt]
\mdB\neq\sc
\end{array}$
&$\mdB\sqsubseteq\rlx$
&$\mdB\sqsubseteq\acq$
&$\mdB=\acq$
\\
\hline
$\RMWE^{\mdA}_{\locx}$
&$\mdA\sqsubseteq\rel$
&$\begin{array}{c}
\mdA\sqsubseteq\rel\\[-5pt]
\mdB=\na
\end{array}$
&---
&$\begin{array}{c}
\mdA\sqsupseteq\acq\\[-5pt]
\mdB=\acq
\end{array}$
\\
\hline
$\FenceE^{\mdA}$
&$\mdA=\rel$
&$\begin{array}{c}
\mdA=\rel\\[-5pt]
\mdB\neq\rlx
\end{array}$
&$\begin{array}{c}
\mdA=\rel\\[-5pt]
\mdB\sqsupseteq\rel
\end{array}$
&$\begin{array}{c}
\mdA=\rel\\[-5pt]
\mdB=\acq
\end{array}$
\\
\hline
\end{tabular}
\caption{Deorderable pairs.
Events belong to \rceleven.
The variables $\locx$ and $\locy$ denote distinct locations.}
\Description{}
\label{fig:deordering:app}
\end{figure}

Deordering transforms sequential composition
into parallel composition:
\[\transf{\Seq\cmdA\cmdB}{\inarrII\cmdA\cmdB}\]

Because, for simplicity,
\rcelevenextlang does not support
the dynamic allocation of threads,
we can only express this optimization
as a transformation that converts a
single two-instructions thread into
two one-instruction threads.
At the level of execution graphs, however,
we are able to express a more general transformation
that removes a~$\po$ edge between two given
events~$\evA$ and~$\evB$ but keeps every other~$\po$
edge from and to these events:
\[\begin{array}{@{}c@{\quad\quad}c@{\quad\quad}c@{}}
\begin{array}{@{}c@{}}
\includegraphics{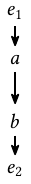}
\end{array}
&\rightsquigarrow&
\begin{array}{@{}c@{}}
\includegraphics{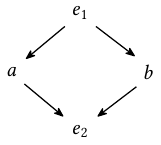}
\end{array}
\end{array}\]

\fref{fig:deordering:app}
shows the pairs
of \rceleven events~$\evA$ and~$\evB$ for which this
transformation is correct.
The Figure corresponds to Table 1 from \citet{rc11}.
Because events are restricted to plain \rceleven,
and because every~$\ppoasm$ edge contains at least one \asm event,
this transformation has no effect on~$\ppoasm$.
Moreover, no external edge is removed by deordering.
Therefore,
the restriction of~$\rf$ to~$\rfe$ is not problematic
and
cycles in~$\ppoasm\cup\eco$ cannot be undone
by deordering.
In sum, verifying the correctness of deordering is
limited to studying the effects of this transformation to
the~$\porceleven$ component of~$\hb$ and to the cycles
that violate the standard \rceleven conditions;
it can thus follow the
same arguments as those conveyed by \citet{rc11}.

\subsubsection{Merging.}

\begin{figure}[t]
\[\begin{array}{@{}c@{}}
\begin{array}{@{}c@{\quad\quad\quad\quad}c@{\quad\quad\quad\quad}c@{}}
\begin{array}{@{}r@{\;}c@{\;}l@{}}
  \Seq{\WriteMd\md}{\WriteMd\md}
    &\rightsquigarrow&
  \WriteMd{\md}
\\
  \Seq{\ReadMd\md}{\ReadMd\md}
    &\rightsquigarrow&
  \ReadMd\md
\end{array}
&
\begin{array}{@{}r@{\;}c@{\;}l@{}}
  \Seq{\WriteMd\sc}{\ReadMd\sc}
    &\rightsquigarrow&
  \WriteMd{\sc}
\\
  \Seq{\WriteMd\md}{\ReadMd\acq}
    &\rightsquigarrow&
  \WriteMd{\md}
\end{array}
&
\begin{array}{@{}r@{\;}c@{\;}l@{}}
  \Seq{\RMWMd\md}{\RMWMd\md}
    &\rightsquigarrow&
  \RMWMd\md
\\
  \Seq{\FenceMd\md}{\FenceMd\md}
    &\rightsquigarrow&
  \FenceMd\md
\end{array}
\end{array}
\\[5mm]
\begin{array}{c}
\begin{array}{@{}r@{\;}c@{\;}ll@{}}
  \Seq{\WriteMd{\mdA}}{\RMWMd\mdB}
    &\rightsquigarrow&
  \WriteMd\mdA
    &
  \mdA = \maximum{\md\mid\rlx\sqsubseteq\md\sqsubseteq\mdB}
\\
  \Seq{\RMWMd\mdA}{\ReadMd\mdB}
    &\rightsquigarrow&
  \RMWMd\mdB
    &
  \mdB = \maximum{\md\mid\rlx\sqsubseteq\md\sqsubseteq\mdA}
\end{array}
\end{array}
\end{array}\]
\caption{Mergeable pairs. Events belong to \rceleven. Events in a pair access the same location.}
\Description{}
\label{fig:merging:app}
\end{figure}

Merging transforms two consecutive instructions into one.
\fref{fig:merging:app} depicts the pairs of events
that can be merged.
These pairs impose constraints on the access modes of the events,
but, when performed after strengthening,
this transformation can be applied to any
pair of instructions whose accesses are at least the
ones so specified.
The remarks from the previous discussion
(about the correctness of deordering)
also apply here:
roughly speaking,
because inline-assembly events are left intact,
the correctness of merging follows essentially
the same arguments as those presented by \citet{rc11}.

\end{document}